\newtheorem{theorem}{Theorem}[section]
\newtheorem{lemma}[theorem]{Lemma}
\newtheorem{proposition}[theorem]{Proposition}
\newtheorem{definition}[theorem]{Definition}
\newtheorem{corollary}[theorem]{Corollary}
\newtheorem{claim}[theorem]{Claim}
\newtheorem{observation}[theorem]{Observation}
\newenvironment{innerproof}
 {\proof}
 {\endproof}
\newcommand{\defparproblem}[4]{
	\vspace{3mm}
	\noindent\fbox{
		\begin{minipage}{0.96\linewidth}
			\begin{tabular*}{\linewidth}{@{\extracolsep{\fill}}lr} \textsc{#1} & {\bf{Parameter:}} #3 \\ \end{tabular*}
			{\bf{Input:}} #2 \\
			{\bf{Task:}} #4
		\end{minipage}
	}
	\vspace{2mm}
}
\newcommand{\Oh}{\mathcal{O}}
\newcommand{\cl}{\ensuremath{\mathsf{cl}}}
\newcommand{\sub}{\subseteq}
\newcommand{\sm}{\setminus}
\newcommand{\calC}{{\mathcal{C}}}
\newcommand{\rr}{\ensuremath{\mathbb{R}}}
\newcommand{\tcal}{\ensuremath{\mathcal{T}}}
\newcommand{\dcal}{\ensuremath{\mathcal{D}}}
\newcommand{\pcal}{\ensuremath{\mathcal{P}}}
\newcommand{\Ev}{\vec{E}}
\newcommand{\calP}{\pcal}
\newcommand{\calQ}{{\cal Q}}
\newcommand{\calT}{\tcal}
\newcommand{\oSigma}{\overline{\Sigma}}
\newcommand{\charWord}{\mathsf{char}}
\newcommand{\pdspfull}{{\sc{Planar Disjoint Shortest Paths}}\xspace}
\newcommand{\pdsp}{{\sc{PDSP}}\xspace}
\newcommand{\pdap}{{\sc{PDAP}}\xspace}
\newcommand{\dagg}{\ensuremath{\text{-}\mathsf{dag}}\xspace}
\newcommand{\terset}{terminal-superset\xspace}
\newcommand{\dagcut}{\ensuremath{\text{-}\mathsf{dag}\text{-}\mathsf{cut}}\xspace}
\newcommand{\udagcut}{\ensuremath{\mathsf{dag}\text{-}\mathsf{cut}}\xspace}
\newcommand{\dagcuts}{\ensuremath{\text{-}\mathsf{dag}\text{-}\mathsf{cuts}}\xspace}
\newcommand{\dagring}{\ensuremath{\text{-}\mathsf{dag}\text{-}\mathsf{ring}}\xspace}
\newcommand{\udagring}{\ensuremath{\mathsf{dag}\text{-}\mathsf{ring}}\xspace}
\newcommand{\ring}{\ensuremath{\mathrm{Ring}}\xspace}
\newcommand{\disc}{\ensuremath{\mathrm{Disc}}\xspace}
\newcommand{\splitt}{\mathsf{split}\xspace}
\newcommand{\sameside}{\mathsf{same}\text{-}\mathsf{side}\xspace}
\newcommand{\reduce}{\mathsf{reduce}\xspace}
\newcommand{\load}{\mathsf{load}\xspace}
\newcommand{\minload}{\mathsf{comb}\text{-}\mathsf{load}\xspace}
\newcommand{\topload}{\mathsf{top}\text{-}\mathsf{load}\xspace}
\newcommand{\algload}{\mathsf{alg}\text{-}\mathsf{load}\xspace}
\newcommand{\wind}{\mathsf{WindNum}\xspace}
\newcommand{\padpfull}{{\sc{Planar Annotated Disjoint Paths}}\xspace }
\renewcommand{\leq}{\leqslant}
\renewcommand{\le}{\leqslant}
\renewcommand{\geq}{\geqslant}
\renewcommand{\ge}{\geqslant}
\newcommand{\funding}{M.P. and G.S. were supported by the project BOBR that is funded from the European Research Council (ERC) under the European Union’s Horizon 2020 research and innovation programme with grant agreement No. 948057. In particular, a majority of work on this manuscript was done while G.S. was affiliated with University of Warsaw.
M.W. was supported by Polish National Science Centre SONATA-19 grant 2023/51/D/ST6/00155.}
\title{Planar Disjoint Shortest Paths is Fixed-Parameter Tractable\thanks{\funding}}
\author{
Micha\l{} Pilipczuk\thanks{Institute of Informatics, University of Warsaw, Poland. E-mail: \url{michal.pilipczuk@mimuw.edu.pl}}
\and
Giannos Stamoulis\thanks{Université Paris Cité, CNRS, IRIF, F-75013, Paris, France. E-mail: \url{stamoulis@irif.fr}}
\and 
Micha\l{} W\l{}odarczyk\thanks{Institute of Informatics, University of Warsaw, Poland. E-mail: \url{michal.wloda@gmail.com}}
}
\date{}
\begin{document}

\maketitle

\begin{abstract}
    \!In the {\sc{Disjoint Shortest Paths}} problem one is given a graph~$G$ and a set $\tcal\!=\!\{(s_1,t_1),\ldots,(s_k,t_k)\}$ of $k$ vertex pairs. The question is whether there exist vertex-disjoint paths $P_1,\ldots,P_k$ in $G$ so that each $P_i$ is a shortest path between $s_i$ and $t_i$.
    While the problem is known to be $\mathsf{W}[1]$-hard in general, we show that it is fixed-parameter tractable on planar graphs with positive edge weights.
    Specifically, we propose an algorithm 
    for {\sc{Planar Disjoint Shortest Paths}} with running time $2^{\Oh(k\log k)}\cdot n^{\Oh(1)}$.
    Notably, our parameter dependency is better than state-of-the-art $2^{\Oh(k^2)}$ for the  {\sc{Planar Disjoint Paths}} problem, where the sought paths are not required to be shortest paths.
\end{abstract}

\begin{textblock}{20}(12.2, 6.2)
\includegraphics[width=40px]{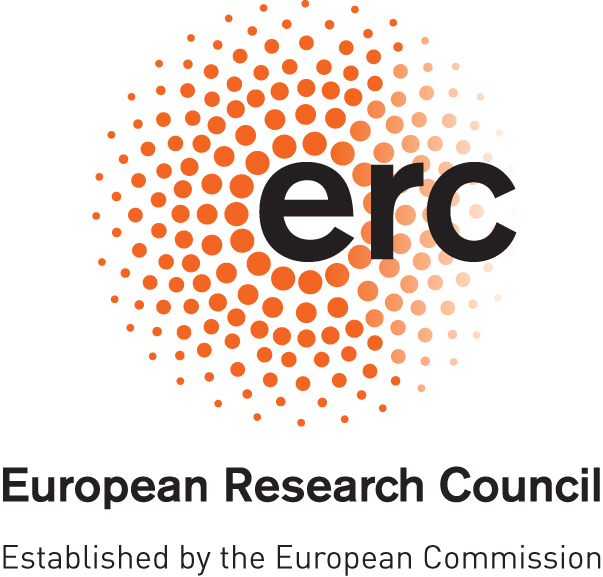}%
\end{textblock}
\begin{textblock}{20}(12.2, 7.2)
\includegraphics[width=40px]{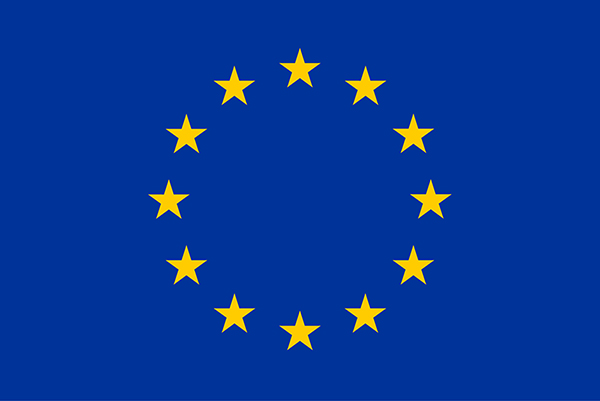}%
\end{textblock}

\thispagestyle{empty}

\newpage

\clearpage
\setcounter{page}{1}

\section{Introduction}

In the {\sc{Disjoint Paths}} problem, we are given a graph $G$ together with a set $\tcal=\{(s_1,t_1),\ldots,(s_k,t_k)\}$ of $k$ pairs of pairwise distinct vertices, called {\em{terminals}}. The task is to decide whether  there are pairwise vertex-disjoint paths $P_1,\ldots,P_k$ in $G$ so that each $P_i$ connects $s_i$ with $t_i$. Being perhaps one of the most important computational problems of topological nature, {\sc{Disjoint Paths}} was studied by Robertson and Seymour within their Graph Minors project, which resulted in establishing fixed-parameter tractability of the problem on general graphs.
More precisely, Robertson and Seymour proposed an $f(k)\cdot n^3$-time algorithm for a computable function $f$~\cite{robertson1995graph}. The dependency on $n$ in the time complexity has been subsequently improved to quadratic~\cite{kawarabayashi2012disjoint}, and very recently to almost linear-time~\cite{KorhonenPS24minor}.

The main idea behind essentially all the contemporary algorithms for {\sc{Disjoint Paths}} is the {\em{irrelevant vertex technique}}: iteratively find vertices that can be safely removed from the instance without changing the answer (hence they are called {\em{irrelevant}}), until the instance becomes so simple that it can be solved directly. Arguing that a given vertex is irrelevant typically relies on delicate rerouting arguments, which makes the technique not applicable to a range of interesting variants of the problem. In fact, some of those variants become computationally much harder, and their treatment requires other~tools.

The variant that we investigate in this paper is {\sc{Disjoint Shortest Paths}}. Here, we are given an instance $(G,\tcal)$ of {\sc{Disjoint Paths}} where $G$ is edge-weighted, and the task is to decide the existence of a solution $P_1,\ldots,P_k$ such that each $P_i$ is a {\em{shortest}} path between $s_i$ and $t_i$. Techniques based on finding irrelevant vertices are rather difficult to apply in the context of {\sc{Disjoint Shortest Paths}}, because rerouting arguments used to reason about the irrelevance of a vertex typically need to modify a solution without paying attention to the lengths of the paths. As a result, the complexity of {\sc{Disjoint Shortest Paths}} still remains quite an uncharted territory. 

{\sc{Disjoint Shortest Paths}} were first studied by Eilam-Tzoreff~\cite{EilamTzoreff98} in 1998, who proved polynomial-time solvability for $k=2$. Later, B\'erczi and Kobayashi~\cite{BercziK17} investigated the directed variant of the problem, and they gave a polynomial-time algorithm for $k=2$ on general digraphs and an $n^{\Oh(k)}$-time algorithm for planar digraphs, thereby putting the directed planar case in the complexity class $\mathsf{XP}$. A major development was delivered by Lochet~\cite{DBLP:conf/soda/Lochet21}, who gave an $\mathsf{XP}$ algorithm, with a running time of $n^{\Oh(k^{5^k})}$, on general undirected graphs with unit weights.
Subsequently, Bentert, Nichterlein, Renken, and Zschoche~\cite{bentert2023using} improved the exponent in the running time to $2^{\Oh(k \log k)}$.
Unfortunately, it was also proved in~\cite{bentert2023using,DBLP:conf/soda/Lochet21} that the problem is $\mathsf{W}[1]$-hard, hence the existence of an FPT algorithm is unlikely. However, so far it was open~\cite{WlodarczykZ23} whether such an algorithm can be designed in the setting of planar graphs.

\paragraph*{Our contribution.} In this work we answer the question stated above in the affirmative by proving the following theorem. Here \pdspfull is the restriction of {\sc{Disjoint Shortest Paths}} to planar undirected graphs
equipped with positive edge weights.

\begin{theorem}\label{thm:main}
    The \pdspfull problem can be solved in time $2^{\Oh(k\log k)}\cdot n^{\Oh(1)}$.
\end{theorem}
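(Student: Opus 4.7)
The plan is to exploit the rigid structure of shortest-path DAGs together with the topological toolkit developed for planar routing. First I would observe that every shortest $s_i$-$t_i$ path in $G$ lies inside the \emph{shortest-path DAG} $D_i$, namely the subgraph of $G$ formed by those edges that participate in at least one shortest $s_i$-$t_i$ path, oriented from $s_i$ toward $t_i$. This reduces \pdsp to an annotated variant in which one must find pairwise vertex-disjoint paths $P_1,\dots,P_k$, with each $P_i$ constrained to be an $s_i$-$t_i$ path inside the prescribed DAG $D_i$; this is essentially the \padpfull problem suggested by the paper's macros. Reformulating the problem this way is crucial because paths inside a DAG cannot make arbitrarily intricate loops, which is precisely what should give the improvement over the $2^{\Oh(k^2)}$ bound known for ordinary \textsc{Planar Disjoint Paths}.

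Once the problem is in this annotated form, I would adapt the standard topological approach: embed $G$ on the sphere, perform a radial completion, and encode each candidate solution up to rerouting by its homotopy class relative to the terminals. Such a class is described by a vector of winding numbers around selected faces or terminals. In the unrestricted planar setting, each winding number can be as large as $\Omega(k)$, which yields the $2^{\Oh(k^2)}$ bound. Here I would introduce a \emph{load} function that aggregates the winding activity of a solution and prove that, among all solutions, there is one whose load is $\Oh(k\log k)$ -- a logarithmic rather than linear saving per terminal. The algorithm then enumerates all homotopy patterns of bounded load in $2^{\Oh(k\log k)}$ time, and for each pattern checks in polynomial time whether it is realizable inside the DAGs $D_i$; the latter should reduce to a planar flow-type problem solvable via a recursive separator decomposition that respects the DAG orientations.

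The main obstacle, and the technical core of the proof, is establishing the $\Oh(k\log k)$ load bound. The $\Omega(k)$-per-terminal behaviour in the unrestricted problem is essentially tight: two routes can twist around each other linearly many times. Ruling this out in the DAG setting calls for an exchange argument showing that whenever two paths wind around each other more than $\Oh(\log k)$ times, a local swap reduces the winding without leaving the respective DAGs $D_i, D_j$. Since the shortest-path DAGs can overlap in arbitrary ways, making the swap preserve every relevant DAG constraint simultaneously is subtle, and I expect the argument to proceed by identifying canonical separating structures in the radial completion whose intersection with each $D_i$ is combinatorially controlled. Iterating such separators would then drive the winding count down to logarithmic, which is precisely the quantitative improvement over \textsc{Planar Disjoint Paths} that fuels \Cref{thm:main}.
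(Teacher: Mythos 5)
Your high-level framing is right — reducing to the annotated problem over shortest-path DAGs $D_i$, then enumerating topological classes and solving each by Schrijver-style homology feasibility — and that does match the skeleton of the paper's argument. But the quantitative core of your plan is wrong in a way that matters.

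You propose to prove that some solution has load $\Oh(k\log k)$, with individual winding numbers driven down to $\Oh(\log k)$ by iterated separator-based exchange arguments. The paper does not achieve — and, as far as the argument goes, does not need — a logarithmic winding bound. What the paper actually proves (\Cref{prop:primalSteiner}, \Cref{prop:niceSteiner}) is that one can construct a Steiner tree with $\Oh(k)$ spinal paths such that some solution has topological load $\Oh(k^3)$ on each, i.e., the per-path crossing count is polynomial, not logarithmic. The $2^{\Oh(k\log k)}$ enumeration bound does not come from the load being $\Oh(k\log k)$; a naive enumeration of characteristic words of length $\Oh(k^3)$ over a $2k$-letter alphabet would give the much worse $2^{\Oh(k^4\log k)}$. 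The saving comes from a counting argument in \Cref{lem:enumHom}: after splitting off empty handles, the crossing pattern of the solution with the Euler tour of the skeleton is determined by two non-crossing, division-jumping matchings plus a small amount of labelling information, and \Cref{cl:matching-bound} shows there are only $p^{\Oh(r)}=2^{\Oh(k\log k)}$ such matchings even when $p=\Oh(k^4)$. So the $k\log k$ is an artifact of counting non-crossing matchings with $\Oh(k)$ "jumping blocks," not of bounding windings to $\log k$.

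Your proposed exchange argument also glosses over the hardest structural obstacle. You say the swap "must preserve every relevant DAG constraint simultaneously" and that you would find "canonical separating structures" — but how to certify that rerouting is safe across potentially conflicting DAGs $D_i$ is exactly where the paper invests most of its machinery (\Cref{sec:rings}). The key device is the $\mathsf{dag}$-$\mathsf{ring}$: a ring-shaped region where all the relevant $D_i$ coincide (\Cref{claim:winding:coincide}), together with the fact (\Cref{prop:winding:dag-ring}) that heavy winding of a geodesic against a geodesic \emph{forces} such a ring to exist. An exhaustive ring decomposition (\Cref{prop:ring:ring-decomp}) then confines all potential deep winding to $\Oh(k)$ disjoint rings, inside which the rerouting lemma of Cygan et al.\ applies verbatim to a single DAG, and outside which the geodesic monotonicity of crossings bounds handles by $\Oh(k)$ per pair of paths. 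Without this ring machinery, "canonical separating structures whose intersection with each $D_i$ is controlled" is a wish rather than a lemma, and the $\Oh(\log k)$ target you set for it is in any case the wrong target.
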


Remarkably, the achieved parameter dependency $2^{\Oh(k\log k)}$ is better than state-of-the-art $2^{\Oh(k^2)}$ for {\sc Planar Disjoint Paths}~\cite{cho2023parameterized, LokshtanovMPSZ20}.
This suggests that the shortest-paths variant might be easier on planar graphs, in contrast to the general setting where {\sc Disjoint Paths} is FPT but {\sc Disjoint Shortests Paths} is $\mathsf{W}[1]$-hard.
A similar phenomenon occurs for {\sc Disjoint Paths} on directed graphs with $k=2$: the basic problem is $\mathsf{NP}$-hard~\cite{fortune1980directed} whereas the shortest-paths variant is polynomial-time solvable~\cite{BercziK17}.
Observe that it is important to restrict to only positive edge weights, because allowing for zero weights would give a problem more general than basic {\sc Disjoint Paths.}

It is known that {\sc Disjoint Shortest Paths} on undirected graphs generalizes {\sc Disjoint Paths} on acyclic digraphs~\cite{AkmalWW24,DBLP:conf/soda/Lochet21}.
Moreover, the corresponding reduction preserves planarity so \pdspfull is at least as hard as {\sc Disjoint Paths} on  planar DAGs (\Cref{lem:prelim:planar-dag-reduction}).
As the {\sc Disjoint Paths} problem is fixed-parameter tractable even on general planar digraphs~\cite{CyganMPP13}, such a reduction does not entail any meaningful lower bound.
However, it suggests that \pdspfull may exhibit a behavior typical for problems on planar DAGs, and
provides a valuable hint that guides our analysis (see \Cref{sec:techniques} for  details).

Another consequence of the aforementioned reduction that is the {\sc Planar Edge-Disjoint Shortest Paths} problems is $\mathsf{W}[1]$-hard because {\sc Edge-Disjoint Paths} is $\mathsf{W}[1]$-hard on planar DAGs~\cite{Chitnis23}.
On the other hand,  {\sc Edge-Disjoint Paths} is FPT even on general graphs~\cite{kawarabayashi2012disjoint}.
Hence the requirement on the paths to be shortest makes the problem harder also in the planar edge-disjoint setting, giving yet another contrast to \Cref{thm:main}.

Amiri, Golshani, Kreutzer, and Siebertz~\cite{amiri14vertex} considered {\sc Disjoint Paths} on a subclass of planar DAGs that admit an upward drawing.
They showed that this variant can be solved in time $2^{\Oh(k\log k)}\cdot n$, which improved upon the double-exponential parameter dependency for general planar digraphs~\cite{CyganMPP13}.
They asked whether that result can be extended to all planar DAGs.
As a direct consequence of \Cref{thm:main} and \Cref{lem:prelim:planar-dag-reduction}, we answer their question affirmatively when it comes to parameter dependency.

\begin{corollary}
     The {\sc Disjoint Paths} problem can be solved in time $2^{\Oh(k\log k)}\cdot n^{\Oh(1)}$ on planar DAGs.
\end{corollary}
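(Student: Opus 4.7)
The plan is to combine the two ingredients the authors have already put on the table: the main theorem (\Cref{thm:main}) and the reduction lemma (\Cref{lem:prelim:planar-dag-reduction}). Given an instance $(D,\tcal)$ of {\sc Disjoint Paths} on a planar DAG with $k$ terminal pairs, I would first invoke \Cref{lem:prelim:planar-dag-reduction} to produce, in polynomial time, an equivalent instance $(G,\tcal')$ of \pdspfull, where $G$ is planar, the edge weights are positive, and the number of terminal pairs is $k$ (or at most $\Oh(k)$). Then I would feed $(G,\tcal')$ into the algorithm of \Cref{thm:main}, which decides the instance in time $2^{\Oh(k\log k)}\cdot n^{\Oh(1)}$, and report the same answer for the original input.

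The correctness follows immediately from the equivalence guaranteed by the reduction; the running time is the sum of the polynomial reduction time and the $2^{\Oh(k\log k)}\cdot n^{\Oh(1)}$ call to the main algorithm, which is absorbed into $2^{\Oh(k\log k)}\cdot n^{\Oh(1)}$. The only thing requiring any care is confirming that the reduction keeps the parameter at $\Oh(k)$ (so that $k\log k$ remains $\Oh(k\log k)$) and that it outputs strictly positive weights (needed for \Cref{thm:main}); both of these are properties claimed for \Cref{lem:prelim:planar-dag-reduction} in the discussion preceding the corollary. There is no genuine obstacle here: the corollary is a direct pipeline of existing results, and no new argument is needed beyond invoking \Cref{lem:prelim:planar-dag-reduction} followed by \Cref{thm:main}.
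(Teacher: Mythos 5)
Your proposal is correct and is exactly the argument the paper intends: the corollary is stated in the introduction as ``a direct consequence of \Cref{thm:main} and \Cref{lem:prelim:planar-dag-reduction},'' which is precisely the pipeline you describe. Note that \Cref{lem:prelim:planar-dag-reduction} in fact guarantees $|\tcal_G| = |\tcal_D|$ exactly (not merely $\Oh(k)$) and produces unit (hence positive) weights, so both of the points you flagged for care hold without further work.
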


\paragraph*{Other related work.} Recently, there have been a few other works investigating different aspects of the complexity of {\sc{Disjoint Shortest Paths}}. 
A variant of the problem with congestion was studied by Amiri and Wargalla~\cite{amiri2020congestion}, followed by Akmal and Wein~\cite{akmal2023localtoglobal}. Bentert, Fomin, and Golovach~\cite{BentertFG25tight} studied the natural optimization version, in which one does not need to connect all the terminal pairs, but rather maximize the number of pairs connected. 
Very recently, Akmal, Vassilevska Williams, and Wein~\cite{AkmalWW24} considered the fine-grained complexity of the problem and improved several running times in the cases known to be polynomial-time solvable.

A closely related variant of the {\sc{Disjoint Paths}} problem, where lengths of paths are also considered, is {\sc{Min-Sum Disjoint Paths}}. In this variant, we seek a solution $P_1,\ldots,P_k$ to a given instance $(G,\tcal)$ of {\sc{Disjoint Paths}} that minimizes the sum of the lengths of the paths $P_i$. The complexity of this variant is even more poorly understood than that of {\sc{Disjoint Shortest Paths}}. It follows from the results of \cite{bentert2023using,DBLP:conf/soda/Lochet21} that the problem is $\mathsf{W}[1]$-hard on general graphs, but membership in $\mathsf{XP}$ is currently unknown. Bj\"orklund and Husfeldt gave a randomized polynomial-time algorithm for $k=2$~\cite{BjorklundH19}, but the polynomial-time solvability of the case $k=3$ remains open, even in the setting of planar graphs. Recently, an {\sc{FPT}} algorithm for the problem in a very specific class of planar graphs --- grid graphs --- was proposed by Mari, Mukherjee, Pilipczuk, and Sankowski~\cite{MariMS24}. 
In addition, the problem becomes tractable on planar graphs with certain arrangements of terminals~\cite{datta2018shortest, kobayashi2010shortest, verdiere2011shortest}.

\section{Techniques}
\label{sec:techniques}

To outline the ideas behind our algorithm, we begin by discussing the
$2^{\Oh(k^2)}\cdot n^{\Oh(1)}$-time algorithm for {\sc Planar Disjoint Paths} (not necessarily shortest) by Lokshtanov, Misra, Pilipczuk, Saurabh, and
Zehavi~\cite{LokshtanovMPSZ20}.
To this end, we first need to go back even further in time, to the $n^{\Oh(k)}$-time algorithm by Schrijver from 1994~\cite{schrijver1994finding}, which in fact even applies to the directed setting.
This algorithm is divided into two phases: first, one guesses the {\em homology class} of the solution, out of $n^{\Oh(k)}$ many, and then given a suitable representation of that homology class,
finds a solution in that class (as long as one exists) via a reduction to a specific polynomial-time solvable 2-CSP problem, called {\sc{Homology Feasibility}}. 
Roughly speaking, two solutions $\pcal, \pcal'$ are homologous if $\pcal'$ can be obtained from $\pcal$ (and vice versa) by a series a modifications that ``pull'' some part of a solution through some internal face.
During that process a solution may temporarily lose vertex- or even edge-disjointedness and so it is more convenient to work with homologies of {\em algebraic flows}, objects more general than families of disjoint paths (see \Cref{sec:homology}).
In essence, to enumerate the homology classes, Schrijver considered a Steiner tree $T$ spanning the terminal vertices and analyzed the flow obtained by pushing all the paths onto $T$.
Then, in order to pin down the homology class of such a flow $\pcal$ it is sufficient to specify for each {\em{spinal path}} $Q$ of $T$ the number of times $\pcal$ crosses $Q$, where a spinal path of $T$ is a maximal subpath whose internal vertices have degree $2$.
The maximal number of such passes is $n$ and the number of spinal paths $Q$ to consider is $\Oh(k)$; this yields the $n^{\Oh(k)}$ bound.

\begin{figure}[t]
\centering
\includegraphics[scale=0.6]{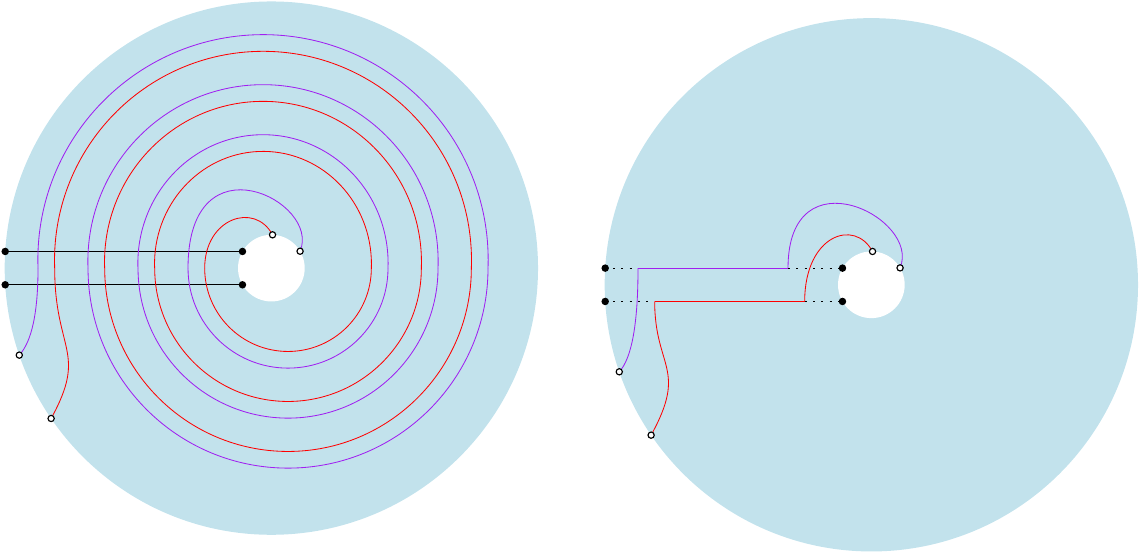}
\caption{
An illustration of the rerouting argument from~\cite{CyganMPP13}.
Left: linkage $\cal{Q}$ is given as the two black horizontal paths and  $\cal{P}$ comprises the purple and the red path.
When considered oriented towards the internal face $C_2$, $\cal{P}$ winds four times clockwise around $C_2$, compared  to $\cal{Q}$.
Right: linkage $\cal{P}'$ has the same endpoints as $\cal{P}$ while its winding number is reduced.
}\label{fig:reroute}
\end{figure}

The main idea behind the $2^{\Oh(k^2)}\cdot n^{\Oh(1)}$-time algorithm of~\cite{LokshtanovMPSZ20} is that it suffices to inspect only $2^{\Oh(k^2)}$ homology classes.
This is challenging because a solution might ``wind'' around some terminal vertex,
forming a deep ``spiral'', and solutions which differ in the depth of such a spiral are not homologous.
Taking into account that there are $\Omega(k)$ potential sources for spirals, we may a priori obtain $n^{\Omega(k)}$ different homologies.
Addressing this issue required incorporating two more techniques.
First, Adler et al.~\cite{AdlerKKLST17} had shown that one can reduce the treewidth $\mathbf{tw}$ of input graph $G$ to $2^{\Oh(k)}$ using the irrelevant vertex technique. 
This bounds the size of a maximal grid appearing in $G$ and simplifies the analysis of spirals.
The second ingredient originates from the fixed-parameter algorithm for {\sc Directed Planar Disjoint Paths}, due to Cygan, Marx, Pilipczuk, and Pilipczuk~\cite{CyganMPP13}.
Consider a directed plane graph $D$ with outer face $C_1$ and a distinguished internal face $C_2$ and let ${\cal Q}$ be a collection of disjoint paths connecting $C_1$ with $C_2$.
Then for any other such collection of paths $\pcal$ of the same size, one can reroute $\pcal$ to almost match the {\em winding number}\footnote{On an intuitive level, the winding number of a curve measures how many times it goes around a certain object on the plane.} of~${\cal Q}$, while preserving the endpoints of the paths in $\pcal$~\cite[Lemma 4.8]{CyganMPP13arxiv}, see \Cref{fig:reroute}.
Effectively, these two ideas reduce the problem to the case where each spinal path $Q$ of the considered Steiner tree $T$ has length $\Oh(\mathbf{tw})$, which in total gives the bound $\mathbf{tw}^{\Oh(k)} = 2^{\Oh(k^2)}$ on the number of enumerated homology classes.

As observed by B\'erczi and Kobayashi~\cite{BercziK17}, the second phase of Schrijver's algorithm can be adapted to work with \pdspfull. That is, the problem can be solved in polynomial time once we guess a correct homology class.
For $s_i,t_i \in V(G)$ we can define a directed acyclic graph, called the $(s_i,t_i)\dagg$, consisting of the oriented edges of $G$ that lie on some shortest $(s_i,t_i)$-path in $G$.
Then it is sufficient to annotate the instance of {\sc Homology Feasibility} with labels indicating that only the edges of the $(s_i,t_i)\dagg$ can accommodate the path $P_i$.
However, the main issue remains: how to bound the number of relevant homology classes as a function of $k$. Here, the main obstacles can be phrased as follows:
\begin{quote}
    The known rerouting arguments break down for the \pdspfull problem, because different edges may lie on shortest paths between different pairs of terminals. (Or more formally, belong to different digraphs $(s_i,t_i)\dagg$, for $i=1,2,\ldots,k$). In particular, there is no treewidth reduction technique available for \pdspfull.
\end{quote}
Therefore, we need to design a completely new strategy to handle the ``deep spirals'' that contribute to the non-FPT number of homology classes that need to be considered.

\begin{figure}[t]
\centering
\includegraphics[scale=0.7]{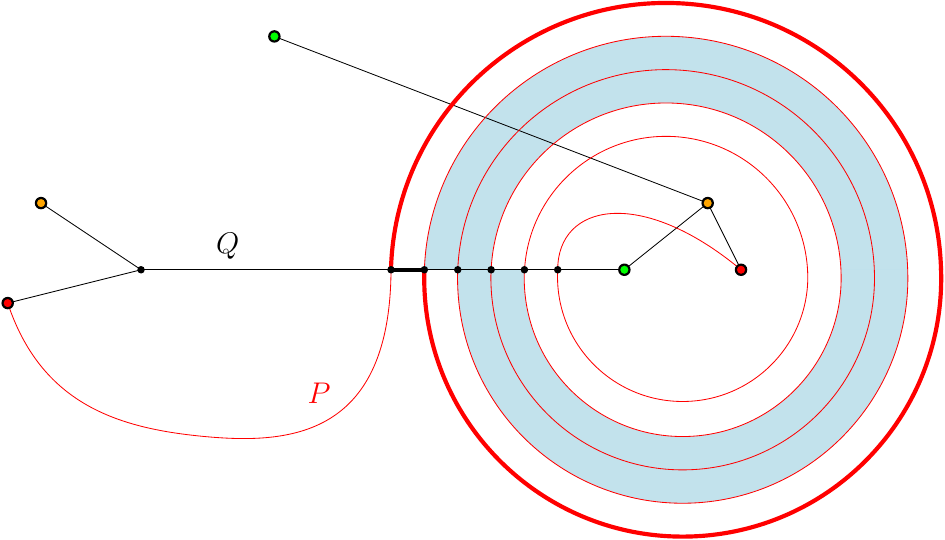}
\caption{
An illustration of the problematic scenario when the solution path $P$ crosses the geodesic spinal path $Q$ in the tree $T$ many times, because of the winding behavior.
The fat segments of $P$ and $Q$ represent subpaths in $G$ of  the same length.
By analyzing the shortest path structure in this part of the graph, we identify a region (in blue) where these structures coincide for all terminal pairs. 
}\label{fig:ring-outline}
\end{figure}

To mitigate this problem, we shall analyze the homology of a solution with respect to a Steiner tree $T$ in $G$ that is made of geodesics\footnote{A {\em{geodesic}} in a graph is a path that is a shortest path between its endpoints.}. That is, every spinal path $Q$ of $T$ must be a geodesic.
Consider some path $P$ from a solution that winds around some terminal with respect to $T$; see \Cref{fig:ring-outline}.
This means that $P$ crosses some spinal path $Q$ of $T$ several times in the same direction.
Since $P$ is also a geodesic, this gives rise to a rather peculiar subgraph of $G$.
More formally, let $u,v \in V(Q) \cap V(P)$ be two consecutive crossing points of $Q$ and $P$.
Then the distance between $u$ and $v$ along $Q$ must be equal to their distance along~$P$.
Consequently, when the number of such crossings is large, then there are multiple non-equivalent (in the topological sense) ways to reach the interior of the spiral from the exterior via a shortest path: on each crossing one can choose to follow either $Q$ or $P$. Now comes the key observation:

\begin{quote}
The situation described above implies the existence of a ring-shaped region $R$ 
that covers almost the entire spiral and in which the structure of shortest $(s_i,t_i)$-paths is the same for each terminal pair $(s_i,t_i)$ separated by the spiral; see \Cref{fig:ring-outline}.
\end{quote}

In other words, the digraphs $(s_i,t_i)\dagg$ and $(s_j,t_j)\dagg$ coincide inside $R$, whenever $s_i,s_j$ lie outside the spiral and $t_i,t_j$ lie inside the spiral, or vice versa.
In addition, it turns out that if $s_i,t_i$ are not separated by the spiral, then no shortest $(s_i,t_i)$-path intersects $R$.
As a consequence, we can neglect the structure of shortest paths inside $R$ and just treat this part of the graph as a DAG on which the problem behaves like {\sc Directed Planar Disjoint Paths}.
This enables us to take advantage of the rerouting argument from~\cite{CyganMPP13} inside the region $R$, to approximately (with constant additive error) find the winding number of the solution within $R$.

In order to turn this idea into an algorithm, we need to engineer the definition of such a ``ring-shaped region'' $R$ (formalized later as a ``$\mathsf{dag}$-$\mathsf{ring}$'') in such a way that:
\begin{itemize}[nosep]
    \item[(a)] it is safe to perform rerouting in $R$,
    \item[(b)] the existence of $R$ is implied by a ``winding behavior'',
    \item[(c)] any solution path can enter $R$ only once and then it must exit $R$ on the other side,
    \item[(d)] there exists a decomposition of the input graph using $\ell = \Oh(k)$ pairwise-disjoint regions $R_1, \dots, R_\ell$ which cannot be extended by any additional region $R_{\ell+1}$,
and
    \item[(e)] such a decomposition can be effectively computed.
\end{itemize}  
These properties ensure that a solution cannot form any large spirals outside the regions $R_1, \dots, R_\ell$.
This enables us to extract the subgraphs where spiraling is possible and analyze them by treating their
shortest paths structures as DAGs. 
We exploit this decomposition to refine the Steiner tree $T$ to only use paths with ``correct'' winding numbers within each $R_i$.
By property~(c), we can count the number $q$ of subpaths that traverse $R_i$ in any solution.
Next, we compute an arbitrary linkage $\mathcal{Q}$ of size $q$ that connects the two sides of $R_i$.
The rerouting argument from the work of Cygan et al.~\cite{CyganMPP13} ensures that 
any solution can be tweaked to almost match (within constant additive error) the winding number of $\mathcal{Q}$ within $R_i$.
Therefore, if we tweak $T$ to go alongside $\mathcal{Q}$, then some solution is guaranteed to have only few ``non-trivial'' crossings with $T$ in this area (\Cref{fig:decomposition}).

\begin{figure}[t]
\centering
\includegraphics[scale=0.9]{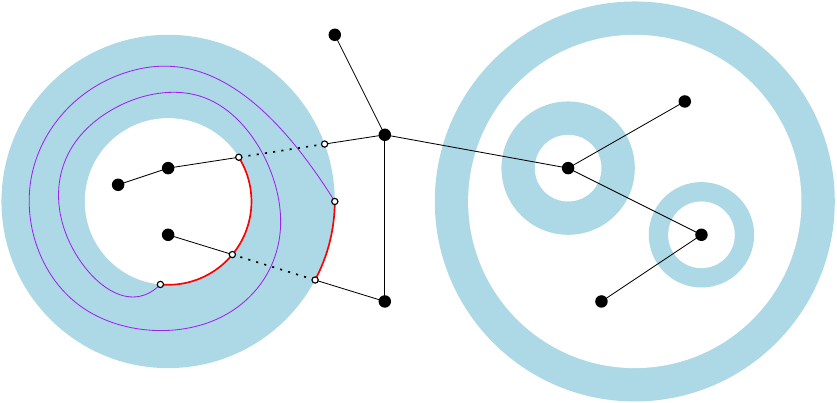}
\caption{
A refinement operation for a Steiner tree $T$ (in black) with respect to regions $R_1, \dots, R_\ell$ (blue).
The purple path $Q$ belongs to a linkage $\cal{Q}$ that connects the two sides of the region $R_1$. 
We want $T$ to wind within $R_1$ in the same fashion as $Q$ to match the behavior of a certain solution.
We remove the subpaths of $T$ that traverse $R_1$ (the dotted segments) and replace them with $Q$.
Next, we insert the red arcs to $T$
in order to maintain connectivity.
These arcs may not correspond to any paths in the graph so we have to augment it with additional edges that are flagged as forbidden to be used by a solution.}\label{fig:decomposition}
\end{figure}

Once we are equipped with such a refined Steiner tree $T$, we show that provided there exists some solution $\pcal$, there is also a solution $\pcal'$ for which number of non-trivial crossings with $T$ is bounded by a function $f(k)$.
By a careful analysis, one can show that there are only $f(k)^{\Oh(k)}$ many homology classes with this property, and so it is sufficient to enumerate and consider only those classes. 
Notably, we are able to estimate $f(k) = k^{\Oh(1)}$,
which gives only $2^{\Oh(k\log k)}$ relevant homology classes; less than $2^{\Oh(k^2)}$ for the standard {\sc Planar Disjoint Paths} problem.
To this end, we take advantage of the fact that geodesic paths can only cross in a monotone fashion.
When $Q$ is a path in $T$ that is disjoint from the regions $R_1, \dots, R_\ell$ and $P$ is a solution path, we can inspect a ``crossing pattern'' of $P$ and $Q$.
As now we can assume that $P \cup Q$ exhibits no ``winding behavior'', it remains to consider the situation when $P$ alternatively crosses $Q$ from the ''left'' and from the ''right''.
Such crossings form structures resembling ``handles'' and
each non-trivial handle must enclose some terminal vertex.
Thanks to the monotonicity of crossings,
the number of non-trivial handles of $P$ on $Q$ can be estimated as $\Oh(k)$ (see \Cref{fig:handles-and-load} on page \pageref{fig:handles-and-load}).
By considering all solution paths $P$ and all paths $Q$ from $T$, we upper bound the total number of such handles by $\Oh(k^3)$.

\paragraph*{Organization of the proof.} 
We begin with \Cref{sec:rings} in which we introduce the concept of a \udagring and establish several properties of this notion.
Then, in \Cref{prop:ring:ring-decomp} we show how to decompose the graph into maximal regions that, on the one hand, cover all the areas where winding is possible and, on the other hand, enjoy a convenient structure.

Next, in \Cref{sec:crossings} we use the notion of {\em homotopy} to count the number of ``non-trivial crossings'' between a path $Q$ and a solution $\pcal$.
In \Cref{prop:outside} we prove that this number is bounded for $Q$ that is disjoint from all the rings and in \Cref{prop:reroute:minload} we analyze the behavior of a path $Q$ that traverses some ring.
This section culminates with \Cref{prop:primalSteiner} in which we exploit the decomposition from \Cref{sec:rings} to construct a Steiner tree spanning the terminals that has few non-trivial crossings with a certain solution $\pcal$.

In \Cref{sec:schrijver} we create a link to the machinery invented by Schrijver~\cite{schrijver1994finding}.
First, we ``shift'' our Steiner tree to the dual graph and introduce the toolbox of homology.
In \Cref{lem:reduceToHom} we show how to identify the homology class of a solution using a collection of words assigned to paths on the Steiner tree.
In \Cref{lem:enumHom} we prove that it suffices to inspect $2^{\Oh(k\log k)}$ such collections, concluding the entire proof.

\section{Preliminaries}

We refer the reader to~\cite{Diestel} for standard graph-theoretic terms not explicitly defined here.
We use letter $G$ to denote undirected graphs and $D$ for directed graphs (digraphs).
A directed acyclic graphs is called a DAG. 
We denote the set of vertices as $V(G)$ (resp. $V(D)$) and the set of edges (resp. arcs) as $E(G)$ (resp. $A(D)$).
For two sets $V_1,V_2 \sub V(G)$, we denote by $E(V_1,V_2)$ the set of edges in $G$ with one endpoint in $V_1$ and the other in $V_2$. 
For an undirected graph $G$, we denote by $\Ev(G)$ to the set of all orientations of its edges, i.e., $\Ev(G) = \{(u,v), (v,u) \mid uv \in E(G)\}$.
For two sets $V_1,V_2 \sub V(G)$, we write $\Ev(V_1,V_2)$ to denote the set $\{(u,v) \mid uv \in E(G), u \in V_1, v\in V_2\}$.
A set $S \sub V(G)$ is a {\em{$(V_1,V_2)$-separator}} in $G$ if $S$ intersects every path connecting a vertex from $V_1$ to a vertex in $V_2$.

\paragraph{Plane graphs and duals.}
A {\em{plane embedding}} of a graph $G$ is given by a mapping from $V(G)$ to $\mathbb{R}^2$ and a mapping that associates with each edge $uv \in E(G)$ a simple curve in the plane connecting the images of $u$ and $v$, so that the curves given by two distinct edges can intersect only in a common endpoint. 
A graph is called {\em{planar}} if it admits a plane embedding.
{A {\em{plane graph}} is a graph with a fixed plane embedding, in which we identify the vertices with their images in the plane.} 
Similarly, we can consider plane multigraphs in which we allow for parallel edges and loops.
Here, the image of a loop is a simple closed curve.
A {\em face} in a plane graph (resp. multigraph) $G$ is a {maximal arc-connected subset of the plane minus the image of~$G$}.
We say that a vertex or an edge lies on a face $f$ if its image  belongs to the closure of $f$.
In every plane graph (resp. multigraph) there is exactly one face of infinite area, referred to as the {\em{outer face}}.
The set of faces of $G$ will be denoted by $F(G)$.

The {\em{dual multigraph}} $G^\star$ has the face set $F(G)$ of $G$ as the vertex set, and for every edge $e$ of $G$ we introduce the {\em{dual edge}} $e^\star$ connecting the two faces lying on the two sides of $e$. (Note that these two faces might coincide in case $e$ is a bridge, in which case $e^\star$ is a loop.) $G^\star$ admits a natural embedding in the plane in which every $f\in F(G)$ is mapped to an arbitrary point chosen within $f$, and dual edges are embedded so that every edge $e\in E(G)$ crosses its dual edge $e^\star$ at a single point.

A directed multigraph $D$ is called planar (resp. plane) if the underlying undirected multigraph is planar (resp. plane).
The dual $D^\star$ of  $D$ is a directed plane  multigraph defined analogously as above but taking into account the orientations of arcs.
An arc $a \in D$ that has face $f_1$ to the left (with respect to its orientation) and $f_2$ to the right corresponds to the arc $a^\star = (f_1,f_2)$ in $D^\star$.

\paragraph{Geometry.}
For a set $X \sub \rr^2$ we denote its closure by $\cl(X)$.
For a simple closed curve $S$ in $\rr^2$ we write $\disc(S)$ to denote the bounded connected component of $\rr^2 \sm S$.
Given two simple closed curves $S,S'$ in the plane, we say that $S$ and $S'$ \emph{cross} if both connected components of $\mathbb{R}^2\setminus S$ intersect $S'$.
For two simple closed curves $S_1, S_2$ that do not cross, we denote by $\ring(S_1, S_2)$ the union of all connected components $X$ of $\rr^2 \sm (S_1 \cup S_2)$ such that $\cl(X)$ intersects both $S_1$ and $S_2$.

\paragraph{Geodesics.}
Given a graph $G$ with positive edge weights $w \colon E(G) \to \rr^+$ and a set $F\subseteq E(G)$,
we use $w(F)$ to denote $\sum_{e\in F} w(e)$.
Also, given $s,t\in V(G)$, we use $d_G(s,t)$ (or just $d(s,t)$ when it does not lead to confusion) to denote the minimum value of $w(E(P))$ over all paths $P$ in $G$ connecting $s$ and $t$ (if no such path exists, we set $d_G(s,t)=\infty$).

A {\em geodesic} in a graph $G$ is a path that is a shortest path between its endpoints.
For a path $P$ and $u,v \in V(P)$ we denote by $P[{u,v}]$ the subpath of $P$ between $u$ and $v$.

\begin{observation}\label{lem:prelim:subpath}
    If $P$ is a geodesic in a graph $G$ and $u,v \in V(P)$ then $P[{u,v}]$ is also a geodesic.
\end{observation}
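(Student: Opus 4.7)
The plan is a short proof by contradiction using the standard ``replace a subpath by a shorter path'' argument, exploiting positivity of edge weights to handle the fact that the resulting object may initially be only a walk. Let me outline the steps.

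First, I would suppose toward a contradiction that $P[u,v]$ is not a geodesic. Let $s,t$ denote the endpoints of $P$. Since $P[u,v]$ is not shortest, there exists a path $Q$ in $G$ from $u$ to $v$ with $w(E(Q)) < w(E(P[u,v]))$. I would then form the walk $W$ obtained by concatenating $P[s,u]$, $Q$, and $P[v,t]$; this is a $(s,t)$-walk in $G$ of total weight strictly less than $w(E(P))$.

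Next, the only subtlety is that $W$ may fail to be a simple path, since $Q$ can share internal vertices with $P[s,u]$ or $P[v,t]$. Here I would invoke the elementary fact that from any walk between two vertices one can extract a simple path between the same two vertices whose edge set is a subset of the walk's edge set (by repeatedly shortcutting cycles). Since edge weights are positive, this extracted simple $(s,t)$-path $P'$ satisfies $w(E(P')) \le w(E(W)) < w(E(P))$.

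Finally, the existence of $P'$ contradicts the assumption that $P$ is a geodesic from $s$ to $t$, so $P[u,v]$ must itself be a geodesic. The only place where any care is needed is the shortcutting step and the reliance on positivity of weights (which is part of the problem's standing assumption, so no additional hypothesis is required); beyond that, there is no real obstacle, and the whole argument is a couple of lines.
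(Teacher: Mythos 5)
Your argument is correct and is the standard one; the paper states this as an \emph{observation} without supplying a proof, so there is nothing to compare against. One small remark: the shortcutting step only requires non-negative cycle weights (so that discarding cycles does not increase the walk's total weight), which is a slightly weaker consequence of the paper's standing assumption of positive edge weights, but invoking positivity is perfectly fine and consistent with the paper's setting.
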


A path $P$ can be oriented in two ways.
A fixed orientation of $P$ induces a linear order on $V(P)$ denoted~$\le_P$. 
When we refer to an oriented $(x,y)$-path $P$, we consider it with orientation for which $x$ is the first element in the order $\le_P$ and $y$ is the last element.
For $z \in V(P)$ we abbreviate $P[\cdot,z] = P[x,z]$ and $P[z,\cdot] = P[z,y]$.

\begin{observation}\label{lem:prelim:replacement}
    Let $P$ be an oriented shortest $(x,y)$-path in a graph $G$, $u,v \in V(P)$ satisfy $x \le_P u <_P v \le_P y$, and $Q$ be a shortest $(u,v)$-path in $G$.
    Then the concatenation of paths $P[{x,u}] + Q + P[{v,y}]$ also forms a  shortest $(x,y)$-path.
\end{observation}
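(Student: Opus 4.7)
The plan is to verify two things: that the concatenation is in fact a simple $(x,y)$-path, and that its total weight equals $d_G(x,y)$. The second point is essentially a bookkeeping calculation, so the only real content lies in simplicity, which will rest on the positivity of edge weights.

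First I would compute the weight. By \Cref{lem:prelim:subpath}, the subpath $P[u,v]$ is itself a geodesic, so
\[
  w(E(P[u,v])) = d_G(u,v) = w(E(Q)),
\]
where the last equality uses that $Q$ is a shortest $(u,v)$-path. Consequently, the concatenation $W := P[x,u] + Q + P[v,y]$ is a walk from $x$ to $y$ of total weight
\[
  w(E(P[x,u])) + w(E(Q)) + w(E(P[v,y])) = w(E(P)) = d_G(x,y).
\]

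Next I would argue that $W$ is a simple path. Suppose for contradiction that some vertex $z$ appears at least twice along $W$. Then one may shortcut $W$ between two occurrences of $z$, producing a walk $W'$ from $x$ to $y$ whose edge set is a proper subset of $E(W)$. Since the weight function is strictly positive, $w(E(W')) < w(E(W)) = d_G(x,y)$, and any walk contains a path between its endpoints of no greater total weight, contradicting the definition of $d_G(x,y)$. Hence $W$ visits every vertex at most once and is a genuine $(x,y)$-path.

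Combining the two observations, $W$ is a path from $x$ to $y$ of weight exactly $d_G(x,y)$, hence a shortest $(x,y)$-path, as claimed. The only mildly delicate step is the simplicity argument, and that works cleanly solely because the paper restricts attention to positive edge weights; with zero-weight edges allowed the concatenation could fail to be simple, which aligns with the earlier remark in the paper about the need to forbid zero weights.
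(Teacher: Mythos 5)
Your argument is correct and complete: the weight calculation via \Cref{lem:prelim:subpath} and the simplicity argument via strict positivity of edge weights are exactly what is needed. The paper states this as an Observation and omits the proof entirely, so there is no written argument to compare against; yours supplies the standard justification that the authors left implicit, including the remark that positivity is essential for simplicity (a point the paper makes only in passing elsewhere).

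One small technical note: when you claim $E(W')$ is a proper subset of $E(W)$ and conclude $w(E(W')) < w(E(W))$, this implicitly treats $E(W)$ as a set rather than a multiset, and a priori $W$ could traverse an edge twice. This is easily patched --- since the total weight of $W$ counted with multiplicity is exactly $d_G(x,y)$, and any $(x,y)$-walk contains an $(x,y)$-path of no greater total weight, $W$ cannot reuse an edge to begin with; hence $E(W)$ as a set carries weight exactly $d_G(x,y)$ and your shortcutting argument goes through verbatim. The proof is sound.
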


We state a crucial property of geodesics in graphs with positive edge weights, which will be exploited repeatedly in the analysis of the crossing patterns.

\begin{lemma}[Monotonicity of crossings]\label{lem:prelim:monotone3}
    Let $G$ be an undirected graph with positive edge weights $w \colon E(G) \to \rr^+$.
    Next, let $P, Q$ be oriented geodesics in $G$ and $v_1,v_2,v_3 \in V(P\cap Q)$.
    If $v_1 <_Q v_2 <_Q v_3$ then
    either $v_1 <_P v_2 <_P v_3$
    or $v_1 >_P v_2 >_P v_3$.
\end{lemma}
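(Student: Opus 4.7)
\medskip

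\noindent\textbf{Proof plan for \Cref{lem:prelim:monotone3}.}
The plan is to reduce the statement to a simple arithmetic identity about the pairwise distances $a := d(v_1,v_2)$, $b := d(v_2,v_3)$ and $c := d(v_1,v_3)$, which are all strictly positive because the three vertices are distinct and edge weights are positive. First, I would invoke \Cref{lem:prelim:subpath} on $Q$: since $v_1 <_Q v_2 <_Q v_3$, the subpath $Q[v_1,v_3]$ is a geodesic that passes through $v_2$, and its two halves $Q[v_1,v_2]$ and $Q[v_2,v_3]$ are themselves geodesics. This immediately yields the additivity identity $c = a + b$.

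Next, I would consider the linear order induced on $\{v_1,v_2,v_3\}$ by $\le_P$ and argue that its middle element must be $v_2$. Since $P$ is a geodesic, \Cref{lem:prelim:subpath} applied to $P$ analogously gives an additivity identity according to which vertex sits between the other two on $P$. There are exactly three possibilities:
\begin{itemize}[nosep]
    \item if $v_2$ is the middle element on $P$, then $c = a + b$, consistent with the identity from $Q$, and the conclusion of the lemma holds;
    \item if $v_1$ is the middle element on $P$, then $b = a + c$; combined with $c = a+b$, this gives $a = 0$, contradicting positivity of weights;
    \item if $v_3$ is the middle element on $P$, then $a = b + c$; combined with $c = a+b$, this gives $b = 0$, again a contradiction.
\end{itemize}

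Hence $v_2$ must appear strictly between $v_1$ and $v_3$ in the order $\le_P$, which is exactly the dichotomy in the statement (depending on the chosen orientation of $P$).

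The only step that requires a little care is justifying the additivity identity on $P$: to extract $b = a+c$ (respectively $a = b+c$) one applies \Cref{lem:prelim:subpath} to the subpath of $P$ whose endpoints are the two extreme elements of $\{v_1,v_2,v_3\}$ in $\le_P$, and observes that the middle element splits it into two subpaths whose weights realise the two corresponding distances. Everything else is a short case analysis, so I do not foresee any real obstacle beyond cleanly writing out these three cases.
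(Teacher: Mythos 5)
Your proof is correct, and it takes a genuinely different route from the paper's. The paper argues by contradiction via a \emph{path-replacement} (surgery) step: assuming, say, that $v_3$ is internal to $P[v_1,v_2]$, it replaces $Q[v_1,v_2]$ by $P[v_1,v_2]$ inside $Q$, obtaining (by \Cref{lem:prelim:subpath} and \Cref{lem:prelim:replacement}) something that must be a geodesic but visits $v_3$ twice, hence is not simple --- impossible with positive weights. Your argument is instead purely \emph{metric}: you extract the additivity $d(v_1,v_3)=d(v_1,v_2)+d(v_2,v_3)$ from $Q$, extract the analogous identity from $P$ for whichever vertex sits in the middle on $P$, and show that the two identities together force one of the pairwise distances to be $0$ unless $v_2$ is the middle one, again using positivity. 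Both arguments hinge on the same two ingredients (\Cref{lem:prelim:subpath} plus strict positivity of edge weights), so neither is more general; the paper's version is slightly more ``combinatorial'' and directly exhibits the offending repeated vertex, whereas yours avoids any path surgery and reduces everything to a two-line arithmetic contradiction, which is arguably a bit cleaner to write out. No gaps.
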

\begin{proof}
    Suppose that the claim does not hold.
    Then either $v_3$ is an internal vertex of $P[{v_1,v_2}]$
    or $v_1$ is an internal vertex of $P[{v_2,v_3}]$.
    Assume w.l.o.g. the first scenario.
    Note that $P[{v_1,v_2}]$ is a geodesic by \Cref{lem:prelim:subpath}.
    Let $Q'$ be obtained from $Q$ be replacing the subpath $Q[{v_1,v_2}]$ with $P[{v_1,v_2}]$.
    Then $Q'$ should also be a geodesic due to \Cref{lem:prelim:replacement}.
    But the vertex $v_3$ appears twice on $Q'$ so it is not a simple path and hence cannot be a geodesic because the length of any cycle in $G$ is positive.
\end{proof}

\subsection{Problem definition and basic facts}

\defparproblem{Disjoint Shortest Paths}
{An undirected graph $G$ with positive edge weights and a collection $\tcal=\{(s_1,t_1),\dotsm,(s_k,t_k)\}$ of pairs of vertices.}
{$k$}
{Find vertex-disjoint paths $P_1,\dots, P_k$ so that $P_i$ is a shortest $(s_i,t_i)$-path for each $i \in [k]$.}

For comparison, in the {\sc Disjoint Paths} problem we do not require $P_i$ to be a geodesic. 

For an undirected graph $G$ an {\em oriented subgraph} of $G$ is a directed graph with the vertex set $V(G)$ and an arc set contained in $\Ev(G)$.
It will later be convenient to analyze the problem from the following perspective, where oriented subgraphs of $G$ encode the metric structure.

\defparproblem{Disjoint Annotated Paths}
{An undirected graph $G$, a collection $\tcal=\{(s_1,t_1),\dotsm,(s_k,t_k)\}$ of pairs of vertices and a collection $\mathcal{D}=\{D_1,\ldots,D_k\}$ of oriented subgraphs of $G$.}
{$k$}
{Find vertex-disjoint paths $P_1,\dots, P_k$ so that $P_i$ is an oriented $(s_i,t_i)$-path in $D_i$ for each $i \in [k]$.}

In \textsc{Planar Disjoint Shortest Paths} (\pdsp) and \textsc{Planar Disjoint Annotated Paths} ({\sc PDAP}) we assume that the input graph $G$ is planar and that it is provided as a plane graph, that is, with a fixed embedding.

The following reduction from {\sc{Disjoint Paths}} on DAGs to {\sc{Disjoint Shortest Paths}} in undirected graphs is known, see e.g.~\cite[Theorem 2.2]{DBLP:conf/soda/Lochet21}. Here we note that it also works in the planar setting.

\begin{lemma}\label{lem:prelim:planar-dag-reduction}
    There is polynomial-time algorithm that, given a planar acyclic digraph $D$ with a collection of vertex pairs $\tcal_D$, outputs a planar undirected graph $G$ with a collection of vertex pairs $\tcal_G$ so that  $|\tcal_G| = |\tcal_D|$ and $(D,\tcal_D)$ is a yes-instance of {\sc Disjoint Paths} if and only if $(G,\tcal_G)$ is a yes-instance of {\sc Disjoint Shortest Paths}.
\end{lemma}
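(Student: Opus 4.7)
The plan is to reduce by assigning positive edge weights to the underlying undirected graph of $D$ that encode the acyclic orientation, so that shortest paths in the resulting undirected graph are forced to follow arcs of $D$. First, I would fix a topological potential $\phi \colon V(D) \to \{1,\dots,|V(D)|\}$ satisfying $\phi(u) < \phi(v)$ for every arc $(u,v) \in A(D)$; this exists because $D$ is a DAG and is computable in linear time. I would then take $G$ to be the underlying undirected graph of $D$ equipped with the same plane embedding, set $\tcal_G := \tcal_D$, and give each edge $uv \in E(G)$, arising from the arc $(u,v) \in A(D)$, the weight $w(uv) := \phi(v) - \phi(u) > 0$. The construction is clearly polynomial-time and planarity is preserved because $G$ is just $D$ with arc orientations forgotten.

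The core of the correctness argument is a telescoping inequality. Fix an index $i$ and assume first that $s_i$ reaches $t_i$ in $D$, which forces $\phi(s_i) < \phi(t_i)$. For any $(s_i,t_i)$-path $P = v_0,v_1,\dots,v_m$ in $G$, each edge $v_{j-1}v_j$ contributes $w(v_{j-1}v_j) = |\phi(v_j) - \phi(v_{j-1})|$, since the underlying arc in $D$ is oriented from the lower-potential endpoint to the higher-potential one. Hence
$$w(P) \;=\; \sum_{j=1}^m |\phi(v_j)-\phi(v_{j-1})| \;\geq\; \Bigl|\sum_{j=1}^m (\phi(v_j)-\phi(v_{j-1}))\Bigr| \;=\; \phi(t_i) - \phi(s_i),$$
with equality iff all terms have the same sign, i.e., iff $P$ traverses every edge in the direction of its arc in $D$ and so corresponds to a directed $(s_i,t_i)$-path in $D$. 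Since at least one directed $(s_i,t_i)$-path exists and realizes the bound, we obtain $d_G(s_i,t_i)=\phi(t_i)-\phi(s_i)$, and a $(s_i,t_i)$-path in $G$ is a shortest path if and only if it is a directed $(s_i,t_i)$-path in $D$.

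The two directions of the reduction now follow immediately: vertex-disjoint directed paths $P_1,\dots,P_k$ in $D$ are, as undirected paths, vertex-disjoint shortest $(s_i,t_i)$-paths in $G$; and conversely any family of vertex-disjoint shortest $(s_i,t_i)$-paths in $G$ consists of paths forced to be forward in $D$, yielding vertex-disjoint directed paths. To deal with the degenerate case, if some pair $(s_i,t_i) \in \tcal_D$ has no directed path from $s_i$ to $t_i$ in $D$ (which can be checked in polynomial time by reachability), the algorithm just outputs a fixed planar no-instance of \textsc{Disjoint Shortest Paths}, for example two isolated terminal pairs placed in separate components. I do not foresee any real obstacle; the only delicate point is the equality case of the telescoping inequality, which is exactly what forces a shortest path in $G$ to respect the orientation of $D$.
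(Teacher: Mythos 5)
Your proof is correct, and it uses a genuinely more elementary implementation of the same underlying idea. Both you and the paper encode a topological order $\phi$ of $D$ into the metric of $G$ so that $d_G(u,v) = |\phi(u)-\phi(v)|$ and shortest paths are forced to respect arc orientations. The paper realizes this by \emph{subdividing} each arc $(v_i,v_j)$ into $j-i$ unit edges, so the output graph is unweighted; you realize it by \emph{weighting} each arc directly with $\phi(v)-\phi(u)$, keeping the vertex set unchanged. Your variant is slightly cleaner for the disjointness direction: since you introduce no subdivision vertices, the vertex sets of $D$ and $G$ coincide, and vertex-disjoint paths trivially correspond (whereas with subdivisions one must note that a shared subdivision vertex, having degree~2, forces the shared edge and hence shared original endpoints). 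Your telescoping inequality, with the equality analysis tying it to sign-consistency of the increments, is a precise statement of what the paper asserts more informally. The paper's construction has the minor advantage of producing an unweighted instance, but the problem in the paper is defined over positive edge weights, so both are valid. One small formal remark to add for full rigor: you should note that $\phi$ can be taken injective (a topological numbering), so all weights are strictly positive, which the problem definition requires.
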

\begin{proof}
    First, we check if for every $(s_i,t_i) \in \tcal_D$ there exists an $(s_i,t_i)$-path in $D$; if no, we can output a trivial no-instance.
    Next, fix some topological order $v_1,\dots,v_n$ of vertices in $D$.
    Let $G$ be obtained from $D$ by transforming each arc $(v_i,v_j)$ into an undirected edge $v_iv_j$ and subdividing it into $j-i$ unit-length edges.
    Clearly $G$ remains planar.
    Observe that for any $i<j$ there is a 1-1 correspondence between
    directed $(v_i,v_j)$-paths in $D$ and undirected $(v_i,v_j)$-paths in $G$ of length $j-i$, which are shortest $(v_i,v_j)$-paths in $G$.
    Since for each $(s_i,t_i) \in \tcal_D$ the vertex $t_i$ is reachable from $s_i$ in $D$, 
    their distance in $G$ 
    equals their distance in the fixed topological order.
    Hence there is also a 1-1 correspondence between solutions to $(D,\tcal_D)$ and solutions to $(G,\tcal_G)$
\end{proof}

\Cref{lem:prelim:planar-dag-reduction} gives a parameter-preserving reduction from {\sc Disjoint Paths} on planar DAGs to \pdspfull.
Since the first problem is $\mathsf{NP}$-hard~\cite{amiri14vertex}, we know that the latter is as well.

\paragraph{Shortest paths DAGs.}
We next define \emph{shortest paths DAGs} for given pairs of terminals. A slightly different variant of this definition was considered in~\cite{AkmalWW24}.

\begin{definition}[Shortest paths DAGs]
    Let $G$ be an undirected graph with positive edge weights $w \colon E(G) \to \rr^+$ and $s, t \in V(G)$.
    We define $(s,t)\dagg$ as the oriented subgraph of $G$ with arcs defined as follows: $(u,v) \in A((s,t)\dagg)$ if $uv \in E(G)$ and $d_G(s,u) + w(uv) + d_G(v,t) = d_G(s,t)$.
\end{definition}

The following lemma is a straightforward consequence of the definition of shortest paths DAGs (as already observed in~\cite{AkmalWW24}).
\begin{lemma}
    \label{lem:prelim:spaths_in_dags}
    Let $G$ be an undirected graph with positive edge weights $w\colon E(G)\to \rr^+$ and $s,t\in V(G)$.
    Also let $P$ be an oriented $(s,t)$-path in $G$.
    Then $P$ is a geodesic in $G$ if and only if $P$ is a path in $(s,t)$\dagg.
\end{lemma}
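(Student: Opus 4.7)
The plan is to prove the two directions separately, both essentially reducing to the triangle inequality together with the fact that, by \Cref{lem:prelim:subpath}, every subpath of a geodesic is a geodesic.

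For the forward direction, I would write $P = (s = v_0, v_1, \ldots, v_m = t)$ and fix an arbitrary arc $(v_{i-1}, v_i)$ of $P$. Since $P$ is a geodesic, \Cref{lem:prelim:subpath} yields $w(E(P[s,v_{i-1}])) = d_G(s, v_{i-1})$ and $w(E(P[v_i, t])) = d_G(v_i, t)$. Adding $w(v_{i-1}v_i)$ and using $w(E(P)) = d_G(s,t)$ gives
\[
d_G(s, v_{i-1}) + w(v_{i-1}v_i) + d_G(v_i, t) = d_G(s,t),
\]
so $(v_{i-1}, v_i) \in A((s,t)\dagg)$ as required.

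For the backward direction, again write $P = (s = v_0, v_1, \ldots, v_m = t)$. The plan is to show by induction on $i$ that $d_G(s, v_i) = \sum_{j=1}^{i} w(v_{j-1} v_j)$; setting $i = m$ will yield $w(E(P)) = d_G(s, t)$. The base case $i = 0$ is immediate. For the inductive step, combining the arc condition $d_G(s, v_{i-1}) + w(v_{i-1}v_i) + d_G(v_i, t) = d_G(s, t)$ with the triangle inequality $d_G(s, t) \le d_G(s, v_i) + d_G(v_i, t)$ gives $d_G(s, v_i) \ge d_G(s, v_{i-1}) + w(v_{i-1} v_i)$, while the reverse inequality follows from the triangle inequality applied along the edge $v_{i-1} v_i$. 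This forces equality, closing the induction.

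There is no real obstacle here; the statement is essentially a repackaging of the optimality conditions used in Dijkstra's algorithm, and the entire argument is a short chain of triangle-inequality manipulations. The only mildly delicate point is making sure that in the inductive step one invokes both the upper bound (triangle inequality along the added edge) and the lower bound (from the arc condition combined with triangle inequality to $t$), which together pin down $d_G(s,v_i)$ exactly.
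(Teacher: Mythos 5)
Your proof is correct, and the forward direction matches the paper's argument essentially word for word (both reduce to the observation that subpaths of geodesics are geodesics, \Cref{lem:prelim:subpath}). For the backward direction the paper also inducts, but peels off the first edge and appeals to the inductive hypothesis to conclude that $P[v,t]$ is a shortest $(v,t)$-path; to invoke that hypothesis one must first verify that $P[v,t]$ is a path in $(v,t)\dagg$, a step the paper leaves implicit and which actually requires an argument essentially equivalent to your prefix-distance claim. Your version, which directly establishes $d_G(s,v_i) = \sum_{j\le i} w(v_{j-1}v_j)$ by a forward induction with the two triangle inequalities you identify, is self-contained and sidesteps this subtlety, so it is arguably the cleaner way to organize the same idea.
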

\begin{proof}
    For the forward implication observe that every (oriented) edge $(u,v)$ of $P$ belongs to $A((s,t)\dagg)$ by definition and therefore $P$ is a path in $(s,t)$\dagg.
    To prove the second implication, we proceed by induction on the number $\ell$ of edges in $P$.
    The claim clearly holds for $\ell \le 1$ and so we assume $\ell \ge 2$.
    Let $v$ be the second vertex on $P$.
    Since $(s,v) \in A((s,t)\dagg)$
    we have $d_G(s,t) = w(sv) + d_G(v,t)$.
    By the inductive assumption the path $P[v,t]$ is a shortest $(v,t)$-path so its length equals $d_G(v,t)$.
    Consequently, the length of $P$ is $d_G(s,t)$.
\end{proof}

Since we work with positive edge weights, no $(s,t)$-walk in $(s,t)$\dagg may contain a cycle.
This implies that $(s,t)$\dagg is indeed a DAG.

\begin{observation}
\label{obs:prelim:no-dicycles}
    Let $G$ be an undirected graph with positive edge weights $w\colon E(G)\to \rr^+$ and $s,t\in V(G)$. Then $(s,t)$\dagg does not contain directed cycles.
\end{observation}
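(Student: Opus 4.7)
The plan is to derive a contradiction from the existence of a directed cycle by tracking how the distance from $s$ increases along its arcs. Specifically, I would suppose toward contradiction that $v_1 \to v_2 \to \dots \to v_m \to v_1$ is a directed cycle in $(s,t)\dagg$, setting $v_{m+1} := v_1$.

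First I would note that every vertex $v_i$ in the cycle is an endpoint of some arc of $(s,t)\dagg$, and the defining equality of that arc forces $d_G(s,v_i) + d_G(v_i,t) = d_G(s,t)$ (equivalently, $v_i$ lies on some shortest $(s,t)$-path). Combining this equality for $v_{i+1}$ with the arc condition
\[
d_G(s,v_i) + w(v_iv_{i+1}) + d_G(v_{i+1},t) = d_G(s,t),
\]
I obtain $d_G(s,v_{i+1}) = d_G(s,v_i) + w(v_iv_{i+1})$ for every $i \in [m]$.

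Summing these identities around the cycle yields
\[
0 = \sum_{i=1}^{m} \bigl(d_G(s,v_{i+1}) - d_G(s,v_i)\bigr) = \sum_{i=1}^{m} w(v_iv_{i+1}),
\]
which contradicts the assumption that all edge weights are positive.

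The argument is entirely routine; the only subtlety is ensuring that each $v_i$ actually satisfies $d_G(s,v_i) + d_G(v_i,t) = d_G(s,t)$. I do not expect any genuine obstacle — this falls out immediately because each $v_i$ is an endpoint of an arc of $(s,t)\dagg$, and the definition of such an arc forces the corresponding distance sum to equal $d_G(s,t)$.
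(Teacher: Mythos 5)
Your argument is correct. The key insight — that along any arc $(u,v)$ of $(s,t)\dagg$ the distance from $s$ strictly increases, i.e.\ $d_G(s,v) = d_G(s,u) + w(uv) > d_G(s,u)$, so $d_G(s,\cdot)$ is a strictly increasing potential along directed walks — is the standard device for proving acyclicity of shortest-path DAGs. The paper does not give a formal proof of this observation; it is dispatched in the single sentence preceding the statement (``Since we work with positive edge weights, no $(s,t)$-walk in $(s,t)\dagg$ may contain a cycle''), implicitly appealing to \Cref{lem:prelim:spaths_in_dags} (a directed $(s,t)$-path in $(s,t)\dagg$ is a shortest $(s,t)$-path, so with positive weights it cannot revisit a vertex). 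Your potential-function route is at least as rigorous and arguably cleaner, since it applies directly to an arbitrary directed cycle rather than first arguing that a cycle can be spliced into an $(s,t)$-walk. A small expository remark: the detour through the equality $d_G(s,v_{i+1}) + d_G(v_{i+1},t) = d_G(s,t)$ is not strictly needed — the same chain of triangle inequalities that proves it simultaneously yields $d_G(s,v_{i+1}) = d_G(s,v_i) + w(v_iv_{i+1})$, which is what you actually use. But that is cosmetic; the proof is sound.
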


Next, we make note of the following fact that is a direct consequence of~\Cref{lem:prelim:spaths_in_dags}.

\begin{observation}\label{obs:prelim:annotated}
    Let $(G,\tcal)$ be an instance of \textsc{Disjoint Shortest Paths} and $(G,\tcal, \mathcal{D})$ be an instance of \textsc{Disjoint Annotated Paths} for $\mathcal{D} = \{(s_i,t_i)\dagg \mid i \in [k]\}$.
    Consider a family $\pcal$ of vertex-disjoint paths $P_1,\dots,P_k$ so that $P_i$ is an $(s_i,t_i)$-path.
    Then $\pcal$ is a solution to $(G,\tcal)$  if and only if it is a solution to $(G,\tcal, \mathcal{D})$.
\end{observation}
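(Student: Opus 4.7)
The plan is to decouple the two conjunctive conditions imposed on the family $\pcal = (P_1, \dots, P_k)$. Both formulations require exactly the same thing from $\pcal$ as a whole, namely that the paths be vertex-disjoint and that $P_i$ connect $s_i$ with $t_i$; these parts of the hypothesis coincide verbatim in the two statements. The only potential difference lies in the per-index condition on $P_i$: in \textsc{Disjoint Shortest Paths} we require $P_i$ to be a geodesic in $G$, whereas in \textsc{Disjoint Annotated Paths} (with the specific $\dcal = \{(s_i,t_i)\dagg \mid i \in [k]\}$) we require $P_i$ to be an oriented $(s_i,t_i)$-path in $(s_i,t_i)\dagg$.

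Hence the entire observation reduces to checking this per-index equivalence for every $i \in [k]$ individually, and this is exactly \Cref{lem:prelim:spaths_in_dags} applied to the pair $(s_i,t_i)$ and the path $P_i$. Since the annotation $D_i$ depends only on $G$ and the pair $(s_i,t_i)$ and not on the other paths, taking the conjunction over $i \in [k]$ preserves both directions of the biconditional, which yields the claim. There is no real obstacle here: the observation is purely a rebranding of \Cref{lem:prelim:spaths_in_dags} at the level of $k$-tuples of paths, and no further combinatorial content needs to be extracted.
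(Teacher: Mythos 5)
Your proof is correct and mirrors exactly what the paper has in mind: the observation is stated there as an immediate consequence of \Cref{lem:prelim:spaths_in_dags}, and your reduction to the per-index equivalence followed by taking the conjunction over $i\in[k]$ is precisely that argument spelled out. No gaps.
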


\subsection{Nice instances}

It will be convenient to restrict ourselves to instances with somewhat regularized structure.
The following definition refers to instances of both problems (\pdsp and \pdap).

\begin{definition}
\label{def:nice-instance}
    An instance $(G,\tcal)$ is called {\em nice} if the following conditions hold.
    \begin{enumerate}[nosep]
        \item $G$ is a connected plane graph.
        \item $\tcal$ is a collection of vertex pairs from $G$ in which no vertex appears more than once.
        \item Every vertex appearing in $\tcal$ has degree 1.

        \vspace{0.2cm}
        Additionally, for \pdsp we require one more condition.
        \vspace{0.2cm}

        \item Each edge $e \in E(G)$ appears as an arc in $(s,t)\dagg$ for some $(s,t) \in \tcal$.
    \end{enumerate}
\end{definition}

We  begin by reducing a general instance of the problem to a nice one.

\begin{lemma}\label{lem:nice-reduction}
There is an algorithm that, given an instance $(G,\tcal)$ of \pdsp, outputs, in polynomial time, one of the following:
\begin{itemize}[nosep]
    \item either a report that $(G,\tcal)$ is a no-instance, or
    \item a collection $\mathcal{I}$ of nice instances of \pdsp,
where $|\mathcal{I}|=\mathcal{O}(k+|G|)$ and
$(G,\tcal)$ is a yes-instance if and only if every instance in $\mathcal{I}$ is a yes-instance. 
\end{itemize} 
\end{lemma}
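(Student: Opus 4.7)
The plan is to preprocess $(G,\tcal)$ in three successive polynomial-time stages, each of which either reports NO or transforms the current instance into an equivalent one satisfying an additional condition of \Cref{def:nice-instance}; at the end I split the resulting graph into its connected components and output one nice sub-instance per terminal-carrying component.

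First I enforce conditions (2) and (3). If some vertex of $G$ occurs in two distinct entries of $\tcal$ (in particular, if $s_i=t_i$), then no vertex-disjoint family of nontrivial paths can solve the instance and I report NO. Otherwise, for every terminal $v$ I attach a fresh pendant vertex $v'$ via a new edge of weight~$1$ and replace $v$ by $v'$ in its pair. Every new terminal then has degree $1$, no vertex is used twice in the updated collection $\tcal_1$, and equivalence with $(G,\tcal)$ holds because the pendant edge at a new terminal is forced to be the first (resp.\ last) edge of any path starting (resp.\ ending) at it, so shortest paths correspond bijectively.

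Second I enforce condition (4). Using single-source shortest-path computations I produce $(s,t)\dagg$ for every pair $(s,t)\in\tcal_1$ and delete from $G_1$ every edge that does not appear, in either orientation, in at least one of these DAGs; call the result $G_2$. By \Cref{lem:prelim:spaths_in_dags} the surviving edges are exactly those lying on some terminal geodesic, so feasibility is preserved. A pruned edge lies on no shortest $(s,t)$-path for any pair in $\tcal_1$, hence its removal cannot shorten any $d(s_i,t_i)$; consequently $d_{G_2}(s,t)=d_{G_1}(s,t)$ for every terminal pair, and the DAGs recomputed inside $G_2$ coincide with those computed inside $G_1$. In particular, every surviving edge still lies in some DAG of $G_2$, which is precisely condition (4).

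Finally I enforce condition (1). I compute the connected components of $G_2$: if some pair $(s,t)\in\tcal_1$ has its endpoints in distinct components then $d_{G_2}(s,t)=\infty$ and I report NO; otherwise I emit, for each component $C$ containing at least one terminal pair, the sub-instance $(C,\tcal_C)$ where $\tcal_C$ collects the pairs of $\tcal_1$ whose endpoints lie in $C$, and discard terminal-free components. Conditions (1)--(3) are immediate, and (4) is inherited because the DAG associated with a pair lives entirely inside the component of its endpoints, so every edge of $C$ is still certified by a DAG that belongs to $\tcal_C$. Correctness of the overall reduction is then routine: concatenating per-component solutions yields a vertex-disjoint solution in $G_2$, hence in $G_1$, and contracting the pendant edges recovers a solution to $(G,\tcal)$. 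The number of emitted instances is bounded by the number of terminal-carrying components, hence by $k$. The only step requiring a moment of care---and therefore the main obstacle of the argument---is verifying the DAG-invariance in the second stage; this is what justifies that condition (4) automatically persists through the later operations rather than needing to be re-established from scratch.
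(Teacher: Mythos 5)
Your three-stage reduction---guard against repeated terminals, attach unit-weight pendants, delete every edge that lies in no terminal DAG, then split into terminal-carrying connected components---matches the paper's proof essentially step for step, with the connectivity check merely deferred to the end (equivalent, since every shortest terminal path survives the pruning, so terminal pairs stay connected iff they were connected initially). One small wording slip worth noting: you justify DAG-invariance by saying pruning ``cannot shorten'' any $d(s_i,t_i)$, which is vacuous for any edge deletion; the substantive half of the argument is that pruning cannot \emph{lengthen} these distances either, precisely because every edge of every shortest $(s_i,t_i)$-path is retained, and it is this equality $d_{G_2}(s_i,t_i)=d_{G_1}(s_i,t_i)$ together with the preservation of $d(s_i,u)$ and $d(v,t_i)$ for vertices on those geodesics that keeps the DAGs unchanged.
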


\begin{proof}
We assume that there is no pair $(s,t)\in\tcal$ such that $s$ and $t$ are in different connected components of $G$ and that no vertex appears in $\tcal$ more than once (since otherwise, we report that $(G,\tcal)$ is a no-instance).
Also, we modify the given graph $G$ as follows:
for every terminal vertex $s$, we add a new vertex $s'$ to $G$, adjacent only to $s$ and replace $s$ with $s'$ in the corresponding pair in $\tcal$. 
We extend the weight function $w$ to assign weight 1 to the new edge $ss'$ (this choice is arbitrary).
Next, we remove all edges of $G$ that do not appear as arcs in any $(s,t)$\dagg, for all $(s,t)\in \tcal$. To do this, for every $u,v\in V(G)$, we compute their distance in $G$ (this can be done in polynomial time) and therefore we can compute the directed graph $(s,t)$\dagg, for each $(s,t)\in \tcal$. Observe that after the removal of these edges, the number of connected components of the graph that contain a terminal vertex can grow by at most $k$.

The algorithm then constructs a collection of instances $\mathcal{I}$, one for each connected component of $G$ that contains some terminal vertex, i.e., $\mathcal{I}$ is the set of instances $(H,\tcal_{H})$, where $H$ is a connected component of $G$ (that contains a terminal vertex) and $\mathcal{T}_H\subseteq \mathcal{T}$ consists of the pairs from $\tcal$ with both vertices in $H$.
For every $(H,\tcal_H)\in\mathcal{I}$, compute a plane embedding of $H$ (this can be done in linear time, see~\cite{HopcroftT74} and~\cite{MehlhornM96}).
Note that thus, each instance in $\mathcal{I}$ satisfies~\Cref{def:nice-instance}\end{proof}

\section{Rings}
\label{sec:rings}

In order to define the ``ring-shaped regions'' as explained in~\Cref{sec:techniques}, 
for every bipartition of the set of vertices appearing in $\tcal$, we need to distinguish the pairs of $\tcal$ that are ``split'' in such partition. In fact, our constructions involve not only the set of terminal vertices but also an additional set of vertices, which correspond to the branching nodes of the Steiner tree composed of geodesics on which we will work on. These notions are formalized in the following definitions.

\begin{definition}
    Let $G$ be a graph and $\mathcal{T}$ be a set of pairs of vertices of $G$.
    We say that a vertex set $\widehat{T} \sub V(G)$
    is a {\em \terset} of $\tcal$ if $\widehat{T}$ contains all vertices appearing in $\tcal$, and possibly some more.
\end{definition}

\begin{definition}[Splitting partitions of terminals]
    Let $G$ be a graph, $\mathcal{T}$ be a set of pairs of vertices of $G$, and $\widehat{T} \sub V(G)$ be a \terset of $\tcal$.
    For a partition  $(X,Y)$ of $\widehat{T}$
    we define $\splitt_\tcal(X, Y)$  as the set of pairs $(s,t)$ such that $s \in X$, $t \in Y$ and either $(s,t)$ or $(t,s)$ belongs to~$\tcal$.
    We say that  $(X,Y)$ is a {\em splitting partition} if $\splitt_\tcal(X, Y) \ne \emptyset$.
    We also define $\sameside(X, Y)$
    as the collection of pairs $(s,t)\in \widehat{T}\times \widehat{T}$ with $s\neq t$,
    for which $\{s,t\} \sub X$ or $\{s,t\} \subseteq Y$.
\end{definition}

\subsection{DAG-cuts}

We next proceed to defining \emph{dag-cuts} for bipartitions of a terminal-superset $\widehat{T}$ of $\tcal$,
which correspond to cuts of the original graph where the cut edges should satisfy the following property: these edges belong to oriented shortest paths for \textsl{every} split pair of $\tcal$ but not for pairs of vertices of $\widehat{T}$ that are not split.

\begin{definition}[DAG-cuts]\label{def:dag-cut}
    Let $(G,\tcal)$ be a nice instance of \pdsp and $(X,Y)$ be a splitting partition of a \terset $\widehat{T}$ of $\tcal$.
    Also, let $W_X,W_Y\subseteq V(G)$ be two disjoint vertex sets with  $X\subseteq W_X$ and $Y\subseteq W_Y$.
    We define a $(W_X,W_Y)$\dagcut as a partition  $(V_X,V_Y)$ of $V(G)$ with the following properties.
    \begin{enumerate}[nosep]
        \item $W_X \sub V_X$ and $W_Y \sub V_Y$.
        \item For each $(v,u)\in \vec{E}(V_X,V_Y)$
            and each  $(t, t') \in \splitt_\tcal(X, Y)$, we have that $(v,u) \in A((t,t')\dagg)$.
        \item For each $(v,u)\in \vec{E}(V_X,V_Y)$
            and each  $(t, t') \in \sameside(X, Y)$, 
            $\{(v,u),(u,v)\}\cap A((t,t')\dagg)=\emptyset$.
    \end{enumerate}    
\end{definition}
See~\Cref{fig:dagcut} for an illustration of a dag-cut.

\begin{figure}[ht]
    \centering
    \includegraphics[width=6cm]{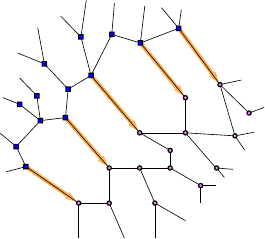}
    \caption{An example of a dag-cut. The vertices depicted as squares correspond to the set $V_X$ while the vertices depicted as discs correspond to the set $V_Y$.}
    \label{fig:dagcut}
\end{figure}

Whenever $P$ is a shortest $(t,t')$-path for some $(t,t') \in \tcal$, $P$ corresponds to a directed $(t,t')$-path in $(t,t')\dagg$.
If $(t, t') \in \sameside( X, Y)$ then $P$ clearly cannot use any edges from an $(X,Y)\dagcut$.
Otherwise, all these edges are oriented in the same direction in $(t,t')\dagg$.
Consequently, $P$ can cross an $(X,Y)\dagcut$ at most once.

\begin{observation}\label{obs:dag-cut:cross}
    Let $(G,\tcal)$ be a nice instance of \pdsp, $(X,Y)$ be a splitting partition of some \terset $\widehat{T}$ of $\tcal$, and $F \sub E(G)$ be a set of edges connecting the two sides of an $(X,Y)\dagcut$.
    Then the following holds:
    \begin{enumerate}
        \item if $P$ is a shortest $(t,t')$-path for some $(t,t')\in\splitt_\tcal(X,Y)$,
        then $|E(P) \cap F| \le 1$, and
        \item if $P$ is a shortest $(t,t')$-path for some $(t,t')\in\sameside(X,Y)$, then $E(P) \cap F = \emptyset$.
    \end{enumerate}
\end{observation}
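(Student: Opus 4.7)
The plan is to prove both parts by invoking \Cref{lem:prelim:spaths_in_dags} to translate the hypothesis that $P$ is a shortest $(t,t')$-path into the statement that $P$, with the natural orientation from $t$ to $t'$, is a directed path in $(t,t')\dagg$. Then each part follows by combining the defining conditions of a \udagcut (\Cref{def:dag-cut}) with the acyclicity of shortest paths DAGs (\Cref{obs:prelim:no-dicycles}).

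For part 2, suppose $(t,t') \in \sameside(X,Y)$. Condition~3 of \Cref{def:dag-cut} states that for every $uv \in E(V_X, V_Y)$, neither $(u,v)$ nor $(v,u)$ lies in $A((t,t')\dagg)$. Since $P$ is a directed path in $(t,t')\dagg$, every edge of $P$ carries an orientation belonging to $A((t,t')\dagg)$. Hence no edge of $P$ can cross the cut, and in particular $E(P) \cap F = \emptyset$.

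For part 1, suppose $(t,t') \in \splitt_\tcal(X,Y)$, so by the normalization in the definition of $\splitt$ we have $t \in X \subseteq V_X$ and $t' \in Y \subseteq V_Y$. Condition~2 of \Cref{def:dag-cut} gives that for every crossing edge $uv$ with $u \in V_X$, $v \in V_Y$, the arc $(u,v)$ belongs to $A((t,t')\dagg)$. By \Cref{obs:prelim:no-dicycles}, $(t,t')\dagg$ contains no directed cycle, which rules out the reverse arc $(v,u)$ from $A((t,t')\dagg)$ (otherwise $(u,v),(v,u)$ would form a 2-cycle). Hence, when we traverse $P$ from $t$ to $t'$ as a directed path in $(t,t')\dagg$, every crossing edge of $P$ must be used in the $V_X \to V_Y$ direction. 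Starting in $V_X$ and ending in $V_Y$, the path $P$ crosses the cut at least once; any second crossing would require an arc oriented $V_Y \to V_X$, which has just been excluded. Thus $P$ uses exactly one edge of $E(V_X, V_Y)$, and therefore $|E(P) \cap F| \le 1$ for any subset $F$ of crossing edges.

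There is no real obstacle here: the argument is a direct unpacking of definitions. The only subtle point worth flagging is that condition~2 of \Cref{def:dag-cut} asserts membership of only one orientation of each crossing edge in $A((t,t')\dagg)$, so one must explicitly invoke acyclicity (\Cref{obs:prelim:no-dicycles}) to exclude the reverse orientation and thereby conclude that $P$ cannot ``return'' across the cut.
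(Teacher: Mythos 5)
Your proof is correct and formalizes the same argument the paper gives only as informal prose in the paragraph preceding the observation (no separate proof environment is provided): translate to the directed picture via \Cref{lem:prelim:spaths_in_dags}, then read off both parts from conditions~2 and~3 of \Cref{def:dag-cut}. Your explicit appeal to \Cref{obs:prelim:no-dicycles} to exclude the reverse arc is a real subtlety that the paper elides by simply asserting the crossing edges ``are all oriented in the same direction,'' so your write-up is a faithful and slightly more careful rendering of the intended argument.
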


\paragraph{Relating dag-cuts to cycles in the dual.}
We show that dag-cuts correspond to cycles of the dual graph that separate the vertices of the two different parts of the dag-cut.
In order to show this, we first prove that 
in a dag-cut, the corresponding vertex sets in the partition induce connected subgraphs of $G$.

\begin{lemma}\label{lem:conn-cut}
    Let $(G,\tcal)$ be a nice instance of \pdsp and $(X,Y)$ be a splitting partition of a \terset $\widehat{T}$ of $\tcal$.
    Also, let $W_X,W_Y\subseteq V(G)$ be two disjoint vertex sets with  $X\subseteq W_X$ and $Y\subseteq W_Y$.
    For every $(W_X,W_Y)$\dagcut $(V_X,V_Y)$ in $(G,\tcal)$, it holds that $G[V_X]$ and $G[V_Y]$ are connected.
\end{lemma}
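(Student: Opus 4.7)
The plan is to prove connectivity of $G[V_X]$ by contradiction; the analogous statement for $G[V_Y]$ will then follow by the symmetric argument. Suppose $G[V_X]$ decomposes into components $C_1, \dots, C_m$ with $m \ge 2$. Since $G$ is connected (by the first condition of niceness) and $(V_X, V_Y)$ partitions $V(G)$, each $C_i$ must contain an edge leading to $V_Y$; pick such an edge and orient it as $(v_i, u_i)$ with $v_i \in C_i$ and $u_i \in V_Y$.

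Next, I would fix any split pair $(s, t) \in \splitt_\tcal(X, Y)$, which exists because $(X,Y)$ is a splitting partition. Without loss of generality $s \in X \subseteq V_X$ and $t \in Y \subseteq V_Y$. The second condition in \Cref{def:dag-cut} gives $(v_i, u_i) \in A((s,t)\dagg)$ for every $i$, and by the definition of the shortest paths DAG this arc can be extended to a full shortest $(s,t)$-path $P_i$ in $G$ passing through $(v_i, u_i)$. By \Cref{lem:prelim:spaths_in_dags}, $P_i$ is an oriented path in $(s,t)\dagg$, so in particular the prefix $P_i[s, v_i]$ is an oriented path in that DAG as well. The argument concludes once I show that this prefix stays entirely within $V_X$: then $s$ and $v_i$ lie in the same component of $G[V_X]$, and since $i$ was arbitrary, $s$ belongs to each $C_i$, contradicting $m \ge 2$.

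The main obstacle, and the only non-routine step, is establishing the confinement of $P_i[s, v_i]$ to $V_X$. I would argue by contradiction: if $P_i[s, v_i]$ left $V_X$, then, since it both starts and ends in $V_X$, it would have to traverse some oriented arc $(u, v)$ with $u \in V_Y$ and $v \in V_X$. This arc lies in $A((s,t)\dagg)$ because $P_i$ is an oriented path in that DAG. On the other hand, the reverse oriented arc $(v, u)$ belongs to $\Ev(V_X, V_Y)$, so condition 2 of \Cref{def:dag-cut} forces $(v, u) \in A((s,t)\dagg)$ as well. Thus $(s,t)\dagg$ would contain a directed 2-cycle on $\{u, v\}$, contradicting \Cref{obs:prelim:no-dicycles}.

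For $G[V_Y]$ I would run the same scheme after swapping the roles of $X$ and $Y$: for each component of $G[V_Y]$ pick an edge coming in from $V_X$, take the suffix $P_i[u_i, t]$ of the associated shortest path, and close the case by the same 2-cycle argument. Note that condition 3 of \Cref{def:dag-cut} is not used in this lemma and will only be invoked elsewhere in the paper.
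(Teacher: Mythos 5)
Your proposal is correct and rests on the same two facts the paper's proof uses: that a dag-cut orients all crossing edges from $V_X$ to $V_Y$ (condition 2 of \Cref{def:dag-cut}), and that $(s,t)\dagg$ has no directed 2-cycle (\Cref{obs:prelim:no-dicycles}), which together force any oriented shortest $(s,t)$-path to cross the cut at most once. The organization differs superficially --- you argue by contradiction on the number of components and construct shortest $(s,t)$-paths through cut edges, whereas the paper takes an arbitrary $(z,t)$-path and replaces its tail with a shortest-path tail confined to $V_Y$ --- but the underlying argument, and in particular the no-2-cycle step that the paper leaves implicit, is identical.
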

\begin{proof}
    Due to property (1) of~\Cref{def:dag-cut}, we have that $X\subseteq W_X\subseteq V_X$ and $Y\subseteq W_Y\subseteq V_Y$.
    Since $(X,Y)$ is splitting, there is a pair $(s,t)\in \tcal$ such that one vertex belongs to $X$ and the other to $Y$.
    Assume without loss of generality that $s\in X$ and $t\in Y$.

    We first prove that $G[V_Y]$ is connected.
    Let $z$ be a vertex in $V_Y$.
    We show that there is a $(z,t)$-path in $G[V_Y]$.
    Since $G$ is connected, there is a $(z,t)$-path $P$ in $G$.
    We orient $P$ from $z$ to $t$ and this orientation induces a linear order $\le_P$ on $V(P)$.
    Let $v$ be the first (with respect to $\le_P$) vertex of $P$ that is in $V_X$; if such a vertex does not exist then $P$ is the claimed path in $G[V_Y]$.
    Let $e=wv\in E(P)$ such that $w<_P v$ ($w\in V_Y$). 
    By~\Cref{def:dag-cut},
    $(v,w)\in A((s,t)\dagg)$.
    Therefore, there is a shortest $(s,t)$-path $Q$ in $G$ that contains the edges $e$.
    Observe that, by definition, all edges of $Q$ appear in $A((s,t))\dagg$ as arcs oriented towards $t$.
    The latter implies that every vertex of  $Q[w,t]$ belongs to $V_Y$. Thus, the concatenation $P[z,w]+Q[w,t]$ is a $(z,t)$-path in $G[V_Y]$.
    
    To prove connectivity of $G[V_X]$, we work analogously.
    In particular, we pick an arbitrary vertex $z\in V_X$ and the first vertex $v\in V_Y$ in a $(z,s)$-path $P$ and we show the following: if $w$ is the neighbor of $v$ in $P$ (occurring before $v$ in $P$), for every shortest $(s,t)$-path $Q$ passing through $wv$, $V(Q[s,w])$ should be a subset of $V_X$ and therefore the concatenation $P[z,w]+Q[w,s]$ is a $(z,s)$-path in $G[V_X]$.
\end{proof}

Let us now give a formal definition of separability by a cycle in the dual.
Let $G$ be a plane graph, let $A$ and $B$ be two disjoint subsets of $V(G)$.
We say that a cycle $C$ of the dual $G^\star$
\emph{separates} $A$ and $B$ if $A$ and $B$ belong to different connected components of $\mathbb{R}^2\setminus C$.

\begin{lemma}[Folklore]
\label{lem:ring:cycle-dual}
    Let $G$ be a connected plane graph, let $(A,B)$ be a partition of $V(G)$. If $G[A]$ and $G[B]$ are connected subgraphs, then there is a (unique) cycle $C$ in the dual $G^\star$ that separates $A$ and $B$.
\end{lemma}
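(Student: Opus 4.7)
The plan is to construct the cycle $C$ directly from the edge cut and then verify separation and uniqueness. Let $F := E(A, B)$ denote the set of primal edges with one endpoint in $A$ and one in $B$, and let $H$ be the subgraph of $G^\star$ on the full vertex set $F(G)$ with edge set $F^\star := \{e^\star : e \in F\}$. I will argue that $H$ is precisely the desired cycle.

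The first step is to show that every vertex of $H$ has even degree. I would fix a face $f$ of $G$ and traverse its facial walk: each traversed boundary edge that lies in $F$ causes the walk to switch between $A$-endpoints and $B$-endpoints, while edges outside $F$ do not. Because the facial walk is closed, the number of such switches must be even, so $\deg_H(f)$ is even for every $f \in F(G)$, and $H$ decomposes as an edge-disjoint union of simple cycles.

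The main obstacle is showing that $H$ is a \emph{single} cycle; this is where the connectivity of $G[A]$ and $G[B]$ enters. I would pick spanning trees $T_A \subseteq E(G[A])$ and $T_B \subseteq E(G[B])$, which exist by hypothesis and are disjoint from $F$, and contract every edge of $T_A \cup T_B$ in $G$ while preserving the planar embedding. The resulting plane multigraph $G'$ has exactly two vertices $a$ and $b$ and edge set $F$, arranged as $|F|$ parallel edges between $a$ and $b$. By Euler's formula, $G'$ has $|F|$ faces; by planarity, these faces are arranged cyclically, with consecutive parallel edges in the rotation at $a$ bounding one common face. Hence $(G')^\star$ is a single simple cycle of length $|F|$. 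Contracting forest edges does not change the adjacency structure of the remaining dual edges beyond identifying those dual vertices whose separating tree edges vanish, so $H$ inherits the cyclic structure of $(G')^\star$ and is itself a single simple cycle.

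It remains to verify separation and uniqueness. Each edge $e \in F$ is crossed in the plane by its dual $e^\star \in E(H)$, and no edge outside $F$ is crossed by $H$; combined with the connectivity of $G[A]$ and $G[B]$, this places $A$ and $B$ in different components of $\mathbb{R}^2 \setminus H$. For uniqueness, suppose $C'$ is another separating cycle. Any edge $e^\star \in E(C')$ with $e \notin F$ would have its primal endpoints on opposite sides of $C'$, despite both lying in $A$ or both in $B$, contradicting the connectivity of $G[A]$ or $G[B]$; hence $E(C') \subseteq F^\star$. Conversely, if some $e \in F$ had $e^\star \notin E(C')$, then the endpoints of $e$ would lie on the same side of $C'$, violating separation. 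Therefore $E(C') = F^\star = E(H)$ and $C' = H$.
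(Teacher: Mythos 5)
The paper labels this lemma as folklore and gives no proof of its own, so there is nothing internal to compare against; I am evaluating your argument on its own terms. Your parity argument (every dual vertex has even degree in $H$) and your separation and uniqueness arguments at the end are all correct. The gap is in the step establishing that $H$ is a \emph{single} cycle. Contracting $T_A \cup T_B$ in $G$ does \emph{not} produce a two-vertex graph with edge set $F$: the non-tree edges of $G[A]$ and of $G[B]$ survive as loops at $a$ and $b$, so unless $G[A]$ and $G[B]$ happen to be trees, both your description of $G'$ and your claim that $(G')^\star$ is a cycle of length $|F|$ are false. Moreover, the clause ``identifying those dual vertices whose separating tree edges vanish'' has the primal--dual correspondence backwards: contracting a non-loop primal edge $e$ corresponds to \emph{deleting} $e^\star$ in the dual (no faces merge), while it is \emph{deletion} of a non-bridge primal edge that contracts $e^\star$ and identifies the two faces flanking $e$. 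As a consequence, the assertion that ``$H$ inherits the cyclic structure of $(G')^\star$'' is not established by what you wrote.

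There is a shorter fix that reuses arguments you already have. Since $H$ is an even subgraph of $G^\star$, decompose it into edge-disjoint simple cycles $C_1,\dots,C_m$. Each $C_i$ is a simple closed curve in the plane crossing only edges of $F$, so by the connectivity of $G[A]$ and $G[B]$, each of $A$ and $B$ lies entirely within one component of $\mathbb{R}^2 \setminus C_i$; and choosing any $e^\star \in E(C_i)$, the primal edge $e$ crosses $C_i$ exactly once, so its two endpoints --- one in $A$ and one in $B$ --- lie on opposite sides. Hence every $C_i$ is already a cycle separating $A$ from $B$. Your uniqueness argument then yields $E(C_i) = F^\star$ for each $i$, and since the $C_i$ are edge-disjoint this forces $m=1$, i.e.\ $H$ is a single cycle. (If you prefer the contraction route, you must additionally delete the non-tree edges of $G[A]\cup G[B]$, verify that none of those deletions removes a bridge, and then argue explicitly that a graph with all degrees even, all of whose edges survive, and whose quotient under the induced face-identifications is a cycle, must itself be a single cycle.)
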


The existence of cycles in the dual corresponding to dag-cuts in the primal is stated in the following lemma, which is a direct implication of ~\Cref{lem:conn-cut} and~\Cref{lem:ring:cycle-dual}.

\begin{lemma}
\label{lem:ring:cut-cycle}
    Let $(G,\tcal)$ be a nice instance of \pdsp and $(X,Y)$ be a splitting partition of a \terset $\widehat{T}$ of $\tcal$.
    Also, let $W_X,W_Y\subseteq V(G)$ be two disjoint vertex sets with  $X\subseteq W_X$ and $Y\subseteq W_Y$.
    For every $(W_X,W_Y)$\dagcut $(V_X,V_Y)$ in $(G,\tcal)$, there is a (unique) cycle $C$ in the dual $G^\star$ that separates $V_X$ and $V_Y$. 
\end{lemma}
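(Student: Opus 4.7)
The plan is to treat this as essentially a one-line deduction from the two preceding lemmas, so the ``proof'' is really just a verification that their hypotheses are met.

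First, I would observe that the instance $(G,\tcal)$ is nice, so by \Cref{def:nice-instance}(1) the graph $G$ is a connected plane graph. Hence \Cref{lem:ring:cycle-dual} is applicable to $G$ as soon as we have a partition of $V(G)$ whose two sides both induce connected subgraphs.

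Next, I would apply \Cref{lem:conn-cut} to the given $(W_X,W_Y)$\dagcut $(V_X,V_Y)$. Its hypotheses match verbatim those of \Cref{lem:conn-cut}: $(X,Y)$ is a splitting partition of the \terset $\widehat{T}$, and $W_X,W_Y$ are disjoint vertex sets with $X\subseteq W_X$ and $Y\subseteq W_Y$. The conclusion yields that both $G[V_X]$ and $G[V_Y]$ are connected subgraphs of $G$.

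Finally, since $(V_X,V_Y)$ is a partition of $V(G)$ into two parts each of which induces a connected subgraph, \Cref{lem:ring:cycle-dual} applied to $(A,B):=(V_X,V_Y)$ delivers a (unique) cycle $C$ in the dual $G^\star$ such that $V_X$ and $V_Y$ lie in different connected components of $\mathbb{R}^2\setminus C$, which is exactly the claim. There is no real obstacle here beyond chasing the hypotheses; the statement is literally a concatenation of the two preceding lemmas, and the only thing to double-check is that niceness of $(G,\tcal)$ supplies the connectedness of $G$ required by \Cref{lem:ring:cycle-dual}.
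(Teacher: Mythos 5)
Your proposal is correct and matches the paper's approach: the paper itself states that \Cref{lem:ring:cut-cycle} is a direct implication of \Cref{lem:conn-cut} and \Cref{lem:ring:cycle-dual}, and you carry out exactly that deduction, correctly noting that niceness supplies the connectedness of $G$ needed for \Cref{lem:ring:cycle-dual}.
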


The fact that we can associate every dag-cut to a cycle in the dual (because of~\Cref{lem:ring:cut-cycle}) allows us to define \emph{the cycle} corresponding to each dag-cut, as well as the \emph{ring} bounded by two dag-cuts, defined as follows.
We write $(V,U)  \sqsubseteq (V',U')$ to denote $V\sub V'$.

\begin{definition}
    Let $(G,\tcal)$ be a nice instance of \pdsp and $(X,Y)$ be a splitting partition of a \terset $\widehat{T}$ of $\tcal$.
    Also, let $W_X,W_Y\subseteq V(G)$ be two disjoint vertex sets with  $X\subseteq W_X$ and $Y\subseteq W_Y$. 
    Given a $(W_X,W_Y)$\dagcut $\gamma=(V_{X},V_{Y})$, we use $\mathrm{Cycle}(\gamma)$ to denote the cycle in the dual $G^\star$ that separates $V_X$ and $V_Y$.
    Given two distinct $(W_X,W_Y)$\dagcuts $\gamma_1$ and $\gamma_2$ such that $\gamma_1\sqsubseteq \gamma_2$,
    we use $\ring(\gamma_1,\gamma_2)$ to denote $\ring(\mathrm{Cycle}(\gamma_1),\mathrm{Cycle}(\gamma_2))$.
\end{definition}

Keep in mind that $\mathrm{Ring}(\gamma_1,\gamma_2)$ is an open set.

\paragraph{Maximally pushed dag-cuts.}
We next show that one can consider dag-cuts that are ``maximally pushed'' towards $W_X$ and towards $W_Y$. The proof of the existence of such dag-cuts gives also a way to construct them.

\begin{lemma}\label{lem:maxcuts}
    Let $(G,\tcal)$ be a nice instance of \pdsp and $(X,Y)$ be a splitting partition of a \terset $\widehat{T}$ of $\tcal$.
    Also, let $W_X,W_Y\subseteq V(G)$ be two disjoint vertex sets with  $X\subseteq W_X$ and $Y\subseteq W_Y$.
    Suppose that there exists a $(W_X,W_Y)$\dagcut in $(G,\tcal)$.
    Then, there exist $(W_X,W_Y)$\dagcuts $\gamma_1$ and $\gamma_2$ such that for every other $(W_X,W_Y)$\dagcut $\gamma'$, we have $\gamma_1\sqsubseteq \gamma' \sqsubseteq \gamma_2$. 
    Furthermore, $\gamma_1$ and $\gamma_2$ can be found in polynomial time.
\end{lemma}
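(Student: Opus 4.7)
The plan is to prove that the $(W_X,W_Y)$\dagcuts form a lattice under $\sqsubseteq$, whose minimum and maximum elements are the sought $\gamma_1$ and $\gamma_2$, and then compute these extremes via reachability in an auxiliary digraph.

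\textbf{Lattice structure.} The key step is to show that if $\gamma=(V_X,V_Y)$ and $\gamma'=(V_X',V_Y')$ are $(W_X,W_Y)$\dagcuts, then so are
\[
\gamma \wedge \gamma':=(V_X\cap V_X',\, V_Y\cup V_Y')\qquad\text{and}\qquad \gamma \vee \gamma':=(V_X\cup V_X',\, V_Y\cap V_Y').
\]
Property (1) of \Cref{def:dag-cut} is inherited verbatim. For property (2) of the meet, pick any $(v,u)\in \Ev(V_X\cap V_X',\, V_Y\cup V_Y')$. Since $u\notin V_X\cap V_X'$, either $u\in V_Y$ or $u\in V_Y'$; in the first case $(v,u)\in \Ev(V_X,V_Y)$ and property (2) of $\gamma$ yields $(v,u)\in A((t,t')\dagg)$ for every $(t,t')\in\splitt_\tcal(X,Y)$, and in the second case we apply property (2) of $\gamma'$. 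Property (3) is checked by the same case split (both orientations must be absent from each $\sameside$-\dagg, and the witnessing cut supplies this), and the join is fully symmetric. Since there are only finitely many bipartitions of $V(G)$, taking intersections (respectively, unions) over the $V_X$-sides of all $(W_X,W_Y)$\dagcuts yields the minimum $\gamma_1$ and maximum $\gamma_2$ with respect to $\sqsubseteq$.

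\textbf{Polynomial-time construction.} Classify each edge $e=uv\in E(G)$ as \emph{forbidden} if some orientation of $e$ belongs to $A((t,t')\dagg)$ for some $(t,t')\in\sameside(X,Y)$, or if no single orientation of $e$ lies in $A((t,t')\dagg)$ for every $(t,t')\in\splitt_\tcal(X,Y)$; otherwise $e$ is \emph{allowed}. By \Cref{obs:prelim:no-dicycles}, each $(t,t')\dagg$ is acyclic, so for an allowed edge the required orientation $(u,v)$ is uniquely determined. By properties (2) and (3), both endpoints of any forbidden edge must lie in the same side of every dag-cut, while an allowed edge contained in the cut must be oriented from $V_X$ to $V_Y$ in its forced orientation. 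Let $H$ be the digraph obtained from $G$ by contracting all forbidden edges (and removing loops) and orienting every surviving edge according to its forced orientation; let $\kappa:V(G)\to V(H)$ be the quotient map. Dag-cuts of $(G,\tcal)$ correspond bijectively to bipartitions $(V_X^H,V_Y^H)$ of $V(H)$ with $\kappa(W_X)\sub V_X^H$, $\kappa(W_Y)\sub V_Y^H$, and no arc of $H$ going from $V_Y^H$ to $V_X^H$, i.e., such that $V_X^H$ is closed under predecessors in $H$.

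Under this correspondence, the minimum valid $V_X^H$ is the set of all $H$-ancestors of $\kappa(W_X)$ (vertices from which there is a directed path to some element of $\kappa(W_X)$), and the maximum valid $V_X^H$ is the complement of the set of $H$-descendants of $\kappa(W_Y)$. By hypothesis a dag-cut exists, so the minimum $V_X^H$ is disjoint from $\kappa(W_Y)$ and the maximum is disjoint from $\kappa(W_X)$. Both extreme sets are computable by a single reverse (respectively, forward) BFS in $H$, and pulling them back through $\kappa$ gives $\gamma_1$ and $\gamma_2$. The main technical point is the lattice-closure verification: once it is in hand, the algorithmic step reduces to standard reachability and runs in polynomial time.
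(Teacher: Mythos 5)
Your proof is correct and, at its core, computes the same objects as the paper: the paper forms an auxiliary digraph $D$ (resp.\ $D'$) in which non-$\tilde F$ edges are given both orientations and $\tilde F$ edges only the reversed (resp.\ forced) orientation, and takes the reachability set of $W_X$ (resp.\ $W_Y$); your $H$ is exactly the quotient of that digraph by contracting the bidirectional (forbidden) edges, and the ancestor set of $\kappa(W_X)$ pulled back through $\kappa$ coincides with the paper's $C_X$. The genuine difference is your opening lattice-closure step: you verify that the family of $(W_X,W_Y)$\dagcuts is closed under the meet $(V_X\cap V_X',\,V_Y\cup V_Y')$ and join $(V_X\cup V_X',\,V_Y\cap V_Y')$, from which existence of extremal dag-cuts is immediate. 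The paper does not prove this; it instead verifies directly from \Cref{def:dag-cut} that the two reachability-defined partitions are themselves dag-cuts and are extremal. Your lattice argument is a clean and more conceptual route to existence, though it is logically redundant once the bijection to predecessor-closed bipartitions of $V(H)$ is in place, since the ancestor set and complement-of-descendant set are visibly the minimum and maximum such bipartitions. One small point worth making explicit: the uniqueness of the forced orientation for an allowed edge should be justified not just by acyclicity of a single $(t,t')\dagg$ but by the fact that no $(t,t')\dagg$ can contain both orientations of an edge (a two-cycle), which indeed follows from \Cref{obs:prelim:no-dicycles}; you invoke this correctly, so the argument stands.
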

\begin{proof}
    Keep in mind that for every $(t,t')\in\splitt_\tcal(X, Y)$, by definition, $t\in X$ and $t'\in Y$.
    Let $F$ be the set of all edges $e$ of $G$ for which there is an orientation $\vec{e}$ so that for every $(t,t')\in\splitt_\tcal(X, Y)$, it holds that $\vec{e} \in A((t,t')\dagg)$.
    Among the edges in $F$, we consider the subset $\tilde{F}\subseteq F$ consisting of all edges $e$ in $F$ for which it holds that for every $(t,t')\in\sameside(X, Y)$,
    no orientation of $e$ belongs to $A((t,t')\dagg)$.
    
    We consider two directed graphs $D$ and $D'$ whose vertex set is $V(G)$ and their arc sets are obtained as follows:
    for every edge $e$ in $E(G)\setminus \tilde{F}$, 
    we add both orientations of $e$ to $A(D)$ and $A(D')$.
    Also, for each edge $uv\in \tilde{F}$,
    if $(u,v)$ is the orientation of $uv$ 
    that belongs to $A((t,t')\dagg)$ for every $(t,t')\in\splitt_\tcal(X,Y)$,
    then add the inverse orientation $(v,u)$ to $A(D)$ and (the former orientation) $(u,v)$ to $A(D')$.
    
    Let $C_X$ (resp. $C_Y$) be the set of all vertices in $V(G)$ that are reachable in $D$ (resp. $D'$) from vertices of $W_X$ (resp. $W_Y$), including $W_X$ (resp. $W_Y$).
    Note that for every $(W_X,W_Y)$\dagcut $(V_X,V_Y)$ in $(G,\tcal)$ it holds that $E(V_X,V_Y)\subseteq \tilde{F}$ and  $C_X\subseteq V_X$, $C_Y\subseteq V_Y$. To see why the latter holds assume towards a contradiction that there is a vertex $z\in C_X\setminus V_X$ (note that $C_X\setminus V_X\subseteq V_Y$).
    Since $z\in C_X$, there is a directed $(w,z)$-path $\vec{P}$ in $D[C_X]$, for some vertex $w\in W_X$.
    Also, since $w\in V_X$ and $z\in V_Y$,
    there is an arc $(v,u)\in A(D)$ that is also an arc of $P$ for which $v\in V_X$ and $u\in V_Y$.
    Since $(V_X,V_Y)$ is a $(W_X,W_Y)$\dagcut, due to~\Cref{def:dag-cut} we get that $uv\in \tilde{F}$ and, in particular, $(v,u)\in A(D)$. This yields a contradiction to the fact that $(u,v)\in A(D)$.
    Thus, given that $C_X\subseteq V_X$ and $C_Y\subseteq V_Y$, the existence of a $(W_X,W_Y)$\dagcut implies that $C_X$ and $C_Y$ are disjoint vertex sets.
    We set $\gamma_1:=(C_X,V(G)\setminus C_X)$ and $\gamma_2:=(V(G)\setminus C_Y, C_Y)$.
    
    To show that $\gamma_1$ and $\gamma_2$ are $(W_X,W_Y)$\dagcuts,
    we first observe that property (1) of~\Cref{def:dag-cut} holds for both $\gamma_1$ and $\gamma_2$, by definition.
    In order to show that properties (2) and (3) of~\Cref{def:dag-cut} are satisfied for $\gamma_1$ and $\gamma_2$, we first observe the following:
    since $G$ is connected and by definition of $D$ and $D'$,
    we have that both $E(C_X,V(G)\setminus C_X)$ and $E(V(G)\setminus C_Y,C_Y)$ are subsets of $\tilde{F}$.
    Moreover, for every edge $vu\in E(G)$ where $v\in C_X$ (resp. $v\in V(G)\setminus C_Y$)
    and $u\in V(G)\setminus C_X$ (resp. $u\in C_Y$), the orientation $(v,u)$ is the one that belongs to $A((t,t')\dagg)$ for every $(t,t')\in\splitt_\tcal(X,Y)$
    as otherwise we would have added $v$ to $C_X$ (resp. $C_Y$).
    This shows that property (2) holds for $\gamma_1$ and $\gamma_2$. 
    For property (3), recall that for every edge $e$ in $\tilde{F}$, it holds that for every $(t,t')\in\sameside(X, Y)$,
    no orientation of $e$ belongs to $A((t,t')\dagg)$; thus this also holds for every edge that is either in $E(C_X,V(G)\setminus C_X)$ or in $E(V(G)\setminus C_Y,C_Y)$.
    
    Also note that for every $(W_X,W_Y)$\dagcut $\gamma'$, we have $\gamma_1\sqsubseteq \gamma' \sqsubseteq \gamma_2$, since 
    $C_X\subseteq V_X'$ and $C_Y\subseteq V_Y'$ for every $(W_X,W_Y)$\dagcut $\gamma'=(V_X',V_Y')$.
    To conclude, we note that finding $\gamma_1$ and $\gamma_2$ reduces to computing $D$, which can be done in polynomial time by computing the $(s,t)$\dagg for every $(s,t)\in\splitt_{\tcal}(X,Y) \cup \sameside(X,Y)$.
\end{proof}

\subsection{DAG-rings}
\label{subsec:dagrings}

Using the notion of dag-cuts, we are now ready to define \emph{dag-rings}, which correspond to pairs of dag-cuts whose cycles bound a ring-shaped region where all edges are in (oriented) shortest paths between split pairs of terminals.

\begin{definition}[DAG-rings]\label{def:dag-rings}
    Let $(G,\tcal)$ be a nice instance of \pdsp and $(X,Y)$ be a splitting partition of a \terset $\widehat{T}$ of $\tcal$. Also, let $W_X,W_Y\subseteq V(G)$ be two disjoint vertex sets with  $X\subseteq W_X$ and $Y\subseteq W_Y$.
    A \emph{$(W_X,W_Y)$\dagring} is a partition $(U_X, U_{\mathsf{mid}},U_Y)$ of $V(G)$ with the following properties.
    \begin{enumerate}[nosep]
        \item $(U_X,U_{\mathsf{mid}}\cup U_Y)$ is a $(W_X,W_Y)$\dagcut.\label{ring:1}
    
        \item $(U_X\cup U_{\mathsf{mid}},U_Y)$ is a $(W_X,W_Y)$\dagcut.\label{ring:2}
        
        \item The set $E(U_X,U_Y)$ is empty.\label{ring:3}
    
        \item For each $e_1 \in E(U_X,U_{\mathsf{mid}})$, $e_2 \in E(U_{\mathsf{mid}},U_Y)$, and each  $(t,t') \in \splitt_\tcal(X, Y)$ there exists a shortest $(t,t')$-path $P$ in $G$ such that $\{e_1, e_2\} \sub E(P)$. \label{ring:4}
    \end{enumerate}
    We say that a pair $(\gamma,\gamma')$ of $(W_X,W_Y)$\dagcuts \emph{represents} a $(W_X,W_Y)$\dagring $(U_X,U_{\mathsf{mid}},U_Y)$ if $\gamma=(U_X,U_{\mathsf{mid}} \cup U_Y)$ and $\gamma'=(U_X \cup U_{\mathsf{mid}},U_Y)$.
\end{definition}
See~\Cref{fig:dagring} for an illustration of a dag-ring.

\begin{figure}[ht]
    \centering
    \includegraphics[width=6cm]{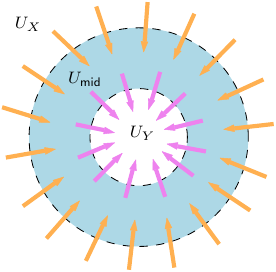}
    \caption{An example of a dag-ring. The two dag-cuts $(\gamma,\gamma')$ representing this dag-ring are illustrated with highlighted orange arcs and violet arcs respectively. The dual cycles $\mathrm{Cycle}(\gamma)$ and $\mathrm{Cycle}(\gamma')$ are illustrated as dashed cycles and the cyan region corresponds to $\ring(\gamma,\gamma')$.}
    \label{fig:dagring}
\end{figure}
The definition of dag-rings implies that no edges in $E(G[U_{\mathsf{mid}}])$ appear as arcs in the $(t,t')$\dagg of any pair $(t,t')\in\sameside(X,Y)$.

\begin{lemma}\label{lem:ring:ring_sameside}
    Let $(G,\tcal)$ be a nice instance of \pdsp, let $(X,Y)$ be a splitting partition of a \terset $\widehat{T}$ of $\tcal$,
    and let $(U_X,U_{\mathsf{mid}},U_Y)$ be an $(X,Y)$\dagring.
    Then for every edge in $E(G[U_{\mathsf{mid}}])$ and every  $(t,t')\in\sameside(X, Y)$, no orientation of $e$ belongs to $A((t,t')\dagg)$.
\end{lemma}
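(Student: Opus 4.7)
The plan is to argue by contradiction using Observation~\ref{obs:dag-cut:cross} applied to the two dag-cuts that represent the dag-ring, together with the fact that $E(U_X,U_Y)=\emptyset$ (property~\ref{ring:3} of \Cref{def:dag-rings}).

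Suppose toward a contradiction that some orientation $(u,v)$ of an edge $e=uv\in E(G[U_{\mathsf{mid}}])$ belongs to $A((t,t')\dagg)$ for some $(t,t')\in\sameside(X,Y)$. Then by~\Cref{lem:prelim:spaths_in_dags} there exists a shortest $(t,t')$-path $P$ in $G$ that traverses $e$. Since $(t,t')$ is a same-side pair, both $t$ and $t'$ lie in the same block of $(X,Y)$; assume first that $t,t'\in X$. Because $(U_X,U_{\mathsf{mid}}\cup U_Y)$ is a $(W_X,W_Y)$\dagcut (property~\ref{ring:1}), we have $X\subseteq W_X\subseteq U_X$, so both endpoints of $P$ lie in $U_X$. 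However, $P$ contains the edge $e$ whose both endpoints are in $U_{\mathsf{mid}}$, so $P$ must use at least one edge of $E(U_X,U_{\mathsf{mid}}\cup U_Y)$. Since $E(U_X,U_Y)=\emptyset$ by property~\ref{ring:3}, this edge in fact lies in $E(U_X,U_{\mathsf{mid}})\subseteq E(U_X,U_{\mathsf{mid}}\cup U_Y)$.

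This directly contradicts~\Cref{obs:dag-cut:cross}(2) applied to the dag-cut $(U_X,U_{\mathsf{mid}}\cup U_Y)$, which asserts that a shortest path between two vertices of a same-side pair cannot use any edge crossing the cut. The case $t,t'\in Y$ is symmetric: one uses the other dag-cut $(U_X\cup U_{\mathsf{mid}},U_Y)$ (property~\ref{ring:2}), observes that $Y\subseteq W_Y\subseteq U_Y$ puts both endpoints of $P$ in $U_Y$, and notes that $P$ must cross into $U_{\mathsf{mid}}$ via an edge of $E(U_{\mathsf{mid}},U_Y)$, again contradicting~\Cref{obs:dag-cut:cross}(2).

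There is essentially no obstacle here: the lemma is a direct structural consequence of the definition of a dag-ring, and the only minor point to verify is that the emptiness of $E(U_X,U_Y)$ forces any $U_X$-to-$U_{\mathsf{mid}}$ transition of $P$ to happen via an edge of the cut whose use by $P$ is already forbidden for same-side pairs.
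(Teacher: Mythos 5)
Your proof is correct and follows essentially the same route as the paper's: in both cases one observes that since $\widehat T$ is disjoint from $U_{\mathsf{mid}}$, a shortest same-side path visiting $U_{\mathsf{mid}}$ must traverse an edge of one of the two dag-cuts bounding the ring, which is forbidden by property~(3) of \Cref{def:dag-cut} (equivalently \Cref{obs:dag-cut:cross}(2)). The case split on $t,t'\in X$ versus $t,t'\in Y$ is a harmless elaboration that the paper avoids by simply noting the path must cross one of the two cuts; your version additionally pins down which one, but the argument and conclusion are the same.
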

\begin{proof}
    Fix a pair $(t,t')\in\sameside(X,Y)$ and suppose towards a contradiction that $e$ belongs to a shortest $(t,t')$-path, which we denote by $P$.
    Since $X,Y$ are disjoint from $U_{\mathsf{mid}}$, there exists an edge $e'\in E(U_X,U_{\mathsf{mid}})$
    or
    an edge $e''\in E(U_{\mathsf{mid}},U_Y)$
    such that either $e'$ or $e''$ is in $E(P)$,
    a contradiction to the fact that $(U_X, U_{\mathsf{mid}}\cup U_Y)$ and $(U_X \cup U_{\mathsf{mid}},U_Y)$
    are $(X,Y)$\dagcuts.
\end{proof}

We next show that given a dag-ring, 
we can orient the edges of $E(G[U_{\mathsf{mid}}])$ following the orientation of the corresponding shortest-paths DAGs for the split terminal pairs, in order to obtain an acyclic digraph.

Let $\tcal$ be a set of pairs of vertices of a graph $G$ and $(X,Y)$ be a splitting partition of a \terset $\widehat{T}$ of $\tcal$.
Given an edge $e$ of $G$,
we say that an orientation $\vec{e}$ of $e$ is \emph{aligned with $(X,Y)$} if
for every $(t,t')\in\splitt_\tcal(X, Y)$, $\vec{e} \in A((t,t')\dagg)$.

\begin{lemma}\label{lem:ring:dag}
    Let $(G,\tcal)$ be a nice instance of \pdsp, let $(X,Y)$ be a splitting partition of a \terset $\widehat{T}$ of $\tcal$,
    and let $(U_X,U_{\mathsf{mid}},U_Y)$ be an $(X,Y)$\dagring.
    We can orient the edges in $E(G[U_{\mathsf{mid}}])$ to obtain an acyclic digraph $D$ with vertex set $U_{\mathsf{mid}}$ so that every $\vec{e}\in A(D)$ is aligned with $(X,Y)$.
\end{lemma}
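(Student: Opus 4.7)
The plan is to reduce the problem to a single split pair. Fix an arbitrary $(t_0, t_0') \in \splitt_\tcal(X,Y)$ -- which exists because the partition $(X,Y)$ is splitting -- and orient every edge $uv \in E(G[U_{\mathsf{mid}}])$ according to how it is traversed on some chosen shortest $(t_0, t_0')$-path through $uv$. With this choice the oriented subgraph $D$ is contained in $(t_0, t_0')\dagg$, so its acyclicity is immediate from Observation~\ref{obs:prelim:no-dicycles}. The substantive work will therefore be (i) exhibiting, for each $uv \in E(G[U_{\mathsf{mid}}])$, a shortest $(t_0, t_0')$-path through $uv$ that can be used to orient it, and (ii) checking that the induced orientation lies in $A((t, t')\dagg)$ for \emph{every} $(t, t') \in \splitt_\tcal(X,Y)$ simultaneously, i.e.\ that it is aligned with $(X, Y)$.

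For (i), I would start from the nice-instance condition that every edge $uv$ lies in $A((s,s')\dagg)$ for some $(s,s') \in \tcal$. Lemma~\ref{lem:ring:ring_sameside} rules out $(s,s') \in \sameside(X,Y)$, so after a possible swap $(s,s') \in \splitt_\tcal(X,Y)$ with $s \in X$ and $s' \in Y$; let $P_1$ be a shortest $(s,s')$-path through $uv$. Applying Observation~\ref{obs:dag-cut:cross} to the two dag-cuts representing the ring, together with property~(3) of a dag-ring ($E(U_X, U_Y) = \emptyset$), the path $P_1$ enters $U_{\mathsf{mid}}$ via a unique edge $e_1 \in E(U_X, U_{\mathsf{mid}})$ and leaves it via a unique edge $e_2 \in E(U_{\mathsf{mid}}, U_Y)$, and $uv$ sits on the middle segment $P_1[x,y]$, where $x, y$ are the endpoints of $e_1, e_2$ lying in $U_{\mathsf{mid}}$. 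Property~(4) of a dag-ring supplies a shortest $(t_0, t_0')$-path $P$ containing both $e_1$ and $e_2$; since $P[x,y]$ and $P_1[x,y]$ are both shortest $(x,y)$-paths (Observation~\ref{lem:prelim:subpath}) they have equal length, so Observation~\ref{lem:prelim:replacement} turns $P[t_0, x] + P_1[x,y] + P[y, t_0']$ into a shortest $(t_0, t_0')$-path containing $uv$. I orient $uv$ in the direction this path traverses it, which coincides with its direction on $P_1$.

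For (ii), the identical replacement argument applies with $(t_0, t_0')$ swapped out for an arbitrary $(t, t') \in \splitt_\tcal(X,Y)$: property~(4) produces a shortest $(t, t')$-path through $e_1, e_2$, and splicing in $P_1[x,y]$ yields a shortest $(t, t')$-path crossing $uv$ in the direction inherited from $P_1$ -- precisely the orientation chosen in (i). The main obstacle -- and the only delicate point -- is ensuring simultaneous compatibility across all edges and all split pairs; the decisive observation is that in every instance of the replacement argument the direction in which $uv$ is traversed is dictated solely by the auxiliary path $P_1$ (the shortest path to which $uv$ originally belonged), and hence does not depend on the ambient shortest path into which $P_1[x,y]$ is spliced. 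Once this is established, the chosen orientation of each edge of $E(G[U_{\mathsf{mid}}])$ lies in $A((t,t')\dagg)$ for every $(t,t') \in \splitt_\tcal(X,Y)$ and the resulting digraph $D$ sits inside the acyclic $(t_0, t_0')\dagg$, which is exactly what the lemma requires.
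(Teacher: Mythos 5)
Your proposal is correct and follows essentially the same route as the paper's proof: niceness together with \Cref{lem:ring:ring_sameside} places the edge on a shortest path for some split pair, \Cref{obs:dag-cut:cross} plus property~(3) of \Cref{def:dag-rings} localize that path's middle segment inside $U_{\mathsf{mid}}$, property~(4) supplies a compatible shortest path for any split pair, \Cref{lem:prelim:replacement} splices the segment in, and acyclicity follows from $D$ being a subgraph of a $(t,t')$\dagg via \Cref{obs:prelim:no-dicycles}. The only difference is presentational: you anchor the construction to a single fixed split pair $(t_0,t_0')$ and then check the other pairs afterwards, whereas the paper establishes the aligned orientation for all split pairs in one pass; the content is identical.
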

\begin{proof}
    We have to show that every edge in $E(G[U_{\mathsf{mid}}])$ has an orientation that is aligned with $(X,Y)$ and that there is no directed cycle in the directed graph $D$ thus obtained.
     
    We first show that for every edge $e\in E(G[U_{\mathsf{mid}}])$ there is an orientation $\vec{e}$ such that for every $(t,t')\in\splitt_\tcal(X,Y)$, $\vec{e} \in A((t,t')\dagg)$.
    Due to~\Cref{lem:ring:ring_sameside}, no orientation of $e$ belongs to $A((t,t')\dagg)$ for any $(t,t')\in\sameside(X,Y)$.
    But the fact that $(G,\tcal)$ is a nice instance implies that $e$ is contained in the edge set of a shortest $(\tilde{t},\tilde{t}')$-path $P$, for some $(\tilde{t},\tilde{t}')\in\splitt_\tcal(X,Y)$.
    Since $X,Y$ are disjoint from $U_{\mathsf{mid}}$, there exists an edge $uv\in E(U_X,U_{\mathsf{mid}})$
    and an edge $wz\in E(U_{\mathsf{mid}},U_Y)$
    such that $\{uv,wz\}\subseteq E(P)$.
    Now observe that by Point~\ref{ring:4} of~\Cref{def:dag-rings}, for every $(t,t')\in\splitt_\tcal(X,Y)$ there is a shortest $(t,t')$-path $Q$ in $G$ such that $\{uv,wz\}\subseteq E(Q)$.
    Due to~\Cref{lem:prelim:replacement},
    $Q[t,u]+P[u,z]+Q[z,t']$ is an (oriented) shortest $(t,t')$-path that contains $e$.
    Therefore, there is an orientation of $e$ that is contained in $A((t,t')\dagg)$ for every $(t,t')\in\splitt_\tcal(X,Y)$.
    
    The fact that $D$ contains no directed cycles follows from the fact that if such a cycle existed, then it should also be a cycle in $(t,t')$\dagg, for every $(t,t')\in\splitt_\tcal(X,Y)$, a contradiction to~\Cref{obs:prelim:no-dicycles}.
\end{proof}

\begin{definition}\label{def:ring:dag-structure}
    Let $(G,\tcal)$ be a nice instance of \pdsp, let $(X,Y)$ be a splitting partition of a \terset $\widehat{T}$ of $\tcal$, and let  $\mathcal{R}=(U_X,U_{\mathsf{mid}}, U_Y)$ be an $(X,Y)$\dagring in $G$.
    Let also $D$ be the digraph from \Cref{lem:ring:dag}, $T_X$ be the set of vertices from $V(D)$ that are adjacent
    to vertices in $U_X$
    and $T_Y$ be the set of vertices from $V(D)$ that are adjacent to vertices in $U_Y$.
    We define the {\em DAG-structure} of $\mathcal{R}$ as the triple $(D,T_X,T_Y)$.
\end{definition}

\paragraph{Maximal dag-rings.}
    Let $(G,\tcal)$ be a nice instance of \pdsp, $\widehat{T}$ be a \terset of $\tcal$, and $(X,Y)$ be a splitting partition of $U$.
    Also, let $W_X,W_Y\subseteq V(G)$ be two disjoint vertex sets with  $X\subseteq W_X$ and $Y\subseteq W_Y$.
    We call a $(W_X,W_Y)$\dagring $(U_X,U_{\mathsf{mid}},U_Y)$  \emph{maximal} if for every other $(W_X,W_Y)$\dagring $(U_X',U_{\mathsf{mid}}',U_Y')$ it holds that $U_{\mathsf{mid}}'\subseteq U_{\mathsf{mid}}$.

    In order to show how to obtain maximal dag-rings, we prove that, given a dag-ring, taking maximally pushed dag-cuts we still obtain a dag-ring.
\begin{lemma}\label{lem:maxring}
    Let $(G,\tcal)$ be a nice instance of \pdsp, $U$ be a \terset of $\tcal$, and $(X,Y)$ be a splitting partition of $U$.
    Also, let $W_X,W_Y\subseteq V(G)$ be two disjoint vertex sets with  $X\subseteq W_X$ and $Y\subseteq W_Y$.
    Consider $(W_X,W_Y)$\dagcuts $\gamma_1 \sqsubseteq \gamma_2 \sqsubseteq \gamma_3 \sqsubseteq \gamma_4$ such that $(\gamma_2,\gamma_3)$ represents a $(W_X,W_Y)$\dagring.
    Then $(\gamma_1,\gamma_4)$ represents a $(W_X,W_Y)$\dagring as well.
\end{lemma}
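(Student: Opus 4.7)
The plan is to verify the four conditions of \Cref{def:dag-rings} for the candidate partition $(U_X, U_{\mathsf{mid}}, U_Y) := (A_1,\ A_4 \setminus A_1,\ V(G) \setminus A_4)$, where I write $\gamma_i = (A_i, V(G) \setminus A_i)$ for $i \in \{1,2,3,4\}$. Conditions~1 and~2 are immediate: $(U_X, U_{\mathsf{mid}} \cup U_Y) = \gamma_1$ and $(U_X \cup U_{\mathsf{mid}}, U_Y) = \gamma_4$ are $(W_X,W_Y)$\dagcuts by hypothesis. For Condition~3, any edge $uv$ with $u \in U_X = A_1 \sub A_2$ and $v \in U_Y = V(G)\setminus A_4 \sub V(G)\setminus A_3$ would lie in $E(A_2, V(G)\setminus A_3)$, which is forbidden by Condition~3 of the middle dag-ring $(A_2, A_3\setminus A_2, V(G)\setminus A_3)$ represented by $(\gamma_2,\gamma_3)$.

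The heart of the argument is Condition~4: fix $e_1 \in E(U_X, U_{\mathsf{mid}})$, $e_2 \in E(U_{\mathsf{mid}}, U_Y)$, and $(t,t') \in \splitt_\tcal(X,Y)$, and exhibit a single shortest $(t,t')$-path through both edges. My strategy is to ``push'' each of $e_1, e_2$ to a crossing of one of the middle dag-cuts and then invoke Condition~4 of the middle ring on those pushed edges. Concretely, since $\gamma_1$ is a dag-cut, the orientation $(u,v)$ of $e_1$ (with $u \in A_1$, $v \notin A_1$) lies in $A((t,t')\dagg)$, giving a shortest $(t,t')$-path $P_1$ through $e_1$. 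By \Cref{obs:dag-cut:cross} applied to $\gamma_2$, the path $P_1$ has exactly one edge $e_1^* = (a,b)$ with $a \in A_2$ and $b \in V(G)\setminus A_2$; emptiness of $E(A_2, V(G)\setminus A_3)$ forces $b \in A_3 \setminus A_2$, so $e_1^* \in E(A_2, A_3\setminus A_2)$. Symmetrically, I obtain a shortest $(t,t')$-path $P_2$ through $e_2$ whose unique $\gamma_3$-crossing is an edge $e_2^* = (c,d) \in E(A_3\setminus A_2, V(G)\setminus A_3)$.

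Applying Condition~4 of the middle ring to $e_1^*$ and $e_2^*$ yields a shortest $(t,t')$-path $R$ through both, traversing them in the order $a \to b \to \cdots \to c \to d$. I then splice to form the walk
\[
W \;:=\; P_1[t,b] \;+\; R[b,d] \;+\; P_2[d,t'].
\]
Each piece is a geodesic, and because $b, d$ both lie on the shortest $(t,t')$-path $R$, we get $d(t,b) + d(b,d) + d(d,t') = d(t,t')$, so $W$ has length exactly $d(t,t')$. The hypothesis of strictly positive edge weights then forces $W$ to be a simple path (any repeated vertex would allow shortcutting to a strictly shorter $(t,t')$-walk, contradicting minimality of $d(t,t')$), hence $W$ is a shortest $(t,t')$-path. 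The chain $\gamma_1 \sqsubseteq \gamma_2$ ensures that $e_1$ appears at or before $e_1^*$ on $P_1$, so $e_1 \in P_1[t,b]$; symmetrically, $\gamma_3 \sqsubseteq \gamma_4$ places $e_2$ at or after $e_2^*$ on $P_2$, so $e_2 \in P_2[d,t']$. Condition~4 follows.

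The main obstacle is the bookkeeping around the degenerate cases $e_1 = e_1^*$ or $e_2 = e_2^*$, where the original edge itself already serves as the middle-ring crossing: then for instance $e_2 = (c,d)$ actually lies in $R[b,d]$ rather than in $P_2[d,t']$. One must check uniformly in such cases that the spliced path $W$ still contains $e_1$ and $e_2$, which follows because the last edge of $P_1[t,b]$ is the incoming edge of $b$ on $P_1$ (equal to $e_1$ when $e_1 = e_1^*$) and the last edge of $R[b,d]$ is $(c,d) = e_2^*$. The chain of inclusions $\gamma_1 \sqsubseteq \gamma_2 \sqsubseteq \gamma_3 \sqsubseteq \gamma_4$ guarantees the consistent ordering of crossings needed for the splice, and positivity of edge weights is what turns a $(t,t')$-walk of minimum length into a genuine simple path.
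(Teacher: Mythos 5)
Your proof is correct and follows essentially the same strategy as the paper's: push $e_1$ and $e_2$ to crossings of the inner dag-cuts, invoke Condition~4 of the middle ring to connect those crossings by a single geodesic, and splice. The only cosmetic difference is your choice of splice points ($b$ and $d$ rather than the paper's $z_2$ and $w_3$), which makes the degenerate cases $e_1 = e_1^*$, $e_2 = e_2^*$ slightly more transparent, and your explicit re-derivation of the fact that a minimum-length $(t,t')$-walk is simple, which the paper handles via \Cref{lem:prelim:replacement}.
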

\begin{proof}
    For every $i\in[4]$, we set $(V_X^i,V_Y^i):=\gamma_i$.
    Keep in mind that for every $i\in[4]$, $V_Y^i=V(G)\setminus V_X^i$.
    Also, we set $U_X:=V_X^1$,
    $U_Y:= V_Y^4$, and $U_{\mathsf{mid}}:= V(G)\setminus (V_X^1\cup V_Y^4)$;
    note that $U_{\mathsf{mid}}=V_X^4\setminus V_X^1=V_Y^1\setminus V_Y^4$ and therefore $\gamma_1=(U_X, U_{\mathsf{mid}}\cup U_Y)$ and $\gamma_4=(U_X\cup U_{\mathsf{mid}}, U_Y)$.
    In order to show that $(\gamma_1,\gamma_4)$ represents a $(W_X,W_Y)$\dagring, by~\Cref{def:dag-rings} we have to show that the set $E(U_X,U_Y)$ is empty and that for each $e_1 \in E(U_X,U_{\mathsf{mid}})$, $e_2 \in E(U_{\mathsf{mid}},U_Y)$, and each  $(t,t') \in \splitt_\tcal(X, Y)$ there exists a shortest $(t,t')$-path $P$ in $G$ such that $\{e_1, e_2\} \sub E(P)$.

    Since $(\gamma_2,\gamma_3)$ represents a $(W_X,W_Y)$\dagring,
    we have that the set $E(V_X^2,V_Y^3)$ is empty. Also, the fact that $\gamma_1\sqsubseteq \gamma_2$ and $\gamma_3\sqsubseteq\gamma_4$ implies that $V_X^1\subseteq V_X^2$ and $V_Y^4\subseteq V_Y^3$.
    Therefore, since $E(V_X^2,V_Y^3)=\emptyset$, we also get $E(V_X^1,V_Y^4)=\emptyset$ and thus $E(U_X,U_Y)=\emptyset$.

    Now, let $e_1=z_1w_1\in E(U_X,U_{\mathsf{mid}})$ and $e_4=z_4w_4\in E(U_{\mathsf{mid}},U_Y)$ and assume that $z_1\in U_X$, $w_1,z_4\in U_{\mathsf{mid}}$, and $w_4\in U_Y$.
    Also let $(t,t')\in\splitt_\tcal(X, Y)$.
    We show that there is a shortest $(t,t')$-path in $G$ that contains both edges $e_1$ and $e_4$.
    Since $U_{\mathsf{mid}}\subseteq V_Y^1$, we have that $e_1\in E(V_X^1,V_Y^1)$.
    Also, the fact that $\gamma_1$ is a dag-cut implies that 
    $(z_1,w_1)\in A((t,t')\dagg)$.
    Therefore, there is a shortest $(t,t')$-path $P$ in $G$ that contains $e_1$.
    Since $\gamma_2$ is a $(W_X,W_Y)$\dagcut, there is an edge $e_2=z_2w_2$ of $P$ with
    $z_2\in V_X^2$ and $w_2\in V_Y^2$.
    Similarly, there is a shortest $(t,t')$-path $P'$ that contains $e_4$ and an edge $e_3\in E(P')$ such that $e_3=z_3w_3$, where $z_3\in V_X^3$ and $w_3\in V_Y^3$.
    Also, the fact that $(\gamma_2,\gamma_3)$ represents a $(W_X,W_Y)$\dagring,
    implies that there is a shortest $(t,t')$-path $Q$ in $G$ such that $\{e_2,e_3\}\subseteq E(Q)$.
    Observe that due to~\Cref{lem:prelim:subpath},
    $Q[t,z_2]$ and $P[t,z_2]$ are shortest $(t,z_2)$-paths in $G$ and  $Q[w_3,t']$ and $P'[w_3,t]$ are shortest $(w_3,t')$-paths in $G$.
    Therefore, due to~\Cref{lem:prelim:replacement},
    we can replace $Q[t,z_2]$ by $P[t,z_2]$ and $Q[w_3,t']$ by $P'[w_3,t]$ in $Q$ and obtain
    the concatenation $P[t,z_2]+Q[z_2,w_3]+P'[w_3,t']$ which is a shortest $(t,t')$-path in $G$ that contains the edges $e_1$ and $e_4$.
\end{proof}

Using~\Cref{lem:maxring}, we reduce the computation of a maximal dag-ring to the computation of maximally pushed dag-cuts.
 
\begin{lemma}\label{lem:ring:max-ring}
    Let $(G,\tcal)$ be a nice instance of \pdsp and $(X,Y)$ be a splitting partition of a \terset $\widehat{T}$ of $\tcal$.
    Also, let $W_X,W_Y\subseteq V(G)$ be two disjoint vertex sets with  $X\subseteq W_X$ and $Y\subseteq W_Y$.
    Suppose that there exists at least one $(W_X,W_Y)$\dagring.
    Then there exists a maximal $(W_X,W_Y)$\dagring, which can be found in polynomial time.
\end{lemma}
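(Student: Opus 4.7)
The strategy is to combine \Cref{lem:maxcuts} with \Cref{lem:maxring}: first compute the two extremal $(W_X,W_Y)$\dagcuts, and then argue that together they represent the sought maximal $(W_X,W_Y)$\dagring.

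More precisely, assume some $(W_X,W_Y)$\dagring $(U_X^\circ,U_{\mathsf{mid}}^\circ,U_Y^\circ)$ exists and is represented by a pair of $(W_X,W_Y)$\dagcuts $(\gamma_2,\gamma_3)$. In particular, at least one $(W_X,W_Y)$\dagcut exists, so the hypothesis of \Cref{lem:maxcuts} is satisfied; we invoke it to obtain, in polynomial time, two $(W_X,W_Y)$\dagcuts $\gamma_1,\gamma_4$ such that $\gamma_1 \sqsubseteq \gamma' \sqsubseteq \gamma_4$ for every $(W_X,W_Y)$\dagcut $\gamma'$. Applied to $\gamma_2$ and $\gamma_3$, this gives the chain $\gamma_1 \sqsubseteq \gamma_2 \sqsubseteq \gamma_3 \sqsubseteq \gamma_4$, so \Cref{lem:maxring} implies that $(\gamma_1,\gamma_4)$ represents some $(W_X,W_Y)$\dagring, which we denote $(U_X,U_{\mathsf{mid}},U_Y)$.

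It remains to verify maximality. Let $(U_X',U_{\mathsf{mid}}',U_Y')$ be any other $(W_X,W_Y)$\dagring, represented by a pair of $(W_X,W_Y)$\dagcuts $(\gamma',\gamma'')$, so that $\gamma'=(U_X',U_{\mathsf{mid}}' \cup U_Y')$ and $\gamma''=(U_X' \cup U_{\mathsf{mid}}',U_Y')$. By the extremality property from \Cref{lem:maxcuts} we have $\gamma_1 \sqsubseteq \gamma'$ and $\gamma'' \sqsubseteq \gamma_4$. Unfolding the definition of $\sqsubseteq$, the first inclusion yields $U_X \subseteq U_X'$, while the second yields $U_X' \cup U_{\mathsf{mid}}' \subseteq U_X \cup U_{\mathsf{mid}}$, equivalently $U_Y \subseteq U_Y'$. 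Since $(U_X,U_{\mathsf{mid}},U_Y)$ and $(U_X',U_{\mathsf{mid}}',U_Y')$ are both partitions of $V(G)$, this forces $U_{\mathsf{mid}}' \subseteq U_{\mathsf{mid}}$, which is exactly the definition of maximality.

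The only non-trivial computational step is producing $\gamma_1$ and $\gamma_4$, which is done in polynomial time by \Cref{lem:maxcuts}; extracting the triple $(U_X,U_{\mathsf{mid}},U_Y)$ from $(\gamma_1,\gamma_4)$ is immediate. I do not anticipate an obstacle here: all the technical work is already carried out in \Cref{lem:maxcuts} (existence and computation of extremal dag-cuts) and \Cref{lem:maxring} (stability of the ring property under pushing the cuts outward), and the present statement is essentially a packaging of these two lemmas.
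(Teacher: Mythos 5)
Your proposal is correct and follows essentially the same route as the paper: compute the extremal dag-cuts via \Cref{lem:maxcuts}, sandwich the assumed dag-ring's representing cuts between them, invoke \Cref{lem:maxring} to get a dag-ring, and observe that extremality of the bounding cuts yields maximality. Your explicit unfolding of the maximality argument ($U_X\subseteq U_X'$ and $U_Y\subseteq U_Y'$ force $U_{\mathsf{mid}}'\subseteq U_{\mathsf{mid}}$) is a bit more detailed than the paper's one-line remark, but the underlying reasoning is identical.
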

\begin{proof}
    Let $(\gamma_1,\gamma_2)$ be a pair of $(W_X,W_Y)$\dagcuts that represent a $(W_X,W_Y)$\dagring with $\gamma_1\sqsubseteq \gamma_2$; such a pair is assumed to exist in the statement of the lemma.
    Also, let $\gamma_1',\gamma_2'$ be two $(W_X,W_Y)$\dagcuts such that for every other $(W_X,W_Y)$\dagcut $\gamma'$, we have $\gamma_1'\sqsubseteq \gamma'\sqsubseteq \gamma_2'$; such $\gamma_1',\gamma_2'$ exist because of~\Cref{lem:maxcuts}.
    Therefore, we have that $\gamma_1'\sqsubseteq \gamma_1\sqsubseteq\gamma_2\sqsubseteq\gamma_2'$, which implies, due to~\Cref{lem:maxring}, that $(\gamma_1',\gamma_2')$ represents a $(W_X,W_Y)$\dagring.
    Maximality of this latter dag-ring follows from the fact that for every
    $(W_X,W_Y)$\dagcut $\gamma'$, we have
    $\gamma_1'\sqsubseteq \gamma'\sqsubseteq\gamma_2'$.
    Also, due to~\Cref{lem:maxcuts}, $\gamma_1'$ and $\gamma_2'$ can be computed in polynomial time.
    Note also that~\Cref{ring:4} of~\Cref{def:dag-rings} can be verified in polynomial time by checking if for every pair of edges $e_1 = (u_1,v_1),e_2 = (u_2,v_2) \in E(G)$ with $u_1\in U_X,v_1\in U_{\mathsf{mid}}$ and  $u_2\in U_{\mathsf{mid}},v_2\in U_Y$, and  every $(t,t') \in \splitt_\tcal(X,Y)$, we have $d_G(t,t')=d_G(t,u_1)+w(e_1)+d_G(v_1,u_2)+w(e_2)+d_G(v_2,t')$.
    The existence of a $(W_X,W_Y)$\dagring implies that this check gives will give a positive answer for every choice of $e_1,e_2,$ and $(t,t')$.
    Conversely, if this check failed it would mean that there exists no $(W_X,W_X)\dagring$ in $G$ at all.
\end{proof}

\subsection{Exhaustive ring decompositions}

We are now ready to define ring decompositions,
which are used in order to define areas where we can tame the winding number of solutions, as explained in~\Cref{sec:techniques}.

\begin{definition}
\label{def:ring:ring-decomp}
    Let $(G,\tcal)$ be a nice instance of \pdsp
    and $\widehat{T}$ be a \terset of $\tcal$.
    A collection of tuples $\mathcal{R} = (X^i,Y^i,\gamma_{1}^{i},\gamma_{2}^{i})_{i=1}^\ell$
    is called a {\em ring decomposition} of  $(G,\tcal,\widehat{T})$ when the following hold.
    \begin{enumerate}
        \item For each $i \in [\ell]$,$(X^i,Y^i)$ is a splitting partition of $\widehat{T}$ and  $(\gamma_{1}^{i},\gamma_{2}^{i})$ represents a $(X^i,Y^i)$\dagring.
        \item For each distinct $i,j \in [\ell]$ the sets $\ring(\gamma_{1}^i, \gamma_{2}^i)$ and $\ring(\gamma_{1}^j, \gamma_{2}^j)$ are disjoint.
    \end{enumerate}
    The \emph{family of partitions} of $\mathcal{R}$ is the set $\{(X^i,Y^i)\colon i\in [\ell]\}$,
    which we denote by $\Pi_{\mathcal{R}}$,
    while the \emph{family of dag-cuts} of $\mathcal{R}$ is the set
    $\bigcup_{i=1}^{\ell}\{\gamma_{1}^i,\gamma_{2}^{i}\}$,
    which we denote by $\Gamma_{\mathcal{R}}$.
    
    A tuple $(X,Y,\gamma,\gamma')$ is an extension of $\mathcal{R}$ if $\mathcal{R} \cup \{(X,Y,\gamma,\gamma')\}$ is also a ring decomposition of $(G,\tcal,\widehat{T})$.
    We call a ring decomposition {\em exhaustive} if it admits no extension.
\end{definition}

\paragraph{Computing exhaustive ring decompositions.}
We next show how to compute extensions of ring decompositions. For this, we will use some facts on crossing of closed curves in the plane.
The next observation follows directly from the fact that the sets $\ring(\gamma_{1}^i, \gamma_{2}^i)$ and $\ring(\gamma_{1}^j, \gamma_{2}^j)$ are asked to be disjoint for each distinct $i,j \in [\ell]$ in~\Cref{def:ring:ring-decomp}, together with the fact that each ring of the form $\ring(\gamma,\gamma')$ is an open set and $\mathrm{Cycle}(\gamma)$ and $\mathrm{Cycle}(\gamma')$ are its two boundaries.

\begin{observation}
\label{obs:ring:noncross}
    Let $(G,\tcal)$ be a nice instance of \pdsp
    and $\widehat{T}$ be a \terset of $\tcal$.
    Also, let $\mathcal{R}$ be a ring decomposition of $(G,\tcal,\widehat{T})$.
    Then for every two $\gamma,\gamma'\in\Gamma_{\mathcal{R}}$,
    $\mathrm{Cycle}(\gamma)$ and $\mathrm{Cycle}(\gamma')$ do not cross (as closed curves on the plane).
\end{observation}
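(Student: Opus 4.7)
The plan is a short topological argument by contradiction. Suppose $C = \mathrm{Cycle}(\gamma)$ and $C' = \mathrm{Cycle}(\gamma')$ cross for some $\gamma, \gamma' \in \Gamma_{\mathcal{R}}$. If $\gamma$ and $\gamma'$ come from the same tuple $(X^i, Y^i, \gamma_1^i, \gamma_2^i)$ of $\mathcal{R}$, then $\{\gamma,\gamma'\} = \{\gamma_1^i, \gamma_2^i\}$, and the observation is essentially built into the definition: the notation $\ring(\gamma_1^i, \gamma_2^i)$ appearing in \Cref{def:ring:ring-decomp} is defined in the preliminaries only for non-crossing simple closed curves, so its use already forces $C$ and $C'$ to not cross. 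The substantive case is therefore $\gamma \in \{\gamma_1^i, \gamma_2^i\}$ and $\gamma' \in \{\gamma_1^j, \gamma_2^j\}$ for distinct $i, j \in [\ell]$, which is exactly the case hinted at in the paragraph preceding the observation.

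In this case, write $R_i = \ring(\gamma_1^i, \gamma_2^i)$ and $R_j = \ring(\gamma_1^j, \gamma_2^j)$; these open sets are disjoint by \Cref{def:ring:ring-decomp}. Since $C$ is one of the two boundary components of the open set $R_i$, every open neighborhood of every point of $C$ meets $R_i$ on one distinguished side of $C$, and the symmetric statement holds for $C'$ and $R_j$. Since $C$ and $C'$ cross, connectivity of $C'$ together with the Jordan curve theorem forces them to intersect, and moreover $C'$ must genuinely switch sides of $C$ at some point $p$. I would then take a sufficiently small open disc $U$ around $p$; the set $U \setminus (C \cup C')$ decomposes into open sectors, of which two lie on the $R_i$-side of $C$ and two lie on the $R_j$-side of $C'$. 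By pigeonhole at least one sector lies on both ring-sides simultaneously, and every point of that sector belongs to $R_i \cap R_j$, contradicting their disjointness.

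The main obstacle I foresee is that $C$ and $C'$, being dual cycles in a plane graph, need not intersect transversally at isolated points: they may share whole subarcs before parting. I would address this by taking $p$ to be a \emph{parting point}, that is, a point of $C \cap C'$ at which $C'$ leaves $C$ and continues on the opposite side from where it arrived. Such a point must exist precisely because the crossing hypothesis makes $C' \setminus (C \cap C')$ meet both open sides of $\mathbb{R}^2 \setminus C$, so $C'$ cannot stay on one side throughout. A sufficiently small neighborhood of $p$ avoids all other intersections and reduces to the clean four-sector picture, which makes the pigeonhole argument go through.
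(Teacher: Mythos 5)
Your proposal is correct and takes essentially the same route as the paper: the paper gives only a one-sentence justification in the text preceding the observation (disjointness of the rings plus the fact that the two cycles are the boundary components of each open ring), and your argument is a careful formalization of exactly that idea, including the sensible treatment of non-transversal intersections via parting points. The only minor caveat worth noting is that near a parting point $p$ you tacitly assume $R_i$ sits on a single well-defined side of $C$ locally, which requires the small disc $U$ to also avoid the other boundary cycle $\mathrm{Cycle}(\gamma_2^i)$ of $R_i$; this is easy to arrange but deserves a word, and the paper glosses over the same point.
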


We also use the following known fact for simple closed curves on the plane.

\begin{lemma}[Folklore]\label{lem:ring:crossing-curves}
    Let $S_1, S_2$ be two simple closed curves on the plane.
    Suppose there exist points $x_1,x_2,x_3,x_4 \in \rr^2$ such that $S_1$ separates $\{x_1,x_2\}$ from $\{x_3,x_4\}$ whereas $S_2$ separates $\{x_1,x_3\}$ from $\{x_2,x_4\}$.
    Then $S_1, S_2$ cross.
\end{lemma}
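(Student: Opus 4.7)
The plan is to argue by contradiction, relying only on the Jordan curve theorem and the connectedness of the two sides of each simple closed curve. Suppose $S_1$ and $S_2$ do not cross. By definition, this means that one of the two connected components of $\mathbb{R}^2\setminus S_1$ is disjoint from $S_2$; equivalently, $S_2\subseteq \cl(A_1)$ for one of the two components $A_1$ of $\mathbb{R}^2\setminus S_1$ (with the other component denoted $B_1$).

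By the hypothesis, the four points $x_1,x_2,x_3,x_4$ lie outside $S_1\cup S_2$, with $\{x_1,x_2\}$ in one component of $\mathbb{R}^2\setminus S_1$ and $\{x_3,x_4\}$ in the other; similarly, $\{x_1,x_3\}$ and $\{x_2,x_4\}$ are separated by $S_2$. Denote the two components of $\mathbb{R}^2\setminus S_2$ by $A_2,B_2$, with $\{x_1,x_3\}\subseteq A_2$ and $\{x_2,x_4\}\subseteq B_2$. The key observation is that the component of $\mathbb{R}^2\setminus S_1$ that does \emph{not} contain $S_2$ is a connected subset of $\mathbb{R}^2\setminus S_2$, and therefore must be contained in a single component of $\mathbb{R}^2\setminus S_2$.

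I would then split into two symmetric cases. If $S_2\subseteq \cl(A_1)$, then $B_1$ is connected and disjoint from $S_2$, so $B_1$ lies in a single component of $\mathbb{R}^2\setminus S_2$; however, $x_3\in B_1\cap A_2$ and $x_4\in B_1\cap B_2$, contradicting the connectedness. If instead $S_2\subseteq \cl(B_1)$, then by the same reasoning applied to the connected set $A_1$, the points $x_1\in A_1\cap A_2$ and $x_2\in A_1\cap B_2$ yield a contradiction. In either case we obtain a contradiction, so $S_1$ and $S_2$ must cross.

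The only potential subtlety is verifying that the $x_i$ avoid $S_1\cup S_2$ (immediate from being in its complement components) and that each component of $\mathbb{R}^2\setminus S_i$ is connected (this is the Jordan curve theorem, which we invoke as a black box). Everything else is a straightforward parity-style argument on which component of the other curve the four distinguished points fall into, so no serious technical obstacle is expected.
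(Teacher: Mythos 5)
The paper states this lemma as folklore and gives no proof, so there is nothing to compare against; your argument is the standard one and is correct. The only minor sloppiness is that you assert $x_3,x_4\in B_1$ without fixing which of $\{x_1,x_2\}$ and $\{x_3,x_4\}$ lies in $A_1$ versus $B_1$, but this is a harmless relabeling: whichever pair lands in the component of $\mathbb{R}^2\setminus S_1$ disjoint from $S_2$, its two members are separated by $S_2$, which contradicts that this component lies in a single component of $\mathbb{R}^2\setminus S_2$.
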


The next lemma shows that the computation of an extension of a ring decomposition for some fixed splitting partition $(X,Y)$ boils down to finding $(W_X,W_Y)$\dagcuts for some vertex supersets $W_X,W_Y$
of $X$ and $Y$, respectively. Moreover, such sets $W_X,W_Y$ can be computed in polynomial time.

\begin{lemma}
\label{lem:ring:w}
    There is a polynomial-time algorithm that, given a ring decomposition $\mathcal{R}$ of $(G,\tcal,\widehat{T})$ and a splitting partition $(X,Y)$ of $\widehat{T}$,
    such that neither $(X,Y)$ nor $(Y,X)$ belongs to the family $\Pi_{\mathcal{R}}$ of partitions of $\mathcal{R}$, 
    outputs one of the following:
    \begin{itemize}[nosep]
        \item either a correct report that there are no $(X,Y)$\dagcuts $\gamma_{1}$ and $\gamma_{2}$ such that $(X,Y,\gamma_{1},\gamma_{2})$ is an extension of $\mathcal{R}$, or
        \item two disjoint sets $W_X, W_Y \sub V(G)$ with $X\subseteq W_X$ and $Y\subseteq W_Y$ such that for every pair $(\gamma_{1},\gamma_{2})$ of $(X,Y)$\dagcuts, $(X,Y,\gamma_{1},\gamma_{2})$ is an extension of $\mathcal{R}$ if and only if $\gamma_{1},\gamma_{2}$ are $(W_X,W_Y)$\dagcuts.
    \end{itemize} 
\end{lemma}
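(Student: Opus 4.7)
The algorithm initializes $W_X := X$ and $W_Y := Y$ and processes each existing ring $(X^i,Y^i,\gamma_1^i,\gamma_2^i)$, tightening $W_X$ and $W_Y$ according to geometric constraints imposed by that ring. The key observation is that for any valid extension $(X,Y,\gamma_1,\gamma_2)$, the new dual cycles $\mathrm{Cycle}(\gamma_1), \mathrm{Cycle}(\gamma_2)$ must not cross $\mathrm{Cycle}(\gamma_1^i)$ or $\mathrm{Cycle}(\gamma_2^i)$ by \Cref{obs:ring:noncross}, and the new ring region $\ring(\gamma_1,\gamma_2)$ must be disjoint from $\ring(\gamma_1^i,\gamma_2^i)$. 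A topological analysis (viewing the embedding on $S^2$, where all complementary regions of Jordan curves are discs) shows that these two conditions together force the new ring to lie in the closure of exactly one of two regions of $\rr^2$: the side $A_i$ of $\mathrm{Cycle}(\gamma_1^i)$ containing $V_X^{i,1}$, or the side $B_i$ of $\mathrm{Cycle}(\gamma_2^i)$ containing $V_Y^{i,2}$.

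Inside each of $A_i$ and $B_i$ there is a further binary choice of whether the $V_X$- or the $V_Y$-side of the new ring is the ``enclosed'' side, giving four configurations in total. Each configuration is equivalent to a single partition-level containment: $X \sub X^i$, $Y \sub X^i$, $X \sub Y^i$, or $Y \sub Y^i$, respectively; and each enforces a unique superset inclusion on $W_X$ or $W_Y$, namely $W_Y \supseteq V_Y^{i,1}$, $W_X \supseteq V_Y^{i,1}$, $W_Y \supseteq V_X^{i,2}$, or $W_X \supseteq V_X^{i,2}$, respectively. Using the hypothesis $(X,Y),(Y,X) \notin \Pi_{\mathcal{R}}$ together with the fact that $(X^i,Y^i)$ is splitting, a direct case-check rules out the simultaneous validity of any two of the four containments (any two would force $(X,Y) \in \{(X^i,Y^i),(Y^i,X^i)\}$, or $X = \emptyset$, or $Y = \emptyset$). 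Thus at most one configuration applies per ring $i$. If none applies to some $i$, then $X$ and $Y$ each meet both $X^i$ and $Y^i$, and \Cref{lem:ring:crossing-curves} implies that any cycle separating $V_X$ from $V_Y$ crosses $\mathrm{Cycle}(\gamma_1^i)$; in this case the algorithm reports the first output option.

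After processing all rings, if $W_X \cap W_Y \neq \emptyset$ we again report the first option (as no dag-cut can then satisfy $W_X \sub V_X$ and $W_Y \sub V_Y$); otherwise we output $(W_X,W_Y)$. The ``only if'' direction of the characterization holds because any extension must realize one of the four configurations for every ring $i$, so its dag-cuts satisfy the forced containments and are $(W_X,W_Y)$\dagcuts. For the ``if'' direction, any pair of $(X,Y)$\dagcuts $(\gamma_1,\gamma_2)$ that are also $(W_X,W_Y)$\dagcuts satisfies $V_X^j \supseteq W_X$ and $V_Y^j \supseteq W_Y$ for $j=1,2$, which by the construction of $W_X,W_Y$ places both new cycles on the specific side of each existing cycle that matches the unique surviving configuration; non-crossing of all four cycles and disjointness of $\ring(\gamma_1,\gamma_2)$ from each $\ring(\gamma_1^i,\gamma_2^i)$ then both follow. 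The procedure runs in polynomial time since the sets $V_X^{i,j},V_Y^{i,j}$ are available from $\mathcal{R}$ and each ring contributes only a constant-time case decision plus linear-time set updates.

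The main technical obstacle is the careful topological case analysis needed to confirm that the four enumerated configurations truly exhaust all valid placements of the new ring, in particular handling the ``surrounding'' placements that arise when $B_i$ is unbounded in the plane but behave uniformly on $S^2$. A secondary subtlety is the ``if'' direction: one must argue that being a $(W_X,W_Y)$\dagcut pair is sufficient not merely for non-crossing of individual cycles but for disjointness of the entire annular ring region from each existing ring. This relies on the fact that once both new cycles are forced onto the same side of each existing cycle, the annulus they bound inherits that positioning and is therefore disjoint from the corresponding $\ring(\gamma_1^i,\gamma_2^i)$.
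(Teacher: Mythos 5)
Your proof is correct and follows essentially the same approach as the paper's: first rule out partitions $(X,Y)$ that cross some $(X^i,Y^i)$ via \Cref{lem:ring:crossing-curves}, then for each remaining ring determine which of the four one-sided containments holds and accumulate the resulting constraints to form $W_X,W_Y$. The only notable differences are cosmetic: your forced sets also absorb the ring interior $U_{\mathsf{mid}}^i$ (yielding a larger but still valid $(W_X,W_Y)$ pair for which the ``if'' direction is somewhat more immediate), and you add an explicit $W_X\cap W_Y=\emptyset$ check that the paper leaves implicit.
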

\begin{proof}
    We use $(X^i,Y^i,\gamma_{1}^{i},\gamma_{2}^{i})_{i=1}^\ell$ to denote the ring decomposition $\mathcal{R}$.
    We first check how $(X,Y)$ relates to the other partitions in $\Pi_{\mathcal{R}}$.

    If $(X,Y)$ ``crosses'' some partition $(X',Y')\in \Pi_{\mathcal{R}}$, i.e., both $X$ and $Y$ contain vertices of both $X'$ and $Y'$, then we report that no extension of $\mathcal{R}$ with this partition $(X,Y)$ is possible.
    To see this, let $v_{X,X'}, v_{X,Y'},v_{Y,X'},v_{Y,Y'}$ be vertices of $V(G)$ that are in $X\cap X', X\cap Y',Y\cap X',$ and $Y\cap Y'$, respectively. 
    Also, let 
    $\gamma_{1}'$ and $\gamma_{2}'$ be two $(X',Y')$\dagcuts such that $(X',Y',\gamma_{1}',\gamma_{2}')\in\mathcal{R}$ and suppose that there are $(X,Y)$\dagcuts $\gamma_{1}$ and $\gamma_{2}$ such that $(X,Y,\gamma_{1},\gamma_{2})$ is an extension of $\mathcal{R}$. Then, observe that
    $\mathrm{Cycle}(\gamma_{1}')$  separates $\{v_{X,X'},v_{Y,X'}\}$ and $\{v_{X,Y'},v_{Y,Y'}\}$ while $\mathrm{Cycle}(\gamma_{1})$ separates $\{v_{X,X'},v_{X,Y'}\}$ and $\{v_{Y,X'},v_{Y,Y'}\}$. 
    By~\Cref{lem:ring:crossing-curves},
    $\mathrm{Cycle}(\gamma_{1}')$ and $\mathrm{Cycle}(\gamma_{1})$ cross (as closed curves on the plane), a contradiction to~\Cref{obs:ring:noncross}. Therefore, we can safely report that, in this case, no extension of $\mathcal{R}$ with this partition $(X,Y)$ is possible.

    From now on, we can assume that for every $i\in[\ell]$, 
    either $X^i$ or $Y^i$ \textsl{does not} intersect both $X$ and $Y$.
    In other words,
    one of the following holds: (a) $X^i\subseteq X$ or (b) $Y^i\subseteq Y$ or (c) $Y^i\subseteq X$ or (d) $X^i\subseteq Y$; the cases (b) and (d) are obtained from (a) and (c) by swapping the roles of $X^i$ and $Y^i$. 
    Keep in mind that since $(X,Y),(Y,X)\notin\Pi_{\mathcal{R}}$,
    it cannot happen that $X=X^i$ or $X=Y^i$, for any $i\in[\ell]$; thus, exactly one of the cases among (a)-(d) occurs for each $i\in[\ell]$.
    For every $i\in[\ell]$, let
    \begin{align*}
        Z_X^{i}  :=
        \begin{cases}
        X^i, & \text{if $X^i \subseteq X$,}\\
        Y^i, & \text{if $Y^i\subseteq X$,}\\
        \emptyset, & \text{otherwise,}
        \end{cases}
        &  & \text{and} & & Z_Y^{i}  :=\begin{cases}
        X^i, & \text{if $X^i \subseteq Y$,}\\
        Y^i, & \text{if $Y^i\subseteq Y$,}\\
        \emptyset, & \text{otherwise.}
    \end{cases}
    \end{align*}
    Intuitively, $Z_X^i$ corresponds to the part of the partition $(X^i,Y^i)$ that is contained in $X$ (if such exists), while $Z_Y^{i}$ corresponds to the part that is contained in $Y$ (if such exists).
    
    Also, for every $i\in[\ell]$, if $Z_{X}^i\neq\emptyset$ (resp. $Z_{Y}^i\neq \emptyset$), we set
    $B_X^i$ (resp. $B_Y^i$) to be the set of all vertices of $G$ that are in the same connected component of $\mathbb{R}^2\setminus \ring(\gamma_{1}^i,\gamma_{2}^i)$ as $Z_X^i$ (resp. $Z_Y^i$); if $Z_X^i$ or $Z_Y^i$ is the empty set then we also set $B_X^i$ and $B_Y^i$ to be the empty set, respectively. 
    We finally set $W_X:=X\cup\bigcup_{i=1}^{\ell}B_{X}^i$ and $W_Y:=Y\cup\bigcup_{i=1}^{\ell}B_{Y}^i$.
    It is easy to check that all above can be computed in polynomial time.
    
    We now show that $(X,Y,\gamma_{1},\gamma_{2})$ is an extension of $\mathcal{R}$ if and only if $\gamma_{1},\gamma_{2}$ are $(W_X,W_Y)$\dagcuts.
    For this, it suffices to show that $\ring(\gamma_{1},\gamma_{2})$ is disjoint from the other rings $\ring(\gamma_{1}^i,\gamma_{2}^i)$, $i\in[\ell]$
    if and only if $\gamma_{1}$ and $\gamma_{2}$ are $(W_X,W_Y)$\dagcuts. For every $i\in[\ell]$, let $R_i$ denote $\ring(\gamma_{1}^i,\gamma_{2}^i)$.

    For the forward direction, i.e., that disjointedness of rings implies that $\gamma_{1},\gamma_{2}$ are $(W_X,W_Y)$\dagcuts, we argue as follows.
    Let $\Delta_{X}$ (resp. $\Delta_{Y}$) denote the component of $\mathbb{R}^\mathsf{2}\setminus \mathsf{cl}(\ring(\gamma_{1},\gamma_{2}))$
    that contains $X$ (resp. $Y$).
    Keep in mind that $\Delta_{X}$ and $\Delta_{Y}$ are open sets.
    The fact that for every $i\in[\ell]$, $R_i\cap \ring(\gamma_{1},\gamma_{2})=\emptyset$ implies that every $R_i,i\in[\ell]$ is either a subset of $\Delta_{X}$ or a subset of $\Delta_{Y}$.
    For each $i\in[\ell]$, let $\Delta^i$ be the connected component of $\mathbb{R}^2\setminus R_i$ that is disjoint from $\ring(\gamma_{1},\gamma_{2})$.
    Now notice that either $\Delta^{i}\cap \widehat{T}\subseteq X$ or $\Delta^{i}\cap \widehat{T}\subseteq Y$,
    which implies that either $\Delta^{i}\subseteq \Delta_{X}$ or $\Delta^{i}\subseteq \Delta_{Y}$, respectively.
    Note also that for every $i\in[\ell]$, $B_{X}^i\subseteq \Delta^i$ and $B_{Y}^i\subseteq \Delta^i$; from this we get $W_X\subseteq \Delta_{X}$ and $W_Y\subseteq \Delta_{Y}$. Therefore, if $(U_X,U_{\mathsf{mid}},U_Y)$
    is the $(X,Y)$\dagring represented by $(\gamma_{1},\gamma_{2})$, then $W_X\subseteq U_X$ and $W_Y\subseteq U_Y$.
    This, in turn, implies that 
    $(\gamma_{1},\gamma_{2})$ are also $(W_X,W_Y)$\dagcuts.

    In the other direction,  suppose towards a contradiction, that $\gamma_{1},\gamma_{2}$ are $(W_X,W_Y)$\dagcuts and there is $i\in[\ell]$ such that $R_i\cap \ring(\gamma_{1},\gamma_{2})\neq \emptyset$.
    Since the boundaries of $R_i$ and $\ring(\gamma_{1},\gamma_{2})$ correspond to cycles of the dual $G^\star$,
    there is a vertex of $G$ that belongs to $R_i$ that also belongs to $\ring(\gamma_{1},\gamma_{2})$.
    Consequently, either $\mathrm{Cycle}(\gamma_1^i)$ or $\mathrm{Cycle}(\gamma_2^i)$ crosses $\mathrm{Cycle}(\gamma_1)$ or $\mathrm{Cycle}(\gamma_2)$.
    This implies that
    there is a vertex of $W_X\cup W_Y$ in $\ring(\gamma_{1},\gamma_{2})$.
    This yields a contradiction to the fact that $\mathrm{Cycle}(\gamma_{2})$ and  $\mathrm{Cycle}(\gamma_{1})$ separate $W_X$ and $W_Y$, which holds because of~\Cref{lem:ring:cut-cycle}.
    Therefore, $\mathcal{R} \cup \{(X,Y,\gamma_{1},\gamma_{2})\}$ is a ring decomposition of $(G,\tcal,U)$.
\end{proof}

With~\Cref{lem:ring:w} in hand, we are now ready to show the main result of this subsection, i.e., that given a terminal-superset $\widehat{T}$, an exhaustive ring decomposition can be computed in time $2^{|\widehat{T}|}\cdot n^{\mathcal{O}(1)}$. Moreover, this ring decomposition should consist of $\Oh(|\widehat{T}|)$ many rings.

\begin{proposition}
\label{prop:ring:ring-decomp}
    There exists an algorithm that, given a nice instance $(G,\tcal)$ of \pdsp 
    and a \terset  $\widehat{T}$ of $\tcal$
    with $|\widehat{T}| = r$, runs in time $2^r\cdot n^{\Oh(1)}$, and outputs an exhaustive ring decomposition of $(G,\tcal,\widehat{T})$ containing at most $2r-2$ tuples.
\end{proposition}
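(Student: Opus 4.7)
The plan is to construct $\mathcal{R}$ greedily. Starting from $\mathcal{R}=\emptyset$, I iteratively extend it until no further extension is possible. In each round of the main loop, I enumerate every splitting bipartition $(X,Y)$ of $\widehat{T}$ with neither $(X,Y)$ nor $(Y,X)$ in $\Pi_{\mathcal{R}}$ (there are at most $2^r$ such partitions). For each, I invoke \Cref{lem:ring:w}, which either certifies that no extension with bipartition $(X,Y)$ exists, or returns disjoint sets $W_X, W_Y$ with $X\subseteq W_X$ and $Y\subseteq W_Y$. In the latter case I invoke \Cref{lem:ring:max-ring}, which either certifies that no $(W_X, W_Y)$-dag-ring exists, or returns a maximal one represented by a pair $(\gamma_1, \gamma_2)$; I then append the tuple $(X, Y, \gamma_1, \gamma_2)$ to $\mathcal{R}$. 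If any tuple was added during the round, I restart; otherwise I output $\mathcal{R}$. One round runs in $2^r \cdot n^{\Oh(1)}$ time.

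For the size bound, I observe that the bipartitions in $\Pi_{\mathcal{R}}$ are pairwise non-crossing: if two bipartitions $(X^i,Y^i)$ and $(X^j,Y^j)$ had all four intersections $X^i\cap X^j$, $X^i\cap Y^j$, $Y^i\cap X^j$, $Y^i\cap Y^j$ nonempty, picking one terminal in each and applying \Cref{lem:ring:crossing-curves} would force $\mathrm{Cycle}(\gamma_1^i)$ and $\mathrm{Cycle}(\gamma_1^j)$ to cross, contradicting \Cref{obs:ring:noncross}. Since the algorithm adds at most one tuple per bipartition, and any non-crossing family of (unordered) bipartitions of an $r$-element set has at most $2r-3$ elements (by a standard Buneman-style argument identifying such a family with the edges of a tree on $r$ leaves), the output contains at most $2r-3 \le 2r-2$ tuples. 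Thus at most $2r-2$ rounds are executed, for total running time $2^r\cdot n^{\Oh(1)}$.

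The main technical challenge is proving exhaustiveness, namely that no extension of the output is possible. Extensions whose bipartition is absent from $\Pi_\mathcal{R}$ are blocked directly by the exit condition. The delicate case is a hypothetical extension $(X,Y,\gamma_1^*,\gamma_2^*)$ sharing bipartition $(X,Y)$ with an already-added tuple $(X,Y,\gamma_1,\gamma_2)$ representing a maximal dag-ring $(U_X,U_{\mathsf{mid}},U_Y)$. Here the plan is a \emph{merging} argument: disjointness of the two ring regions forces $U_{\mathsf{mid}}\cap U_{\mathsf{mid}}^*=\emptyset$, and the two rings must then be topologically nested, so one of $(U_X^*\cap U_X,\; U_{\mathsf{mid}}\cup U_{\mathsf{mid}}^*\cup (U_X\cap U_Y^*),\; U_Y\cup U_Y^*)$ or its symmetric counterpart (when the nesting is on the $Y$-side) should be a valid $(X,Y)$-dag-ring whose middle properly contains $U_{\mathsf{mid}}$. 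Chaining the shortest-path substitutions through \Cref{lem:prelim:replacement} across the nested structure is what I expect to deliver property (4) of \Cref{def:dag-rings} for the merged object, while properties (1)--(3) follow directly from the dag-cut structures of the two original rings. The existence of this strictly larger dag-ring would contradict the maximality of $U_{\mathsf{mid}}$ guaranteed by \Cref{lem:ring:max-ring} at the time the original tuple was added, ruling out the extension and completing the proof.
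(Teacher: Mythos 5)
Your construction follows a genuinely different route from the paper's and is sound. The paper at each step selects the extension, over all bipartitions and all dag-cut pairs, that maximizes $|\ring(\gamma_1,\gamma_2)\cap V(G)|$, and then uses that global maximality to prove a claim that each chosen bipartition is fresh; you instead iterate directly over fresh bipartitions and, for each, append the $U_{\mathsf{mid}}$-maximal $(W_X,W_Y)$\udagring returned by \Cref{lem:ring:max-ring}. Both schemes yield at most one tuple per unordered split, give $2^r\cdot n^{\Oh(1)}$ per round and $\Oh(r)$ rounds. For the size bound you invoke the Buneman-style $2r-3$ bound on compatible split systems, with the non-crossing property established by exactly the four-point argument of \Cref{lem:ring:w}; the paper instead counts leaves and degree-two nodes of the containment tree of the discs $D_i$ to get $2r-2$. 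Both are correct.

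Your exhaustiveness argument for the case of a repeated bipartition is on the right track but has two issues. First, the merged-ring formula is wrong: in the nesting $\gamma_1^*\sqsubseteq\gamma_2^*\sqsubseteq\gamma_1\sqsubseteq\gamma_2$ one has $U_{\mathsf{mid}}\subseteq U_Y^*$, so your proposed middle part overlaps $U_Y\cup U_Y^*$ in $U_{\mathsf{mid}}$; the outer part should be $U_Y\cap U_Y^*$, not $U_Y\cup U_Y^*$. Second, the ``chaining of shortest-path substitutions'' you expect to give property~(4) is precisely \Cref{lem:maxring}, already proven in the paper (apply it with a degenerate quadruple $\gamma_1'=\gamma_2'=\gamma_1^*$, $\gamma_3'=\gamma_2^*$, $\gamma_4'=\gamma_2$), and should be cited rather than re-derived. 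In fact, no merging is necessary: the hypothetical extension $(X,Y,\gamma_1^*,\gamma_2^*)$ of the final $\mathcal{R}$ also extends the smaller decomposition present when the $(X,Y)$-tuple was appended, so by \Cref{lem:ring:w} the cuts $\gamma_1^*,\gamma_2^*$ are $(W_X,W_Y)$\dagcuts (for the $W_X,W_Y$ computed at that time) representing a $(W_X,W_Y)$\udagring. Maximality of the tuple you added then forces $U^*_{\mathsf{mid}}\subseteq U_{\mathsf{mid}}$, while disjointness of the two rings forces $U^*_{\mathsf{mid}}\cap U_{\mathsf{mid}}=\emptyset$, so $U^*_{\mathsf{mid}}=\emptyset$, contradicting $\gamma_1^*\ne\gamma_2^*$. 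This shortcut closes the argument cleanly without constructing any merged ring.
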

\begin{proof}
    We will compute a sequence $(\mathcal{R}_0, \mathcal{R}_1, \dots, \mathcal{R}_\ell)$ of ring decompositions of $(G,\tcal,\widehat{T})$ with the following property.
    For each $i \in [\ell]$ the ring decomposition $\mathcal{R}_i$ comprises $i$ tuples and it is obtained from $\mathcal{R}_{i-1}$ by inserting a tuple $(X^i,Y^i,\gamma_{1}^i,\gamma_{2}^i)$
    which is an extension of $\mathcal{R}_{i-1}$ and maximizes $|\ring(\gamma_{1}^i,\gamma_{2}^i)\cap V(G)|$ (breaking ties arbitrarily).
    We start from an empty ring decomposition $\mathcal{R}_0$ and terminate the process when no more extension is available.
    Then clearly $\mathcal{R}_\ell$ is exhaustive.

    \begin{claim}\label{claim:ring:different}
        For every $i\in[\ell]$, if $(X^i,Y^i,\gamma_{1}^i,\gamma_{2}^i)$ is an extension of $\mathcal{R}_{i-1}$, then neither $(X^i,Y^i)$ nor $(Y^i,X^i)$ belongs to $\Pi_{\mathcal{R}_{i-1}}$. 
    \end{claim}
    \begin{innerproof}
        For every $h\le i$, let $R_h$ denote $\ring(\gamma_{1}^h,\gamma_{2}^h)$.
        Suppose towards a contradiction that either $(X^i,Y^i)$ or $(Y^i,X^i)$ belongs to $\Pi_{\mathcal{R}_{i-1}}$ and let $j<i$ be minimum possible so that $(X^i,Y^i) = (X^j,Y^j)$ or $(X^i,Y^i)=(Y^j,X^j)$.
        Since  $(X^i,Y^i,\gamma_{1}^i,\gamma_{2}^i)$ is an extension of $\mathcal{R}_{i-1}$, we have that $R_i$ and $R_h$ are disjoint, for every $h<i$. Therefore, w.l.o.g. we can assume that $\gamma_{1}^i\sqsubseteq \gamma_{2}^i\sqsubseteq \gamma_{1}^j\sqsubseteq \gamma_{2}^j$ and thus $X^i=X^j$ and $Y^i=Y^j$.
        Note that since $(X^i,Y^i)=(X^j,Y^j)$, $\gamma_{1}^i$ is also  an $(X^j,Y^j)$\dagcut.
        Therefore, due to~\Cref{lem:maxring},
        we get that $(\gamma_{1}^i,\gamma_{2}^j)$ also represents an $(X^j,Y^j)$\dagring.
    
        We next claim that $\ring(\gamma_{1}^i, \gamma_{2}^j)$ is disjoint from the ring $R_{j'}$, for every $j'< j$.
        To see why this holds, suppose towards a contradiction that there is a $j'<j$ such that $R_{j'}$ intersects
        $\ring(\gamma_{1}^i, \gamma_{2}^j)$.
        Due to~\Cref{obs:ring:noncross}, 
        $R_{j'}$ does not intersect $\mathrm{Cycle}(\gamma_{1}^i)$ or $\mathrm{Cycle}(\gamma_{2}^j)$ and therefore $R_{j'}$ is a subset of $\ring(\gamma_{1}^i, \gamma_{2}^j)$.
        Also, since $\gamma_{1}^{j'},\gamma_{2}^{j'}$ are $(X^{j'},Y^{j'})$\dagcuts for some splitting partition $(X^{j'},Y^{j'})$, both components of $\mathbb{R}^2\setminus \mathsf{cl}(R_{j'})$ contain at least one terminal. This, in turn, implies that the sets $X^i$ and $Y^j$ belong to different components of $\mathbb{R}^2\setminus \mathsf{cl}(R_{j'})$ and therefore either $X^{j'}=X^i$ (and $Y^{j'}=Y^{j}=Y^i$) or $X^{j'}=Y^j=Y^i$ (and $Y^{j'}=X^i$),
        a contradiction to minimality of $j$.

        Given that $(\gamma_{1}^i,\gamma_{2}^j)$ represents an $(X^j,Y^j)$\dagring and  
        $\ring(\gamma_{1}^i, \gamma_{2}^j)$ is disjoint from the rings $R_{j'}$, for each $j'< j$, we get that $(X^j,Y^j,\gamma_{1}^i, \gamma_{2}^j)$ is also an extension of $\mathcal{R}_{j-1}$.
        Moreover, since both $R_i$ and $R_j$ are subsets of $\ring(\gamma_{1}^i,\gamma_{2}^j)$, we have $|R_j\cap V(G)|< |\ring(\gamma_{1}^i,\gamma_{2}^j)\cap V(G)|$.
        This contradicts the fact that $(X^j,Y^j,\gamma_{1}^j,\gamma_{2}^j)$ is an extension of $\mathcal{R}_{j-1}$ that maximizes $|\ring(\gamma_{1}^j,\gamma_{2}^j)\cap V(G)|$.
    \end{innerproof}

    \begin{claim}\label{claim:ring:rd_time}
        Given  $\mathcal{R}_{i-1}$ we can compute  $\mathcal{R}_{i}$ in time $2^r\cdot n^{\Oh(1)}$.
    \end{claim}
    \begin{innerproof}
        The algorithm iterates over all splitting partitions $(X,Y)$ of $\widehat{T}$ that are different from $(X^j,Y^j)$ and $(Y^j,X^j)$ for every $j<i$; there are less than $2^r$ many of them. Because of~\Cref{claim:ring:different}, we know that it is sufficient to focus only on partitions that are different from $(X^j,Y^j)$ and $(Y^j,X^j)$ for every $j<i$. For a fixed $(X,Y)$, our goal is to compute two $(X,Y)$\dagcuts $\gamma_{1}$ and $\gamma_{2}$ such that $(X,Y,\gamma_{1},\gamma_{2})$ is an extension of $\mathcal{R}$ while maximizing $|\ring(\gamma_{1},\gamma_{2})\cap V(G)|$, or report that no such extension (with $(X,Y)$) exists. Invoking the algorithm of ~\Cref{lem:ring:w} for $(X,Y)$,    we either correctly report that no extension with $(X,Y)$ is possible, or obtain two disjoint sets $W_X,W_Y\subseteq V(G)$ with $X\subseteq W_X$ and $Y\subseteq W_Y$ such that for every pair $(\gamma_{1},\gamma_{2})$ of $(X,Y)$\dagcuts, $(X,Y,\gamma_{1},\gamma_{2})$ is an extension of $\mathcal{R}$ if and only if $\gamma_{1},\gamma_{2}$ are $(W_X,W_Y)$\dagcuts. Note also that, by definition, a maximal $(W_X,W_Y)$\dagring represented by two $(W_X,W_Y)$\dagcuts $\gamma_{1}',\gamma_{2}'$ is also maximizing $|\ring(\gamma_{1}',\gamma_{2}')\cap V(G)|$. Due to~\Cref{lem:ring:max-ring} such a maximal $(W_X,W_Y)$\dagring can be found in polynomial time.
    \end{innerproof}

    \begin{claim}\label{claim:ring:rd_nb}
        Suppose that the process terminates at iteration $\ell$.
        Then $\ell \le 2r-2$.
    \end{claim}
    \begin{innerproof}
        For every $i\in[\ell]$,
        let $D_i$ be the bounded component of $\mathbb{R}^2\setminus \mathsf{cl}(\ring(\gamma_{1}^i,\gamma_{2}^i))$. Let also $K$ denote the set $\mathbb{R}^2\setminus \bigcup_{i=1}^{\ell}\mathsf{cl}(\ring(\gamma_{1}^i,\gamma_{2}^i))$ and let $D_0$ be the component of $K$ that is not contained in any $D_i$, $i\in[\ell]$. Since for  each distinct $i,j \in [\ell]$ the rings $\ring(\gamma_{1}^i, \gamma_{2}^i)$ and $\ring(\gamma_{1}^j, \gamma_{2}^j)$ are disjoint, we have that if $D_i\cap D_j\neq \emptyset$, then either $D_i\subseteq D_j$ or $D_j\subseteq D_i$.

        We consider a graph $H'$ that has one distinguished vertex $v_0$ corresponding to $D_0$ and a vertex $v_i$ for each set $D_i,i\in[\ell]$.
        Two vertices $v_i,v_j\in V(H)$ for distinct $i,j\in[\ell]$ are adjacent if $D_i\subseteq D_j$ or $D_j\subseteq D_i$, but there is no $h\in[\ell]\setminus\{i,j\}$ such that $D_i \subseteq D_{h}\subseteq D_j$ or $D_j\subseteq D_h\subseteq D_i$, while $v_0$ is adjacent to all vertices $v_i,i\in[\ell]$ for which there is no $h\neq i$ such that $D_i\subseteq D_h$.
        Note that $H$ can be viewed as a tree rooted at $v_0$; every other vertex $v_i$ has exactly one parent, that is the vertex $v_j$ corresponding to the set $D_j$ for which it holds that there is no other $h$ different than $i$ and $j$ so that $D_i\subseteq D_h \subseteq D_j$.

        Let $\ell_1$ denote the number of leaves of $H$ and let $\ell_2$ denote the number of vertices of degree two of $H$.
        We claim that $\ell_1+\ell_2\le r$. Indeed, the leaves of $H$ correspond to disjoint sets $D_i$, each containing at least one element from $\widehat{T}$.
        Moreover, for every vertex $v_i$ of $H$ that has exactly one child $v_j$, we get that $D_i\setminus D_j$ intersects $\widehat{T}$. To see why this holds, observe that if $D_i\setminus D_j$ does not intersect $\widehat{T}$, then $D_i\cap \widehat{T}=D_j\cap \widehat{T}$ and therefore $(X_i,Y_i)$ is equal to either $(X_j,Y_j)$ or $(Y_j,X_j)$, a contradiction to~\Cref{claim:ring:different}.
        In summary, we have $\ell_1+\ell_2$ disjoint sets, each of which contains some element from  $\widehat{T}$.
        
        Therefore, $\ell_1+\ell_2\le |\widehat{T}|=r$. Moreover, using standard arguments, it is easy to see that the number of vertices of degree more than two in $H$ is upper-bounded by $\ell_1-1$ and therefore $|V(H)|\le 2\ell_1+\ell_2-1\le 2r-1$, which implies that $\ell\le 2r-2$.
    \end{innerproof}

    Because of~\Cref{claim:ring:rd_time}, the overall algorithm runs in time $2^{r}\cdot n^{\Oh(1)}$ and because of~\Cref{claim:ring:rd_nb} the number of tuples in the obtained exhaustive ring decomposition is at most $2r-2$.
\end{proof}

\subsection{Winding creates rings}

In this subsection, we show that if we find a candidate solution that winds around a geodesic many times, then this situation gives rise to a dag-ring. This result will be applied in the presence of an exhaustive ring decomposition, in order to rule out large winding outside rings (see~\Cref{prop:outside}).
In order to state this result, we first define (winding) handles of paths.

\begin{definition}
   Let $P, Q$ be paths in a plane graph $G$. We say that $P$ is a {\em handle} of $Q$ if the endpoints of $P$ are internal vertices of $Q$ and $P$ is internally disjoint from $Q$.

   When $P$ is a handle of $Q$, we denote by $C(Q,P)$ the unique cycle contained in $Q \cup P$.
   We say that $P$ is a {\em winding handle} of $Q$ if the endpoints of $Q$ lie in different connected components of $\rr^2 \sm C(Q,P)$.
   Otherwise $P$ is called a {\em regular handle} of $Q$.
\end{definition}

\begin{proposition}\label{prop:winding:dag-ring}
    Let $(G,\tcal)$ be a nice instance of \pdsp, $\widehat{T}$ be a \terset of $\tcal$, and $Q$ be a geodesic in $G$.
    Also, let $P$ be an oriented shortest $(x,y)$-path for some $(x,y) \in \tcal$ and suppose that there exist $h$ disjoint subpaths $P_1, P_2, \dots, P_h$ of $P$, where $h > 10 \cdot (|\widehat{T}| + 1)$, which are located on $P$ in this order and each of them forms a winding handle of $Q$.
    Then there exist:
    \begin{enumerate}[nosep]
        \item indices $\ell, r \in [h]$ such that $\ring(C(Q,P_\ell), C(Q,P_r))$ does not contain any vertices from $\widehat{T}$,
        \item a splitting partition $(X,Y)$ of $\widehat{T}$ with $x \in X$, $y \in Y$, and
        \item a pair $(\gamma, \gamma')$ of $(X,Y)$\dagcuts with
        $\ring(\gamma, \gamma') \sub \ring(C(Q,P_\ell), C(Q,P_r))$
        that represents an $(X,Y)$\dagring.
    \end{enumerate}
\end{proposition}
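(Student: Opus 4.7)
The plan has three stages: first, deduce from the monotonicity of crossings that the cycles $C_i := C(Q,P_i)$ form a nested tower; second, locate a thick clean portion of this tower by pigeonhole to fix $\ell, r$ and the splitting partition $(X,Y)$; and third, construct an explicit $(W_X,W_Y)$\dagring in the clean portion so that~\Cref{lem:ring:max-ring} can extract the maximal one required by the statement. \Cref{lem:prelim:monotone3} applied to the two geodesics $P$ and $Q$ forces the common vertices of $P$ and $Q$ to appear in monotone order on both paths; in particular the segments $Q[a_i,b_i]$ are pairwise disjoint intervals of $Q$ and the cycles $C_i$ share no vertex. Each winding handle $P_i$ encloses exactly one endpoint of $Q$, and by pigeonhole at least $h/2 > 5(|\widehat T|+1)$ of them enclose the same endpoint, say $u$; two cycles both enclosing $u$ and sharing no vertex must be nested, so after restricting and reindexing I may assume $\disc(C_1)\supset\cdots\supset\disc(C_h)$ with the new $h$ still exceeding $5(|\widehat T|+1)$. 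A local planar argument at $a_i,b_i$, using how $P$ enters and leaves $P_i$ relative to $Q[a_i,b_i]$, shows that $P$ switches sides across $C_i$, so each $C_i$ also separates $x$ from $y$.

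The $h-1$ open sub-rings $\ring(C_i,C_{i+1})$ are pairwise disjoint, so at most $|\widehat T|$ of them meet $\widehat T$, and the lower bound on $h$ leaves a run of at least five consecutive sub-rings disjoint from $\widehat T$. Choosing $\ell < r$ with $r-\ell\ge 5$ bracketing such a run gives $\ring(C_\ell, C_r)\cap \widehat T = \emptyset$, which is conclusion~(1). Setting $X := \widehat T\sm\cl(\disc(C_\ell))$ and $Y := \widehat T\cap \disc(C_r)$ produces a splitting partition with $x\in X$ and $y\in Y$, satisfying conclusion~(2). For conclusion~(3), define $W_X := V(G)\sm\cl(\disc(C_\ell))$ and $W_Y := V(G) \cap \disc(C_r)$; these are disjoint vertex sets containing $X$ and $Y$. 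I would exhibit a $(W_X,W_Y)$\dagring explicitly via the tripartition $(U_X, U_{\mathsf{mid}}, U_Y)$ with $U_X := V(G)\sm\cl(\disc(C_{\ell+1}))$, $U_Y := V(G)\cap\disc(C_{r-1})$, and $U_{\mathsf{mid}}$ the rest. Planarity immediately gives $E(U_X,U_Y)=\emptyset$ (no edge can jump between the two boundary cycles without crossing one of them), and the inclusions $W_X\sub U_X$, $W_Y\sub U_Y$ are immediate from the nesting.

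The remaining dag-cut conditions together with property~(4) of~\Cref{def:dag-rings} reduce to a uniform rerouting statement: for every $(t,t')\in\splitt_\tcal(X,Y)$ and every pair of edges $(e_1,e_2)$, one on the outer boundary and one on the inner, some shortest $(t,t')$-path passes through both $e_1$ and $e_2$; and for every same-side pair, no shortest path enters $U_{\mathsf{mid}}$. The winding structure supplies this: each $P_i$ and the corresponding $Q[a_i,b_i]$ are two shortest $(a_i, b_i)$-paths, so~\Cref{lem:prelim:replacement} allows swapping a $P$-segment with the parallel $Q$-segment inside any shortest path at will. Chaining such swaps through the buffer cycles $C_{\ell+2},\ldots,C_{r-2}$ yields a shortest $(t,t')$-path through any prescribed pair of cut edges for every split pair, and a symmetric swap argument shows that any alleged same-side shortest path entering $U_{\mathsf{mid}}$ could be shortened, which is impossible. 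Once a $(W_X,W_Y)$\dagring is in hand,~\Cref{lem:ring:max-ring} supplies a maximal one, represented by a pair $(\gamma,\gamma')$ of $(X,Y)$\dagcuts; maximality together with the fact that $V(G)\sm(W_X\cup W_Y)$ is precisely the set of vertices in $\cl(\ring(C_\ell,C_r))$ forces the dual cycles $\mathrm{Cycle}(\gamma)$ and $\mathrm{Cycle}(\gamma')$ to lie inside $\ring(C_\ell,C_r)$, delivering conclusion~(3).

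The step I expect to be most delicate is the uniform swap argument for property~(4): it has to handle every split pair and every candidate pair of cut edges simultaneously, and its correctness relies on having enough intermediate cycles in the tower to absorb the rerouting — which is precisely the role of the $10\cdot(|\widehat T|+1)$ hypothesis, each swap consuming one layer of slack from the tower. A secondary subtle point is verifying that no shortest path between a same-side pair can intrude into $U_{\mathsf{mid}}$; the intuition is that any such path could be rerouted around the ring, but the formal argument still has to be carried out using~\Cref{lem:prelim:replacement} against all possible crossing patterns, and the clean-ring guarantee from the pigeonhole step is what rules out parasitic terminals in $U_{\mathsf{mid}}$ that could otherwise obstruct such a rerouting.
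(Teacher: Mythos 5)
Your outline matches the paper's strategy at a high level --- nesting from monotonicity, pigeonhole over the clean sub-rings to find $[\ell,r]$ and $(X,Y)$, then rerouting via \Cref{lem:prelim:replacement} to exhibit the ring --- but there is a genuine gap in the third stage, where you construct the candidate dag-ring.

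You define $U_X$ and $U_Y$ \emph{geometrically} as the sets of vertices outside $\cl(\disc(C_{\ell+1}))$ and inside $\disc(C_{r-1})$. For this tripartition to represent a $(W_X,W_Y)$\dagring, both $(U_X, U_{\mathsf{mid}}\cup U_Y)$ and $(U_X\cup U_{\mathsf{mid}}, U_Y)$ must satisfy \emph{all three} conditions of \Cref{def:dag-cut}: the cut edges must be arcs of $(t,t')\dagg$ for \emph{every} split pair (condition~2), and must avoid $(t,t')\dagg$ for \emph{every} same-side pair (condition~3). Your sketch addresses the property-(4) rerouting ("pass through two prescribed boundary edges") and the same-side exclusion, but it does not address condition~2 for the cut edges themselves. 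An edge $uv$ crossing $C_{\ell+1}$ has one endpoint $u \in U_X$ that can lie anywhere outside $\disc(C_{\ell+1})$, in particular outside the "coincidence zone" where one can prove that all split-pair DAGs agree (the analogue of the paper's Claim~\ref{claim:winding:coincide}, which applies only to vertices between $C_{\ell+2}$ and $C_{r-2}$). A nice instance guarantees $uv$ is on \emph{some} shortest path, but not on a shortest $(t,t')$-path for every split $(t,t')$, and not with a consistent orientation. The paper avoids this by \emph{not} cutting along a geometric cycle: it first establishes (Claim~\ref{claim:winding:cut}) that a cut by arcs of the local DAG $D$ exists inside the buffer region $U=V(G)\cap\cl(R_{r-2}\setminus R_{\ell+2})$, and then defines $V_X$ as a \emph{reachability set} in $G\setminus S$, so that the only cut edges of $(V_X,V_Y)$ are a subset of $S$ --- edges already known to be arcs of $D$. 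Your geometric cut could contain "extra" crossing edges of $C_{\ell+1}$ that are not of this form, and then conditions~(2) and (3) would fail.

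There are two secondary issues. First, the slack $r-\ell\ge 5$ you allocate after the pigeonhole is likely too small: the paper takes $r-\ell=10$ and consumes layers in two places (two buffer cycles on each side for the coincidence/same-side claims, plus an invocation of Claim~\ref{claim:winding:middle} with $j\ge i+2$), and the cuts are placed at $i=\ell+2$ and $i=r-4$, forcing $r-\ell\ge 6$ just for the positions to make sense and $r-\ell\ge 10$ for the internal applications. Second, your pigeonhole "at least $h/2$ of the handles enclose the same endpoint of $Q$" is not needed and halves your budget gratuitously: once the crossing points are ordered monotonically on both $P$ and $Q$, Claim~\ref{claim:winding:containment} shows that the $R_i$ are \emph{automatically} nested with all of them containing the same endpoint $x_q$, so no pigeonhole on the enclosed endpoint is necessary.
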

\begin{proof}
    We present the proof via a series of claims that reveal the structure of the subgraph $Q \cup P$ gradually.
    This allows us to accumulate notation that is used only within this proof.

    For $i \in [h]$ let us denote the two vertices of $V(P_i) \cap V(Q)$ as $v'_i, v''_i$, ordered in such a way that $v'_i <_P v''_i$ (see \Cref{fig:winding-middle}).
    Note that $v''_i <_P v'_{i+1}$.
    By \Cref{lem:prelim:monotone3} we can choose an orientation of $Q$ so that $v_1' <_Q v_1'' <_Q <v_2' <_Q v_2'' <_Q \dots <_Q v_h''$ (that is, the relations $<_Q$ and  $<_P$ coincide on this vertex set).
    Let $x_q, y_q$ be the first and the last vertex of $Q$ with respect to this orientation.

    By the definition of a winding handle, for each $i \in [h]$ the cycle $C(Q,P_i)$ separates the plane $\rr^2$ into two connected components, one containing $x_q$ and one with $y_q$.
    Let $R_i \sub \rr^2$ denote the connected component of $\rr^2 \sm C(Q,P_i)$ containing $x_q$.

    \begin{claim}\label{claim:winding:containment}
        For each $i \in [h-1]$ it holds that $R_i \sub R_{i+1}$.
    \end{claim}
    \begin{innerproof}
        Because $v_i'' <_Q  v_{i+1}'$ and $v_i'' <_P  v_{i+1}'$ we infer that $C(Q,P_i)$ and $C(Q,P_{i+1})$ are disjoint.
        Next, $Q[x_q,v_i']$ is disjoint from $C(Q,P_{i+1})$ so $v_i' \in R_{i+1}$ and consequently the entire cycle $C(Q,P_i)$ is contained in $R_{i+1}$.
        This means that one of the connected components of $\rr^2 \sm C(Q,P_i)$ must be contained in $R_{i+1}$ as well.
        This component cannot contain $y_q$ (which lies outside $R_{i+1}$) so it must be the one containing $x_q$, that is, $R_i$.
    \end{innerproof}

    Note that $\cl(R_i) = R_i \cup C(Q,P_i)$ and
   that $V(C(Q,P_{i}))$ forms a $(V(C(Q,P_{i-1})), V(C(Q,P_{i+1})))$-separator
   for every $i \in [2,h-1]$.
    Next, we establish a crucial property of $Q \cup P$ that will facilitate applying rerouting arguments.

    \begin{claim}\label{claim:winding:middle}
        Let $i,j \in [h]$ be such that $j \ge i + 2$
        and $u \in V(C(Q,P_i))$, $w \in V(C(Q,P_{j}))$.
        Then there exists a shortest $(u,w)$-path in $G$ that goes through $v_i''$ and $v_{j}'$.
    \end{claim}
    \begin{innerproof}
        First suppose that both $u,w$ belong to $V(P)$.
        By \Cref{lem:prelim:subpath}, the subpath $P[u,w]$ is a shortest $(u,w)$-path.
        We have $u \le_P v_i'' <_P v_j' \le_P w$
        so $P[u,w]$ is the sought path. 
        The case when $u,w \in V(Q)$ is analogous.

        It remains to consider $u \in V(P)$, $w \in V(Q)$; again the symmetric case is analogous.
        Let $J$ be some shortest  $(u,w)$-path.
        Note that $V(C(Q,P_{i+1}))$ is a $(u,w)$-separator
        so there is $z \in V(J) \cap V(C(Q,P_{i+1}))$.

        Case (1): $z \in V(P)$. 
        By \Cref{lem:prelim:replacement} the concatenation $J' = P[u,z] + J[z,w]$ is a shortest $(u,w)$-path.
        But $u \le_P v_i'' <_P z$ so 
        $v_i'' \in V(J')$.
        Next, we take advantage of the fact that $v_i'' \in V(Q)$ and $v_i'' <_Q v_j' \le_Q w$.
        The concatenation $J'[u, v_i''] + Q[v_i'',w]$
         is a shortest $(u,w)$-path the goes through both $v_i''$ and $v_j'$.
         See \Cref{fig:winding-middle} for an illustration.

        Case (2): $z \in V(Q)$. 
        Similarly as above, we obtain that $J'' = J[u,z] + Q[z,w]$ is a shortest $(u,w)$-path.
        We have $z <_Q v_{j}' \le_Q w$ so $v_{j}' \in V(J'')$.
        Finally, we consider the geodesic $P[u,v_j'] + J''[v_j',w]$ which visits $v_i'' \in V(P[u,v_j'])$.
        The claim follows. 
    \end{innerproof}

\begin{figure}[t]
    \centering
    \includegraphics[scale=0.7]{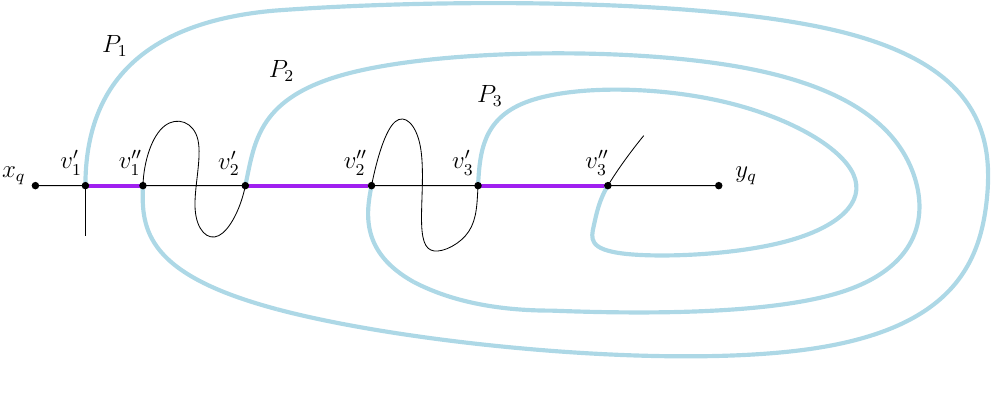}
    \includegraphics[scale=0.7]{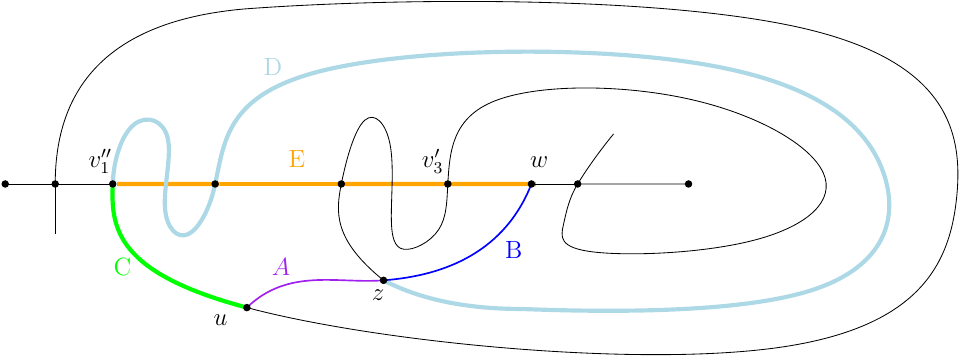}
    \caption{
    Top: The path $Q$ is horizontal while the path $P$ winds around $y_q$.
    The subpaths of $P$ highlighted in blue are the winding handles $P_1,P_2,P_3$.
    Together with the highlighted subpaths of $Q$ they form cycles $C(Q,P_i)$.
    Bottom: An illustration to \Cref{claim:winding:middle}, Case (1). Here $i = 1,\, j = 3$ and the shortest $(u,w)$-path $J$ is depicted as $A+B$.
    Both $A$ and $C+D$ are shortest $(u,z)$-paths, so $C+D+B$ is also a shortest $(u,w)$-path.
    Next, $E$ is a shortest $(v_1'',w)$-path so $C+E$ is a shortest $(u,w)$-path that satisfies the claim.}
    \label{fig:winding-middle}
\end{figure}

    As a consequence of this claim, we show that shortest paths cannot go ``back and forth''. 
    \begin{claim}\label{claim:winding:noreturn}
        Let $i \in [h-2]$ and $u,w \in V(C(Q,P_i))$.
        If $J$ is a shortest $(u,w)$-path then $V(J) \cap V(C(Q,P_{i+2})) = \emptyset$.
        Similarly, when $u,w \in V(C(Q,P_{i+2}))$ then 
        $V(J) \cap V(C(Q,P_{i})) = \emptyset$.
    \end{claim}
    \begin{innerproof}
        Consider the first case.
        Suppose for the sake of contradiction that 
        there exists $z \in V(J) \cap V(C(Q,P_{i+2}))$.
        Clearly,  $z \ne v_i''$ because the latter vertex lies on $C(Q,P_{i})$.
        By \Cref{claim:winding:middle}, there exists a shortest $(u,z)$-path that goes through $v_i''$.
        Because the edge weights are positive, $d_G(u,z)$ must be strictly greater than $d_G(u,v_i'')$.
        Analogously, it holds that $d_G(w,z) > d_G(w,v_i'')$.
        Since $z$ lies on a  shortest $(u,w)$-path, we infer that $d_G(u,w) = d_G(u,z) + d_G(w,z) > d_G(u,v_i'') + d_G(w,v_i'')$.
        This contradicts the triangle inequality and therefore such $z$ cannot exists.
        
        The case when $u,w \in V(C(Q,P_{i+2}))$ and $z \in V(J) \cap V(C(Q,P_{i}))$ is analogous but this time we exploit the fact that the path $J$ goes through $v_{i+2}'$, again because of~\Cref{claim:winding:middle}.
    \end{innerproof}

    Next, we
    identify an interval $[\ell,r] \sub [1,h]$ for which $R_r \sm R_\ell$ does not contain any elements from $\widehat{T}$.

    \begin{claim}\label{claim:winding:interval}
        There exists a partition $(X,Y)$ of $\widehat{T}$ with $x \in X$, $y \in Y$ and an interval $[\ell, r] \sub [h]$ such that $r - \ell = 10$ and $R_i \cap \widehat{T} = X $  for each $i \in [\ell, r]$.
    \end{claim}
    \begin{innerproof}
        By \Cref{claim:winding:containment} we have $R_i \sub R_j$ for any $1 \le i < j \le h$  so $|R_i \cap \widehat{T}| \le |R_j \cap \widehat{T}|$.
    The sequence $a_i = |R_i \cap \widehat{T}|$ can increase at most $|\widehat{T}|$ times.
    Since $h > 10 \cdot (|\widehat{T}| + 1)$ there exist indices $\ell, r \in [h]$ such that $a_\ell = a_r$ and $r = \ell + 10$.
    Then $R_i \cap \widehat{T}$ coincides for each $i \in [\ell, r]$.
    We set $X = R_r \cap \widehat{T}$, $Y = \widehat{T} \sm X$.

    Now we argue that $x \in X$.
    If $x \not\in X$ then $x \not\in R_r$ and the subpath 
    $P[x,v_\ell']$ would have to intersect $C(Q,P_r)$.
    Suppose there exists $z \in V(P[x,v_\ell']) \cap V(C(Q,P_r))$.
    Then  $x \le_P z <_P v_\ell'$ but on the other hand $v_\ell' <_P v_r' \le_P z$ what yields a contradiction.
    The argument that $y \in Y$ is symmetric.
    \end{innerproof}  

    Next, we observe that when two vertices are either both ``outside'' or both ``inside'' then no shortest paths between them can intersect $C(Q,P_{\ell+2})$ or $C(Q,P_{r-2})$. 

    \begin{claim}\label{claim:winding:sameside}
        Let $x_1,x_2 \in X$ (resp. $x_1,x_2 \in Y$) and $J$ be a shortest $(x_1,x_2)$-path.
        Then $J$ is disjoint from $C(Q,P_{\ell+2})$
        (resp. $C(Q,P_{r-2}))$.
    \end{claim}
    \begin{innerproof}
        Consider $x_1,x_2 \in X$.
        Note that $V(C(Q,P_\ell))$ is a $(X, V(C(Q,P_{\ell+2}))$-separator. If $V(J)$ intersected $V(C(Q,P_{\ell+2}))$ then $J$ would contain a subpath (still geodesic) between two vertices from $V(C(Q,P_{\ell}))$ visiting a vertex from $V(C(Q,P_{\ell+2}))$.
        This is impossible due to \Cref{claim:winding:noreturn}.
        The case with  $x_1,x_2 \in Y$ is analogous.
    \end{innerproof}

Next, we show that given two vertices lying on the ``closed ring'' between $C(Q,P_{r-2})$
and $C(Q,P_{\ell+2})$, the corresponding arc appears in the shortest paths DAGs for (oriented) shortest paths between split pairs of terminals.
    \begin{claim}\label{claim:winding:reroute}
        Let $u,w$ be vertices of $G$ which lie in $\cl(R_{r-2} \sm R_{\ell+2})$ and $x_1 \in X,\, y_1 \in Y$.
        If $(u,w) \in A((x_1,y_1)\dagg)$ then there exists a shortest $(x_1,y_1)$-path that visits $v_\ell'',u,w,v_r'$, in this order.
    \end{claim}
    \begin{innerproof}
        Consider a shortest $(x_1,y_1)$-path $J$ that goes through the edge $uw \in E(G)$ with $u <_J w$.
        The path $J[x_1,u]$ must visit a vertex from $V(C(Q,P_\ell))$ and one from $V(C(Q,P_{\ell+2}))$.
    We apply \Cref{claim:winding:middle} to this subpath to find a geodesic with the same endpoints that visits $v_\ell''$.
    By the replacement argument (\Cref{lem:prelim:replacement}) we obtain a shortest $(x_1,u)$-path $J'$ that goes through $v_\ell''$.
    We exploit the same argument to obtain that $J' + J[u,y_1]$ is a shortest $(x_1,y_1)$-path that goes through  $v_\ell'',u,w$, in this order.
    Analogously we can tweak the subpath between $w$ and $y_1$ to visit~$v_r'$.
    \end{innerproof}

    \begin{claim}\label{claim:winding:coincide}
        Let $u,w$ be vertices of $G$ which lie in  $\cl(R_{r-2} \sm R_{\ell+2})$ and
        $x_1,x_2 \in X,\, y_1,y_2 \in Y$.
        Then $(u,w) \in A((x_1,y_1)\dagg)$ if and only if $(u,w) \in A((x_2,y_2)\dagg)$.
    \end{claim}
    \begin{innerproof}
    \Cref{claim:winding:reroute} implies that when 
    $(u,w) \in A((x_1,y_1)\dagg)$ then $(u,w) \in A((v_\ell'',v_r')\dagg)$.
    The opposite implication holds as well: we can modify any shortest $(x_1,y_1)$-path to go through $v_\ell''$ and $v_r'$ (by the same claim) and then we can replace the subpath from $v_\ell''$ to $v_r'$ with any shortest $(v_\ell'', v_r')$-path.
    In summary, the condition $(u,w) \in A((x_1,y_1)\dagg)$ is equivalent to $(u,w) \in A((v_\ell'',v_r')\dagg)$ which is equivalent to $(u,w) \in A((x_2,y_2)\dagg)$.
    \end{innerproof}

    Now we orient the edges lying in the closure of $(R_{r-2} \sm R_{\ell+2})$ to be consistent with any shortest path that goes through this region, when oriented towards $Y$.
    \Cref{fig:winding-dag-ring} illustrates where particular objects are located with respect to $P_\ell, \dots, P_r$.

    \begin{claim}\label{claim:winding:dag}
        Let $U = V(G) \cap \cl(R_{r-2} \sm R_{\ell+2})$.
        We can orient the edges of $G[U]$ to obtain an acyclic digraph $D$ such that:
        \begin{enumerate}
            \item For each $(u,w) \in A(D)$ and $(x_1,y_1) \in \splitt_\tcal(X,Y)$ it holds that  $(u,w) \in A((x_1,y_1)\dagg)$.
            \label{item:winding:dag:split}
            \item For each $(u,w) \in A(D)$ and $(x_1,y_1) \in \sameside(X,Y)$ it holds that neither $(u,w)$ nor $(w,u)$ belongs to $A((x_1,y_1)\dagg)$.
            \label{item:winding:dag:sameside}
 \end{enumerate}
    \end{claim}
    \begin{innerproof}
        By \Cref{claim:winding:sameside} when $(x_1,y_1) \in \sameside(X,Y)$ then no shortest $(x_1,y_1)$-path can go through either $u$ or $w$, what will imply condition (\ref{item:winding:dag:sameside}).
        Since the instance is nice, every edge $uw \in E(G)$ belongs to some shortest $(x_1,y_1)$-path, for some $(x_1,y_1) \in \tcal$.
        This must happen for some $(x_1,y_1) \in \splitt_\tcal(X,Y)$.
        But \Cref{claim:winding:coincide} implies that then this happens for every $(x_1,y_1) \in \splitt_\tcal(X,Y)$, with the same orientation of $(u,w)$.
        Hence we can construct $D$ by considering the subgraph of $(x,y)\dagg$ (recall that $(x,y) \in \splitt_\tcal(X,Y)$) induced by $U$.
        Then $D$ is clearly acyclic as a subgraph of an acyclic digraph; see~\Cref{obs:prelim:no-dicycles}.
    \end{innerproof}

    \begin{claim}\label{claim:winding:cut}
        Let $i \in [\ell+2, r-4]$.
        The set $U$ admits a partition $(U_X,U_Y)$
        such that
        $U \cap \cl(R_i) \sub U_X$,
        $U \sm R_{i+2} \sub U_Y$
        and for each $(u,w) \in \Ev(U_X,U_Y)$
        it holds that $(u,w) \in A(D)$.
    \end{claim}
    \begin{innerproof}
        Note that the assumption $i \in [\ell+2, r-4]$ is needed to ensure that 
        $V(G) \cap \cl(R_{i+2} \sm R_i) \sub U$.
        
        Suppose that such a partition does not exists.
        Then $D$ admits a directed path from 
        $U \sm R_{i+2}$ to $U  \cap \cl(R_i)$.
        This path must connect a vertex from
        $V(C(Q,P_{i+2}))$ to one in $V(C(Q,P_i))$.
        By \Cref{claim:winding:dag}(\ref{item:winding:dag:split}) this means that there exists a shortest $(x,y)$-path $J$ that visits a vertex $z_2 \in V(C(Q,P_{i+2}))$ and then a vertex $z_3 \in V(C(Q,P_i))$.
        But before reaching $z_2$, since $x$ belongs to $R_{i}$,
        it must also visit some vertex $z_1 \in V(C(Q,P_i))$.
        Then $J[z_1,z_3]$ is a geodesic that goes through $z_2 \in V(C(Q,P_{i+2}))$.
        This is impossible due to \Cref{claim:winding:noreturn}.       
    \end{innerproof}

\begin{claim}\label{claim:winding:dagcut}
    For each $i \in [\ell+2, r-4]$
    there exists an $(X,Y)\dagcut$ $(V_X,V_Y)$ in $G$ such that
    $V(G) \cap \cl(R_i) \sub V_X$ and
        $V(G) \sm R_{i+2} \sub V_Y$.
\end{claim}
\begin{innerproof}
    We apply \Cref{claim:winding:cut} 
    to obtain a partition $(U_X,U_Y)$ of $U$
    and consider the edge set $S = E_G(U_X,U_Y)$.
    We define $V_X$ as the reachability set of $V(G) \cap \cl(R_i)$ in the graph $G \sm S$ and $V_Y = V(G) \sm V_X$.
    From \Cref{claim:winding:interval} we know that $X \sub V_X$.

    We argue that $V(G) \sm R_{i+2} \sub V_Y$, which will imply $Y \sub V_Y$.
    Suppose that this is not the case and there exists a path $J$ from some $u \in V(G) \cap \cl(R_i)$ to $w \in V(G) \sm R_{i+2}$ in $G \sm S$.
    The path $J$ must contain a subpath from $V(C(Q,P_i)) \sub U_X$  to $V(C(Q,P_{i+2})) \sub U_Y$ contained in $G[U]$.
    But such a path must enter $U_Y$ at some point and so it must use an edge from $S$; a contradiction.
    
    Next, consider some $(u,w) \in \Ev(V_X,V_Y)$.
    We have $u,w \in U$ and $(u,w) \in A(D)$.
    The definition of the digraph $D$ (\Cref{claim:winding:dag}) ensures that $(u,w) \in A((x_1,y_1)\dagg)$ when $(x_1,y_1) \in \splitt_\tcal(X,Y)$
    and $(u,w),(w,u) \not\in A((x_1,y_1)\dagg)$ when $(x_1,x_1) \in \sameside(X,Y)$.
    Hence $(V_X,V_Y)$ obeys the requirements of \Cref{def:dag-cut}.
\end{innerproof}

\begin{figure}[t]
\centering
\includegraphics[scale=0.8]{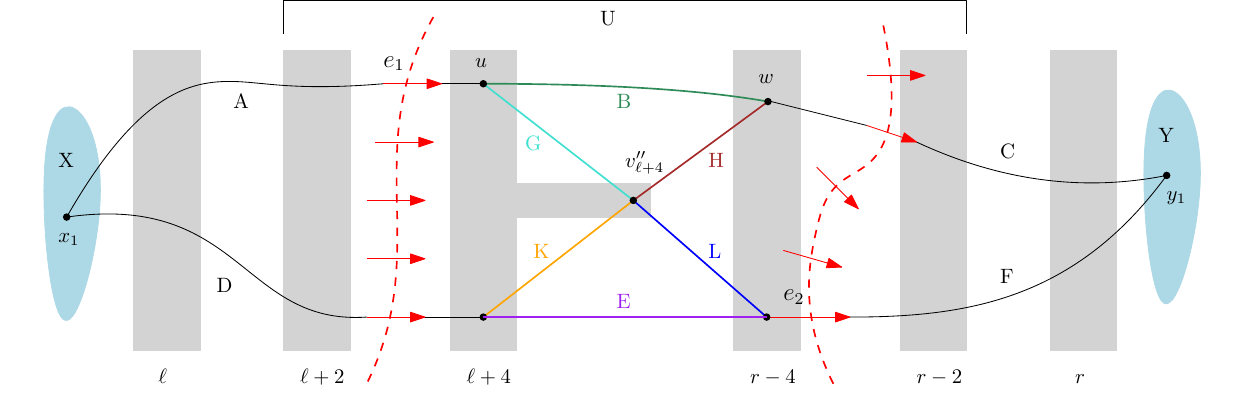}
\caption{
The final argument in the proof of \Cref{prop:winding:dag-ring}.
The embedding of $G$ is discarded here: the gray blocks depict the sets $V(C(Q,P_i))$ and the two cuts from \Cref{claim:winding:dagcut} are drawn in red.
The shortest $(x_1,y_1)$-path $J_1$ passing through $e_1$ is represented as $A+B+C$ while the path $J_2$  passing through $e_2$ is represented as $D+E+F$.
By \Cref{claim:winding:middle} the paths $A+G+H+C$ and $D+K+L+F$ are also shortest.
As a consequence, $A+G+L+F$ is a shortest $(x_1,y_1)$-path and it passes through $e_1$ and $e_2$.
}\label{fig:winding-dag-ring}
\end{figure}

    We are ready to finish the proof of the proposition.
    We apply \Cref{claim:winding:dagcut} to $i = \ell + 2$ and $i = r - 4$ and~obtain: 
    \begin{enumerate}
        \item an $(X,Y)\dagcut$ $\gamma_1 = (V^1_X,V^1_Y)$  with 
    $V(G) \cap \cl(R_{\ell+2}) \sub V^1_X$,  $V(G) \sm R_{\ell+4} \sub V^1_Y$, 
    and
        \item an $(X,Y)\dagcut$ $\gamma_2 = (V^2_X,V^2_Y)$ with 
    $V(G) \cap \cl(R_{r-4}) \sub V^2_X$, $V(G) \sm R_{r-2} \sub V^2_Y$.
    \end{enumerate}
    These cuts are depicted in \Cref{fig:winding-dag-ring}.
    We want to show that $(V^1_X, V^1_Y \cap V^2_X, V^2_Y)$ forms an $(X,Y)\dagring$.

    Since $r = \ell + 10$, the set $V^1_Y \cap V^2_X$ is non-empty (it contains $V(C(Q,P_{\ell+5}))$) and moreover 
    there can be no edges between $V^1_X$ and $V^2_Y$; this yields condition (\ref{ring:3})  of \Cref{def:dag-rings}.
    Consider some $e_1 \in E_G(V^1_X,V^1_Y)$, $e_2 \in E_G(V^2_X,V^2_Y)$, and $(x_1,y_1) \in \splitt_\tcal(X,Y)$.
    We need to justify that there exists a shortest $(x_1,y_1)$-path that goes through both $e_1$ and $e_2$.
    
    Let $J_1,J_2$ be some shortest $(x_1,y_1)$-paths that go respectively through $e_1$ and $e_2$; they exist because each of $e_1,e_2$ has an orientation that belongs to $D$
     and $D$ is subgraph of $(x_1,y_1)\dagg$ (\Cref{claim:winding:dag}).
    Note that the endpoints of $e_1$
    are contained in $\cl(R_{\ell+4})$ while the endpoints of $e_2$ are contained in $\cl(\rr^2 \sm R_{r-4})$.
    Let $u \in V(J_1) \cap V(C(Q,P_{\ell+4}))$ and $w \in V(J_1) \cap V(C(Q,P_{r-4}))$ be vertices on $J_1$ that are visited after passing through $e_1$ (possibly $u$ can be an endpoint of $e_1$).
    Since $r = \ell + 10$ we can apply \Cref{claim:winding:middle} for $i = \ell+4$ and $j=r - 4$ (which satisfy $j = i+2$) to obtain a shortest $(u,w)$-path $J'$ that visits  $v_{\ell+4}''$.
    By \Cref{lem:prelim:replacement} we know that $J_1[x_1,u] + J' + J_1[w,y_1]$ is also a geodesic.
    So now we can assume that $J_1$ visits $v_{\ell+4}''$
    after passing through $e_1$.
    We perform an analogous modification to $J_2$, so we can assume that $J_2$ visits $v_{\ell+4}''$ before passing through $e_2$.
    Next, we take advantage of \Cref{lem:prelim:replacement} to infer that $J_1[x_1,v_{\ell+4}''] + J_2[v_{\ell+4}'',y_1]$ is a shortest $(x_1,y_1)$-path, which passes through both $e_1$ and $e_2$.
    This proves that $(V^1_X, V^1_Y \cap V^2_X, V^2_Y)$ is indeed an $(X,Y)\dagring$.
    
    Finally, recall that $\mathrm{Cycle}(\gamma_1)$ is a cycle in the dual graph $G^\star$ enclosing $V^1_X$.
    This is a superset of all vertices in $\cl(R_{\ell})$ and a subset of $R_{r}$.
    Hence $C(Q,P_\ell)$ and $C(Q,P_r)$ belong to different connected components of $\rr^2 \sm \mathrm{Cycle}(\gamma_1)$.
    The same holds for $\gamma_2$.
    Therefore, $\ring(\gamma_{1}, \gamma_{2}) \sub \ring(C(Q,P_\ell), C(Q,P_r))$.
    By \Cref{claim:winding:interval} we know that $\ring(C(Q,P_\ell), C(Q,P_r)) = R_r \sm \cl(R_\ell)$ contains no vertices from $\widehat{T}$.
    This concludes the proof of \Cref{prop:winding:dag-ring}.
\end{proof}

\section{Estimating the number of crossings}
\label{sec:crossings}

\subsection{Homotopy}

In this section we formalize the idea of counting the number of ``non-trivial crossings'' between a path $Q$ and a solution $\pcal$, mentioned in \Cref{sec:techniques}.
To this end, we  employ the concept of homotopy which, similarly as homology, detects similarities among topological objects.
However, we find homotopy to be a more intuitive language to conduct the reasoning, hence we will rely on it.
First, we identify a kind of handles which correspond to ``trivial crossings''.

\begin{definition}[Empty regular handle]
    Let $(G,\tcal)$ be a nice instance of \pdsp (resp. \pdap).
    For a path $Q$ in $G$ and a regular handle $H$ of $Q$ we define $D(Q,H)$ as the connected component of $\rr^2 \sm C(Q,H)$ that does not contain the endpoints of $Q$.
    We say that handle $H$ is {\em empty} if $D(Q,H)$ does not contain any terminal from $\tcal$.
\end{definition}

We can use an empty regular handle $H$ of $Q$ to ``pull'' the path $Q$ through $D(Q,H)$, obtaining a path that is in some sense topologically equivalent with respect to the terminal set.
The following definition formalizes this idea using the notion of homotopy.
We remark that our definition of homotopy deviates slightly from the one used in~\cite{LokshtanovMPSZ20}.

\begin{definition}[Homotopy]\label{def:homotopy}
    Let $(G,\tcal)$ be a nice instance of \pdsp (resp. \pdap), $Q$ be a path in $G$ and $H$ be an empty regular handle of $Q$.
    Next, let $q_1,q_2$ be the endpoints of $Q$ and $r_1,r_2$ be the endpoints of $H$ so that $q_1 <_Q  r_1 <_Q r_2 <_Q q_2$ for some orientation of $Q$.
    We define the path $Q^H$ as the concatenation $Q[q_1,r_1] + H + Q[r_2,q_2]$.
    We say that $Q^H$ has been obtained from $Q$ by a {\em pull operation}.
   
    We say that path $Q_2$ is {\em homotopic}
    to $Q_1$ if $Q_2$ can be obtained from $Q_1$ by a (possibly empty) series of pull operations.
\end{definition}

Note that our notion of homotopy preserves the endpoints of path $Q$ as well the first and the last edge on $Q$, because the attachments of a handle are always internal vertices of $Q$.

\begin{lemma}\label{lem:homotopy:equivalence}
    Homotopy forms an equivalence relation.
\end{lemma}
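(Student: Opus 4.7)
The plan is to verify reflexivity, transitivity, and symmetry separately; the first two are essentially bookkeeping, while symmetry is the only nontrivial point.

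Reflexivity follows immediately by taking the empty series of pull operations in \Cref{def:homotopy}. Transitivity follows because if $Q_2$ is obtained from $Q_1$ by a sequence of pulls $H_1,H_2,\dots,H_a$, and $Q_3$ from $Q_2$ by pulls $H_{a+1},\dots,H_{a+b}$, then the concatenation $H_1,\dots,H_{a+b}$ witnesses the homotopy of $Q_1$ to $Q_3$.

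The main step is symmetry, which I would reduce to the single-pull case: if $Q' = Q^H$ for some empty regular handle $H$ of $Q$, then $Q$ can in turn be obtained from $Q'$ by a single pull. Concretely, let $r_1,r_2$ be the endpoints of $H$, oriented so that $q_1<_Q r_1<_Q r_2<_Q q_2$, and consider the subpath $H' := Q[r_1,r_2]$ as a candidate handle of $Q'$. I would check the following points in order:
\begin{itemize}[nosep]
    \item The vertices $r_1,r_2$ are internal vertices of $Q' = Q[q_1,r_1]+H+Q[r_2,q_2]$ (they are not equal to $q_1$ or $q_2$ since they were internal on $Q$), and $H'$ is internally disjoint from $Q'$, because the internal vertices of $H'$ are vertices of $Q$ strictly between $r_1$ and $r_2$, hence disjoint from $Q[q_1,r_1]\cup Q[r_2,q_2]$, and disjoint from $H$ by the handle property of $H$ on $Q$. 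So $H'$ is a handle of $Q'$.
    \item The cycle $C(Q',H')$ equals $C(Q,H)$, since as edge sets both are $E(H)\cup E(Q[r_1,r_2])$. Because $Q'$ shares its endpoints $q_1,q_2$ with $Q$ and $H$ was a regular handle of $Q$, the endpoints of $Q'$ lie in the same component of $\rr^2\setminus C(Q',H')$, so $H'$ is regular on $Q'$.
    \item Since $C(Q',H') = C(Q,H)$, we also have $D(Q',H') = D(Q,H)$, which contains no terminal because $H$ was empty; hence $H'$ is empty.
    \item Applying the pull operation along $H'$ gives $(Q')^{H'} = Q'[q_1,r_1] + H' + Q'[r_2,q_2] = Q[q_1,r_1] + Q[r_1,r_2] + Q[r_2,q_2] = Q$.
\end{itemize}
Thus a single pull can be undone by a single pull, and by reversing the order of an arbitrary sequence of pulls we conclude symmetry. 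The mild technical obstacle is the verification that $H'$ is internally disjoint from $Q'$ and that $C(Q',H')=C(Q,H)$, which rely on the disjointness of $H$ from the interior of $Q$ guaranteed by the handle definition; all other checks are direct from the constructions.
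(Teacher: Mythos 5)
Your proof is correct and takes the same approach as the paper: reflexivity and transitivity are immediate, and symmetry is established by observing that if $H$ is an empty regular handle of $Q$ with endpoints $r_1,r_2$, then $Q[r_1,r_2]$ is an empty regular handle of $Q^H$ whose pull recovers $Q$. Your write-up just spells out the verification of this fact in more detail than the paper does.
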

\begin{proof}
    Transitivity is obvious.
    Symmetry  follows from the fact that when a $(v_1,v_2)$-path $H$ is an empty regular handle of $Q$ then $Q[v_1,v_2]$ is an empty regular handle of $Q^H$.
\end{proof}

We are in position to formalize the idea of counting non-trivial crossings between a path $Q$ and a solution $\pcal$.
Intuitively, a crossing should be considered trivial when it can be avoided by some continuous shift of $Q$, which is formalized using homotopy.
Effectively we would like to count only those crossings that cannot be avoided by any tweak of $Q$.

For a subgraph $Q$ of $G$ we define $\partial(Q)$ as the subset of edges in $E(G) \sm E(Q)$ with at least one endpoint in $V(Q)$.
Note that when $Q$ is a path then any chord of $Q$ belongs to $\partial(Q)$ even though both its endpoints belong to $V(Q)$.

\begin{definition}\label{def:homotopy:load}
    Let $(G,\tcal)$ be a nice instance of \pdsp (resp. \pdap), $\pcal = (P_1,\dots,P_k)$ be some solution to $(G,\tcal)$, 
    and $Q$ be some path in $G$. 
    We define $\load(\pcal,Q)$ as the number of edges incident to $Q$ that are used by~$\pcal$, that is,
    \[ \load(\pcal,Q) = \sum_{i=1}^k \left|E(P_i) \cap \partial(Q)\right|. \]
    
    Furthermore, we define the {\em combinatorial load} $\minload(\pcal,Q)$ as the minimum $\load(\pcal,\widehat Q)$ over all paths $\widehat Q$ that are homotopic to $Q$. 
\end{definition}

\begin{figure}[t]
\centering
\includegraphics[scale=1.0]{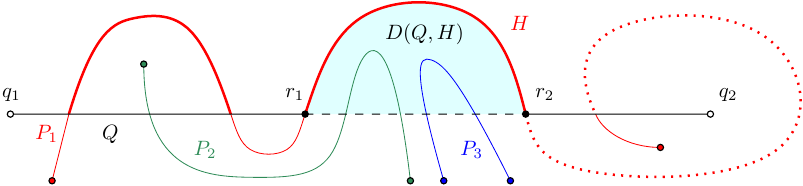}
\caption{
There are three $P_1$-handles of $Q$: the two marked in bold are regular handles and one winding handle is dotted.
The handle $H$ is empty as the set $D(Q,H)$ does not contain any terminals.
The path $Q^H = Q[q_1,r_1] + H + Q[r_2,q_2]$ is homotopic to $Q$.
We have $\load(\pcal, Q) = 20$ and $\load(\pcal, Q^H) = 10$.
}\label{fig:handles-and-load}
\end{figure}

In \Cref{sec:topo-load} we will extend the definition of load to the case where $Q$ is a dual path and introduce topological and algebraic counterparts of combinatorial load.

For paths $P, Q$ we say that a handle $H$ of $Q$ is a \emph{$P$-handle} whenever $H$ is a subpath of $P$.
For a path family $\pcal$ we say $H$ is a \emph{$\pcal$-handle} when it is a $P$-handle for some $P \in \pcal$.

\begin{lemma}\label{lem:homotopy:decrease}
    Let $G,\tcal,\pcal$ be as in \Cref{def:homotopy:load}
    and $Q$ be a geodesic in $G$.
    Next, let $H$ be a $\pcal$-handle of $Q$.
    Then $\load(\pcal, Q^H) < \load(\pcal,Q)$.
\end{lemma}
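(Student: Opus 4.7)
The plan is to combine a structural claim about how the path $P_j \in \pcal$ containing $H$ interacts with $Q$ with an edge-by-edge accounting of the load. Throughout, I denote by $P_j$ the (unique) path in $\pcal$ that has $H$ as a subpath.

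The main structural claim I would establish is $V(P_j) \cap V(Q[r_1,r_2]) = \{r_1, r_2\}$: the path $P_j$ visits no internal vertex of $Q[r_1,r_2]$ except through $H$. To prove this, suppose some $b \in V(P_j)$ satisfies $r_1 <_Q b <_Q r_2$. Since $H$ is internally disjoint from $Q$, we have $b \notin V(H)$, so on $P_j$ the vertex $b$ lies either entirely before $r_1$ or entirely after $r_2$. In either case, the $P_j$-ordering of the triple $\{r_1, b, r_2\}$ neither matches nor reverses the $Q$-ordering $r_1, b, r_2$, contradicting the monotonicity of crossings (\Cref{lem:prelim:monotone3}) applied to the geodesics $Q$ and $P_j$.

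Next, I would partition $V(G) = A \cup B \cup C \cup D$, where $A = V(Q) \cap V(Q^H) \supseteq \{r_1,r_2\}$, $B$ is the set of internal vertices of $Q[r_1,r_2]$, $C$ is the set of internal vertices of $H$, and $D$ contains the remaining vertices. Writing the load difference as $\sum_i \delta_i$ with $\delta_i = |E(P_i) \cap \partial(Q)| - |E(P_i) \cap \partial(Q^H)|$, I would analyze each $\delta_i$ by a short case check on edge endpoint types. The edges of $H$ incident to $\{r_1,r_2\}$ (the first and last edges, or the sole edge if $|E(H)| = 1$) lie in $\partial(Q) \setminus \partial(Q^H)$ and contribute at least $1$ to $\delta_j$, whereas the internal edges of $H$ (with both endpoints in $C$) contribute $0$. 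For $i \neq j$, vertex-disjointness forces $V(P_i) \cap (\{r_1,r_2\} \cup C) = \emptyset$, so routine case analysis on endpoint types in $(A \setminus \{r_1,r_2\}) \cup B \cup D$ yields $\delta_i \geq 0$. For $i = j$, the structural claim confines edges of $E(P_j) \setminus E(H)$ to have endpoints only in $A \cup D$; such edges have identical status with respect to $E(Q)$ versus $E(Q^H)$ and with respect to $V(Q)$ versus $V(Q^H)$, so they contribute $0$ to $\delta_j$. Summing, $\load(\pcal,Q) - \load(\pcal, Q^H) \geq 1$.

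The delicate point is that without the structural claim, edges of $Q[r_1,r_2]$ adjacent to $r_1$ or $r_2$ could a priori coincide with edges of $P_j$ (if $P_j$ reached $r_1$ via a vertex of $B$), producing negative contributions that might overwhelm the positive gain from $E(H)$; the geodesicity of both $Q$ and $P_j$, via \Cref{lem:prelim:monotone3}, is exactly what rules out this pathology and drives the strict inequality.
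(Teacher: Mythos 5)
Your proof is correct, and it hinges on exactly the same key tool as the paper's -- monotonicity of crossings (\Cref{lem:prelim:monotone3}) -- but the bookkeeping is organized rather differently. The paper's proof is a direct injection argument: it shows $E(\pcal)\cap\partial(Q^H)\subseteq E(\pcal)\cap\partial(Q)$, by taking an arbitrary $e\in E(\pcal)\cap\partial(Q^H)$, noting that $e$ cannot touch an internal vertex of $H$ (since $H$ is a subpath of a solution path), and then deriving a contradiction from monotonicity if $e\in E(Q)$; the strictness comes from the first edge of $H$, which lies in $\partial(Q)$ but in $E(Q^H)$. Your route instead first establishes a stronger vertex-level structural claim, $V(P_j)\cap V(Q[r_1,r_2])=\{r_1,r_2\}$, partitions $V(G)$ into four classes, and does an additive accounting $\sum_i\delta_i\geq 1$ via case analysis on endpoint types. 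This is a legitimate alternative: your structural claim is a cleaner geometric statement that makes the $i=j$ accounting transparent, at the cost of a longer (though routine, as you say) case check for $i\ne j$; the paper's edge-level injection is shorter because it treats all indices $i$ at once and never needs to isolate $P_j$. Both approaches handle the potentially dangerous scenario you correctly flag -- an edge of $E(P_j)\setminus E(H)$ landing on an internal vertex of $Q[r_1,r_2]$ -- by the same monotonicity contradiction, so the proofs are morally equivalent even if the decompositions differ.
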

\begin{proof}
    Let $v_1 <_Q v_2$ be the endpoints of $H$.
    Consider any edge $e \in E(\pcal) \cap \partial(Q^H)$.
    The edge $e$ cannot be incident to an internal vertex of $H$ because $H$ is a subpath of a path from $\pcal$.
    Suppose that $e \in E(Q)$.
    Then $e$ is an edge between one of the endpoints of $H$ and a vertex $v_3$ satisfying $v_1 <_Q < v_3 <_Q v_2$.
    But this contradicts the monotonicity of crossings (\Cref{lem:prelim:monotone3}).
    Hence $e \in \partial(Q)$ and $e$ also contributes to $\load(\pcal,Q)$.
    On the other hand, $H$ contains at least one edge (note that $H$ might possibly consist of a single edge) that contributes to $\load(\pcal,Q)$ but not to $\load(\pcal,Q^H)$.
    The lemma follows.
\end{proof}

The following lemma will facilitate estimation of $\load(\pcal,Q)$.

\begin{lemma}\label{lem:homotopy:handles-load}
    Let $P, Q$ be paths in a graph $G$ and $\ell$ be the number of $P$-handles of $Q$.
    Then $|E(P) \cap \partial(Q)| \le 2\ell + 6$.
\end{lemma}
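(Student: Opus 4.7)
I would prove the lemma by a direct counting argument, analyzing how $P$ intersects $V(Q)$ and partitioning the edges of $E(P) \cap \partial(Q)$ according to where they arise along $P$.

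First, I would list the vertices of $V(P) \cap V(Q)$ as $u_1, u_2, \dots, u_r$ in the order they appear along $P$. These vertices partition $P$ into at most $r+1$ maximal subpaths: a (possibly empty) prefix from $p_0$, the first endpoint of $P$, to $u_1$; the middle segments $S_i = P[u_i, u_{i+1}]$ for $i \in [r-1]$; and a (possibly empty) suffix from $u_r$ to the second endpoint $p_1$. By construction the internal vertices of each segment avoid $V(Q)$, and no segment has length zero.

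Next, I would bound the contribution of each segment to $|E(P) \cap \partial(Q)|$. The prefix, if it exists (i.e.\ $p_0 \notin V(Q)$), contributes at most one edge, namely its last edge, which is incident to $u_1 \in V(Q)$; the same holds for the suffix. A middle segment $S_i$ of length at least $2$ contributes exactly two edges (its first and last), since its interior edges have neither endpoint in $V(Q)$. A middle segment of length $1$ (a single edge $u_i u_{i+1}$) contributes one edge if $u_i u_{i+1} \notin E(Q)$ and zero otherwise.

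The crucial link to handles is that each middle segment $S_i$ whose both endpoints $u_i, u_{i+1}$ are \emph{internal} vertices of $Q$ is a $P$-handle of $Q$: its internal vertices avoid $V(Q)$ by construction, so it satisfies the internal-disjointness requirement of \Cref{def:homotopy}. Conversely, any middle segment that is not among these $\ell$ handles must have at least one endpoint equal to $q_0$ or $q_1$, the two endpoints of $Q$. Since $P$ is a simple path, each of $q_0, q_1$ appears at most once among $u_1, \dots, u_r$, and each appearance is incident to at most two consecutive middle segments; hence the number of non-handle middle segments is bounded by a small absolute constant.

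Putting these pieces together, the handle middle segments contribute at most $2\ell$, the non-handle middle segments at most a constant, and the prefix and suffix together at most $2$ more, yielding the desired bound $|E(P) \cap \partial(Q)| \leq 2\ell + 6$. The main bookkeeping obstacle, which I expect to be the only delicate step, is to trade off these contributions carefully: whenever $p_0$ or $p_1$ lies in $V(Q)$ the corresponding prefix or suffix term vanishes, and whenever $p_0$ or $p_1$ coincides with $q_0$ or $q_1$ the number of non-handle middle segments touching that endpoint drops, so that the worst case arranging $p_0, p_1, q_0, q_1$ in general position still yields exactly the constant $6$.
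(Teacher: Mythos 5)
Your segment-by-segment accounting is the same framework as the paper's proof (which classifies edges rather than segments, but the content is identical), and your identification of the three contributing sources --- prefix/suffix, handle middle segments, non-handle middle segments touching $q_0$ or $q_1$ --- is correct. However, the ``delicate bookkeeping'' you defer to the final paragraph is a genuine gap, not a routine trade-off. As you have set things up, each of $q_0,q_1$ can be an endpoint of up to \emph{two} non-handle middle segments, so you get up to $4$ such segments, each contributing up to $2$ edges; together with $2$ edges from the prefix/suffix this gives only $2\ell + 10$, and the arrangements of $p_0,p_1,q_0,q_1$ you propose to trade against do not recover $2\ell+6$. In fact, for arbitrary paths the bound is simply false: take $Q = (q_0,a,b,c,q_1)$ and $P = (p_0,a,x,q_0,y,b,z,q_1,w,c,p_1)$ in a suitable graph. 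Every subpath of $P$ joining two internal vertices of $Q$ passes through $q_0$, $q_1$, or another internal vertex, so $\ell = 0$, yet all ten edges of $P$ lie in $\partial(Q)$.

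The ingredient needed to close the gap --- and the reason the lemma is safe in every application in the paper --- is that $P$ and $Q$ are geodesics. By the monotonicity of crossings (\Cref{lem:prelim:monotone3}), the vertices $u_1,\dots,u_r$ of $V(P)\cap V(Q)$ appear along $P$ in the same or the reversed order as along $Q$. Since $q_0,q_1$ are the extremes of $\le_Q$, whenever they lie on $P$ they must equal $u_1$ and $u_r$; in particular each is an endpoint of at most \emph{one} middle segment, not two. This caps the non-handle middle segments at two, and $2 + 2\cdot 2 + 2\ell = 2\ell + 6$ follows. You should state the geodesic hypothesis explicitly (the paper's own terse case analysis also relies on it tacitly) and invoke \Cref{lem:prelim:monotone3} at exactly this point; as your counterexample above shows, without geodesics the correct constant is $10$.
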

\begin{proof}
    Consider  $e \in E(P) \cap \partial(Q)$.
    There are 3 scenarios.

    \begin{itemize}[nosep]
        \item $e$ is incident to one of the endpoints of $Q$; there are at most 4 such edges.
        \item $e$ belongs to one of two subpaths of $P$ between an endpoint of $P$ and the closest vertex that is an internal vertex of~$Q$; there are at most 2 such edges.
        \item $e$ belongs to a $P$-handle of $Q$; there are $2\ell$ such edges.
    \end{itemize}
    The lemma follows.
\end{proof}

\subsection{Crossings outside rings}

We now explain  how an exhaustive ring decomposition comes in useful for estimating the number of ``non-trivial crossings''.
We will use \Cref{prop:winding:dag-ring} to justify that a geodesic $Q$ disjoint from all the rings in the decomposition cannot have too many winding $\pcal$-handles for any solution $\pcal$.
We will also bound the number of non-empty regular $\pcal$-handles using the following fact.

\begin{lemma}\label{claim:homotopy:disjoint}
        Let $Q, P$ be geodesics in graph $G$.
        Next, let $H_1, H_2$ be two distinct  regular $P$-handles of $Q$.
        Then $D(Q,H_1) \cap D(Q,H_2) = \emptyset$.
    \end{lemma}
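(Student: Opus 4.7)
The plan is as follows. Let $q_1, q_2$ be the endpoints of $Q$ with $q_1 <_Q q_2$, and for each $i \in \{1,2\}$ let $a_i <_Q b_i$ be the two endpoints of $H_i$ on $Q$. The first step is to combine \Cref{lem:prelim:monotone3} with the internal disjointness of $H_1, H_2$ from $Q$ to show that the intervals $[a_1,b_1]_Q$ and $[a_2,b_2]_Q$ share at most a single endpoint. Indeed, if say $a_1 <_Q a_2 <_Q b_1$, then the monotonicity of crossings applied to $(a_1, a_2, b_1) \subseteq V(P)\cap V(Q)$ forces $a_2$ to lie strictly between $a_1$ and $b_1$ on $P$; since $P$ is simple, $a_2$ is then an internal vertex of the $P$-subpath $H_1$ with endpoints $a_1, b_1$, contradicting $V(H_1)\cap V(Q) = \{a_1,b_1\}$. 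The nested configuration $a_1 <_Q a_2 <_Q b_2 <_Q b_1$ is ruled out identically. After relabelling, I may assume $q_1 <_Q a_1 <_Q b_1 \leq_Q a_2 <_Q b_2 <_Q q_2$, in which case $H_1$ and $H_2$ are (essentially) disjoint subpaths of $P$, and the simple closed curves $C_i := C(Q,H_i)$ meet in at most the single point $b_1 = a_2$.

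The second step is to use regularity to place the relevant planar components. For each $i \in \{1,2\}$ the endpoints $q_1,q_2$ lie in a common component $E_i$ of $\mathbb{R}^2\setminus C_i$, and $D(Q,H_i)$ is by definition the other component. I would then show that $C_2 \subseteq \overline{E_1}$: the subpath $Q[a_2,b_2] \subseteq C_2$ meets $C_1 = Q[a_1,b_1]\cup H_1$ only possibly at $b_1 = a_2$, and $H_2 \subseteq C_2$, being internally disjoint from $Q$ and (by the first step) essentially disjoint from $H_1$, likewise meets $C_1$ only at that same point. Since this single point lies on $C_1$ rather than in the open set $D(Q,H_1)$, the entire $C_2$ is disjoint from $D(Q,H_1)$. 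Moreover, $Q[b_2,q_2]$ is disjoint from $C_1$ by the same reasoning and joins $Q[a_2,b_2]$ to $q_2 \in E_1$, so in fact $C_2 \subseteq \overline{E_1}$, and in particular $D(Q,H_1)$ lies entirely in a single component of $\mathbb{R}^2\setminus C_2$.

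The heart of the argument, which I expect to be the main obstacle, is to rule out the possibility that this component is $D(Q,H_2)$. Suppose towards a contradiction that $D(Q,H_1) \subseteq D(Q,H_2)$. Taking closures gives $C_1 = \partial D(Q,H_1) \subseteq \overline{D(Q,H_2)} = D(Q,H_2) \cup C_2$. On the other hand, a fully symmetric application of the previous paragraph (swapping the roles of $H_1$ and $H_2$) yields $C_1 \subseteq \overline{E_2} = E_2 \cup C_2$, where $E_2$ is the component of $\mathbb{R}^2\setminus C_2$ containing $q_1,q_2$. Since $E_2 \cap D(Q,H_2) = \emptyset$, intersecting these two inclusions forces $C_1 \subseteq C_2$, which is impossible because $|C_1 \cap C_2| \leq 1$ while $C_1$ is a simple closed curve. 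Hence $D(Q,H_1) \subseteq E_2$, and since $E_2 \cap D(Q,H_2) = \emptyset$, we conclude $D(Q,H_1) \cap D(Q,H_2) = \emptyset$, as desired. The most delicate bookkeeping is the degenerate case $b_1 = a_2$, which one has to track carefully throughout (in particular when justifying that $H_1$ and $H_2$ remain essentially disjoint as subpaths of the simple path $P$).
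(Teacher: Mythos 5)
Your proof is correct and follows essentially the same route as the paper's: use monotonicity of crossings (\Cref{lem:prelim:monotone3}) to derive a non-interleaving order of the four handle endpoints on both $P$ and $Q$, show $C(Q,H_2)\cap D(Q,H_1)=\emptyset$ via a Jordan-curve connectivity argument (the cycles can meet in at most one vertex, and $C(Q,H_2)$ connects to the exterior component through the tail of $Q$), and rule out the nested case by the symmetric observation. The paper's write-up is more compact and absorbs the degenerate shared-endpoint case into the $\leq_P$ in its ordering, but the underlying argument is the same.
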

    \begin{proof}
        From the definition of a handle we obtain that $H_1,H_2$ are internally disjoint subpaths of $P$.
        They may potentially intersect only at a common endpoint.
        Consider some orientation of $P$.
        Let $v_1,u_1$ be the endpoints of $H_1$ and $v_2,u_2$ be the endpoints of $H_2$ ordered in such a way that $v_1 <_P u_1 \le_P v_2 <_P u_2$.
        By \cref{lem:prelim:monotone3} we know that we can orient $Q$ so that $v_1 <_Q u_1 \le_Q v_2 <_Q u_2$.
        Let $q_2$ be the last vertex on $Q$ with respect to this orientation.
        Observe that $Q[u_2,q_2] \cap C(Q,H_1) = \emptyset$ so $u_2$ belongs to the same connected component of $\rr^2 \sm C(Q,H_1)$ as $q_2$.
        Consequently, $u_2 \not\in D(Q,H_1)$.
        Since both $Q[v_2,u_2]$ and $P[v_2,u_2]$ are disjoint from $C(Q,H_1) \sm \{u_1\}$ we get that $C(Q,H_2) \cap D(Q,H_1) = \emptyset$.
        Therefore, either $D(Q,H_1)$ is fully contained in $D(Q,H_2)$ or they are disjoint.
        By a symmetric argument we obtain $C(Q,H_1) \cap D(Q,H_2) = \emptyset$ which rules out the first option.
        The lemma follows.
    \end{proof}

We say that a path $Q$ intersects an $(X,Y)\dagring$ $(\gamma,\gamma')$ if $Q \cap \ring(\gamma,\gamma') \ne \emptyset$.

\begin{proposition}\label{prop:outside}
    Let $(G,\tcal)$
    be a nice instance of \pdsp and $\pcal = (P_1,\dots,P_k)$ be some solution to $(G,\tcal)$.
    Next, let $\widehat{T}$ be some \terset of $\tcal$, $\mathcal{R}$ be an exhaustive ring decomposition for $(G,\tcal,\widehat{T})$ and $\widehat Q$ be a geodesic in $G$ which does not intersect any \udagring from $\mathcal{R}$.
    Then $\minload(\pcal,\widehat Q) \leq \Oh(|\widehat{T}|^2)$.
\end{proposition}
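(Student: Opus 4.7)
The plan is to exhibit a specific path $Q'$ homotopic to $\widehat{Q}$ satisfying $\load(\pcal, Q') \leq \mathcal{O}(|\widehat{T}|^2)$, which by definition of $\minload$ certifies the stated bound. I would construct $Q'$ by starting from $Q^{(0)} := \widehat{Q}$ and iteratively performing a pull operation along any available empty regular $\pcal$-handle of the current path. Each such pull yields a homotopic successor and, by \Cref{lem:homotopy:decrease}, strictly decreases the nonnegative integer $\load(\pcal,\cdot)$, so the process terminates after finitely many steps at a path $Q'$ that is homotopic to $\widehat{Q}$ and admits no empty regular $\pcal$-handle.

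With $Q'$ in hand, I would bound $\load(\pcal, Q') = \sum_{i \in [k]} |E(P_i) \cap \partial(Q')|$ by combining \Cref{lem:homotopy:handles-load}, which yields $|E(P_i) \cap \partial(Q')| \leq 2\ell_i + 6$ in terms of the number $\ell_i$ of $P_i$-handles of $Q'$, with separate bounds on regular and winding $P_i$-handles. By construction of $Q'$, every regular $P_i$-handle $H$ has $D(Q', H)$ containing at least one terminal of $\tcal$; by \Cref{claim:homotopy:disjoint} the regions $D(Q', H)$ for distinct regular $P_i$-handles are pairwise disjoint, so their total number is bounded by the number of terminal vertices, which is at most $|\widehat{T}|$.

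To bound winding $P_i$-handles, the plan is to argue by contradiction against exhaustiveness of $\mathcal{R}$. Assuming $P_i$ has more than $10(|\widehat{T}|+1)$ winding handles on $Q'$, \Cref{prop:winding:dag-ring} produces indices $\ell < r$, a splitting partition $(X,Y)$ of $\widehat{T}$ with $s_i \in X$ and $t_i \in Y$, and $(X,Y)$\dagcuts $\gamma, \gamma'$ representing an $(X,Y)$\dagring, satisfying $\ring(\gamma, \gamma') \subseteq \ring(C(Q', P_\ell), C(Q', P_r))$ with the outer region containing no vertex of $\widehat{T}$. A direct contradiction with exhaustiveness would follow if $\ring(\gamma, \gamma')$ were disjoint from every $\ring(\gamma_1^j, \gamma_2^j)$ belonging to $\mathcal{R}$, because then $(X, Y, \gamma, \gamma')$ would be a legitimate extension of $\mathcal{R}$.

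The principal technical obstacle is exactly this disjointness check. Although the hypothesis gives $\widehat{Q} \cap \ring(\gamma_1^j, \gamma_2^j) = \emptyset$ for every $j$, the pulled path $Q'$ may enter existing rings, so the outer region $\ring(C(Q', P_\ell), C(Q', P_r))$ could a priori meet some $\ring(\gamma_1^j, \gamma_2^j)$. I expect to resolve this by combining \Cref{obs:dag-cut:cross}, which implies that each shortest $(s_i, t_i)$-path traverses any existing ring at most once and thus tightly controls the geometry of $P_i$ near ring boundaries, with \Cref{lem:maxring} and \Cref{lem:maxcuts} to either select the indices $\ell, r$ so that the outer region avoids overlaps altogether or to merge the candidate $(X,Y)$\dagring with any overlapping $\ring(\gamma_1^j, \gamma_2^j)$ into a single enlarged dag-ring that still witnesses a genuine extension of $\mathcal{R}$. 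Once the winding-handle count is bounded by $10(|\widehat{T}|+1) = \mathcal{O}(|\widehat{T}|)$, summing $|E(P_i) \cap \partial(Q')| \leq 2\ell_i + 6 = \mathcal{O}(|\widehat{T}|)$ over $i \in [k]$ and using $k \leq |\widehat{T}|$ yields $\load(\pcal, Q') \leq \mathcal{O}(|\widehat{T}|^2)$, as required.
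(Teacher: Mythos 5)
Your outline is structurally aligned with the paper (same iterative pulling, same bound on regular handles via \Cref{claim:homotopy:disjoint}, same reliance on \Cref{prop:winding:dag-ring} together with exhaustiveness), but there is a genuine gap exactly where you flag uncertainty, and the resolution you sketch is not the right one.

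The key missing observation is that the pull operations \emph{preserve} disjointness from the rings of $\mathcal{R}$, so the possibility you worry about (``the pulled path $Q'$ may enter existing rings'') in fact never arises. Concretely: an empty regular $\pcal$-handle $H$ is a subpath of some $P_j\in\pcal$ and hence a geodesic. By \Cref{obs:dag-cut:cross}(1) a geodesic crosses the cut edges of any \udagcut at most once. The endpoints of $H$ lie on the current path (initially $\widehat Q$), which by induction is disjoint from every ring; so both endpoints of $H$ are on the same side of both bounding dual cycles of any $\ring(\gamma,\gamma')$, and if $H$ entered the ring it would have to cross one of those cuts at least twice --- contradiction. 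Thus the invariant ``the current path is a geodesic disjoint from all rings of $\mathcal{R}$'' is maintained throughout, and one may simply run the process to completion. This is the first claim in the paper's proof, and without it the argument collapses. A second, related point you would also need: the cycles $C(Q,P_\ell),C(Q,P_r)$ produced by \Cref{prop:winding:dag-ring} contain pieces of $P_i$, not just of $Q$, so you must also argue that the relevant subpath of $P_i$ (the part between its first and last intersection with $Q$) avoids the rings of $\mathcal{R}$. This again follows from \Cref{obs:dag-cut:cross}(1): once $P_i$ would enter a ring through a cut it could never re-cross that cut, so its endpoints on $Q$ would be separated by the cut, contradicting disjointness of $Q$ from the ring. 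The paper then finishes with a topological argument (bounded vs.\ unbounded components, terminals inside discs) to rule out any remaining overlap, none of which requires ``merging'' rings.

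Your proposed alternative --- merging the candidate \udagring with an overlapping $\ring(\gamma_1^j,\gamma_2^j)$ via \Cref{lem:maxring} and \Cref{lem:maxcuts} --- does not work: those lemmas apply to \dagcuts for a \emph{fixed} pair $(W_X,W_Y)$, whereas the overlapping ring in $\mathcal{R}$ generally corresponds to a different splitting partition of $\widehat T$, so there is no common framework in which the four cuts are nested and the merging conclusion would follow. Two further small points: \Cref{prop:winding:dag-ring} requires the winding handles to be \emph{vertex-disjoint} subpaths of $P_i$, so a bound of $10(|\widehat T|+1)$ handles is not enough --- you need roughly twice that many to extract a disjoint subfamily of the required size (this only changes a constant, but should be stated); and when applying \Cref{lem:homotopy:decrease} to justify termination, note that it applies to any $\pcal$-handle, so pulling an \emph{empty regular} one indeed decreases load, as you use.
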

\begin{proof}
We begin by showing that the properties of $\widehat Q$ are preserved by certain pull operations.
\begin{claim}
    Let $H$ be an empty regular $\pcal$-handle of $\widehat Q$.
    Then $\widehat Q^H$ is a geodesic and it does not intersect any \udagring from $\mathcal{R}$.
\end{claim}
\begin{innerproof}
    The path $H$ is a geodesic as a subpath of a path from $\pcal$.
    Then \Cref{lem:prelim:replacement} implies that $\widehat Q^H$ is a geodesic.
    Suppose for the sake of contradiction that $H$ intersects
    $\ring(\gamma,\gamma')$ where $(\gamma,\gamma')$ represents some \udagring from $\mathcal{R}$.
    By \Cref{obs:dag-cut:cross}(1), $H$ can cross $\gamma$ at most once hence $\gamma$ separates the endpoints of $H$.
    But these endpoints lie on $\widehat Q$ so some edge in $\widehat Q$ belongs to $\gamma$.
    This contradicts the assumption that $\widehat Q$ is disjoint from $\ring(\gamma,\gamma')$.
    As a consequence, we infer that $H$ as well as $\widehat Q^H$ are disjoint from all the rings in $\mathcal{R}$.   
\end{innerproof}

While $\widehat Q$ has an empty regular $\pcal$-handle $H$, we replace $\widehat Q$ by $\widehat Q^H$.
The claim above ensures that the conditions of the proposition are maintained.
By \Cref{lem:homotopy:decrease} we know that this process must terminate at some point.
Let $Q$ denote the path obtained in the end of this process; then we know that $Q$ has no empty regular $\pcal$-handles.
Since $Q$ is homotopic to $\widehat Q$ we have $\minload(\pcal,\widehat Q) \le \load(\pcal,Q)$.
Therefore, is suffices to count the number of $\pcal$-handles of $Q$ that are either  non-empty regular handles or winding handles.

    \begin{claim}\label{claim:outside:nonempty}
        For each $i \in [k]$ there are at most $2k$ non-empty regular $P_i$-handles of $Q$.
    \end{claim}
    \begin{innerproof}
        Let $P_i^1,\dots,P_i^t$ be the collection of non-empty regular $P_i$-handles of $Q$.
        By \Cref{claim:homotopy:disjoint}  the sets $D(Q,P_i^j)_{j=1}^t$ are pairwise-
        disjoint.
        Next, non-emptiness means that each set $D(Q,P_i^j)$ contains some terminal from $\tcal$.
        There are only $2k$ terminals which yields an upper bound on~$t$.
    \end{innerproof}

We move on to the analysis of the winding handles.
Let $\widehat k = |\widehat T|$. Note that $2k \le \widehat k$.

\begin{claim}\label{claim:outside:winding}
        For each $i \in [k]$ there are at most $20\cdot(\widehat{k}+1)$  winding $P_i$-handles of $Q$.
    \end{claim}
    \begin{innerproof}
        Suppose, for the sake of contradictions, that there are $20\cdot(\widehat{k}+1)+1$ winding $P_i$-handles of $Q$.
        Then there is a family of $10\cdot(\widehat{k}+1)+1$ many such handles that are vertex-disjoint.

        Let $P_i[Q]$ be the subpath of $P_i$ between its first and last intersection with $Q$, inclusively.
        We apply \Cref{prop:winding:dag-ring}
        to obtain that there exists a region $U \sub \rr^2$,
        two winding handles $P_i^1, P_i^2 \sub P_i[Q]$ of $Q$, and a splitting partition $(X,Y)$ of $\tcal$
        such that (i) the boundary of $U$ is $C(Q,P_i^1) \cup C(Q,P_i^2)$, (ii) $U$~contains no vertex from $\widehat T$, (iii) there exists an $(X,Y)\dagring$ $(\gamma_1,\gamma_2)$ with $\ring(\gamma_1,\gamma_2) \sub U$.

        We argue that $\ring(\gamma_1,\gamma_2)$ is disjoint from all the rings in $\mathcal{R}$.
        Recall that $Q$ is disjoint from all the rings in $\mathcal{R}$.
        By \Cref{obs:dag-cut:cross}(1) whenever the path $P_i$ enters any \udagring through a \udagcut $\gamma$ then it cannot cross $\gamma$ again.
        Since no \udagcut from $\mathcal{R}$ separates any vertices on $Q$, the path $P_i[Q]$ must be also disjoint from all the rings in $\mathcal{R}$.
        Since $P_i^1, P_i^2$ are subpaths of $P_i[Q]$, we infer that $C(Q,P_i^1), C(Q,P_i^2)$ are also disjoint from all the rings in $\mathcal{R}$.

        Suppose for the sake of contradiction that $U$ intersects  $\ring(\gamma_3,\gamma_4)$ where $(\gamma_3,\gamma_4)$ represents a \udagring from $\mathcal{R}$.
        Since $\ring(\gamma_3,\gamma_4)$ is disjoint from the boundary of $U$, it must be that $\ring(\gamma_3,\gamma_4) \sub U$.
        Let $\Delta_3$ denote $\disc(\mathrm{Cycle}(\gamma_3))$.
        Recall that $\gamma_3,\gamma_4$ are $(X',Y')\dagcuts$ for some splitting partition $(X',Y')$ so in particular $\Delta_3$ contains at least one terminal.
        Consequently, $\Delta_3$ cannot be fully contained in $U$, which contains no terminals.
        Likewise, the complement of $\Delta_3$ also cannot be contained in $U$.
        Because the boundary of $U$ is a union of two connected sets $C(Q,P_i^1), C(Q,P_i^2)$, exactly one of them must be contained in $\Delta_3$.
        This means that $\gamma_3$ separates a pair of vertices from $Q$ and this contradicts the assumption that $Q$ is disjoint from $\ring(\gamma_3,\gamma_4)$.

        We have established that no \udagring from $\mathcal{R}$ intersects $U$.
        Consequently, $(X,Y,\gamma_1,\gamma_2)$ forms an extension to the ring decomposition $\mathcal{R}$.
        This contradicts the assumption that $\mathcal{R}$ is exhaustive. 
        The claim follows.
    \end{innerproof}
    
    We are in position to estimate $\load(\pcal,Q)$.
    By Claims \ref{claim:outside:winding} and \ref{claim:outside:nonempty}, the number of $P_i$-handles of $Q$ is at most $20\cdot(\widehat{k}+1) + 2k$, for each $i \in [k]$.
    Then by \Cref{lem:homotopy:handles-load} we have $|E(P_i) \cap \partial(Q)| \le 40\cdot(\widehat{k}+1) + 4k + 6 \le 46\cdot(\widehat{k}+1)$.
    We conclude that $\minload(\pcal,\widehat Q) \le \load(\pcal,Q) \le k \cdot 46\cdot(\widehat{k}+1) = \Oh(\widehat{k}^2)$.
\end{proof}

\subsection{Rerouting inside a ring}

In order to deal with paths traversing a \udagring, we will rely on a rerouting argument by Cygan et al.~\cite{CyganMPP13}, which in turn is based on the work by Ding et al.~\cite{ding1993disjoint} on packing disjoint cycles in a toroidal graph.
First, we need to summon the definition of the {\em winding number}.

\begin{definition}[{\cite{CyganMPP13arxiv}}]
    We say that a (directed or undirected) graph $G$ is embedded in a {\em rooted ring} $(C_1,C_2,W)$ when the following conditions hold:
    \begin{itemize}[nosep]
        \item $C_1$ is the outer face of $G$ and $C_2$ is an interior face of $G$, and
        \item $W$ is a dual path in $G$ connecting $C_1$ to $C_2$, oriented from $C_1$ to $C_2$.
    \end{itemize}
    
We call $W$ a {\em reference curve}.
For an oriented path $P$ we define its {\em winding number} $\wind(P,W)$ as
the number of signed crossings with $W$:
each time $P$ crosses $W$ from left to right (with respect to the fixed orientations of $P$ and $W$) we add +1, and when $P$ crosses $W$ from right to left, we add -1.
When $P$ connects a vertex $v_1$ on $C_1$ to a vertex $v_2$ on $C_2$, on default we consider it oriented from $v_1$ to $v_2$.
\end{definition}

We now state the rerouting argument that will allow us to bound the winding numbers of the solution paths inside a DAG-ring.
Intuitively speaking, when $\mathcal{Q} = (Q_1,\dots,Q_s)$, $\pcal = (P_1,\dots,P_s)$ are linkages between vertices on $C_1$ and $C_2$, then $\pcal$ can be modified to have almost the same winding as $\mathcal{Q}$ while preserving the endpoints of $\pcal$.

\begin{lemma}[{\cite[Lemma 4.8]{CyganMPP13arxiv}}]
\label{lem:reroute:reference}
    Let $D$ be a digraph embedded in a rooted ring $(C_1,C_2,W)$.
    The four sequences of vertices of $D$ listed below are given in a clockwise order.
\begin{itemize}
    \item Let $p^1_1,\dots,p^1_s$ be vertices on $C_1$  and let  $p^2_1,\dots,p^2_s$ be vertices on $C_2$.
     \item Let $q^1_1,\dots,q^1_s$ be vertices on $C_1$  and let  $q^2_1,\dots,q^2_s$ be vertices on $C_2$.
    \item Let $P_1, \dots, P_s$ be a set of pairwise vertex-disjoint paths in $D$ contained between $C_1$ and $C_2$ such
    that the endpoints of $P_i$ are $p^1_i$ and $p^2_i$.
    \item Let $Q_1,\dots , Q_s$ be a set of pairwise vertex-disjoint paths in $D$ contained between $C_1$ and $C_2$ such
    that the endpoints of $Q_i$ are $q^1_i$ and $q^2_i$.
    \item $P_i$ and $Q_i$ go in the same direction, i.e., $P_i$ goes from $p^1_i$ to $p^2_i$ iff. $Q_i$ goes from $q^1_i$ to $q^2_i$.
\end{itemize}
Then there is set $P'_1, \dots, P'_s$
of pairwise vertex-disjoint paths in $D$ between $C_1$ and $C_2$ such that $P'_i$ and $P_i$
 have the same start/end vertices and $|\wind(P'_i,W) - \wind(Q_i,W)| \le 6$.
\end{lemma}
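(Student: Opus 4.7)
My approach would be to pass to the universal cover of the annular region bounded by $C_1$ and $C_2$, where winding numbers become linear displacements and disjoint linkages become nested families that admit a cleaner combinatorial analysis.

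\emph{Setup via the universal cover.} First I would cut $D$ along the reference curve $W$ to produce a planar ``strip'' graph $\widetilde{D}$, and then glue infinitely many copies of $\widetilde{D}$ side by side to form the universal cover $\widehat{D}$ of the annular region. The cycles $C_1$ and $C_2$ lift to two bi-infinite paths $\widehat{C}_1$ and $\widehat{C}_2$ bounding an infinite strip, and the deck transformation $\tau$ acts by an integer translation along the strip. By construction, a $(C_1,C_2)$-path $P$ in $D$ has $\wind(P,W)$ equal to the horizontal displacement between the two endpoints of any lift of $P$ in $\widehat{D}$ (measured in units of $\tau$).

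\emph{Lifting the linkages and choosing good translates.} Lift each $Q_j$ to a single path $\widehat{Q}_j$ from $\widehat{C}_1$ to $\widehat{C}_2$. Since the $Q_j$'s are pairwise disjoint in $D$, the family $\{\tau^m(\widehat{Q}_j) : m \in \mathbb{Z},\ j \in [s]\}$ is pairwise disjoint in $\widehat{D}$, and cuts the strip into a locally finite sequence of ``cells''. Similarly lift each $P_i$ to some $\widehat{P}_i$. For every $i$, among all translates of $\widehat{P}_i$ I would choose the translate $\widehat{P}'_i$ whose $\widehat{C}_1$-endpoint sits in the cell prescribed by the clockwise-matching of $p^1_i$ with $q^1_i$; the hypothesis that both sequences are given in clockwise order around $C_1$ makes this choice coherent across $i$.

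\emph{Bounding the winding via planar rerouting.} Once the translates are fixed, the $\widehat{C}_2$-endpoint of $\widehat{P}'_i$ lies in a specific cell determined by $p^2_i$, and the cyclic-ordering hypothesis on the $Q$-endpoints forces this cell to be at most a bounded number (at most three on each boundary, whence the constant $6$) of cells away from the $\widehat{C}_2$-endpoint of $\widehat{Q}_i$. Hence $|\wind(\widehat{P}'_i, W) - \wind(\widehat{Q}_i, W)| \le 6$ on the level of positions; to turn the translates into an actual pairwise-disjoint linkage $P'_1,\dots,P'_s$ in $D$, I would apply the untangling principle of Ding--Schrijver--Seymour in the planar strip $\widehat{D}$: in a plane graph with two linkages between opposite sides of a disk, one linkage can be rerouted to the endpoints of the other while preserving pairwise disjointness. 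Projecting the result back to $D$ yields the desired~$P'_i$.

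\emph{Main obstacle.} The delicate step is the simultaneous untangling: the initially chosen translates $\widehat{P}'_i$ need not be pairwise disjoint, and one must reroute them jointly without letting the winding numbers drift by more than an additive constant. Carefully executing this exchange argument in the cover, and verifying that each swap changes the winding by at most $O(1)$ (accumulating to the explicit bound $6$), is the technical heart of the proof.
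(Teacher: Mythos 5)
The paper does not prove this lemma; it imports it verbatim, with attribution, from Cygan--Marx--Pilipczuk--Pilipczuk (Lemma~4.8 of the arXiv version) and treats it as a black box. So there is no ``paper's own proof'' to compare against; the appropriate baseline is the proof in the cited work. Judged on its own, your sketch identifies the right high-level framework --- the paper itself remarks that the Cygan et al.\ rerouting argument ``is based on the work by Ding et al.\ on packing disjoint cycles in a toroidal graph,'' and passing to the universal cover of the annulus to turn winding numbers into horizontal displacements is a standard and correct reformulation of that setting.

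However, the writeup defers exactly what makes the lemma non-trivial, and two gaps stand out. First, you assert that the clockwise-ordering hypothesis forces the $\widehat{C}_2$-endpoint of the chosen translate $\widehat{P}'_i$ to sit at most a few cells away from that of $\widehat{Q}_i$, ``whence the constant $6$''; but this is precisely the quantitative content of the lemma, and it requires an honest counting argument tracking how the two lifted linkages interleave along both boundary lines of the strip --- the constant does not come for free, and nothing in your sketch actually derives it. Second, you correctly observe that the fixed translates $\widehat{P}'_1,\ldots,\widehat{P}'_s$ need not be pairwise disjoint and therefore must be jointly rerouted, that this rerouting must not change any winding number by more than an additive constant, and that this ``is the technical heart of the proof'' --- and then you stop. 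Invoking an ``untangling principle'' by name is not a substitute for stating a precise lemma with hypotheses that match your situation and verifying that its output satisfies the winding bound. In summary, the proposal names the correct tools and sets up the correct topological picture, but the exchange/untangling argument that actually closes the proof and produces the constant $6$ is left entirely unsupplied, so as written this is an outline rather than a proof.
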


The condition that $P_i$ and $Q_i$ go in the same direction is immaterial for our work because we will only consider directed paths going from $C_1$ to $C_2$.

\subsection{Crossings inside a ring}

Recall the notion of a DAG-structure from \Cref{def:ring:dag-structure}.

\begin{proposition}\label{prop:reroute:minload}
    Let $(G,\tcal)$ be a nice instance of \pdsp, $\pcal$ be some solution to $(G,\tcal)$, $(X,Y)$ be a splitting partition of a \terset $\widehat{T}$, and  $(\gamma, \gamma')$ be an $(X,Y)$\dagring with a DAG-structure $(D, T_{X}, T_{Y})$.
    Furthermore, let
    $\ell = |\splitt_\tcal(X, Y)|$ and $Q_1, Q_2, \dots, Q_\ell$ be a family of disjoint directed $(T_{X}, T_{Y})$-paths in $D$.
    Then there exists a solution $\pcal'$ that coincides with $\pcal$ outside $\ring(\gamma, \gamma')$ and
    $\minload(\pcal',Q_1) \leq \Oh(k)$.
\end{proposition}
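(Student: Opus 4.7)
My plan is to excise the part of $\pcal$ lying inside $\ring(\gamma,\gamma')$, reroute it using \Cref{lem:reroute:reference} so that the winding numbers of the new subpaths almost match those of $Q_1,\dots,Q_\ell$, verify that the substitution preserves the solution, and then translate the resulting topological closeness into a combinatorial load bound against $Q_1$. By \Cref{obs:dag-cut:cross}, paths of $\pcal$ indexed by $\sameside(X,Y)$ avoid $\ring(\gamma,\gamma')$ entirely, while each of the $\ell$ paths indexed by $\splitt_\tcal(X,Y)$ crosses $\gamma$ and $\gamma'$ exactly once each. Let $\tilde P_1,\dots,\tilde P_\ell$ be the restrictions of these $\ell$ paths to the closure of $\ring(\gamma,\gamma')$; each $\tilde P_i$ is a directed $(T_X,T_Y)$-path in the DAG $D$ from \Cref{lem:ring:dag}. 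A standard telescoping of the defining identity of the $(t,t')\dagg$'s shows that any directed $(u,v)$-path in $D$, for vertices $u,v$ on the boundary of the ring, has weight $d_G(u,v)$; in particular, replacing $\tilde P_i$ by any other directed path in $D$ with the same endpoints keeps the enclosing solution path a shortest $(s_i,t_i)$-path.

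\textbf{Rerouting.} Next, view the closure of $\ring(\gamma,\gamma')$ as a subgraph embedded in a rooted ring whose inner and outer boundary cycles are $\mathrm{Cycle}(\gamma)$ and $\mathrm{Cycle}(\gamma')$, and fix a dual reference curve $W$ running through faces adjacent to $Q_1$ on one side, so that $W$ crosses $Q_1$ exactly once and avoids all other $Q_j$. Both families $(\tilde P_i)_{i=1}^\ell$ and $(Q_i)_{i=1}^\ell$ are disjoint linkages of size $\ell$ in $D$ between these boundary cycles, oriented from $T_X$ to $T_Y$. After reindexing the $Q_i$'s so that their endpoints match the clockwise order of the endpoints of the $\tilde P_i$'s on the two boundary cycles, \Cref{lem:reroute:reference} yields a disjoint family $\tilde P'_1,\dots,\tilde P'_\ell$ of directed $(T_X,T_Y)$-paths in $D$ with the same endpoints as $\tilde P_i$ and $|\wind(\tilde P'_i,W)-\wind(Q_i,W)|\le 6$. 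Let $\pcal'$ be obtained by swapping each $\tilde P_i$ for $\tilde P'_i$; the length-preservation above ensures $\pcal'$ is still a solution to $(G,\tcal)$, and by construction it coincides with $\pcal$ outside $\ring(\gamma,\gamma')$.

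\textbf{Bounding the load and main obstacle.} Since $W$ hugs $Q_1$ we have $\wind(Q_j,W)=0$ for $j\ge 2$ and $\wind(Q_1,W)\in\{\pm 1\}$, hence $|\wind(\tilde P'_i,W)|=\Oh(1)$ for every $i$. Starting from $Q_1$, iteratively pull across empty regular $\pcal'$-handles via \Cref{lem:homotopy:decrease} until none remain, obtaining a homotopic path $\widehat Q_1$. The surviving $\pcal'$-handles of $\widehat Q_1$ are then either winding or non-empty regular. By \Cref{claim:homotopy:disjoint}, non-empty regular handles of the same path have pairwise disjoint enclosed discs, and each must contain a terminal; but $\ring(\gamma,\gamma')$ contains no terminal, so only $\Oh(1)$ such handles per solution path survive (coming from parts of $\widehat Q_1$ close to its endpoints or from handles whose discs extend outside the ring, which is delicate but finite). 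Each surviving winding handle contributes $\pm 1$ to $\wind(\tilde P'_i,W)$, so by the above each $\tilde P'_i$ contributes $\Oh(1)$ of them. Summing via \Cref{lem:homotopy:handles-load} over all $\ell=\Oh(k)$ paths yields $\load(\pcal',\widehat Q_1)=\Oh(k)$, and hence $\minload(\pcal',Q_1)=\Oh(k)$. The main obstacle is the last step: rigorously converting the topological winding-number bound of \Cref{lem:reroute:reference} into a count of surviving handles. I expect this to go through the algebraic/topological variant of load foreshadowed after \Cref{def:homotopy:load}, via an inequality of the form $\minload\le\algload+\Oh(k)$ combined with a direct estimate $\algload(\pcal',Q_1)=\sum_i|\wind(\tilde P'_i,W)|+\Oh(1)=\Oh(k)$.
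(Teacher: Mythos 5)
Your overall strategy matches the paper's: excise the $\ell$ subpaths traversing $\ring(\gamma,\gamma')$, apply \Cref{lem:reroute:reference} to reroute them to have small winding with respect to a reference curve $W$ parallel to $Q_1$, confirm (via the DAG $D$ being a subgraph of every relevant $(t,t')\dagg$, i.e.\ \Cref{lem:ring:dag}) that the substitution yields a valid solution $\pcal'$, and then pull empty regular $\pcal'$-handles out of $Q_1$ until only winding handles remain. Up to that point the proof is sound, modulo a small slip: a dual $(C_X,C_Y)$-path running parallel to $Q_1$ on one side never crosses $Q_1$, so $\wind(Q_1,W)=0$, not $\pm 1$ (and in fact $\wind(Q_j,W)=0$ for all $j$; this is how the paper gets $|\wind(\tilde P'_i,W)|\le 6$ from the lemma). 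Also, by \Cref{claim:homotopy:disjoint} and the fact that the ring is terminal-free, all regular handles encountered are automatically empty, so the aside about a few non-empty regular handles ``close to the endpoints or extending outside the ring'' is not needed.

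The genuine gap is exactly where you flag it, but the fix you sketch is not the right one. The step ``each surviving winding handle contributes $\pm 1$ to $\wind(\tilde P'_i,W)$, hence there are $\Oh(1)$ of them'' is unjustified for two reasons. First, the winding handles are subpaths of $\tilde P'_i$ forming handles of $Q_1$; what is controlled ($\pm 1$) is the winding of the cycle $\vec C(Q_1,H)$, equivalently the change to $\wind(Q_1,\cdot)$ under a pull operation, not a direct additive decomposition of $\wind(\tilde P'_i,W)$. Second, and more importantly, even a bound of the form ``$\sum_{\text{handles}}(\pm 1)$ is $\Oh(1)$'' does not bound the number of handles unless one shows the $\pm 1$ contributions cannot cancel. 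The paper proves precisely this: when $\ell\ge 2$, every winding $\pcal'$-handle of $Q_1$ has the same orientation (all clockwise or all counter-clockwise), because a second solution path is forced through the base $Q_1[v_1,v_2]$ of any winding handle and must cross the adjacent dual reference curve in the same direction. This cancellation-freeness is what makes the winding bound translate into a handle count; it requires $\ell\ge 2$, so the paper handles $\ell=1$ separately by a direct single-path substitution. Your proposal mentions neither the same-orientation claim nor the $\ell=1$ case. Finally, the suggested escape hatch via an inequality $\minload\le\algload+\Oh(k)$ does not exist: the paper establishes $\algload\le\topload\le\minload$ and explicitly notes the first inequality cannot be reversed, so this route is a dead end. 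The missing lemma you need is precisely the monotone-orientation argument (the paper's claim on handles all being clockwise or all counter-clockwise), together with the bookkeeping relating $\wind$ of the fully pulled $Q_1$ to $\wind(\tilde P'_i,W)$ via a prefix/suffix swap.
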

\begin{proof}
    Let $C_{X},C_{Y}$ denote the faces of $D$ enclosed by $\mathrm{Cycle}(\gamma)$ and $\mathrm{Cycle}(\gamma')$.
    Next, let $W$ be one of the dual $(C_{X},C_{Y})$-paths going parallel to $Q_1$.
    Clearly $\wind(Q_i,W) = 0$ for each $i \in [\ell]$.
    Then $D$ is embedded in the rooted ring $(C_{X},C_{Y},W)$.
    By \Cref{obs:dag-cut:cross}(1) we know that there are exactly $\ell$ paths in $\pcal$ that intersect $\ring(\gamma, \gamma')$ and each intersection is a single $(T_{X}, T_{Y})$-path.

    \begin{claim}\label{claim:reroute:replacement}
        Consider $P \in \pcal$ that traverses $\ring(\gamma, \gamma')$, oriented in such a way that it first crosses $\gamma$. Let $v_1 <_P v_2$ be two vertices from $V(P)$ that lie in $\ring(\gamma, \gamma')$.
        Next, let $\widehat{P}$ be a directed $(v_1,v_2)$-path in $D$.
        Then $P[\cdot,v_1] + \widehat{P} + P[v_2,\cdot]$ is a geodesic.
    \end{claim}
    \begin{innerproof}
        Let $t,t'$ be endpoints of $P$.
        Then $P$ is an (oriented) shortest $(t,t')$-path if and only if $P$ is a directed path in $(t,t')\dagg$.
        \Cref{lem:ring:dag} implies that $D$ is a subgraph of $(t,t')\dagg$ and so $\widehat{P}$ is a directed path in $(t,t')\dagg$. The claim follows.
    \end{innerproof}

    First consider $\ell = 1$.
    Then there exactly one index $i \in [k]$ for which $P_i$ intersects the ring and this intersection is a single $(T_X, T_Y)$-subpath of $P_i$.
    If $P_i$ does not intersect $Q_1$ then $\load(\pcal,Q_1) = 0$.
    Otherwise, let $v_1,v_2$ be the first and the last vertices on $P_i$ that belong to $V(Q_1)$.
    Let $P'_i = P_i[\cdot,v_1] + Q[v_1,v_2] + P_1[v_2,\cdot]$
    and $\pcal'$ be obtained from $\pcal$ by replacing $P_1$ with $P'_1$.
    Since $Q[v_1,v_2]$ is a path in $D$, \Cref{claim:reroute:replacement} implies that $P'_1$ is a geodesic.
    Hence $\pcal'$ is a valid solution and $\load(\pcal,Q_1) \le 2$.

    For now on we assume $\ell \ge 2$.
    For simplicity, assume w.l.o.g. that $\splitt_\tcal(X,Y)$ comprises the pairs $(s_1,t_1),\dots,(s_\ell,t_\ell)$.
    For $i \in [\ell]$ let $\widehat{P}_i$ denote the maximal subpath of $P_i$ contained in $\ring(\gamma, \gamma')$; then $\widehat{P}_i$ is a directed $(T_{X}, T_{Y})$-path in $D$.
    We apply \Cref{lem:reroute:reference} to the digraph $D$ and the disjoint-paths families $(\widehat{P}_i)_{i=1}^\ell$ and $(Q_i)_{i=1}^\ell$ to obtain a disjoint-paths family $(P'_i)_{i=1}^\ell$ that have the same endpoints as $(\widehat{P}_i)_{i=1}^\ell$ and $|\wind(P'_i,W)| \le 6$ for each $i \in [\ell]$.
    By \Cref{claim:reroute:replacement} replacing the subpath $\widehat{P}_i$ with $P'_i$ in $P_i$ results in a geodesic. 
    Moreover, this modification cannot spoil the disjointedness of paths.
    Let us keep the variable $\pcal$ to denote this new solution.

    When $P$ is an oriented path and $H$ is a $P$-handle of $Q$ we consider the orientation of the cycle $C(Q,H)$ that obeys the orientation of $P$.
    We refer to such oriented cycle as $\vec{C}(Q,H)$

    \begin{claim}\label{claim:reroute:regular}
        Let $Q$ be a path in $G$ contained fully in $\ring(\gamma, \gamma')$.
        Let $H$ be a regular $\pcal$-handle of $Q$.
        Then the handle $H$ is empty and entirely contained in $\ring(\gamma,\gamma')$, and $\wind(Q^H,W) = \wind(Q,W)$. 
    \end{claim}
    \begin{innerproof}
        By \Cref{obs:dag-cut:cross}(1) the intersection of each $P \in \pcal$ with $\ring(\gamma, \gamma')$ can be only a single subpath of~$P$.
        Hence $H$ is entirely contained in  $\ring(\gamma, \gamma')$ and also $D(Q,H) \sub \ring(\gamma, \gamma')$.
        Since an $(X,Y)\dagring$ contains no terminals from $\tcal$, we infer that $H$ is indeed an empty handle.

        Now we argue that $\wind(Q^H,W) = \wind(Q,W)$.
        Let $v_1 <_Q v_2$ denote the endpoints of~$H$.
        Observe that $\wind(\vec{C}(Q,H),W) = 0$ because the number of times $W$ enters $D(Q,H)$ equals the number of time $W$ exits  $D(Q,H)$ and these two kinds of crossings contribute opposite values to the winding number.  
        We have $\wind(H,W) - \wind(Q[v_1,v_2],W) = \wind(\vec{C}(Q,H),W) = 0$.
        Hence $\wind(Q^H,W) = \wind(Q,W) - \wind(Q[v_1,v_2],W) + \wind(H,W) = \wind(Q,W)$.
    \end{innerproof}

    To estimate $\minload(\pcal,Q_1)$ we proceed similarly as in \Cref{prop:outside}.
    Observe that $Q_1$ is a geodesic in $G$ because it is a directed path in $D$.
    We begin with $Q = Q_1$ and while $Q$ has a regular $\pcal$-handle $H$ we replace $Q$ with $Q^H$.
    By \Cref{claim:reroute:regular} this handle must be empty and the invariant $\wind(Q,W) = 0$ is maintained.
    In addition, \Cref{lem:prelim:replacement} implies that this operation preserves the property of being a geodesic.
    By \Cref{lem:homotopy:decrease} this process must terminate; from now on let $Q$ denote the path obtained in the end. 
    Then $Q$ is homotopic to $Q_1$ and $Q$ has no regular $\pcal$-handles.    
    
    \begin{claim}\label{claim:reroute:winding-one}
    Let $H$ be a winding handle of $Q$ with endpoints $v_1 <_Q v_2$ and $W'$ be some dual $(C_{X},C_{Y})$-path.
    Then $\wind(\vec{C}(Q,H),W') \in \{-1,1\}$ and this value does not depend on the choice of $W'$.
    \end{claim}
    \begin{innerproof}
        Let $U$ be the connected component of $\rr^2 \sm C(Q,H)$ that contains the endpoint of $Q$ lying on $C_{X}$.
        By the definition of a winding handle, $U$ does not contain the other endpoint of $Q$.
        Observe that whenever $W'$ crosses $C(Q,H)$ then it alternatively exits and enters $U$, and such crossings contribute opposite values to $\wind(\vec{C}(Q,H),W')$.
        Consequently, $\wind(\vec{C}(Q,H),W') \in \{-1,0,1\}$.
        The value 0 cannot be achieved because the number of crossings is odd.
        Then  $\wind(\vec{C}(Q,H),W') = 1$ if and only if 
         $U$  is the connected component of $\rr^2 \sm C(Q,H)$ that lies to the right of $C(Q,H)$ with respect to its orientation; and $\wind(\vec{C}(Q,H),W') = -1$ otherwise.
    \end{innerproof}
    
    We say that a $\pcal$-handle $H$ is a {\em clockwise} handle when $\wind(\vec{C}(Q,H),W) = -1$ and a {\em counter-clockwise} handle otherwise.

    \begin{claim}\label{claim:reroute:add}
        Let $H$ be a path in $D$ that forms a clockwise (resp. counter-clockwise) winding handle of $Q$.
        Then $\wind(Q^H,W) = \wind(Q,W) - 1$ (resp. $\wind(Q^H,W) = \wind(Q,W) + 1$).
    \end{claim}
    \begin{innerproof}
     Let $v_1 <_Q v_2$ be the endpoints of $H$. We consider this path oriented from $v_1$ to $v_2$.
        We have 
        \begin{align*}
         \wind(Q^H,W) & = \wind(Q[\cdot,v_1],W) + \wind(H,W) + \wind(Q[v_2, \cdot],W) \\
        & = \wind(Q,W) - \wind(Q[v_1,v_2],W) + \wind(H,W) \\
        & = \wind(Q,W) + \wind(\vec{C}(Q,H),W).
        \end{align*}
        The last term is either 1 or -1 depending on whether $H$ is a clockwise handle.
    \end{innerproof}

\begin{figure}[t]
\centering
\includegraphics[scale=0.75]{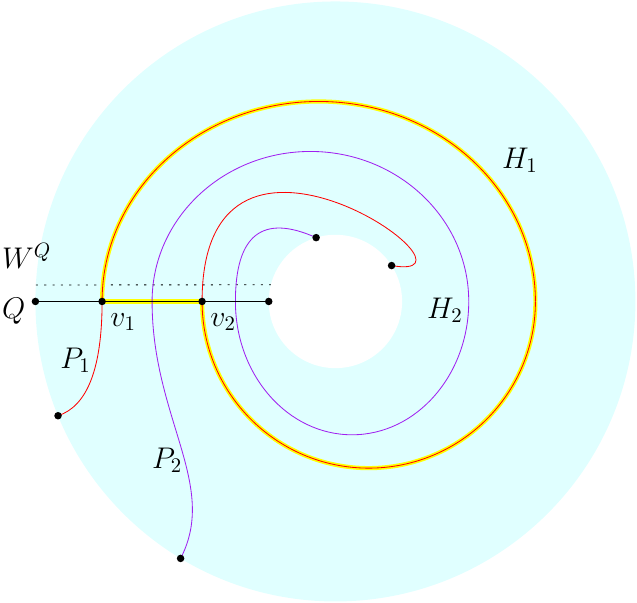}
\caption{
An illustration to \Cref{claim:reroute:all-the-same}.
The $P_1$-handle $H_1$ of $Q$ is clockwise.
The cycle $C(Q,P_1)$ is highlighted.
The reference curve $W^Q$ is crossed by $\vec{C}(Q,H_1)$ exactly once.
The path $P_2$ must cross the cycle $C(Q,H_1)$ through $Q[v_1,v_2]$ and directly afterwards, within handle $H_2$, it must cross $W^Q$ in the same direction as $H_1$.
}\label{fig:clockwise}
\end{figure}

    \begin{claim}\label{claim:reroute:all-the-same}
         Either all the winding $\pcal$-handles of $Q$ are clockwise or all are counter-clockwise.
    \end{claim}
    \begin{innerproof}
        Here be exploit the assumption $\ell \ge 2$.
        Recall that $Q$ has no regular $\pcal$-handles.
        Let $H_1$ be a winding $P_1$-handle of $Q$ with endpoints $v_1 <_Q v_2$.
        Since $C(Q,H_1)$ separates $C_{X}$ and $C_{Y}$ every other path $P_i$, $i \in [2,\ell]$ must go through $Q[v_1,v_2]$.
        Since $\ell \ge 2$, there is at least one such path.
        Suppose w.l.o.g. that it is $P_2$ that enters $Q[v_1,v_2]$ at a point closest to~$v_1$.
        Consider a winding $P_2$-handle $H_2$ that starts at a vertex in $Q[v_1,v_2]$.
        
        We claim that $\wind(\vec{C}(Q,H_1),W) = \wind(\vec{C}(Q,H_2),W)$.
        By \Cref{claim:reroute:winding-one} the winding number of $\vec{C}(Q,H_i)$ is independent from the choice of the reference curve $W$.
        Let $W^Q$ be any of  the two dual $(C_{X},C_{Y})$-paths that go parallel to $Q$.
        Since a handle of $Q$ can have at most two crossings with $W^Q$ and we know that this number is odd for a winding handle, we infer that each cycle $C(Q,H_i)$ has exactly one crossing with $W^Q$.
        Let us now fix $W^Q$ to be this dual path that is crossed first by $H_1$ (with respect to the orientation inherited from $P_1$).
        Then also the first edge of $H_2$ crosses $W^Q$ and does it in the same direction as~$H_1$ (see \Cref{fig:clockwise}).
        Hence $\wind(\vec{C}(Q,H_1),W^Q) = \wind(\vec{C}(Q,H_2),W^Q)$.

        Observe that the restriction to indices 1,2 is immaterial and 
        the argument above implies that each pair of consecutive $\pcal$-handles of $Q$ share the same orientation.
        Hence all of them are clockwise or none is.
    \end{innerproof}

    \begin{claim}\label{lem:reroute:prefix}
        Let $P$ be a $(T_X,T_Y)$-path in $D$ with a non-empty intersection with $Q$ and let $v_1,v_2$ be the first and the last common vertex of $P$ and $Q$.
        Consider $P' = Q[\cdot,v_1] + P[v_1,v_2] + Q[v_2,\cdot]$.
        Then $|\wind(P,W) - \wind(P',W)| \le 2$.
    \end{claim}
    \begin{innerproof}
        Let $J$ be the oriented cycle obtained by concatenating $Q[\cdot,v_1]$ with reversed $P[\cdot,v_1]$ and a curve $S$ within $C_X$ that connects the endpoint of $P$ to the endpoint of $Q$.
        Since $W$ crosses $J$ alternatively, we have $\wind(J,W) \in \{-1,0,1\}$.
        Next, the curve $S$ does not cross $W$ hence $\wind(Q[\cdot,v_1],W) - \wind(P[\cdot,v_1],W) = \wind(J,W)$ and so these two winding numbers differ by at most 1.
        Consequently, replacing this prefix of $P$ with the prefix of $Q$ modifies the winding number of the path by at most~1.
        The same argument applies to the modification of a suffix.
    \end{innerproof}

Suppose for the sake of contradiction 
that for some $i \in [\ell]$ the number of $P_i$-handles of $Q$ is greater than 8.
Recall that all of them must be winding handles.
Let ${Q}_{\mathrm{wind}}$ be obtained form $Q$ by pulling all those handles, i.e., we iteratively replace $Q$ with $Q^H$ for each $P_i$-handle $H$.
By Claims~\ref{claim:reroute:add} and~\ref{claim:reroute:all-the-same} we obtain $|\wind({Q}_{\mathrm{wind}},W)| \ge 9$.
In turn, \Cref{lem:reroute:prefix} says that $|\wind({Q}_{\mathrm{wind}},W) - \wind(P_i,W)| \le 2$ because these paths may differ only at the very beginning and the very end.
But the former application of \Cref{lem:reroute:reference} has guaranteed that $|\wind(P_i,W)| \le 6$; a contradiction.

We conclude that for each $i \in [\ell]$ the number of $P_i$-handles of $Q$ is at most  8.
By \Cref{lem:homotopy:handles-load} we estimate $|E(P_i) \cap \partial(Q)| \le 22$.
Hence $\minload(\pcal,Q_1) \le \load(\pcal,Q) \le 22\cdot \ell \le 22\cdot k$.
\end{proof}

\subsection{Geodesic Steiner tree}
\label{sec:primalSteiner}

We are in position to construct a Steiner tree that will work as a blueprint for enumerating homology classes, as discussed in \Cref{sec:techniques}.
We start with an arbitrary Steiner tree spanning the terminals and iteratively refine it using an exhaustive ring decomposition.
Then we will take advantage of Propositions~\ref{prop:outside} and~\ref{prop:reroute:minload} to estimate the number of ``non-trivial crossings'' between the tree and a certain solution.

Before proceeding to the statement of the next result, we define \textsl{principal vertices} and \textsl{spinal paths} of a tree $T$. Let $G$ be a graph, let $B\subseteq V(G)$ and let $T$ be a tree in $G$ such that $B\subseteq V(T)$ and every leaf of $T$ belongs to $B$.
We call a vertex of $T$ \emph{principal} if it either belongs to $B$ or has degree more than two. A \emph{spinal path} of $T$ is a maximal path in $T$ whose all internal vertices are non-principal.

\begin{proposition}\label{prop:primalSteiner}
    There is an algorithm that, given
     a nice instance $(G,\tcal)$ of \pdsp with $|\tcal| = k$, runs in time $2^{\Oh(k)}\cdot n^{\Oh(1)}$, and returns either a report that $(G,\tcal)$ has no solution, or a nice instance $(G',\tcal',\mathcal{D}')$ of {\sc Planar Disjoint Annotated Paths} and a tree $T$ in $G'$ such that 
     the following conditions hold.
     \begin{enumerate}
        \item $|\tcal'| = k$.
        \item $T$ spans the terminals of $\tcal'$ in $G'$ and $T$ consists of $\mathcal{O}(k)$ spinal paths.
        \item $(G,\tcal)$ admits a solution if and only if $(G',\tcal',\mathcal{D}')$ admits a solution $\pcal'$ such that for every spinal path $Q$ in $T$ it holds that $\minload(\pcal',Q) = \Oh(k^3)$.
     \end{enumerate}
\end{proposition}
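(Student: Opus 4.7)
The plan is to start with an arbitrary geodesic Steiner tree of $(G,\tcal)$ and refine it using an exhaustive ring decomposition: inside each ring the tree is replaced by a chosen linkage path so that \Cref{prop:reroute:minload} controls the crossings there, while \Cref{prop:outside} controls the crossings along the spinal paths of the tree that are disjoint from all rings.

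First I would build a geodesic Steiner tree $T_0$ of $G$ spanning $V(\tcal)$, by iteratively attaching terminals via shortest paths, and set $\widehat T := V(\tcal) \cup \{\text{principal vertices of } T_0\}$; then $|\widehat T| = \Oh(k)$. Invoking \Cref{prop:ring:ring-decomp} on $(G,\tcal,\widehat T)$ yields, in time $2^{\Oh(k)}\cdot n^{\Oh(1)}$, an exhaustive ring decomposition $\mathcal{R} = (X^i,Y^i,\gamma_1^i,\gamma_2^i)_{i=1}^\ell$ with $\ell = \Oh(k)$. For each ring I compute its DAG-structure $(D_i,T_X^i,T_Y^i)$ (\Cref{def:ring:dag-structure}) and, via a maximum flow in $D_i$, a largest family $\mathcal{Q}^i = (Q_1^i,\dots,Q_{q_i}^i)$ of pairwise-disjoint directed $(T_X^i,T_Y^i)$-paths. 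If $q_i < |\splitt_\tcal(X^i,Y^i)|$ for some $i$, I return \textsc{No}: by \Cref{obs:dag-cut:cross}(1) every split pair demands a shortest path crossing the ring, and by \Cref{lem:ring:dag} all such crossings lie in $D_i$ and must be pairwise disjoint, so no solution can exist.

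To construct $(G',\tcal',\mathcal{D}')$ and $T$, set $\tcal':=\tcal$ and fix a reference path $Q^i:=Q_1^i$ per ring. Initialize $G':=G$ and process the rings one by one: delete the edges of $T_0$ lying inside $R_i$ (by \Cref{obs:dag-cut:cross}(1), every spinal path of $T_0$ is a geodesic and meets $R_i$ in at most one subpath), add $Q^i$ to the tree, and attach each detached boundary vertex to the topologically nearest vertex of $Q^i$ by a new \emph{auxiliary edge} routed inside $R_i$; since the annulus $R_i$ cut along $Q^i$ is topologically a disk, this can be done non-crossingly so $G'$ remains planar. Prune the resulting connected graph to a tree $T$ spanning $V(\tcal')$ that preserves each $Q^i$ as a single spinal path and has $\Oh(k)$ spinal paths. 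Finally set $\mathcal{D}' := \{(s,t)\dagg \mid (s,t) \in \tcal'\}$ with the DAGs computed in the original $G$, so that the auxiliary edges belong to no DAG in $\mathcal{D}'$. By \Cref{obs:prelim:annotated} the solutions to $(G',\tcal',\mathcal{D}')$ are exactly the solutions to $(G,\tcal)$, and niceness is preserved since we never alter the neighbourhoods of terminals.

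To verify the load bound, take any solution $\pcal$ to $(G,\tcal)$ and apply \Cref{prop:reroute:minload} successively to each ring---the modifications are confined to pairwise-disjoint ring interiors and can therefore be composed---obtaining a solution $\pcal'$ that coincides with $\pcal$ outside every ring and satisfies $\minload(\pcal',Q^i) \le \Oh(k)$ for each $i$. For a spinal path $Q$ of $T$ that avoids every ring, $Q$ is a geodesic (a subpath of some spinal path of $T_0$), and \Cref{prop:outside} applied to $\pcal'$ gives $\minload(\pcal',Q) \le \Oh(|\widehat T|^2) = \Oh(k^2)$. For a spinal path involving an auxiliary edge, no homotopic pulls are possible at the auxiliary edge, so $\load(\pcal',Q)$ decomposes into the load on the adjacent geodesic pieces (bounded by \Cref{prop:outside}) plus $\Oh(1)$ contributions from the endpoints of each auxiliary edge (since $\pcal'$ uses at most two edges at any vertex); this yields the global bound $\Oh(k^2) \le \Oh(k^3)$. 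The main obstacle is the tree-refinement step itself: one must engineer the planar insertion of auxiliary edges and the subsequent pruning so that each reference path $Q^i$ is retained as a single spinal path, the total spinal-path count stays $\Oh(k)$, and the auxiliary edges are kept out of all shortest-paths DAGs, thereby preserving the equivalence with $(G,\tcal)$.
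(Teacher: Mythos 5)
Your high‑level plan — geodesic Steiner tree, exhaustive ring decomposition via \Cref{prop:ring:ring-decomp}, replacing the tree inside each ring by a linkage reference path, and then combining \Cref{prop:outside} with \Cref{prop:reroute:minload} to bound loads — is the same as the paper's. You also correctly recognize that \Cref{obs:dag-cut:cross} limits each spinal path to a single crossing per ring, and that the reroutings inside distinct rings can be composed because the rings are disjoint. However, the tree‑engineering step is where your proposal has a genuine gap, and you half‑acknowledge this yourself in your final sentence without resolving it.

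Concretely, there are two problems with the auxiliary‑edge device. First, up to $\Oh(k)$ spinal paths of $T_0$ may traverse a single ring $R_i$, producing up to $\Oh(k)$ detached endpoints on each boundary. Attaching each of these to the ``topologically nearest'' vertex of $Q^i$ creates $\Oh(k)$ degree‑$\ge 3$ vertices on $Q^i$, so $Q^i$ does \emph{not} survive as a single spinal path. This matters because \Cref{prop:reroute:minload} bounds $\minload$ of the whole reference path $Q^i$, and the relationship between $\minload$ of a path and $\minload$ of its subpaths is not the simple monotonicity you would need. Second, with $\Oh(k)$ rings and $\Oh(k)$ new attachment points per ring, the resulting tree has $\Oh(k^2)$ spinal paths, violating condition~(2) of the proposition; you assert $\Oh(k)$ but do not show how your pruning achieves it. The paper avoids both issues with a different mechanism: it subdivides the boundary edges of each ring and adds two auxiliary \emph{cycles} $C_1^i,C_2^i$ along each boundary; all boundary attachment points lie on these cycles, so $Q^i$ (which hooks onto the cycles only at its two endpoints) stays a single spinal path, and a dedicated redundancy‑pruning step keeps at most one connector between any pair of cycles, which is what brings the spinal‑path count down to $\Oh(k)$. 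Relatedly, your claimed $\Oh(k^2)$ load bound (better than the required $\Oh(k^3)$) is symptomatic of this oversimplification: in the paper, a spinal path of $T$ is a concatenation of up to $\Oh(k)$ pieces whose $\minload$ is $\Oh(k^2)$ each, and the subadditivity of $\minload$ under concatenation (which itself needs a remark, not just $\load$ decomposition as you write) gives $\Oh(k^3)$, not $\Oh(k^2)$.
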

\begin{proof}
    The algorithm works as follows.
    Considering an arbitrary ordering of the terminals of $\tcal$, we find a shortest path $P$ between the first two vertices in the ordering and we set $H:=P$.
    Then, for each next terminal $t$ in the ordering we find a shortest path $P'$ between $V(H)$ and $t$, we update $H$ by appending the path $P'$, and proceed to the next terminal in the ordering.
    In the end of this procedure, the obtained graph $H$ is a tree where every terminal from $\tcal$ appears as a principal vertex and consists of at most $2k$ spinal paths, each of them being a geodesic in $G$.
    Let $\widehat{T}$ be the set of principal vertices of $H$ and note that it is a \terset of $\tcal$. Observe that $H$ can be computed in polynomial time.
    
    Next, we invoke the algorithm of~\Cref{prop:ring:ring-decomp} to compute an exhaustive ring decomposition $(X^i,Y^i,\gamma_{1}^{i},\gamma_{2}^{i})_{i=1}^\ell$ of $(G,\tcal,\widehat{T})$, where $\ell\le 2|\widehat{T}|-2$.
    This can be done in time $2^{|\widehat{T}|}\cdot n^{\Oh(1)}$.
    For every $i\in[\ell]$, let $(U_{X^i},U_{\mathsf{mid}}^i,U_{Y^i})$ be the $(X^i,Y^i)$\dagring represented by $(\gamma_{1}^{i},\gamma_{2}^{i})$.
    
    We construct the graph $G'$ as follows:
    \begin{itemize}[nosep]
        \item for every $i\in[\ell]$, subdivide once every edge $e$ that belongs to $E(U_{X^i},U_{\mathsf{mid}}^i)$ or $E(U_{\mathsf{mid}}^i,U_{Y^i})$, by introducing a new vertex $v_e$,
        
        \item for every $i\in[\ell]$, connect the vertices $v_e$ with $e\in E(U_{X^i},U_{\mathsf{mid}}^i)$ in a cycle, following the order that the corresponding edges are crossed by the cycle $\mathrm{Cycle}(\gamma_{1}^i)$ of the dual graph $G^\star$ . Similarly, connect the vertices $v_e$ with  $e\in E(U_{\mathsf{mid}}^i,U_{Y^i})$ in a cycle. We denote these newly introduced cycles by $C_1^i$ and $C_2^i$.
    \end{itemize}
    We denote by $F$ the set of all edges subdivided in the above procedure.
    Since all terminals of $\tcal$ appear also in $G'$, we can set $\tcal':=\tcal$. We also set $\mathcal{D}'$ to be the collection of oriented subgraphs obtained from $\mathcal{D}$ after subdividing edges in $F$. Note that since $(G,\tcal)$ is nice, $(G',\tcal',\mathcal{D}')$ is nice as well.

    It now remains to construct the tree $T$ in $G'$ and show that the two last conditions of the statement hold.
    Before this, let us comment that there is a 1-1 correspondence between the shortest paths DAGs as well as the solutions in $(G,\tcal)$ with the corresponding objects in $(G',\tcal')$.
    Indeed, for every $i\in[k]$, the $(s_i,t_i)$\dagg in $G$ is obtained from the $(s_i,t_i)$\dagg in $G'$ by replacing the edges incident to each vertex $v_e$ with the edge $e$. 
    This implies that if $\mathcal{D}$ is the set of the shortest paths DAGs $(s_i,t_i)\dagg$ in $G$, for $i \in [k]$, then $(G,\tcal,\mathcal{D})$ has a solution if and only if $(G',\tcal',\mathcal{D}')$ has a solution.
    In particular if we associate each family $\mathcal{P}$ of vertex-disjoint paths $P_1,\ldots,P_k$ in $G$, so that $P_i$ is an $(s_i,t_i)$-path, with the corresponding family $\mathcal{P}'$ of paths in $G'$ obtained after the aforementioned subdivisions, then $\mathcal{P}$ is a solution to $(G,\tcal,\mathcal{D})$ if and only if $\mathcal{P}'$ is a solution to $(G',\tcal',\mathcal{D}')$.
    This, combined with~\Cref{obs:prelim:annotated}, implies that $\mathcal{P}$ is a solution to $(G,\tcal)$ if and only if $\mathcal{P}'$ is a solution to $(G',\tcal',\mathcal{D}')$.
    
    In order to construct the tree $T$, we will replace the part of $H$ that lies inside rings of the decomposition by a collection of paths. 
    In order to maintain the property that the obtained graph $T$ is a tree, we use the cycles $C_1^i,C_2^i$ and ``rerouting'' in each ring will be done in a way that the load of each of the obtained paths is small. 

    Let us first show how to construct the part of $T$ inside each ring. For this, we focus on the graph $G$ and we compute the DAG-structure $(D_i,T_{X^i},T_{Y^i})$ (\Cref{def:ring:dag-structure}) for the $(X^i,Y^i)$\dagring $(U_{X^i},U_{\mathsf{mid}}^i,U_{Y^i})$ for every $i\in[\ell]$. For every $i\in[\ell]$, we ask for a collection of $|\splitt_\tcal(X^i,Y^i)|$ many vertex-disjoint paths in $D_i$ from $T_{X^i}$ to $T_{Y^i}$ and we set $Q^i$ be one of these paths. Note that if such linkage does not exist, then we can safely report that $(G,\tcal)$ is a no-instance. Observe that the computation of the DAG-structures and the paths $Q^i$ can be done in polynomial time. Next, because of~\Cref{prop:reroute:minload} 
    the existence of any solution implies the existence of a solution $\pcal$
    so that for every $i\in[\ell]$ it holds that
    \begin{equation}
        \minload(\pcal,Q^i) = \Oh(k).\label{eq:load:inn}
    \end{equation}
    We are now ready to define the graph $T$, which is obtained from $H$ by applying the following steps.
    \begin{itemize}[nosep]
        \item Consider the subdivision $H'$ of $H$ in $G'$ obtained by replacing each edge in $uv\in F\cap E(H)$ with $uv_e, v_ev$.
        \item For every $i\in[\ell]$, remove from $H'$ all vertices in $V(H')\cap U_{\mathsf{mid}}^i$ and add the cycles $C_1^i,C_2^i$ and the path $Q^i$. We denote the obtained graph by $\tilde{H}$ and observe that it is a connected graph that spans all terminals in $\tcal'$.

        \item Then, for every pair of distinct $i,j\in[\ell]$, consider all subpaths of $T$ whose one endpoint is in $V(C_1^i)\cup V(C_2^i)$ and its other endpoint is in $V(C_1^j)\cup V(C_2^j)$ 
        and have no internal vertex in $\bigcup_{i\in[\ell]}U_{\mathsf{mid}}^i$. If there are more than one such paths for $i,j$, then discard the internal vertices of all but one of these paths from $\tilde{H}$.
        
        \item Finally, find a maximal set $J\subseteq \bigcup_{i\in[\ell]} E(C_1^i)\cup E(C_2^i)$ such that $\tilde{H}-J$ is a tree, set $T:=\tilde{H}-J$. 
    \end{itemize}
    
    Consider the vertices $v_e\in V(T)$, for $e\in F$, that are non-principal vertices of $H'$ and have a degree greater than two in $T$. Note that these vertices together with the principal vertices of $H'$ form the set of principal vertices of $T$.  
    
    Next, note that every spinal path $Q$ of $T$ is a concatenation of paths of the following kinds:
    \begin{enumerate}[nosep]
        \item[(a)] subpaths of spinal paths of $H'$ that are disjoint from $\bigcup_{i\in[\ell]}U_{\mathsf{mid}}^i$, or
        \item[(b)] paths whose edge set is a subset of $E(C_1^i)$ or $E(C_2^i)$ for some $i\in[\ell]$, or
        \item[(c)] paths which after the removal of their endpoints result to $Q^i$ for some $i\in[\ell]$.
    \end{enumerate}
    Also notice that, by construction, every spinal path of $H$ is a subpath of a shortest $(t,t')$-path for some $t,t'\in\widehat{T}$. Therefore, due to~\Cref{obs:dag-cut:cross}, every spinal path of $H$ that intersects $U_{\mathsf{mid}}^i$ for $i\in[\ell]$, has exactly one edge in $E(U_{X^i},U_{\mathsf{mid}}^i)$ and exactly one edge in $E(U_{\mathsf{mid}}^i,U_{Y^i})$.
    Therefore, for each $i\in[\ell]$,
    if $v$ is a (new) principal vertex of $T$ appearing on $V(C_1^i)$ then it is the endpoint of a subpath of $H'$ that is internally disjoint from $\bigcup_{h\in [\ell]}U_{\mathsf{mid}}^h$
    and whose one endpoint is $v$ and its other endpoint is either in $\widehat{T}$ or in $V(C_1^j)\cup V(C_2^j)$ for some $j\neq i$.
    Observe that, by construction, the total number of (newly) introduced principal vertices of $T$ is upper-bounded by a linear function of $|\widehat{T}|+\ell$ and therefore the number of spinal paths of $T$ is $\mathcal{O}(k)$. This completes the proof of condition (2) of the statement for $T$.
    Moreover, since every spinal path of $H$ can enter each ring at most once, every spinal path of $T$ is a concatenation of $\mathcal{O}(k)$ paths of one of the kinds (a), (b), and (c) given above.
    
    We next show that condition (3) holds for $T$.
    \begin{claim}
        $(G,\tcal)$ admits a solution if and only if $(G',\tcal', \dcal')$ admits a solution $\mathcal{P}'$ so that for every spinal path $Q$ in $T$ it holds that $\minload(\mathcal{P}',Q)=\mathcal{O}(k^3)$  
    \end{claim}
    \begin{innerproof}
        Let $Q$ be a spinal path of $T$.
        Recall that $Q$ is the concatenation of $\mathcal{O}(k)$ paths of the three kinds described above. For every such subpath $Q'$ of $Q$, 
        we distinguish three cases, corresponding to the three kinds (a, (b), (c) mentioned above.
        In all these cases there is a 1-1 correspondence between solutions $\pcal$ to $(G,\tcal)$ and solutions $\pcal'$ to $(G',\tcal',\mathcal{D}')$.
        Moreover, due to~\cref{eq:load:inn}, we can assume that for every solution $\mathcal{P}$ to $(G,\tcal)$, it holds that $\minload(\pcal,Q^i)=\mathcal{O}(k)$.

        For case (a), i.e., if $Q'$ is disjoint from $\bigcup_{i\in[\ell]}U_{\mathsf{mid}}^i$, then let $\bar{Q}'$ be the path obtained from $Q'$ after the removal of (the unique) vertex $v_e, e\in F$ that is an endpoint of $Q$. Observe that $\bar{Q}'$ is a subpath of a spinal path of $H$ and therefore a geodesic in $G$. Also, by construction for every solution $\mathcal{P}$ to $(G,\tcal)$, it holds that a path from $\mathcal{P}$ has an edge incident to $\bar{Q}'$ if and only if a path from $\mathcal{P}'$ (in $G'$) has an edge incident to $Q'$ and therefore $\minload(\mathcal{P}',Q')=\minload(\mathcal{P},\bar{Q}')$. Recall that $\bar{Q}'$ is disjoint from $\bigcup_{i\in[\ell]}U_{\mathsf{mid}}^i$ and therefore, due to \Cref{prop:outside}, for every solution $\mathcal{P}$ to $(G,\tcal)$, it holds that $\minload(\pcal,\bar{Q}') = \Oh(k^2)$. Therefore, in this case, $\minload(\mathcal{P}',Q')=\mathcal{O}(k^2)$.

        In case (b) the edge set of $Q'$ is a subset of $E(C_1^i)$ or $E(C_2^i)$ for some $i\in[\ell]$. 
        By \Cref{obs:dag-cut:cross} we know that every solution $\mathcal{P}'$ to $(G',\tcal',\dcal')$ contains at most $k$ vertices of the form $v_e,e\in E(U_{X^i},U_{\mathsf{mid}}^i)$ (resp. $e\in E(U_{\mathsf{mid}}^i,U_{Y^i})$). Therefore, in this case, $\minload(\mathcal{P}',Q')=\mathcal{O}(k)$.

        In case (c), we have that by removing the endpoints of $Q'$ we obtain $Q^i$ for some $i\in[\ell]$. In this case, we argue as follows.
        First note that, by construction for every solution $\mathcal{P}$ to $(G,\tcal)$, it holds that a path from $\mathcal{P}$ has an edge incident to $Q^i$ if and only if a path from $\mathcal{P}'$ has an edge incident to $Q'$ and therefore $\minload(\mathcal{P}',Q')=\minload(\mathcal{P},Q^i)$.
        By~\cref{eq:load:inn}, we have that $\minload(\mathcal{P},Q^i)=\mathcal{O}(k)$ and thus $\minload(\mathcal{P}',Q')=\mathcal{O}(k)$.

        Now suppose that $Q$ is the concatenation of paths $L_1,\ldots,L_p$, where $p=\mathcal{O}(k)$ and for each $i\in[p]$ the path $L_i$  corresponds to one of the cases (a), (b), (c).
        Then, for each $i\in[p]$, $\minload(\mathcal{P}',L_i)=\mathcal{O}(k^2)$.  
        It follows directly from the definition that the combinatorial load of a concatenation of two paths is at most the sum of their combinatorial loads.
        Therefore, $\minload(\mathcal{P}',Q) \le \sum_{i\in[p]} \minload(\mathcal{P},L_i) = \mathcal{O}(k^3)$.  
    \end{innerproof}

\end{proof}

\section{Retrieving a solution via homology}
\label{sec:schrijver}

In this section we complete the proof of \cref{thm:main} by showing how to reduce the considered problem, with the help of \cref{prop:primalSteiner}, to an FPT number of subproblems conforming to the algebraic framework of Schrijver~\cite{schrijver1994finding}. We remark that while the idea of searching for a solution to the {\sc{Disjoint Paths}} problem of a fixed topological shape using the auxiliary algebraic problem of {\sc{Homology Feasibility}} is due to Schrijver, the concrete implementation of the reduction needed here, and in particular the fine bounds on the number of homology classes that need to be considered, are new to this work.

\subsection{Group-labelled graphs}\label{sec:group-prelims}

First, we need to introduce an algebraic language for speaking about the topology of sets of paths in plane-embedded graphs. This language --- of group-labeled graphs --- was used by Schrijver in his work on vertex-disjoint paths in planar digraphs~\cite{schrijver1994finding}. As mentioned,  we will eventually  use some of his results.

\paragraph*{Oriented walks.}Throughout this section we fix $G$ to be a connected multigraph together with an embedding in the plane. Note that in $G$ we allow both parallel edges (edges connecting the same pair of endpoints) and loops (edges connecting a vertex with itself, embedded as closed curves).
Recall that the set of faces of $G$ is denoted by $F(G)$.

An {\em{oriented walk}} in $G$ is a sequence $\vec W=(a_1,a_2,\ldots,a_p)$ of elements of $\Ev(G)$ such that the tail $a_{i+1}$ is the head of $a_i$, for all $i\in \{1,\ldots,p-1\}$. The {\em{reversal}} of $\vec W$ is $\vec W^{-1}=(a_p^{-1},\ldots,a_1^{-1})$. Oriented walks in the dual graph $G^\star$ are defined in the same way. An oriented walk is an {\em{oriented path}} if it visits every vertex of the graph at most once. Every path $P$ in $G$ gives rise to two oriented paths, called the {\em{orientations}} of $P$.

\paragraph*{Groups and labellings.} We will use the multiplicative notation for groups: if $\Gamma$ is a group, the result of applying the binary group operation to elements $a,b\in \Gamma$ will be denoted by $a\cdot b$ or just $ab$, the inverse of an element $a\in \Gamma$ will be denoted by $a^{-1}$, and the identity element of $\Gamma$ will be denoted by $1_{\Gamma}$. 

Suppose $\Sigma$ is a finite alphabet. We define $\oSigma=\{a,a^{-1}\colon a\in \Sigma\}$ and a mapping $\reduce\colon \oSigma^\star\to \oSigma^\star$ as follows: for $w\in \oSigma^\star$, $\reduce(w)$ is obtained by exhaustively removing subwords of the form $aa^{-1}$ or $a^{-1}a$ for $a\in \Sigma$. (It is well-known and easy to see that the result of such exhaustive removal is unique.) Thus, the image of $\reduce$ can be endowed with an associative binary operation $\cdot$ defined as $u\cdot w=\reduce(uw)$, with identity element being the empty word and the inverse operation defined by reverting the word and replacing every symbol by its inverse. Thus, the image of $\reduce$ becomes the {\em{free group}} on generators $\Sigma$, which we will denote by $\langle \Sigma\rangle$. All groups used in this paper will be of this form.

For a group $\Gamma$, a {\em{$\Gamma$-labelling}} of $G$ is a function $\lambda\colon \Ev(G)\to \Gamma$ satisfying
$$\lambda(a^{-1})=(\lambda(a))^{-1}\qquad\textrm{for each arc }a\in \Ev(G).$$
If $\vec W=(a_1,a_2,\ldots,a_p)$ is an oriented walk in $G$, then we define
$$\lambda(\vec W)=\lambda(a_1)\cdot \lambda(a_2)\cdot \ldots\cdot \lambda(a_p).$$
The labelling $\lambda$ naturally gives rise to a $\Gamma$-labelling $\lambda^\star$ of $G^\star$, defined as
$$\lambda^\star(a^\star)=\lambda(a)\qquad\textrm{for each arc }a\in \Ev(G).$$
Then $\lambda^\star$ can be extended to walks in $G^\star$ in the same way as this was done for $G$.

\paragraph*{Characteristic words.} Suppose now that $(G,\calT,\dcal)$ is an instance of {\sc Planar Disjoint Annotated Paths}, where $\calT=\{(s_1,t_1),\ldots,(s_k,t_k)\}$.
We define the alphabet $\Sigma=\Sigma_\calT=\{1,2,\ldots,k\}$, and we let $\Gamma=\Gamma_\calT=\langle \Sigma\rangle$ be the free group with generators $\Sigma$. Then, given a solution $\calP=(P_1,\ldots,P_k)$ to $(G,\calT,\dcal)$, where every $P_i$ is considered an oriented path that starts in $s_i$ and ends in $t_i$, we define the $\Gamma$-labelling $\lambda_\calP$ of $G$ as follows: for an arc $a\in \Ev(G)$, set
\begin{itemize}[nosep]
\item $\lambda_\calP(a)=i$ if some $P_i$ traverses $a$ respecting its orientation;
\item $\lambda_\calP(a)=i^{-1}$ if some $P_i$ traverses $a$ disrespecting its orientation; and
\item $\lambda_\calP(a)=1_\Gamma$ if no $P_i$ traverses $a$.
\end{itemize}
It is straightforward to check that thus, $\lambda_{\calP}$ is a $\Gamma$-labelling of $G$.

Now, if $\vec W$ is an oriented walk in the dual $G^\star$, then we define the {\em{characteristic word}} of $\calP$ with respect to $\vec W$ as
$$\charWord(\calP,\vec W)=\lambda^\star_\calP(\vec W).$$
Intuitively speaking, $\charWord(\calP,\vec W)$ gathers information about the crossings of $\vec W$ by the paths of $\calP$ when $\vec W$ is traversed along its orientation. Every counter-clockwise crossing with $P_i$ contributes with~$i$, every clockwise crossing with $P_i$ contributes with $i^{-1}$, and consecutive occurrences of $i$ and $i^{-1}$ cancel out. Note that $\charWord(\calP,\vec W^{-1})=\charWord(\calP,\vec W)^{-1}$.

An oriented walk is an {\em{oriented cycle}} if it begins and ends at the same vertex, and otherwise all the vertices traversed by it are pairwise different and different from the beginning. An oriented cycle is {\em{non-separating}} if one of the two regions in which it divides the plane does not contain any terminals from $V_\calT$. We will use the following straightforward claim, which can be imagined as a discrete analogue of Green's~Theorem. 

\begin{lemma}\label{lem:Green}
    If $(G,\tcal,\dcal)$ is a nice instance of {\sc{Planar Annotated Disjoint Paths}} and $C$ is an oriented non-separating cycle in $G^\star$, then $\charWord(\calP,C)=1_{\Gamma_\tcal}$.
\end{lemma}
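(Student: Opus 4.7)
The plan is to exploit that vertex-disjoint paths in a plane graph are also pairwise disjoint as curves in the plane. Let $D \subseteq \mathbb{R}^2$ denote the open region of $\mathbb{R}^2 \setminus C$ that contains no terminal of $\tcal$; such a region exists by the assumption that $C$ is non-separating. For each index $i \in [k]$, the oriented path $P_i$, regarded as a simple curve in the plane with both endpoints outside $D$, meets $\cl(D)$ in a finite disjoint union of sub-arcs, each having both endpoints on $C$ and its interior in $D$. Since $P_1, \ldots, P_k$ are pairwise vertex-disjoint (and hence, being embedded in a plane graph, pairwise disjoint as curves), and since each $P_i$ is simple, these sub-arcs are pairwise disjoint across all indices and form a family of disjoint properly embedded arcs in the closed disc $\cl(D)$.

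The key observation is then that the endpoints of these sub-arcs on $C$ form a non-crossing matching: two endpoints are paired precisely when they bound a common sub-arc. Unwinding the definition of $\charWord(\calP, C)$, each endpoint corresponds to a transverse crossing of $C$ with some $P_i$ at an edge $e$ of $P_i$ whose dual $e^\star$ lies on $C$, and the contribution of that crossing to the word is $\lambda_\calP(e) \in \{i, i^{-1}\}$. The two endpoints of a common sub-arc of $P_i$ are, respectively, an entry into $D$ and an exit from $D$ with respect to $P_i$'s orientation, and I expect the main obstacle to lie precisely in verifying, via the convention that $e^\star$ is oriented from the left face of $e$ to its right face, that the entry and the exit contribute opposite-signed powers of the generator $i$. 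Once this sign bookkeeping is done, each matched pair of positions will be labeled by $\{j, j^{-1}\}$ for a common $j \in \Sigma$.

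Given this structure, $\charWord(\calP, C)$, read by traversing $C$ once along its orientation (breaking the cyclic order at an arbitrary point not at a crossing), becomes a word over $\oSigma$ whose positions are endowed with a non-crossing matching in which each matched pair equals $\{j, j^{-1}\}$ for some $j \in \Sigma$. It then remains to apply the elementary fact that any such word reduces to the empty word in the free group $\Gamma_\tcal = \langle \Sigma \rangle$: in any non-crossing matching on a nonempty linear sequence, there exists an innermost matched pair, which is necessarily a pair of adjacent positions; cancelling the corresponding $j\, j^{-1}$ in the free group shortens the word while preserving the non-crossing-matching structure on the remaining positions, so an induction on the word length concludes that $\charWord(\calP, C) = 1_{\Gamma_\tcal}$.
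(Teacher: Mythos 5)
Your proof is correct and takes essentially the same approach as the paper: both arguments note that the segments of $\calP$ inside the terminal-free disc bounded by $C$ induce a non-crossing matching pairing opposite symbols $i$ and $i^{-1}$, and then observe (your innermost-pair induction versus the paper's well-formed-bracket phrasing are equivalent) that such a word reduces to $1_{\Gamma_\tcal}$. The sign bookkeeping you flag does check out --- following $\vec C$ the disc stays on a fixed side, so the entry and exit crossings of any single sub-arc of $P_i$ have opposite relative orientations and hence contribute $i$ and $i^{-1}$ --- and the paper treats this as immediate as well.
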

\begin{proof}
    Choose a starting vertex anywhere on $C$ and let $a_1,\ldots,a_p$ be the sequence of consecutive arcs of $C$. Further, let $s_1,\ldots,s_q$ be the sequence obtained from $a_1,\ldots,a_p$ by replacing every arc $a$ with $\lambda_{\calP}^\star(a)$, and removing all the entries equal to $1_{\Gamma_\tcal}$; thus $\charWord(\calP,C)=s_1s_2\ldots s_q$. The segments of paths of $\calP$ contained in the region delimited by $C$ that does not contain terminals naturally define a matching on the entries of the sequence $s_1,\ldots,s_q$, which matches opposite symbols (i.e. $i$ with $i^{-1}$). Since the paths of $\calP$ are pairwise disjoint, replacing within every matched pair the left element by $($ and the right element by $)$, turns the sequence $s_1,\ldots,s_q$ into a well-formed bracket expression. Since the matched brackets correspond to symbols that cancel each other, the whole expression $s_1s_2\ldots s_q$ evaluates in $\Gamma_\tcal$ to $1_{\Gamma_\tcal}$.
\end{proof}

\subsection{Topological load and algebraic load}
\label{sec:topo-load}

Before we continue, we need to make a reinterpretation of the notion of the minimum load of a path $Q$ with respect to a family of paths $\calP$ (see \cref{def:homotopy:load}). Intuitively, it will be more convenient to work with a definition that considers $Q$ fixed and applies transformations to $\calP$, rather than the other way round, as was the case in \cref{def:homotopy:load}. Also, we need $Q$ to be a path in the dual graph $G^\star$, rather than in $G$ itself.

\paragraph*{Load of dual paths.} First, we need to extend the notions of (minimum) load also to paths in the dual graph $G^\star$. This is done in a natural manner as follows. Suppose $(G,\tcal)$ is a nice instance of \pdsp and $Q$ is a path in $G^\star$. We define (empty) handles of $Q$ as before, except that a handle of $Q$ is a path in~$G^\star$ instead of in $G$. This gives an obvious extension of the notions of pull operations and of homotopy to paths in $G^\star$ (see \cref{def:homotopy}). Now, for a path $Q$ in $G^\star$ and a solution $\calP$ of $(G,\tcal)$, we define $\load(\calP,Q)$ to be the number of edges of $Q$ that cross an edge of $\bigcup_{P\in \calP} E(P)$, and again $\minload(\calP,Q)$ is defined as the minimum $\load(\calP,\widehat{Q})$ among paths $\widehat{Q}$ homotopic to $Q$.

\paragraph*{Topological load.} First, we introduce a notion of load that is almost equivalent to that of \cref{def:homotopy:load}, but relies on untangling the path family $\calP$ rather than modifying the path $Q$. The notion is defined in topological terms, so we need a few definitions.

A {\em{curve}} in the plane is a homeomorphic image of the interval $[0,1]$. Two curves $P,Q$ are in {\em{general position}} if $P\cap Q$ consists of a finite number of points, none of which is an endpoint of $P$ or $Q$, and for every point $x\in P\cap Q$ there is an open neighborhood $N$ of $x$ such that $N-P$ has two connected components, both intersecting $Q$, and $N-Q$ has two connected components, both intersecting $P$ (intuitively, $P$ and $Q$ cross at $x$ so that they appear in an interlacing manner around $x$).

For a positive integer $k$, a {\em{$k$-configuration}} is a pair $\calC=(\calP,Q)$, where $Q$ is a curve, $\calP$ is a family of $k$ pairwise disjoint curves, and $Q$ is in general position with each $P\in \calP$. Thus, $\calC$ naturally gives rise to a plane multigraph $H_\calC$ defined as follows:
\begin{itemize}[nosep]
    \item The vertex set of $H_\calC$ consists of the endpoints of the curves of $\calP\cup \{Q\}$ and the intersection points of $Q$ with the paths of $\calP$.
    \item Let a {\em{segment}} be a maximal subcurve of $Q$ or any $P\in \calP$ that does not contain any vertex of $H_\calC$ in its interior; there are naturally $Q$-segments (contained in $Q$) and $\calP$-segments (contained in paths of $\calP$). For every segment $S$, introduce an edge $e_S$ to $H_\calC$ connecting the two endpoints of $S$; thus $H_\calC$ naturally has $Q$-edges and $\calP$-edges. The segment $S$ is deemed to be the embedding of $e_S$ in the plane, thus making $H_\calC$ a plane multigraph. 
\end{itemize}

Now, we consider the following operation of {\em{simplification}}: as long as $H_\calC$ contains a face of length~$2$ homeomorphic to an open disc, say with vertices $u,v$, replace the three-edge path consisting of $Q$-edges incident to $u$ and $v$ with a single $Q$-edge with the same endpoints, and replace the three-edge path consisting of $\calP$-edges incident to $u$ with a single $\calP$-edge with the same endpoints, so that the new edges do not intersect; see \Cref{fig:simplification}. It is straightforward to verify that the plane multigraph obtained by applying the simplification operation exhaustively is defined uniquely and does not depend on the order of making simplifications; we call it $\widetilde{H}_\calC$. We define the {\em{topological load}} of configuration $\calC$, denoted $\topload(\calC)$ as the number of vertices of $\widetilde{H}_\calC$ that are not the endpoints of the paths of $\calP\cup \{Q\}$.

\begin{figure}[t]
\centering
\includegraphics[scale=0.4]{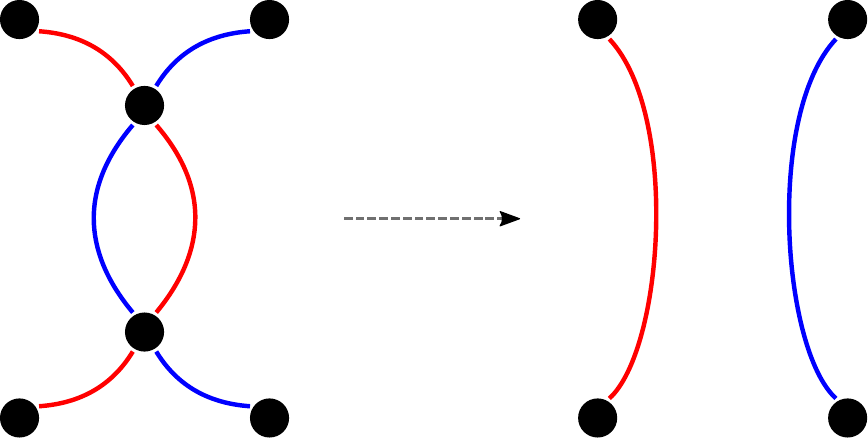}
\caption{
A visualization of the simplification operation. $Q$-edges are depicted in blue and $\calP$-edges are depicted in red.
}\label{fig:simplification}
\end{figure}

The following lemma links the topological load with the notion of combinatorial load considered in \cref{def:homotopy:load}.

\begin{lemma}\label{lem:topcomb}
    Let $(G,\calT,\dcal)$ be an instance of \padpfull, $\calP$ be a solution to $(G,\calT,\dcal)$, and $Q$ be a path in $G^\star$. Then $(\calP,Q)$ is a configuration and $\topload(\calP,Q)\leq \minload(\calP,Q)$.
\end{lemma}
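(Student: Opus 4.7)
The plan is to split the proof into three steps, with the third carrying the bulk of the topological content.

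\emph{Step 1 (configuration property).} The paths $P_1,\dots,P_k$ of $\calP$ are pairwise vertex-disjoint as they form a solution, so they embed as pairwise disjoint simple curves in the plane embedding of $G$, and the dual path $Q$ is embedded as a simple curve in $G^\star$. General position of $Q$ with each $P_i$ is automatic from the standard primal-dual embedding: each dual edge $e^\star$ crosses its unique primal edge $e$ at a single transversal interior point and is disjoint from every other primal edge.

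\emph{Step 2 ($\topload\le \load$).} The non-endpoint vertices of the plane multigraph $H_\calC$ are precisely the intersection points of $Q$ with $\bigcup_i P_i$. Each dual edge $f\in E(Q)$ corresponds to a unique primal edge $f^\star\in E(G)$, and $f$ contributes an intersection point iff $f^\star$ belongs to some $E(P_i)$ --- exactly the edges counted by $\load(\calP,Q)$. Since every simplification step removes two non-endpoint vertices from $H_\calC$, the final count $\topload(\calP,Q)$ is at most the initial count $\load(\calP,Q)$.

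\emph{Step 3 (from $\load$ to $\minload$).} To upgrade Step 2 into the desired bound I will establish the pull-invariance $\topload(\calP,Q)=\topload(\calP,Q^H)$ for every empty regular handle $H$ of $Q$. Iterating this equality along a homotopy $Q=Q_0,\dots,Q_m=\widehat Q$ and then applying Step 2 to $\widehat Q$ gives $\topload(\calP,Q)=\topload(\calP,\widehat Q)\le \load(\calP,\widehat Q)$; minimizing over $\widehat Q$ homotopic to $Q$ then yields $\topload(\calP,Q)\le \minload(\calP,Q)$.

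The main obstacle is proving this pull-invariance. The region $D=D(Q,H)$ is a terminal-free disc bounded by $\partial D=Q[r_1,r_2]\cup H$, and the portion of $\calP$ in $\bar D$ decomposes into a family of pairwise disjoint arcs whose endpoints lie on $\partial D$. Classify each such arc by whether both endpoints lie on $Q[r_1,r_2]$, both on $H$, or one on each. Iterated innermost-first $2$-face elimination inside $\bar D$ cancels all pairs of crossings coming from the ``both on $Q[r_1,r_2]$'' arcs in the configuration $(\calP,Q)$, leaving only one crossing per ``mixed'' arc; symmetrically, inside $\bar D$ the configuration $(\calP,Q^H)$ reduces to one crossing per mixed arc coming from $H$. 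Outside $\bar D$ the two configurations are literally identical, so the external simplifications act identically, and consequently $\widetilde{H}_{(\calP,Q)}$ and $\widetilde{H}_{(\calP,Q^H)}$ have equal numbers of non-endpoint vertices. Carefully checking that inside- and outside-$\bar D$ simplifications do not interfere, and that an innermost $2$-face inside $\bar D$ is always available while any such arcs remain, is the technical heart of the argument.
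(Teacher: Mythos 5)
Your overall plan mirrors the paper's: prove that $\topload$ is invariant under pull operations, combine with the pointwise bound $\topload(\calP,\widehat Q)\leq\load(\calP,\widehat Q)$, and minimize over $\widehat Q$. The divergence is entirely in how pull-invariance is argued, and there you take a genuinely different and heavier route. The paper's trick is to first factor an arbitrary pull across an empty handle into a sequence of \emph{atomic} pulls, each of which encloses a single face of $G^\star$; once $\bar D$ is a single face there is at most one vertex of $G$ (hence at most one $\calP$-arc) inside it, and one reads off directly that $H_{(\calP,Q)}$ and $H_{(\calP,Q^H)}$ either coincide or differ by exactly one simplification. You instead treat a general empty handle head-on, classifying the $\calP$-arcs inside $\bar D$ into ``both-on-$Q$'', ``both-on-$H$'', and ``mixed'' and running an innermost-first elimination. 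This can be completed — an innermost ``both-on-$Q$'' arc does bound a $2$-face of $H_{(\calP,Q)}$ (any arc entering its subdisc would have to exit again through the same $Q$-segment, making it a nested ``both-on-$Q$'' arc and contradicting innermost-ness), and the residual configurations are isotopic because the mixed arcs give a matching between the crossings on $Q[r_1,r_2]$ and on $H$ in the same cyclic order — but you explicitly leave unproved exactly these points (and you also need the uniqueness/confluence of exhaustive simplification to justify doing the inside-$\bar D$ reductions first). The paper's factorization into single-face pulls purchases precisely this saving: nothing to classify, nothing to isotope, and no interaction between inside and outside to worry about. Both routes yield the lemma; the paper's is shorter and less error-prone, yours avoids the (small) observation that pulls factor over faces at the cost of a more delicate topological argument that you sketch rather than prove.
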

\begin{proof}
    That $(\calP,Q)$ is a configuration is obvious. For the claimed inequality, the key observation is the following: if $\widehat{Q}$ is homotopic to $Q$, then $\widetilde{H}_{(\calP,Q)}=\widetilde{H}_{(\calP,\widehat{Q})}$. To see this statement, first note that it suffices to give a proof under the assumption that $Q$ and $\widehat{Q}$ differ by one pull operation, which moreover involves an empty handle that delimits a single face of $G^\star$. Indeed, every pull operation can be presented as a composition of pull operations as above. Now, under this assumption, it is easy to see that either $H_{(\calP,Q)}$ and $H_{(\calP,\widehat{Q})}$ are equal, or they differ by one simplification operation. In both cases, we have $\widetilde{H}_{(\calP,Q)}=\widetilde{H}_{(\calP,\widehat{Q})}$.

    Coming back to the proof of the claimed inequality, let $\widehat{Q}$ be the path homotopic to $Q$ such that $\load(\calP,\widehat{Q})=\minload(\calP,Q)$. By the observation of the previous paragraph, we have $\widetilde{H}_{(\calP,Q)}=\widetilde{H}_{(\calP,\widehat{Q})}$, hence $\topload(\calP,Q)=\topload(\calP,\widehat{Q})$. Finally, we clearly have $\load(\calP,\widehat{Q})\geq \topload(\calP,\widehat{Q})$, because $H_{(\calP,\widehat{Q})}$ has at least as many vertices as $\widetilde{H}_{(\calP,\widehat{Q})}$. By combining the three relations above we conclude that $\topload(\calP,Q)\leq \minload(\calP,Q)$, as claimed.
\end{proof}

\paragraph*{Algebraic load.} Next, we introduce another notion of load. This time, the notion is tailored to the toolbox of Schrijver, and therefore is algebraic in nature.

Let $(G,\calT,\dcal)$ be a nice instance of \padpfull, $\calP$ be a solution to this instance, and $Q$ be a path in $G^\star$. We define the {\em{algebraic load}} of $Q$ with respect to $\calP$, denoted $\algload(\calP,Q)$, as
$$\algload(\calP,Q)=|\charWord(\calP,\vec Q)|,$$
where $\vec Q$ is any orientation of $Q$. (The words $\charWord(\calP,\vec Q)$ for the two orientations are inverses of each other, hence they have the same length.) We have the following easy observation.

\begin{lemma}\label{lem:algtop}
    For every nice instance $(G,\calT,\dcal)$ of \padpfull, solution $\calP$ to this instance, and path $Q$ in $G^\star$, we have $\algload(\calP,Q)\leq \topload(\calP,Q)$.
\end{lemma}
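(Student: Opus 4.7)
The plan is to exhibit an explicit sequence of free-group cancellations that reduces the raw product $\lambda^\star(a_1)\cdots\lambda^\star(a_p)$ to a word of length $\topload(\calP,Q)$. Since the characteristic word $\charWord(\calP,\vec Q)$ is by definition the fully reduced form of this product in $\Gamma_\tcal$, and any further reduction can only shorten a word, this will yield $\algload(\calP,Q)=|\charWord(\calP,\vec Q)|\le \topload(\calP,Q)$, which is exactly what we want.

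To set this up, I would list the non-trivial letters $\ell_1,\ldots,\ell_N$ of the raw product in the order they appear along $\vec Q$; these are in bijection with the non-endpoint vertices of $H_\calC$ (the crossings of $Q$ with paths of $\calP$). Writing $s$ for the number of simplification steps used to produce $\widetilde H_\calC$, we have $N=\topload(\calP,Q)+2s$. I would then process the $s$ simplifications in the order they occur, maintaining at each step the ``current'' word obtained from $\ell_1\cdots\ell_N$ by deleting the letters indexed by vertices removed in earlier steps. The goal is to verify that at every simplification step --- which removes a length-$2$ face with vertices $u,v$ --- the letters $\ell_u$ and $\ell_v$ are (a) currently adjacent in the word, and (b) mutually inverse, so that deleting them is a legal free-group cancellation.

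Fact (a) is bookkeeping: simplifications do not reorder the surviving vertices along the $Q$-walk, so ``$u,v$ joined by a single $Q$-edge in the current multigraph'' forces all original vertices lying between $u$ and $v$ on $\vec Q$ to already have been removed, and hence $\ell_u$ and $\ell_v$ are adjacent in the current word. For fact (b), I would argue as follows: in the current multigraph, the $\calP$-edge bounding the length-$2$ face is a subcurve of a single path $P_i\in\calP$ (preserved inductively along the simplifications, since each simplification merges three $\calP$-edges of a single $\calP$-path into one), so $u,v$ are both crossings of $\vec Q$ with $P_i$; and since the orientation of a crossing at a given point is a fixed geometric quantity that does not depend on the simplifications, the opposite-direction conclusion follows from the Jordan Curve Theorem applied to the simple closed boundary curve of the length-$2$ face, which locally at $u$ (and at $v$) matches up the ``$\vec Q$-incoming side'' with the ``$\vec Q$-outgoing side'' of $P_i$ in a way that is consistent only if the two crossings have opposite signs. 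The main delicate point is precisely this last claim, since for length-$2$ faces created by earlier simplifications the $Q$-arc on the boundary is an \emph{embedded} edge of the current multigraph rather than a subcurve of the original $\vec Q$; verifying the Jordan-curve argument in this setting will require a careful induction on the number of simplifications performed, using the fact that each simplification preserves the planar embedding and the transversal type of each remaining crossing. Once this is established, completing the $s$ cancellations reduces the raw word to length $N-2s=\topload(\calP,Q)$, yielding $\algload(\calP,Q)\le \topload(\calP,Q)$.
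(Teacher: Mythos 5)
Your proposal is correct and follows essentially the same route as the paper's proof: both argue that each simplification step in the passage from $H_{(\calP,Q)}$ to $\widetilde H_{(\calP,Q)}$ removes a pair of adjacent, mutually-inverse symbols from the crossing word $w$, so that $|\reduce(w)|\le\topload(\calP,Q)$. The paper states this correspondence as a single observation without proof, whereas you flesh out both the adjacency bookkeeping and the opposite-sign claim via the Jordan curve argument; that extra care, in particular around merged $Q$-edges, is a reasonable expansion of what the paper leaves implicit.
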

\begin{proof}
    Let $\vec Q$ be any orientation of $Q$ and let $w$ be the word over $\overline{\Sigma}_\tcal$ consisting of the indices or their inverses of consecutive paths of $\calP$ crossed along $\vec Q$, so that $\charWord(\calP,\vec Q)=\reduce(w)$. Observe that in the definition of the topological load of the configuration $(\calP,Q)$, every consecutive operation of simplification corresponds to removing some pair of consecutive symbols of the form $ii^{-1}$ or $i^{-1}i$ from $w$. Therefore, the number of non-endpoint vertices left in $\widetilde{H}_{(\calP,Q)}$ after applying simplifications exhaustively cannot be smaller than the number of symbols left in $w$ after applying the cancellations exhaustively (i.e., applying the $\reduce(\cdot)$ operator). It follows that $\algload(\calP,Q)=|\reduce(w)|\leq \topload(\calP,Q)$, as claimed.
\end{proof}

We note that the inequality provided by \cref{lem:algtop} cannot be reversed: it is possible that the algebraic load of a path $Q$ is significantly smaller than its topological load. We leave the easy task of finding examples to the reader.

\subsection{Skeletons}

We now gather all the observations obtained so far to prove a key proposition: given an instance of \padpfull, we can compute a ``skeleton'' --- a tree in the dual connecting all the terminals --- with the property that if there is a solution, then there is also a solution that crosses this skeleton a small number of times. The number of crossings will be expressed in terms of the topological load. The reader might think of skeletons as of the terminal-connecting trees provided by \cref{prop:primalSteiner}, except for technical modifications that make them live in the dual of the graph, rather than in the primal.

We proceed to formal details.
Let $(G,\calT,\dcal)$ be a nice instance of \padpfull. Recall that then each terminal  $t$ from $\calT$ has degree $1$ in $G$, hence there is a single face $f_t\in F(G)$ incident to $t$. Let $F_\calT$ be the set of those {\em{terminal faces}} $f_t$ for terminal vertices $t$ from $\tcal$. A {\em{skeleton}} of $(G,\calT,\dcal)$ is an inclusion-wise minimal tree $K$ that is a subgraph of $G^\star$ and contains all the faces of $F_\calT$. A vertex $f$ of $K$ is called {\em{principal}} if it belongs to $F_\calT$ or has degree at least $3$ in $K$, and {\em{non-principal}} otherwise. A {\em{spinal}} path of $K$ is a maximal path in $K$ whose all internal vertices are non-principal. The set of all spinal paths of $K$ will be denoted by $\calQ(K)$, and we let $\vec{\calQ}(K)$ be the set of all orientations of spinal paths of $K$. Note that if $|\calT|=k$, then $|\calQ(K)|\leq 4k-3$ and $|\vec \calQ(K)|\leq 8k-6$.

Now if $\calP$ is a solution to $(G,\calT,\dcal)$, then the {\em{topological load}} of a skeleton $K$ with respect to $\calP$ is defined as the maximum topological load (with respect to $\calP$) among the spinal paths of $K$. The algebraic load of $K$ with respect to $\calP$ is defined analogously.

By a slight abuse of notation, throughout the rest of this section, we see every collection $\mathcal{D}$ of oriented subgraphs as mapping from arcs to sets of indices.
Precisely, 
 given a collection $\mathcal{D}=(D_1,\ldots,D_k)$ of oriented subgraphs of a graph $G$, for every arc $a\in \vec{E}(G)$ we use $\mathcal{D}(a) \sub \oSigma_\tcal$ to denote the set defined as follows:
 for each $i\in[k]$ we put $i\in\mathcal{D}(a)$ if $a$ is an arc in $D_i$.

With these definitions in place, we can state and prove our proposition.

\begin{proposition}\label{prop:niceSteiner}
    There exists an algorithm that, given a nice instance $(G,\tcal)$ of \pdsp with $k = |\tcal|$, runs in time $2^{\Oh(k)}\cdot n^{\Oh(1)}$, and either correctly concludes that $(G,\tcal)$ has no solution, or outputs a nice instance $(G',\tcal',\dcal)$ of \padpfull with $|\tcal'|=k$ and a skeleton $K$ of $(G',\calT',\dcal)$, with the following properties: 
    \begin{itemize}[nosep]
    \item If $(G,\tcal)$ has a solution, then $(G',\tcal',\dcal)$ has a solution $\calP$ such that the topological load of $K$ with respect to $\calP$ is bounded by $\Oh(k^3)$.
    \item If $(G',\tcal',\dcal)$ has a solution, then so does $(G,\tcal)$.
    \end{itemize}
\end{proposition}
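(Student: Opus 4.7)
The plan is to invoke \Cref{prop:primalSteiner} to obtain a primal Steiner tree $T$ with small combinatorial load, then lift $T$ to a dual skeleton $K$ via a fattening of the graph around $T$, and finally translate the primal load bound into a topological load bound using \Cref{lem:topcomb}. First I would apply \Cref{prop:primalSteiner} to $(G,\tcal)$. If it reports a no-instance we do the same. Otherwise we obtain a nice \pdap instance $(G_0,\tcal_0,\dcal_0)$ together with a primal tree $T$ in $G_0$ spanning the terminals, consisting of $\Oh(k)$ spinal paths, with the guarantee that $(G,\tcal)$ admits a solution iff $(G_0,\tcal_0,\dcal_0)$ admits a solution $\calP_0$ with $\minload(\calP_0, Q) = \Oh(k^3)$ for every spinal path $Q$ of $T$.

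Next I would construct $(G',\tcal',\dcal)$ by \emph{fattening} the tree $T$ inside $G_0$. Each internal vertex $v$ of $T$ (non-terminal vertex of $T$-degree at least two) is split into as many copies as its $T$-degree, arranged along a small cycle respecting the cyclic order of the $T$-edges at $v$ in the plane embedding, enclosing a new face $F_v$. Each edge $e \in E(T)$ with no terminal endpoint is replaced by two parallel copies joining the appropriate copies of its endpoints; together with the cycle edges at the split endpoints these bound a new face $f_e$. Edges of $T$ incident to a terminal $t$ are left unduplicated so that $t$ retains degree one (the terminal face $f_t$ then plays the role of $f_e$). Non-$T$-edges at a split vertex are distributed among its copies so as to match the planar embedding, and $\dcal$ is the natural lift of $\dcal_0$ in which every directed $(s_i,t_i)$-path of $D_i^0$ lifts to a directed $(s_i,t_i)$-path of $D_i$ that at each visited internal vertex of $T$ traverses a single arc on its associated cycle. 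Projecting back --- contracting the cycles and identifying parallel copies --- yields a bijection between solutions of $(G_0,\tcal_0,\dcal_0)$ and of $(G',\tcal',\dcal)$, giving the required biconditional. The skeleton $K$ is then the subgraph of $G'^\star$ whose vertex set is the union of the terminal faces and the new faces $F_v, f_e$, with edges given by the duals of the cycle edges introduced by the fattening. By construction each pair $(F_v,f_e)$ with $v$ an endpoint of $e$, and each pair $(f_t,f_e)$ with $t$ a terminal endpoint of $e$, shares exactly one such dual edge, so $K$ is an inclusion-wise minimal tree in $G'^\star$ that parallels $T$, with $\Oh(k)$ spinal paths in bijection with those of $T$.

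It then remains to bound the topological load. Fix a spinal path $K_j$ of $K$ corresponding to a spinal path $Q_j$ of $T$, and a solution $\calP'$ of $(G',\tcal',\dcal)$ descending from a solution $\calP_0$ of $(G_0,\tcal_0,\dcal_0)$ with $\minload(\calP_0, Q_j) = \Oh(k^3)$. Every dual edge of $K_j$ is the dual of a cycle edge incident to exactly one vertex of $Q_j$; conversely, edges of $\calP'$ lying on a cycle at a vertex $w$ of $Q_j$ are, up to a constant factor, in bijection with the edges of $\calP_0$ incident to $w$ in $G_0$ that do not belong to $Q_j$, since those are precisely the edges that the lift routes through cycle arcs rather than through the $T$-edges of $Q_j$. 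A careful placement of the non-$T$-edges among the copies during the fattening thus ensures $\load(\calP',K_j) = \Oh(\load(\calP_0,Q_j))$, and the same correspondence sends empty regular $\calP_0$-handles of $Q_j$ in $G_0$ to empty regular $\calP'$-handles of $K_j$ in $G'^\star$ routed through the faces $F_v, f_e$. Hence $\minload(\calP',K_j) = \Oh(\minload(\calP_0, Q_j)) = \Oh(k^3)$, and \Cref{lem:topcomb} then yields $\topload(\calP',K_j) \le \minload(\calP',K_j) = \Oh(k^3)$, as required. The running time is dominated by the call to \Cref{prop:primalSteiner}, giving $2^{\Oh(k)} \cdot n^{\Oh(1)}$.

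The main obstacle I expect is the ``careful placement'' just mentioned: one must set up the fattening (both the cyclic order of the copies along each cycle and the assignment of non-$T$-edges to copies) so that the $\calP'$-load and handle structure on the dual skeleton $K$ faithfully mirror the $\calP_0$-load and handle structure on the primal tree $T$, up to a constant factor. This amounts to a planar-topological bookkeeping argument around each split internal vertex of $T$, ensuring in particular that homotopies of $K_j$ in $G'^\star$ are mirrored by homotopies of $Q_j$ in $G_0$ (and vice versa) so that the infima defining the two minloads are aligned; this is where the bulk of the technical work would lie.
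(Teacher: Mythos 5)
Your high-level strategy is the same as the paper's: apply \Cref{prop:primalSteiner} to get a primal tree $T$ with bounded combinatorial load, construct a modified instance $(G',\tcal',\dcal)$ in which $T$ becomes a skeleton $K$ in the dual, and then transfer the $\minload$ bound to a $\topload$ bound via \Cref{lem:topcomb}. The construction you propose, however, is genuinely different: you fatten only the tree $T$ (splitting its internal vertices into cycles and doubling its internal edges), whereas the paper applies a \emph{global} modification to $G^\circ$ --- subdividing every edge and inserting a face vertex $v_f$ in every face so that $G'$ becomes a fine triangulation --- and then defines the skeleton through the ``$\mathsf{follow}$'' map $\fw(\cdot)$ that reads off the dual path swept by a curve running alongside a primal path.

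The gap, which you yourself flag as ``the bulk of the technical work'', is real and, with your localized fattening, I do not believe it closes without essentially switching to the paper's construction. The issue is this: to bound $\minload(\calP',K_j)$, you must produce some $\widehat K_j$ homotopic to $K_j$ in $(G')^\star$ with $\load(\calP',\widehat K_j)=\Oh(k^3)$. The natural witness is obtained from a primal path $S$ homotopic to $Q_j$ in $G_0$ with $\load(\calP_0,S)=\minload(\calP_0,Q_j)$, by tracking alongside $S$ in the dual. But $S$ may wander arbitrarily far from $T$, and there your $G'$ coincides with $G_0$, whose faces can be large. A curve running alongside $S$ then does not yield a well-controlled simple dual path whose load is comparable to $\load(\calP_0,S)$, and the sequence of pull operations taking $Q_j$ to $S$ in the primal need not lift to a sequence of pull operations taking $K_j$ to $\widehat K_j$ in $(G')^\star$ (the relevant empty regular handles of $K_j$ must live in $(G')^\star$, and their existence is not guaranteed by a coarse face structure). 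This is precisely why the paper subdivides all edges and triangulates all faces: it makes every primal curve, even one far from $T$, correspond to a clean dual path via $\fw(\cdot)$, and it makes every pull operation in the primal relay to a pull operation in the dual (this is the content of the ``it is easy to see that one can find paths $S_v$, $S_w$ \ldots so that $S'''$ is homotopic to $Q''$'' step in \Cref{cl:push} and the surrounding argument). Your local fattening does not give you this correspondence, so the claim $\minload(\calP',K_j)=\Oh(\minload(\calP_0,Q_j))$ is not justified. To repair it you would need to either globalize the refinement as the paper does, or give a direct argument that the primal and dual homotopy classes and their optimal loads align under your fattening --- a nontrivial fact that your sketch asserts but does not establish.
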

\begin{proof}
    We first apply the algorithm of \cref{prop:primalSteiner}, thus obtaining either a report that the instance $(G,\tcal)$ has no solution, or a nice instance $(G^\circ,\tcal^\circ,\dcal^\circ)$ of \padpfull together with a tree $T$ spanning all terminals of $\tcal^\circ$, satisfying the assertions stated in \cref{prop:primalSteiner}. Our goal is to suitably modify the instance $(G^\circ,\tcal^\circ,\dcal^\circ)$ to an equivalent instance $(G',\tcal',\dcal)$ so that the tree $T$, which is a subgraph of $G$, becomes a skeleton $K$, which will be a subgraph of the dual $G^\star$.

\begin{figure}[t]
\centering
\includegraphics[scale=0.6]{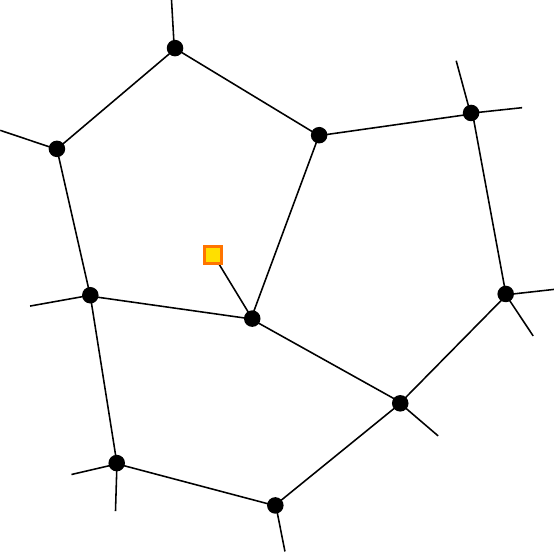}
\qquad\qquad\qquad
\includegraphics[scale=0.6]{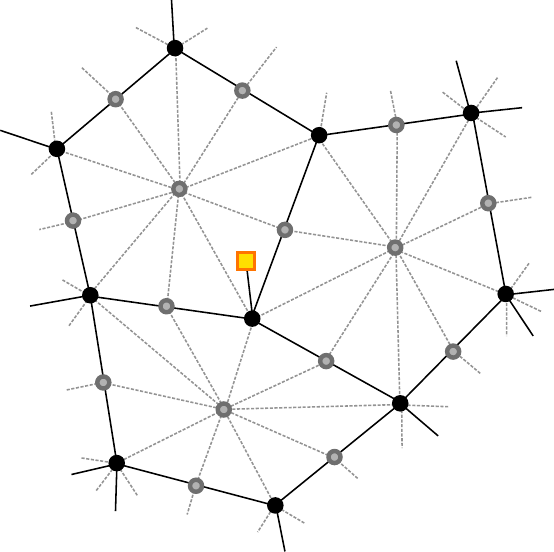}
\caption{Construction of the graph $G'$ (right) from the graph $G^\circ$ (left). Features present in $G'$ but not in $G^\circ$ are depicted in gray. The orange square is a terminal.}\label{fig:todual}
\end{figure}

    The graph $G'$ is obtained from $G^\circ$ as follows (see \Cref{fig:todual}):
    \begin{itemize}[nosep]
        \item Temporarily remove all the terminals $s_i$ and $t_i$, together with the incident edges. We let $f_{s_i}$ be the face of the obtained graph in which $s_i$ was embedded, and $u_{s_i}$ be the unique neighbor of $s_i$; and similarly for $t_i$. 
        \item Subdivide every edge $e$ of the obtained graph, using a new vertex $v_e$.
        \item For every face $f$ introduce a new vertex $v_f$. For every incidence between $f$ and a vertex $u$ lying on the boundary of $f$, introduce a new edge between $v_f$ and $u$. (Thus, if $u$ appears $\ell$ times on the facial walk of $f$, there are $\ell$ edges between $v_f$ and $u$.) Similarly, for every incidence between $f$ and an edge $e$ lying on the boundary of $f$, introduce a new edge between $v_f$ and $v_e$. (Again, if $f$ appears on both sides of $e$, there are two edges between $v_f$ and $v_e$.) The edges incident to $v_f$ are embedded naturally within the face $f$.
        \item Reintroduce the terminals $s_i$ and $t_i$ together with incident edges so that each $s_i$ is embedded within one of the two new faces contained in $f_{s_i}$ incident to $u_{s_i}$; and similarly for $t_i$.
    \end{itemize}
    The obtained graph shall be called $G'$.
    Thus, every face of length $\ell$ of $G^\circ$ with terminals removed gets divided into $2\ell$ faces of $G'$. With the exception of faces accommodating terminals, every face of $G'$ is a triangle whose one vertex is of the form $v_f$, and the opposing side is one half of a subdivided edge of $G^\circ$.
    
    Note that the terminals of $\tcal^\circ$ are also present in $G'$, so we can simply set $\tcal'=\tcal^\circ$.
    As for the mapping~$\dcal$, we define it as follows:
    \begin{itemize}[nosep]
        \item For every arc $a\in \Ev(G^\circ)$, we set $\dcal(a_1)=\dcal(a_2)=\dcal^\circ(a)$, where $a_1$ and $a_2$ are the two arcs of $G'$ resulting from the subdivision of $a$.
        \item For every arc $a\in \Ev(G')$ incident to a vertex of the form $v_f$, we set $\dcal(a)=\emptyset$. 
    \end{itemize}
    Thus, $(G',\tcal',\dcal)$ differs from $(G^\circ,\tcal^\circ,\dcal^\circ)$ only by subdividing the edges non-adjacent to the terminals and introducing some additional edges that anyway cannot accommodate any paths from the solution. It follows that the solutions to the instance $(G',\tcal',\dcal)$ correspond to the solutions to the instance $(G^\circ,\tcal^\circ,\dcal^\circ)$ in a one-to-one manner, where the correspondence involves subdividing every edge non-incident to a terminal.

    \newcommand{\fw}{\mathsf{follow}}

    It remains to define the skeleton $K$ and justify its properties. For this, consider the following construction. Let $\vec Q$ be an oriented path in $G^\circ$. We define an oriented path $\fw(\vec Q)$ in the dual $(G')^\star$ as follows:
    \begin{itemize}[nosep]
    \item Let $\vec Q'$ be the oriented path in $G'$ obtained from $\vec Q$ by replacing every arc $a$ traversed by $\vec Q$ by the two arcs in $G'$ into which $a$ gets subdivided.
    \item Let $a$ and $a'$ be the first and the last arc of $\vec Q'$, respectively. Let $f$ and $f'$ be the faces of $G'$ incident to $a$ and $a'$, respectively, lying on the clockwise side of $a$ and $a'$, respectively. Within the interiors of $f$ and $f'$, pick points $p$ and $p'$ at very close proximity to $a$ and $a'$, respectively.
    \item Draw a curve $\gamma$ that starts at $p$, ends at $p'$ and follows along $\vec Q$ in close proximity, but never touches it. See \Cref{fig:follow} for a visualization.
    \item We define $\fw(\vec Q)$ to be sequence of arcs of $(G')^\star$ dual to the consecutive edges of $G'$ crossed by $\gamma$, oriented so that $\fw(\vec Q)$ leads from $f$ to $f'$.
    \end{itemize}

\begin{figure}[t]
\centering
\includegraphics[scale=0.6]{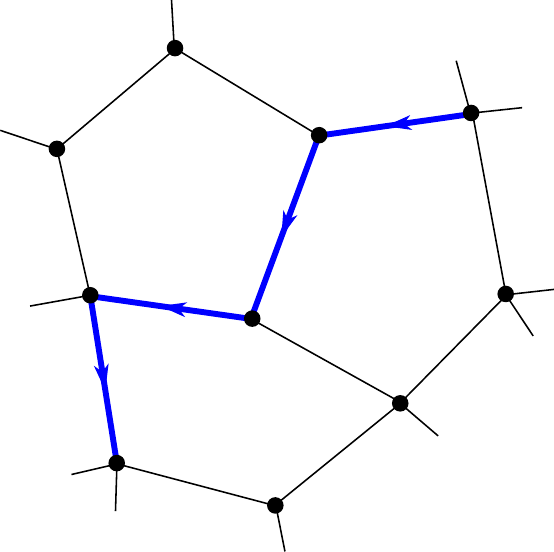}
\qquad\qquad\qquad
\includegraphics[scale=0.6]{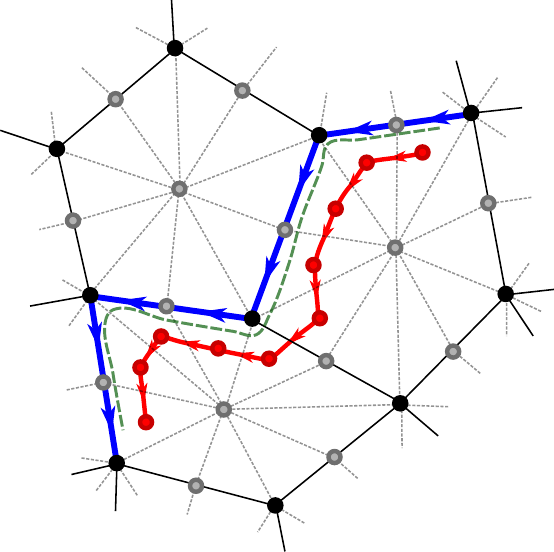}
\caption{An oriented path $\vec Q$ in $G$ (in blue) and the oriented path $\fw(\vec Q)$ in $G^\star$. The curve $\gamma$ is depicted in dashed green. }\label{fig:follow}
\end{figure}

    The next statement is implied directly by the construction; we leave the easy verification to the~reader.

    \begin{claim}\label{cl:subd-disjoint}
        Let $\vec Q, \vec R$ be two oriented paths in $G^\circ$ that are vertex-disjoint, except for possibly sharing an endpoint. Then $\fw(\vec Q)$ and $\fw(\vec R)$ are vertex-disjoint paths in $(G')^\star$.
    \end{claim}

    Let $K_0$ be the subgraph of $(G')^\star$ defined as the union of the following subgraphs:
    \begin{itemize}[nosep]
        \item For every spinal path $Q$ of $T$, the path $\fw(\vec Q)$ with orientation forgotten, for an arbitrary orientation $\vec Q$ of $Q$.
        \item For every vertex $v$ that has degree at least $3$ in $T$, the cycle $C_v$ in $(G')^\star$ consisting of the duals of the edges of $G'$ incident to $v$.
    \end{itemize}
    Finally, let $K$ be an inclusion-wise minimal connected subgraph of $K_0$ that contains every face incident to a terminal from $\calT$. Note that $K$ is a skeleton. Moreover, from \cref{cl:subd-disjoint} it follows that every spinal path $Q'$ of $K$ is either contained in some cycle $C_v$, or it arises from the path $\fw(\vec Q)$ for some orientation $\vec Q$ of a spinal path $Q$ of $T$ by possibly attaching parts of the cycles $C_v$ and $C_w$, where $v$ and $w$ are the endpoints~of~$Q$.

    It remains to argue that for every solution $\calP$ to $(G',\calT',\dcal)$, the topological load of $K$ with respect to $\calP$ is bounded by $\Oh(k^3)$. In the following, by a slight abuse of notation we will consider $\calP$ to be also a solution to $(G^\circ,\calT^\circ,\dcal^\circ)$, as solution paths in $(G^\circ,\calT^\circ,\dcal^\circ)$ correspond to solution paths in $(G',\calT',\dcal')$ through subdivision. Hence, for every spinal path $Q'$ of $K$, we need to prove that the topological load of $Q'$ with respect to $\calP$ is bounded by $\Oh(k^3)$.
    
    If $Q'$ is entirely contained in one cycle $C_v$, then all the edges of $G'$ dual to the edges of $Q'$ are incident to $v$. Hence, there are at most two crossings between $Q'$ and the paths of $\calP$. It follows that the load of $Q'$ with respect to $\calP$ is at most $2$, which implies that also the topological load of $Q'$ with respect to $\calP$ is at most $2$.

    Suppose now that $Q'$ is the concatenation of $R_v,Q'',R_w$, where
    \begin{itemize}[nosep]
        \item $Q''$ is $\fw(\vec Q)$ with orientation forgotten, where $\vec Q$ is an orientation of a spinal path $Q$ of $T$, and
        \item paths $R_v$ and $R_w$ (possibly empty) are contained in the cycles $C_v$ and $C_w$, where $v$ and $w$ are the endpoints of $Q$.
    \end{itemize}
    The same argument as the one used in the previous paragraph shows that both $R_v$ and $R_w$ have load at most $2$ with respect to $\calP$. Hence, it suffices to show that the topological load of $Q''$ with respect to $\calP$ is bounded by $\Oh(k^3)$. Recall here that $\minload(\calP,Q)\leq \Oh(k^3)$.

    Consider now a path $S$ homotopic to $Q$ such that $\load(\calP,S)=\minload(\calP,Q)$, and let $\vec S$ be the orientation of $S$ so that the start of $\vec S$ coincides with the start of $\vec Q$. Let $S''$ be $\fw(\vec S)$, with the orientation forgotten. Note that $\load(\calP,S'')\leq \load(\calP,S)\leq \Oh(k^3)$. We observe that one can find paths $S_v$, contained in $C_v$, and $S_w$, contained in $C_w$, so that the concatenation of $S_v$, $S''$, and $S_w$ is a path $S'''$ that is homotopic to $Q''$. Indeed, every pull operation applied to $S$ in $G$ can be relayed to a pull operation applied to $S'''$ in $G^\star$, followed by an appropriate splitting of $S'''$ into $S_v$, $S_w$, and $S''$ so that $S''$ stays equal to $\fw(\vec S)$ (with orientation forgotten). Again, the load of $S_v$ and $S_w$ with respect to $\calP$ is at most $2$, and moreover we have that $\load(\calP,S)\leq \Oh(k^3)$. By \cref{lem:topcomb}, we conclude that
    $$\topload(\calP,Q'')\leq \minload(\calP,Q'')\leq \load(\calP,S''')\leq 4+\load(\calP,S'')\leq 4+\load(\calP,S)\leq \Oh(k^3),$$
    as required.
\end{proof}

\subsection{Enumerating the homology classes}
\label{sec:homology}

In this section we describe how the algebraic framework of Schrijver~\cite{schrijver1994finding} can be used in combination with \cref{prop:niceSteiner} to solve the \pdspfull in fpt time, thereby proving \cref{thm:main}. First, we need to recall the said framework, which requires several auxiliary definitions.

\subsubsection{Homology Feasibility}

In the {\sc{Homology Feasibility}} problem we will be working with plane $\Gamma$-labelled multigraphs, as discussed in \cref{sec:group-prelims}, where $\Gamma=\langle \Sigma\rangle$ is a fixed free group over some finite set of generators $\Sigma$. Recall that this means that the elements of $\Gamma$ are words over $\oSigma=\{a,a^{-1}\colon a\in \Sigma\}$ that are {\em{reduced}}, that is, do not contain any symbol and its inverse in succession.
For two elements $a,b$ of $\Gamma$, we say that $a$ is a {\em{subword}} of $b$ if there exist $c,d\in \Gamma$ so that $b$ is the concatenation of $c$, $a$, $d$, in this order. Note that this entails $b=cad$ in the group $\Gamma$, but is not the same. We say that a set $A\subseteq \Gamma$ is {\em{hereditary}} if whenever $b\in A$, all the subwords of $b$ belong to $A$ as well. Also, we will denote $A^{-1}=\{a^{-1}\colon a\in A\}$. Note that if $A$ is hereditary, so is also $A^{-1}$.

Let $G$ be a connected plane multigraph. A {\em{$\Gamma$-shift}} of $G$ is a function $\psi\colon F(G)\to \Gamma$. (Recall here that $F(G)$ is the set of faces of $G$.)
Given a $\Gamma$-shift $\psi$ and a $\Gamma$-labelling $\lambda$ of $G$, we may apply $\psi$ to $\lambda$ to obtain a new $\Gamma$-labelling $\psi[\lambda]$ defined as follows: for all $a\in \Ev(G)$, if $a^\star=(f,f')$, then we set
$$\psi[\lambda](a)=\psi(f)\cdot \lambda(a)\cdot \psi(f')^{-1}\qquad\textrm{for every arc }a\in \Ev(G).$$
It is straightforward to verify that $\psi[\lambda]$ is again a $\Gamma$-labelling of $G$.

Next, for the purpose of setting restrictions on the sought labellings, we introduce the following definition. A {\em{domain mapping}} is a mapping that assigns to every arc $a\in \Ev(G)$ its {\em{domain}} $\Delta(a)$ and to each vertex $v\in V(G)$ its {\em{domain}} $\Delta(v)$ so that 
\begin{itemize}[nosep]
    \item $\Delta(a^{-1})=\Delta(a)^{-1}$ for each $a\in \Ev(G)$,
    \item $\Delta(v)=\Delta(v)^{-1}$ for each $v\in V(G)$, and
    \item hereditary sets $\Delta(x)$ for all $x\in V(G)\cup \Ev(G)$.
\end{itemize}  We say that a $\Gamma$-labelling $\lambda$ {\em{conforms}} to the domain mapping $\Delta$ if
\begin{itemize}[nosep]
    \item for every $a\in \Ev(G)$, we have $\lambda(a)\in \Delta(a)$, and
    \item for every vertex $v\in \Ev(G)$ and a sequence of arcs $a_1,\ldots,a_p\in \Ev(G)$ with heads $v$ that are consecutive in the cyclic order around $v$, we have $\lambda(a_1)\lambda(a_2)\ldots \lambda(a_p)\in \Delta(v)$.
\end{itemize}
For simplicity we assume that the domains are always given on input explicitly, as list of elements, even though in the framework of Schrijver one can even only assume oracle access to those sets.

Now, an instance of the {\sc{Homology Feasibility}} problem consists of the following:
\begin{itemize}[nosep]
 \item a connected plane graph $G$;
 \item a $\Gamma$-labelling $\lambda$ of $G$, where $\Gamma$ is a free group on $k$ given generators; and
 \item a domain mapping $\Delta$; and
 \item a set of {\em{fixed faces}} $R\subseteq F(G)$.
\end{itemize}
The question is whether there exists a $\Gamma$-shift $\psi$ of $G$ such that
\begin{itemize}[nosep]
    \item $\psi(f)=1_\Gamma$ for every fixed face $f\in R$; and
    \item $\psi[\lambda]$ conforms to $\Delta$.
\end{itemize}
Any $\psi$ satisfying the first condition above will be called {\em{$R$-stable}}.
The following statement summarizes the results of Schrijver that will be of use for us.

\begin{theorem}[follows from \cite{schrijver1994finding}]\label{thm:Schrijver}
The {\sc{Homology Feasibility}} problem can be solved in time $N^{\Oh(1)}$, where $N$ is the total length of the bit encoding of the input.
\end{theorem}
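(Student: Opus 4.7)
The plan is to argue that this statement is a direct reformulation of the auxiliary algebraic problem on planar $\Gamma$-labelled graphs that Schrijver solves in~\cite{schrijver1994finding} as the engine of his $n^{\Oh(k)}$-time algorithm for disjoint paths in planar digraphs. The first step is to reduce to the case in which the set of fixed faces $R$ consists of a single distinguished face, which we may take as the outer face of $G$. This can be done by a standard planarity-preserving gadget: iteratively pick two fixed faces $f_1, f_2 \in R$, a dual path $\pi$ from $f_1$ to $f_2$, and cut along $\pi$ by duplicating its edges and the internal vertices on $\pi$. On the new arcs introduced by the cut we set singleton domains $\{1_\Gamma\}$ in $\Delta$, which forces any valid shift to agree on the two sides of $\pi$; iterating, all fixed faces get merged into the outer one.

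After this reduction, the task matches exactly the problem studied by Schrijver. His algorithm finds the desired shift $\psi$ (if one exists) via an iterative augmentation procedure on the dual graph. Starting from the trivial shift $\psi \equiv 1_\Gamma$, each step either verifies that $\psi[\lambda]$ already conforms to $\Delta$, or finds a dual closed walk around the fixed face along which updating $\psi$ strictly improves a potential function measuring the total ``excess'' of $\psi[\lambda]$ outside $\Delta$. Each such augmentation step is implemented by a shortest-path computation in an auxiliary weighted planar graph whose edge costs encode how much a local change in $\psi$ can reduce constraint violations; the hereditary nature of $\Delta$ is what makes these costs well-defined and monotone under concatenation of walks.

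The main obstacle --- and the heart of Schrijver's analysis --- is to establish a polynomial bound on the word lengths of the values $\psi(f)$ that need ever be considered. Since $\Gamma$ is a free group of infinite cardinality, a priori there is no bound on how complicated a valid shift can be. The key insight is that heredity of the domains forces any ``long'' shift word to admit a shortening: one can insert a compensating shift along a suitable dual cycle without affecting conformance with $\Delta$, so that the set of canonical candidate shifts has total description size polynomial in $N$. Combining this with a polynomial bound on the number of augmentation steps and polynomial time per step yields the claimed $N^{\Oh(1)}$ running time.
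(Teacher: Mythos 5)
Your proposal misses the central technical issue that the paper's discussion after \cref{thm:Schrijver} is actually devoted to: the domain mapping $\Delta$ in the definition of {\sc Homology Feasibility} constrains both arcs and \emph{vertices}, whereas Schrijver's original formulation has domain constraints on arcs only. The paper explicitly notes that Schrijver's version of the problem does not come with vertex domains $\Delta(v)$, and that the way to bridge this gap is via Schrijver's ``extended dual'' construction from the proof of Proposition~5 in~\cite{schrijver1994finding}: extra arcs between faces sharing a vertex $u$, carrying a constraint that plays the role of $\Delta(u)$. Your claim that ``the task matches exactly the problem studied by Schrijver'' after merging fixed faces is therefore not correct as stated; without explaining how the vertex constraints are compiled into arc constraints in the dual, the reduction does not type-check against Schrijver's theorem.

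Your treatment of multiple fixed faces also diverges from the paper's and is problematic. The paper handles this at the level of {\sc Cohomology Feasibility} on the dual graph, where planarity is no longer required, simply by adding arcs $(r,r')$ for all $r'\in R\setminus\{r\}$ with label $1_\Gamma$ and domain $\{1_\Gamma\}$. Your ``cut along a dual path $\pi$ by duplicating its edges and the internal vertices on $\pi$'' is not a well-defined operation on a plane graph as written (what happens to the primal edges that $\pi$ crosses, and why would merging $f_1$ and $f_2$ into a single face keep the remaining homology constraints unchanged?), and the concern for keeping the graph planar is unnecessary since Schrijver's {\sc Cohomology Feasibility} algorithm works on general graphs. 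Finally, the high-level description of Schrijver's internal algorithm in your second and third paragraphs (augmentation via shortest dual walks, polynomial word-length bounds) is a plausible sketch but orthogonal to what needs to be justified here; the actual content of the paper's argument is the dictionary between the two problem formulations, not a re-derivation of Schrijver's running-time analysis.
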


A few words of explanation are in place, as the work of Schrijver~\cite{schrijver1994finding} does not contain a precise statement as above. The method of Schrijver relies on a reduction of the {\sc{Homology Feasibility}} problem on a graph $G$ to the {\sc{Cohomology Feasibility}} problem on its dual graph $G^\star$. The planarity of input is only used in this reduction: Schrijver's polynomial-time algorithm for {\sc{Cohomology Feasibility}} works on general graphs. In Schrijver's setting, a priori there is no domain mapping $\Delta$ defined on vertices, but in the proof (see the proof of \cite[Proposition~5]{schrijver1994finding}) he extends the dual graph to an ``extended dual'', where he introduces additional arcs between any pair of faces incident to a common vertex $u$ and puts non-trivial constraints on those arcs in order to emulate the vertex-disjointness constraints for the sought paths. Setting those constraints to $\Delta(u)$ in our notation exactly corresponds to the restrictions expressed in our definition of {\sc{Homology Feasibility}}. Similarly, in Schrijver's setting, there is only one fixed face $r$ that is required to have value $1_\Gamma$ in the sought shift. However, we can again easily extend this to the possibility of requiring that every face from a given set $R$ is mapped to $1_\Gamma$ on the level of {\sc{Cohomology Feasibility}}, by picking any $r\in R$, asking it to be mapped to $1_\Gamma$ as described in the setting of Schrijver, and in the instance of {\sc{Cohomology Feasibility}} adding arcs $(r,r')$ for all $r'\in R\setminus \{r\}$ with $\lambda(r,r')=1_\Gamma$ and $\Delta(r,r')=\{1_\Gamma\}$. These additional arcs impose the condition that all faces of $R$ are assigned the same value in the shift.

\paragraph*{Clean vertices.} Before we continue, let us introduce an important invariant of labellings that is maintained under the application of shifts. Suppose $G$ is a connected plane multigraph and $\lambda$ is a $\Gamma$-labelling of $G$, where $\Gamma$ is a group. 

We call a vertex $u$ of $G$ {\em{clean}} under $\lambda$ if
$$\lambda(a_1)\cdot \lambda(a_2)\cdot \ldots \cdot \lambda(a_p)=1_\Gamma,$$
where $(a_1,\ldots,a_p)$ is the sequence of all arcs with head $u$, ordered clockwise around $u$. Note that there are $p$ such sequences depending on the choice of the arc to start the sequence, but the definition does not depend on this choice: the product above will be equal to $1_\Gamma$ for one of them if and only if it is equal to $1_\Gamma$ for all of them.

The observation is that the cleanliness of a vertex is preserved under applying a shift.

\begin{lemma}\label{lem:clean-invariance}
    Let $\Gamma$ be a group, $G$ be a connected plane multigraph, $\lambda$ be a $\Gamma$-labelling of $G$, and $\psi$ be a $\Gamma$-shift of $G$. Suppose a vertex $u$ is clean under $\lambda$. Then $u$ is also clean under $\psi[\lambda]$.
\end{lemma}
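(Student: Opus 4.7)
The plan is to compute the product defining cleanliness of $u$ under $\psi[\lambda]$ explicitly and observe that it telescopes. Let $(a_1,\dots,a_p)$ be the sequence of all arcs with head $u$, listed clockwise around $u$, and let $g_0,g_1,\dots,g_{p-1}$ be the faces incident to $u$ in cyclic order, with the convention that $g_{i-1}$ is the face sitting between $a_{i-1}$ and $a_i$ in this clockwise sweep (indices taken cyclically, so $g_p=g_0$). The key geometric identification I need is: for each $i$, if $a_i^\star=(f_i,f_i')$, then $f_i=g_{i-1}$ and $f_i'=g_i$. Since $a_i$ is oriented into $u$, rotating clockwise around $u$ from $a_i$ to $a_{i+1}$ sweeps the region to the right of $a_i$, so the face $g_i$ is on the right of $a_i$, i.e. $f_i'=g_i$; symmetrically $g_{i-1}$ lies to the left of $a_i$, i.e. $f_i=g_{i-1}$.

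With this identification, the definition of the shifted labelling gives
\[
\psi[\lambda](a_i) \;=\; \psi(g_{i-1}) \cdot \lambda(a_i) \cdot \psi(g_i)^{-1} \qquad\text{for every } i\in\{1,\dots,p\}.
\]
Multiplying these equalities in order, each trailing factor $\psi(g_i)^{-1}$ cancels against the leading factor $\psi(g_i)$ of the next term, so
\[
\prod_{i=1}^{p}\psi[\lambda](a_i) \;=\; \psi(g_0)\cdot\Big(\prod_{i=1}^{p}\lambda(a_i)\Big)\cdot \psi(g_p)^{-1}.
\]
Using $g_p=g_0$ (the cyclic order around $u$ closes up) and the assumption that $u$ is clean under $\lambda$, i.e. $\prod_{i=1}^{p}\lambda(a_i)=1_\Gamma$, the right-hand side collapses to $\psi(g_0)\cdot 1_\Gamma\cdot \psi(g_0)^{-1}=1_\Gamma$. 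This is exactly the condition that $u$ is clean under $\psi[\lambda]$.

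The only delicate point, and the place where I would be most careful in writing up a full proof, is the orientation bookkeeping that justifies $f_i=g_{i-1}$ and $f_i'=g_i$: it depends on the paper's convention that $a^\star=(f_1,f_2)$ with $f_1$ on the left and $f_2$ on the right of $a$, combined with the specific choice of clockwise ordering around the head $u$. Once that identification is in place, the remainder is a one-line telescoping calculation in the group $\Gamma$, and no group-theoretic subtlety is needed beyond associativity and the cancellation $\psi(g_i)^{-1}\psi(g_i)=1_\Gamma$.
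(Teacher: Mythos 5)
Your proof is correct and takes essentially the same approach as the paper: both arguments telescope the product $\prod_i \psi[\lambda](a_i)$ using the fact that consecutive arcs around $u$ share a face, so the $\psi$-factors cancel pairwise and the product collapses to a conjugate of $\prod_i \lambda(a_i) = 1_\Gamma$. The paper's proof is terser and does not spell out the left/right orientation bookkeeping that you flag as the delicate point, but the underlying computation is identical.
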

\begin{proof}
    Let $(a_1,\ldots,a_p)$ be the sequence of arcs with head $u$ ordered clockwise around $u$, and let $(f_1,\ldots,f_p)$ be the sequence of faces incident to $u$ so that $f_i$ lies between $a_i$ and $a_{i+1}$ in the sequence $(a_1,\ldots,a_p)$ (indices behave cyclically). Then
    \begin{align*}
        \psi[\lambda](a_1)\cdot \psi[\lambda](a_2)\cdot \ldots \cdot \psi[\lambda](a_p)& =\psi(f_p)\lambda(a_1)\psi(f_1)^{-1}\cdot \psi(f_1)\lambda(a_2)\psi(f_2)^{-1}\cdot \ldots \cdot \psi(f_{p-1})\lambda(a_p)\psi(f_p)^{-1} \\
        & = \psi(f_p)\cdot \lambda(a_1)\cdot \lambda(a_2)\cdot \ldots \cdot \lambda(a_p)\cdot \psi(f_p)^{-1}\\
        & =\psi(f_p)\cdot \psi(f_p)^{-1}=1_\Gamma,
    \end{align*}
    hence $u$ is clean under $\psi[\lambda]$.
\end{proof}

And we also note that in labellings originating from solutions to the \padpfull problem, all the non-terminal vertices are clean.

\begin{lemma}\label{lem:solution-clean}
    Suppose $(G,\calT,\dcal)$ is a nice instance of {\sc{Planar Annnotated Disjoint Paths}} and $\calP$ is a solution to $(G,\calT,\dcal)$. Then every non-terminal vertex of $G$ is clean with respect to $\lambda_\calP$.
\end{lemma}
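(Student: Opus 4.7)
The plan is a short case analysis exploiting vertex-disjointness of the solution together with the definition of $\lambda_\calP$. Fix a non-terminal vertex $u$, and let $(a_1,\ldots,a_p)$ be the sequence of arcs with head $u$ taken in clockwise order. Since $(G,\calT,\dcal)$ is nice, every vertex appearing in $\calT$ is a terminal, so $u$ is neither a source nor a sink of any $P_i$. Moreover, because the paths in $\calP$ are pairwise vertex-disjoint, at most one $P_i$ passes through $u$.

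\textbf{Case 1: no path of $\calP$ visits $u$.} Then for every arc $a$ with head $u$, $\lambda_\calP(a)=1_\Gamma$ by the definition of $\lambda_\calP$, so trivially $\lambda_\calP(a_1)\cdot\lambda_\calP(a_2)\cdots\lambda_\calP(a_p)=1_\Gamma$.

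\textbf{Case 2: exactly one path $P_i$ visits $u$.} Since $u$ is not a terminal, $u$ is an internal vertex of $P_i$. Hence $P_i$ uses exactly two edges incident to $u$: one ``incoming'' edge $vu$ traversed from $v$ to $u$, and one ``outgoing'' edge $uw$ traversed from $u$ to $w$. Among the arcs with head $u$, the arc $(v,u)$ is traversed by $P_i$ respecting its orientation, hence $\lambda_\calP((v,u))=i$; the arc $(w,u)$ is traversed by $P_i$ disrespecting its orientation, hence $\lambda_\calP((w,u))=i^{-1}$; and every other arc $a_j$ with head $u$ is not traversed by any path of $\calP$ (because no $P_{j'}$ with $j'\ne i$ visits $u$, and $P_i$ uses only the two edges $vu$ and $uw$ at $u$), so $\lambda_\calP(a_j)=1_\Gamma$. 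Multiplying the sequence in $\Gamma_\tcal$, all the trivial factors drop out and we are left either with $i\cdot i^{-1}$ or $i^{-1}\cdot i$ (depending on the clockwise order of $(v,u)$ and $(w,u)$ around $u$), which equals $1_\Gamma$ in the free group. Thus $u$ is clean under $\lambda_\calP$.

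I do not expect any obstacle here: the statement is essentially a bookkeeping check that the local contribution of any single path at an internal vertex is one symbol and its inverse, which cancel in the free group regardless of their relative positions in the cyclic order around~$u$.
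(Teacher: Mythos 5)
Your proof is correct and takes essentially the same approach as the paper's (more terse) argument: at a non-terminal vertex, either no path passes through so all incident arcs are labelled $1_\Gamma$, or exactly one path $P_i$ passes through as an internal vertex and contributes labels $i$ and $i^{-1}$ on its two incident arcs, which cancel in the free group regardless of their cyclic position. The only minor addition you make is explicitly noting that $u$ is not a source or sink because niceness forces all terminals to be distinct degree-one vertices, which is a worthwhile detail the paper leaves implicit.
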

\begin{proof}
    Suppose $u$ is a non-terminal vertex of $G$. Then under $\lambda_\calP$, either all the arcs with head $u$ are mapped to $1_\Gamma$, or all but two are mapped to $1_\Gamma$, and the remaining two are mapped to $i$ and $i^{-1}$ for some $i\in \Sigma_\calT$. In both cases, clearly $u$ is clean.
\end{proof}

Finally, we observe the following statement about solutions to carefully prepared instances of {\sc{Homology Feasibility}} implying the existence of solutions to \padpfull. We remark that the used trick to encode vertex-disjointness constraints is due to Schrijver, see~\cite[Proof of Proposition~5]{schrijver1994finding}.

\begin{lemma}\label{lem:solHom}
    Let $(G,\calT,\dcal)$ be a nice instance of \padpfull, where $|\calT|=k$, and $(G,\lambda,\Delta,R)$ be an instance of {\sc{Homology Feasibility}} over the group $\Gamma=\Gamma_\calT$, satisfying the following properties:
    \begin{itemize}[nosep]
        \item For every terminal pair $(s_i,t_i)\in \calT$, we have $\lambda(a(s_i))=\lambda(a(t_i))=i$, where $a(s_i)$ is the unique arc with tail $s_i$ and $a(t_i)$ is the unique arc with head $t_i$.
        \item Every non-terminal vertex of $G$ is clean with respect to $\lambda$.
        \item $\Delta(v)\subseteq \{1_{\Gamma_\calT}\}\cup \{i,i^{-1}\colon i\in \Sigma_\tcal\}$ for all $v\in V(G)$. 
        \item $\Delta(a)\subseteq \{1_{\Gamma_\calT}\}\cup \{i\colon i\in \dcal(a)\}\cup \{i^{-1}\colon i\in \dcal(a^{-1})\}$ for all $a\in \Ev(G)$.
        \item $F_\calT\subseteq R$.
    \end{itemize}
    Then if $(G,\lambda,\Delta,R)$ has a solution, then so does $(G,\calT,\dcal)$.
\end{lemma}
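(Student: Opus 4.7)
Given a solution $\psi$ to the \textsc{Homology Feasibility} instance, the plan is to extract a PADP solution directly from the shifted labelling $\lambda' = \psi[\lambda]$. First, labels at terminal edges are preserved: because $f_{s_i}, f_{t_i} \in F_\calT \subseteq R$ we have $\psi(f_{s_i}) = \psi(f_{t_i}) = 1_\Gamma$, and since $s_i, t_i$ are degree-$1$ vertices, the two faces on either side of their unique incident edges coincide, giving $\lambda'(a(s_i)) = \lambda'(a(t_i)) = i$. For each $i \in [k]$ define $H_i = \{a \in \Ev(G) : \lambda'(a) = i\}$ viewed as an oriented subgraph of $G$. The domain hypothesis $\Delta(a) \subseteq \{1_\Gamma\} \cup \{j : j \in \dcal(a)\} \cup \{j^{-1} : j \in \dcal(a^{-1})\}$ combined with conformance $\lambda'(a) \in \Delta(a)$ immediately yields $H_i \subseteq A(D_i)$, so $H_i$ is in particular acyclic by \Cref{obs:prelim:no-dicycles}.

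The heart of the argument is a local analysis at each non-terminal vertex $v$. By \Cref{lem:clean-invariance} and the assumption that $v$ is clean under $\lambda$, it remains clean under $\lambda'$. Let $y_1, \dots, y_m$ be the non-trivial labels of arcs with head $v$, listed in their cyclic order. Each $y_j \in \Delta(v)$ is a single letter of $\oSigma$. Applying conformance to the cyclic sub-interval of arcs running from one non-trivial label up to the next, the trivial labels contribute $1_\Gamma$, so the constraint becomes $y_j y_{j+1} \in \Delta(v)$; as both factors are single letters, their product lies in $\Delta(v)$ only when it equals $1_\Gamma$, forcing $y_{j+1} = y_j^{-1}$. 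Iterating, the labels alternate as $y_1, y_1^{-1}, y_1, y_1^{-1}, \dots$, and the wrap-around sub-interval containing $y_m$ and $y_1$ forces $m$ to be even. In particular, all non-trivial labels around $v$ involve a single generator, so there is a well-defined index $i(v) \in \Sigma_\calT$ (when $m > 0$) such that every non-trivial label at $v$ equals $\pm i(v)$.

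From this local structure the degree profile of $H_i$ is transparent: $s_i$ is a source of out-degree $1$ and in-degree $0$, $t_i$ is a sink of in-degree $1$ and out-degree $0$, every non-terminal vertex $v$ with $i(v) = i$ has equal in- and out-degree, and all other vertices of $G$ are isolated in $H_i$. Because $H_i$ is acyclic with this profile, it decomposes into a single directed $(s_i, t_i)$-walk covering every arc of $H_i$; a standard shortcutting step at each revisited vertex then produces a simple directed $(s_i, t_i)$-path $P_i \subseteq H_i \subseteq D_i$. Vertex-disjointness of $\pcal = (P_1, \dots, P_k)$ is automatic, because every non-terminal vertex internal to $P_i$ satisfies $i(v) = i$ and $i(v)$ is determined uniquely by $v$; distinctness at terminals is part of the nice-instance hypothesis.

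The main subtlety I anticipate lies in the local vertex analysis, specifically combining conformance with the cyclic structure to deduce the alternation pattern and the uniqueness of $i(v)$. A superficial worry is that configurations in which a single path $P_i$ revisits $v$ (labels $i, i^{-1}, i, i^{-1}, \dots$) are not excluded by the constraints, but these are harmless because they involve only one index and are eliminated by the shortcutting step. Once the uniqueness of $i(v)$ is in place, the remaining flow-to-path decomposition in a DAG is routine.
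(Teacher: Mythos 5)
Your argument mirrors the paper's overall strategy: shift the labelling to $\lambda' = \psi[\lambda]$, define the oriented subgraphs $H_i$ by the shifted labels, exploit that $\Delta(v)$ contains only words of length at most $1$ to force pairwise vertex-disjointness of the $H_i$, derive the degree profile (source $s_i$, sink $t_i$, all non-terminal vertices balanced), and extract a simple path by an Eulerian walk followed by shortcutting. Your local alternation analysis (deducing $y_1, y_1^{-1}, y_1, y_1^{-1}, \ldots$ around every non-terminal vertex from conformance on sub-intervals) is a somewhat finer version of the paper's local argument: it yields balancedness directly, making the explicit appeal to \Cref{lem:clean-invariance} largely redundant. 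It is the same argument at heart, and both reach the same conclusion.

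One slip to correct: you assert that $H_i$ is acyclic, citing \Cref{obs:prelim:no-dicycles}. That observation is about shortest-path DAGs $(s,t)\dagg$, whereas the lemma is stated for a general nice \padpfull instance, in which the oriented subgraphs $D_i$ of $\dcal$ are arbitrary; \Cref{def:nice-instance} imposes the shortest-path condition only for \pdsp, not for \pdap. Consequently, the claim that $H_i$ ``decomposes into a single directed $(s_i,t_i)$-walk covering every arc'' is unjustified --- $H_i$ could a priori contain additional weakly connected components in which every vertex is balanced, carrying Eulerian circuits. Fortunately this acyclicity is not used anywhere essential: the degree profile alone guarantees, by the standard Eulerian criterion, an Eulerian $(s_i,t_i)$-walk in the weakly connected component of $s_i$ (the other components, if any, are balanced and are simply ignored). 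That is all you need before shortcutting. Dropping the acyclicity claim and invoking the Eulerian criterion componentwise, as the paper does, fixes the step.
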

\begin{proof}
    Let $\psi$ be the $R$-stable $\Gamma$-shift that is a solution to $(G,\lambda,\Delta,R)$, and let $\lambda'=\psi[\lambda]$. Next, for each $i\in \Sigma_\tcal$, let $H_i$ be the directed graph consisting of all arcs $a\in \Ev(G)$ with $\lambda'(a)=i$. Note that for every arc $a\in \Ev(G)$, the word $\lambda'(a)$ has length at most $1$ by the assumption about the domain mapping $\Delta$. Therefore, for every edge $e$ of $G$, either both orientations of $e$ receive $1_\Gamma$ under $\lambda'$ and none them belongs to any of the graphs $H_i$, or these orientations receive $i$ and $i^{-1}$, in which case one of them belongs to $H_i$. 
    
    We first observe that by the assumed property of the domain mapping $\Delta$, the graphs $H_i$ are pairwise vertex-disjoint. Indeed, suppose that a vertex $v$ is incident to arcs belonging to at least two different graphs~$H_i$. Then there is a sequence of arcs $a_1,a_2,\ldots,a_p$, all with head at $v$ and consecutive in the cyclic order around $v$, such that for some $i\neq j$, we have $\lambda'(a_1)=i^{\pm 1}$, $\lambda'(a_p)=j^{\pm 1}$, and $\lambda'(a_2)=\lambda'(a_3)=\ldots=\lambda'(a_{p-1})=1_{\Gamma}$. Then we have $\lambda'(a_1)\lambda'(a_2)\ldots \lambda'(a_p)=i^{\pm 1}j^{\pm 1}$, which in all four cases cannot belong to $\Delta(v)$, because by assumption, all words in $\Delta(v)$ have length at most $1$. This is a contradiction with $\psi$ being a solution.

    Next, observe that since $\psi$ is $R$-stable and $F_\calT\subseteq R$, we have that $\lambda'$ and $\lambda$ agree on arcs incident to the terminals, so $a(s_i)$ and $a(t_i)$ belong to $H_i$, for each $i\in \{1,\ldots,k\}$. Further, as every non-terminal vertex is clean with respect to $\lambda$, by \cref{lem:clean-invariance} the same can be also said about $\lambda'$. From this it follows that every vertex of $H_i$ except for $s_i$ and $t_i$ is {\em{balanced}} in $H_i$: it has the same number of outgoing and incoming arcs. Therefore, every (weakly) connected component of $H_i$ has an Eulerian closed walk, except for one component that contains both $s_i$ and $t_i$, which contains an Eulerian walk that starts at $s_i$ and ends at $t_i$. By exhaustively shortcutting cycles on this walk, we obtain a path $P_i$ leading from $s_i$ to $t_i$ that is entirely contained in $H_i$. As the graphs $H_i$ are pairwise vertex-disjoint, the same also holds for the paths~$P_i$. Finally, by the way we defined $\Delta$ on the arcs of $\Ev(G)$, it follows that $i\in \dcal(a)$ for each arc $a$ traversed on~$P_i$ (oriented from $s_i$ to $t_i$). We conclude that $\calP=\{P_i\colon i\in \{1,\ldots,k\}\}$ is a solution to $(G,\calT,\dcal)$. 
\end{proof}

\subsubsection{Reducing {\sc{PDSP}} to {\sc{Homology Feasibility}}}

We now proceed to the main technical ingredient of this section: we show how given an instance of \padpfull together with the skeleton $K$ provided by \cref{prop:niceSteiner} and suitable characteristic words for every spinal path of $K$, the search for a solution yielding the prescribed characteristic words can be reduced to solving an instance of {\sc{Homology Feasibility}}. To a large extent this boils down to lifting the arguments of Schrijver~\cite[Section 3.3]{schrijver1994finding}, but we need to be much more careful, because we need to bound the number of obtained homology types using the bound on the topological load of the skeleton $K$.

In the following, a mapping $w\colon \vec{\calQ}(K)\to \Gamma$, where $\Gamma$ is a group, shall be called {\em{regular}} if $w(\vec Q)^{-1}=w(\vec Q^{-1})$ for each $\vec Q\in \vec{\calQ}(K)$.

\newcommand{\calK}{{\cal K}}
\newcommand{\calL}{{\cal L}}
\newcommand{\wL}{\widehat{L}}

\begin{proposition}\label{lem:reduceToHom}
    There exists an algorithm that, given a nice instance $(G,\tcal,\dcal)$ of \padpfull, a skeleton $K$ of $(G,\tcal,\dcal)$, and a regular mapping $w\colon \vec{\calQ}(K)\to \Gamma_\tcal$, in polynomial time either concludes that there is no solution $\calP$ satisfying $\charWord(\calP,\vec Q)=w(\vec Q)$ for each $\vec Q\in \vec{\calQ}(K)$, or constructs an instance $(G,\xi,\Delta,F_\calT)$ of {\sc{Homology Feasibility}} such that the following two implications hold:
    \begin{itemize}[nosep]
        \item If the instance $(G,\tcal,\dcal)$ has a solution $\calP$ such that for every $\vec Q\in \vec{\calQ}(K)$ we have $\charWord(\calP,\vec Q)=w(\vec Q)$, then the instance $(G,\xi,\Delta,F_\tcal)$ has a positive answer.
        \item If the instance $(G,\xi,\Delta,F_\tcal)$ has a positive answer, then the instance $(G,\tcal,\dcal)$ has a solution.
    \end{itemize}
\end{proposition}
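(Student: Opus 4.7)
I will produce a {\sc{Homology Feasibility}} instance whose solutions correspond precisely to solutions of $(G,\tcal,\dcal)$ that lie in the ``homology class'' prescribed by $w$. I would set $R := F_\tcal$ and define the domain mapping $\Delta$ exactly as required by \Cref{lem:solHom}: on each arc $a$ put $\Delta(a) := \{1_{\Gamma_\tcal}\} \cup \{i : i \in \dcal(a)\} \cup \{i^{-1} : i \in \dcal(a^{-1})\}$; on each non-terminal vertex $v$ put $\Delta(v) := \{1_{\Gamma_\tcal}\} \cup \{j, j^{-1} : j \in \Sigma_\tcal\}$; and on a terminal $t$ of index $i$ put $\Delta(t) := \{1_{\Gamma_\tcal}, i, i^{-1}\}$. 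All of these are hereditary. The remaining ingredient is the labelling $\xi$, which must simultaneously realize $w$ along $K$, be clean at every non-terminal vertex, and carry the label $i$ on each of the terminal arcs $a(s_i)$ and $a(t_i)$.

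\textbf{Building $\xi$.} I would first check a short list of necessary consistency conditions on $w$: (i) for each $\vec Q \in \vec\calQ(K)$ the length of $w(\vec Q)$ is at most the number of arcs of $\vec Q$; and (ii) around each principal vertex $f$ of $K$, the $w$-values on the spinal paths incident to $f$, composed in the cyclic order around $f$, multiply to $1_{\Gamma_\tcal}$ if $f \notin F_\tcal$ and to $i^{\pm 1}$ (with the sign dictated by orientation) if $f = f_t$ for a terminal $t$ of index $i$. If any such check fails, report that $(G,\tcal,\dcal)$ admits no solution with the prescribed characteristic words. Otherwise, I interpret each $w(\vec Q)$ as a sequence of ``tokens'' --- one per symbol --- where a symbol $i$ (resp.\ $i^{-1}$) prescribes a single crossing of $\vec Q$, in a prescribed direction, by a prospective walk of index $i$. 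The consistency conditions ensure that these tokens can be stitched together, using edges of $G$ whose duals lie outside $K$, into walks $W_1, \dots, W_k$ in $G$ with $W_i$ running from $s_i$ to $t_i$. Then define $\xi(a) := i$ when some $W_i$ traverses $a$ in its orientation, $\xi(a) := i^{-1}$ when some $W_i$ traverses $a^{-1}$, and $\xi(a) := 1_{\Gamma_\tcal}$ otherwise. By construction $\xi^\star(\vec Q) = w(\vec Q)$ for every $\vec Q \in \vec\calQ(K)$, the terminal arcs carry the correct labels, every non-terminal vertex is clean (because each $W_i$ enters such a vertex as many times as it leaves), and every arc label is of the form $1_{\Gamma_\tcal}$ or $j^{\pm 1}$.

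\textbf{Verifying the two implications.} Implication (b) is immediate from \Cref{lem:solHom}: any $F_\tcal$-stable shift $\psi$ solving $(G, \xi, \Delta, F_\tcal)$ yields $\lambda := \psi[\xi]$ that still satisfies $\lambda(a(s_i)) = \lambda(a(t_i)) = i$ (because $F_\tcal$ contains the terminal faces, so $\psi$ is trivial there), still has clean non-terminal vertices by \Cref{lem:clean-invariance}, and conforms to $\Delta$ by the assumed solution property. For implication (a), given a solution $\calP$ of $(G,\tcal,\dcal)$ with $\charWord(\calP, \vec Q) = w(\vec Q)$ for every $\vec Q$, I would build an $F_\tcal$-stable shift $\psi$ with $\psi[\xi] = \lambda_\calP$. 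Set $\psi \equiv 1_{\Gamma_\tcal}$ on $F_\tcal$ and propagate to the remaining faces via the defining identity $\psi(f)\, \xi(a)\, \psi(f')^{-1} = \lambda_\calP(a)$ across each dual arc $a^\star = (f,f')$. Well-definedness along closed dual walks based at $F_\tcal$ reduces, by a standard generating-set argument, to two families of walks: those contained in $K$ (for which $\xi$ and $\lambda_\calP$ yield the same characteristic word by hypothesis) and boundary walks of non-terminal faces of $G$ (for which both labellings yield $1_{\Gamma_\tcal}$, by \Cref{lem:Green} for $\lambda_\calP$ and by cleanliness of $\xi$ at non-terminal vertices). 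A direct check using \Cref{lem:solution-clean} and the definition of $\lambda_\calP$ shows that $\lambda_\calP$ conforms to $\Delta$, so $\psi$ is a valid solution to the {\sc{Homology Feasibility}} instance.

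\textbf{Main obstacle.} The technically most delicate step is the construction of $\xi$ described above: one must verify that the algebraic consistency conditions on $w$ around the principal vertices of $K$ are in fact \emph{sufficient} to stitch tokens into valid $(s_i,t_i)$-walks $W_i$. This is where the planarity of $G$, the tree structure of $K$ in the dual, and the absence of relations in the free group $\Gamma_\tcal$ conspire, and where the cyclic order of arcs around each primal non-terminal vertex and each principal face of $K$ must be tracked with care. The homological propagation argument used in implication (a) is then a mild variant of the same bookkeeping.
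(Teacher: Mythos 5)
Your plan differs from the paper's in a way that introduces two genuine gaps. Where the paper constructs the labelling $\xi$ abstractly --- supported only on a small primal subgraph $\wL$ consisting of a primal tree $L$ and one representative edge $e(Q)$ per spinal path, setting $\xi(a(\vec Q))=w(\vec Q)$ on these edges and then solving for the remaining values on $L$ by a bottom-up traversal that enforces cleanliness vertex by vertex with a single consistency check at the root --- you instead want to realize $w$ by explicit walks $W_1,\dots,W_k$ in $G$ and set $\xi=\lambda$ of those walks. The first problem is that the existence of such walks, under only conditions (i) and (ii), is precisely what you flag as ``the technically most delicate step'' and then never establish; it is far from clear these conditions suffice, and the paper's argument does not need any such geometric realization because it is content to let the labels on $L$ be arbitrary group elements rather than single symbols.

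The second problem is a concrete error in the cleanliness argument. You write that every non-terminal vertex is clean ``because each $W_i$ enters such a vertex as many times as it leaves,'' but in the free group $\Gamma_\tcal$ this is not sufficient once two different walks meet at a vertex. If the arcs with head $v$, in clockwise order, carry labels $i,\, j,\, i^{-1},\, j^{-1}$ --- which happens exactly when $W_i$ and $W_j$ pass through $v$ in an interleaving fashion --- the cyclic product is the nontrivial commutator $i\,j\,i^{-1}\,j^{-1}\neq 1_{\Gamma_\tcal}$, so $v$ is not clean. Since the reduction hinges on cleanliness being an invariant preserved under shifts (\Cref{lem:clean-invariance}, \Cref{lem:solution-clean}), and both implications in the proposition rely on $\xi$ having clean non-terminal vertices, this breaks the construction unless you additionally guarantee that the walks never interleave at any vertex --- a topological non-crossing property that you neither state nor enforce. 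The paper's compact-labelling construction sidesteps all of this by never attempting to represent $\xi$ geometrically.
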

\begin{proof}
    Throughout the proof we denote $\Gamma=\Gamma_\tcal$ and $\Sigma=\Sigma_\tcal$ for convenience.
    Recall that $G^\star$ is the dual of $G$ and the mapping $e\mapsto e^\star$ maps every edge of $G$ to the corresponding (dual) edge of $G^\star$. We may also apply this mapping to the edges of $G^\star$ by setting $(e^\star)^\star=e$. For a set of edges $X$ (of $G$ or $G^\star$), we denote $X^\star = \{e^\star\colon e\in X\}$. This notation can be naturally extended also to arcs of $\Ev(G)$ and $\Ev(G^\star)$; note that then $(a^\star)^\star=a^{-1}$. 

    Recall that $\calQ(K)$ is the set of (undirected) spinal paths of $K$ and $\vec{\calQ}(K)$ consists of the orientations of the paths from $\calQ(K)$, and we have $|\calQ(K)|\leq 4k-3$. For each path $Q\in \calQ(K)$, pick an arbitrary edge $e_0(Q)$ of $Q$, and let $e(Q)=(e_0(Q))^\star$. (Note that $e_0(Q)$ is an edge of $G^\star$, while $e(Q)$ is an edge of $G$.) If $\vec Q$ is any of the two orientations of $Q$, then we let $a(\vec Q)$ be the orientation of $e(Q)$ so that $a(\vec Q)^\star=a_0(\vec Q)$, where $a_0(\vec Q)$ is the orientation of $e_0(Q)$ that belongs to (matches the orientation of) $\vec Q$.
    
    Next, let $W$ be the set of vertices consisting of all the terminals of $V_\calT$, and all the endpoints of the edges $e(Q)$ for $Q\in \calQ(K)$; thus $|W|\leq 9k-6$.
    Let
    $$H=G-\bigcup_{Q\in \calQ(K)} E(Q)^\star.$$
    Observe that since $K$ is a tree in the dual graph $G^\star$, $H$ stays connected. Therefore, we may pick $L$ to be an inclusion-wise minimal connected subgraph of $H$ that contains all the vertices of $W$. Clearly, $L$ is a tree whose every leaf is a vertex of $W$. 
    
    Now, let $\wL$ be the graph obtained from the tree $L$ by adding all the edges $e(Q)$, for $Q\in \calQ(K)$. Thus, $\wL$ is a subgraph of $G$, hence every face of $\wL$ is the union of a subset of faces of $G$. We now observe various properties of $\wL$ in a series of claims. 
    
    For the first one, let $R$ be the set of principal vertices of $K$, or equivalently, the faces of $G$ that are the endpoints of the paths of $\calQ(K)$. Recall that $F_\calT\subseteq R$.

    \begin{claim}\label{cl:oneR}
        Every face of $\wL$ contains at most one face belonging to $R$.
    \end{claim}
    \begin{innerproof}
        Consider any distinct $f,f'\in R$ and let $Q$ be any path of $\calQ(K)$ such that $Q$ is contained in the unique $(f,f')$-path in $K$. Observe that $e(Q)$ together with the unique path in $L$ connecting the endpoints of $e(Q)$ forms a cycle in $\wL$ that has the faces $f$ and $f'$ on the opposite sides. Thus, $f$ and $f'$ cannot be contained in the same face of $\wL$.
    \end{innerproof}

    The next claim is crucial: we show that every solution to the instance $(G,\tcal)$ can be shifted so that only the edges of $\wL$ get non-trivial labels, and the orientations of the edge $e(Q)$ get labels corresponding to the characteristic words of the orientations of $Q$, for every $Q\in \calQ(K)$. We shall call a shift $\psi\colon F(G)\to \Gamma$ {\em{$R$-stable}} if $\psi(f)=1_\Gamma$ for all $f\in R$.

    \begin{claim}\label{cl:push}
        Suppose $\calP$ is a solution to the instance $(G,\calT)$. Then there exists an $R$-stable shift $\psi$ satisfying the following conditions:
        \begin{itemize}[nosep]
            \item $\psi[\lambda_\calP](a)=1_\Gamma$ whenever $a$ is an orientation of an edge that does not belong to $\wL$; and
            \item for every $\vec Q\in \vec{\calQ}(K)$, we have $\psi[\lambda_\calP](a(\vec Q))=\charWord(\calP,\vec Q)$.
        \end{itemize}
    \end{claim}
    \begin{innerproof}
        For every face $g$ of $\wL$, let $G^\star_g$ be the subgraph of $G^\star-E(\wL)^\star$ induced by all the faces of $G$ contained in $g$. Clearly, $G^\star_g$ is connected and, by \cref{cl:oneR}, at most one vertex of $G_g$ belongs to $R$. Let $r_g$ be this vertex in case it exists, and otherwise let $r_g$ be any vertex of $G^\star_g$.
        
        Further, since $K$ is a tree that is a subgraph of $G^\star$, $G^\star_g\cap K$ is a forest that is a subgraph of $G^\star_g$. Let then $S_g$ be any spanning tree of $G^\star_g$ that contains $G^\star_g\cap K$. We root $S_g$ in $r_g$.

        We now define the shift $\psi$ as follows. Consider any face $f$ of $G$. Let $g$ be the face of $\wL$ that contains~$f$; equivalently, $f\in V(G^\star_g)$. Let $\vec Z_f$ be the (unique) $(r_g,f)$-path contained in $S_g$, oriented from $r_g$ to $f$. Finally, we set
        $$\psi(f)=\charWord(\calP,\vec Z_f).$$
        It remains to show that the shift $\psi$ defined in this manner satisfies all the requested properties.

        First, observe that if $f\in R$, then $f=r_g$.  Hence, $\vec Z_f$ is a one-vertex path and $\psi(f)=\charWord(\calP,\vec Z_f)=1_\Gamma$, as required.

        Second, consider any edge $e$ of $G$ that does not belong to $\wL$, and any orientation $a$ of $e$. Then $a^\star=(f,f')$, where $f,f'$ are two faces of $G$ that belong to $G^\star_g$ for some single face $g$ of $\wL$. If $e^\star$ belongs to $S_g$, then $\vec Z_{f'}$ is $\vec Z_f$ with $(f,f')$ appended, or $\vec Z_f$ is $\vec Z_{f'}$ with $(f',f)$ appended;  by symmetry assume the former. Then $\charWord(\calP,\vec Z_{f'})=\charWord(\calP,\vec Z_f)\cdot \lambda(a)$, hence 
        $$\psi[\lambda_\calP](a)=\psi(f)\cdot \lambda(a)\cdot \psi(f')^{-1}=\charWord(\calP,\vec Z_f)\cdot \lambda(a)\cdot \charWord(\calP,\vec Z_{f'})^{-1}=1_\Gamma.$$
        Next, assume that $e^\star$ does not belong to $S_g$. Then letting $\vec Z$ be the longest common prefix of $\vec Z_f$ and $\vec Z_{f'}$, we can decompose $\vec Z_f$ as the concatenation of $\vec Z$ and $\vec Z_f'$, and $\vec Z_{f'}$ as the concatenation of $\vec Z$ and $\vec Z_{f'}'$, so that the concatenation of $\vec Z'_f$, $a$, and $(\vec Z'_{f'})^{-1}$ is an oriented cycle, say $\vec C$. Since $\vec C$ is entirely contained in $G_g$, it is a non-separating cycle, hence by \cref{lem:Green} we conclude the following:
        $$1_\Gamma=\charWord(\calP,\vec C)=\charWord(\calP,\vec Z'_f)\cdot \lambda(a)\cdot \charWord(\calP,\vec Z'_{f'})^{-1}.$$
        By left-multiplying both sides by $\charWord(\calP,Z)$ and right-multiplying by $\charWord(\calP,\vec Z)^{-1}$, we obtain that
        \begin{align*}
        1_\Gamma & =\charWord(\calP,\vec Z)\cdot \charWord(\calP,\vec Z'_f)\cdot \lambda(a)\cdot \charWord(\calP,\vec Z'_{f'})^{-1}\cdot \charWord(\calP,\vec Z)^{-1}\\
        & =\charWord(\calP,\vec Z_f)\cdot \lambda(a)\cdot \charWord(\calP,\vec Z_{f'})^{-1}=\psi(f)\cdot \lambda(a)\cdot \psi(f')^{-1}=\psi[\lambda_\calP](a),
        \end{align*}
        as required.

        Finally, consider any $\vec Q\in \vec{\calQ}(K)$. Then the beginning and the end of $\vec Q$ are two faces $r_g,r_{g'}\in R$, where $g,g'$ are distinct faces of $\wL$. Let $a_0(\vec Q)=(f,f')$; then $f\in V(G_g)$ and $f'\in V(G_{g'})$. Observe that removing $a_0(\vec Q)$ from $\vec Q$ breaks $\vec Q$ into two subpaths $\vec Q'$ (with beginning $r_g$ and end $f$) and $\vec Q''$ (with beginning $f'$ and end $r_{g'}$). By construction, $\vec Q'$ and $\vec Q''$ are contained in trees $S_g$ and $S_{g'}$, respectively (after forgetting orientations), hence
        $$\psi(f)=\charWord(\calP,\vec Q')\qquad\textrm{and}\qquad \psi(f')=\charWord(\calP,\vec Q'')^{-1}.$$
        We conclude that
        \begin{align*}\psi[\lambda_{\calP}](a(\vec Q)) & = \psi(f)\cdot \lambda_\calP(a(\vec Q))\cdot \psi^{-1}(f') = \charWord(\calP,\vec Q')\cdot \lambda^\star_\calP(a_0(\vec Q))\cdot \charWord(\calP,\vec Q')^{-1}\\
        & =  \charWord(\calP,\vec Q')\cdot \lambda^\star_\calP(a_0(\vec Q))\cdot \charWord(\calP,(\vec Q')^{-1})=\charWord(\calP,\vec Q),\end{align*}
        as required.
    \end{innerproof}

    Now, call a $\Gamma$-labelling $\xi$ of $G$ {\em{compact}} if the following conditions are satisfied:
    \begin{itemize}[nosep]
        \item $\xi(a)=1_\Gamma$ for every $a\in \Ev(G)$ that is an orientation of an edge not belonging to $\wL$;
        \item every non-terminal vertex of $G$ is clean with respect to $\xi$;
        \item for every terminal pair $(s_i,t_i)\in \calT$, we have $\xi(a_{s_i})=i^{-1}$ and $\xi(a_{t_i})=i$, where $a_{s_i}$ and $a_{t_i}$ are the unique arcs with head $s_i$ and $t_i$, respectively.
    \end{itemize}
    In terms of this definition, \cref{cl:push} implies that every labelling originating in a solution to $(G,\calT,\dcal)$ can be shifted to a compact labelling, as explained formally in the next claim.

    \begin{claim}\label{cl:compact}
        Suppose $\calP$ is a solution to the instance $(G,\calT,\dcal)$. Then the $R$-stable shift $\psi$ provided by \cref{cl:push} for $\calP$ is compact.
    \end{claim}
    \begin{innerproof}
        The first condition of compactness follows directly from \cref{cl:push}, the second condition follows from \cref{lem:clean-invariance} combined with \cref{lem:solution-clean}, and the third condition is ensured by the $R$-stability of~$\psi$.
    \end{innerproof}

    We next observe that the space of compact labellings is already very restricted: knowing the values on arcs $a(\vec Q)$ for $\vec Q\in \vec \calQ(K)$ --- which are fixed by the given mapping $w$ --- is enough to reconstruct the whole compact labelling.

    \begin{claim}\label{cl:reconstruction}
        There exists at most one compact $\Gamma$-labelling $\xi$ of $G$ such that $\xi(a(\vec Q))=w(\vec Q)$ for all $\vec Q\in \vec \calQ(K)$. Moreover, this compact labelling, or a conclusion about its non-existence, can be computed in polynomial time. 
    \end{claim}
    \begin{innerproof}
        Let $L'$ be $L$ with all the terminals of $V(\calT)$ removed; thus $L'$ is also a tree. Let $\xi$ be a labelling as in the lemma statement. 
        Observe that the restrictions imposed on $\xi$ fix the labels of all the arcs except for the orientations of the edges of $L'$:
        \begin{itemize}[nosep]
            \item For every arc $a$ that is an orientation of an edge outside of $\wL$, we have $\xi(a)=1_\Gamma$.
            \item For every arc of the form $a(\vec Q)$ for some $\vec Q\in \vec \calQ(K)$, we have $\xi(a(\vec Q))=w(\vec Q)$.
            \item For every arc incident to a terminal, say belonging to the $i$th pair of $\calT$, its value under $\xi$ is $i$ or $i^{-1}$, depending on the orientation and whether it is the source or the target terminal.
        \end{itemize}
        
        It remains to show that the values on the orientations of the edges of $L'$ can be also uniquely determined. Root $L'$ at any vertex $r$. We process the edges of $L'$ in a bottom-up manner, at each step determining the values of the orientations of an edge incident to a leaf, say $u$, followed by removing $u$ from $L'$. Thus, at the moment of processing, $u$ is incident to exactly one edge $e$ whose orientations have values under $\xi$ not yet defined. Since $u$ should be clean with respect to $\xi$ (because the terminals of $V(\calT)$ are not in $L'$), there is a unique way of fixing the values of $\xi$ on the orientations of $e$ so that $u$ becomes clean. We fix those values and proceed by removing $u$ and $e$ from $L'$.

        At the end, all the values of $\xi$ are uniquely determined, and all the non-terminal vertices are clean with respect to $\xi$,  possibly except for  the root $r$. Hence, we verify whether $r$ is clean. If this is the case, we may output $\xi$, and otherwise we may conclude that there is no labelling $\xi$ satisfying the required conditions.
    \end{innerproof}
    
    Call a solution $\calP$ to $(G,\calT,\dcal)$  {\em{compatible}} if $\charWord(\calP,\vec Q)=w(\vec Q)$ for each $\vec Q\in \vec{\calQ}(K)$.
    We are ready to present the algorithm to produce the instance of {\sc{Homology Feasibility}} promised in the lemma statement:
    \begin{itemize}[nosep]
        \item Run the algorithm of \cref{cl:reconstruction} to compute a compact $\Gamma$-labelling $\xi$ of $G$ such that $\xi(a(\vec Q))=w(\vec Q)$ for all $\vec Q\in \vec \calQ(K)$. If the algorithm concluded that there is no such labelling $\xi$, then terminate and return the conclusion that there is no compatible solution.
        \item Output the instance $(G,\xi,\Delta,F_\tcal)$, where $\Delta$ is defined by setting $\Delta(a)=\{1_{\Gamma_\calT}\}\cup \{i\colon i\in \dcal(a)\}\cup \{i^{-1}\colon i\in \dcal(a^{-1})\}$ for each $a\in \Ev(G)$ and $\Delta(v)=\{1_\Gamma\}\cup \{i,i^{-i}\colon i\in \Sigma\}$ for each $v\in V(G)$.
    \end{itemize}
    We are left with proving that if the algorithm returns the conclusion about the non-existence of a compatible solution $\calP$, then this conclusion is correct, and otherwise the computed instance $(G,\xi,\Delta,F_\tcal)$ has the required properties.

    First, suppose that $\calP$ is a compatible solution to $(G,\calT,\dcal)$. Let $\psi$ be the $R$-stable shift provided by \cref{cl:push} for $\calP$. Then denoting $\xi=\psi[\lambda_\calP]$, we have that $\xi$ satisfies $\xi(a(\vec Q))=w(\vec Q)$ for all $\vec Q\in \vec \calQ(K)$, by the conditions asserted by \cref{cl:push} and the compatibility of $\calP$. Further, $\xi$ is compact by \cref{cl:compact}. By \cref{cl:reconstruction}, we conclude that the algorithm of this claim actually succeeds in constructing $\xi$; this in particular shows that the conclusion about the non-existence of a compatible solution is always correct. Now as $\xi=\psi[\lambda_\calP]$, we have $\lambda_\calP=\psi^{-1}[\xi]$, where $\psi^{-1}$ is defined by $\psi^{-1}(f)=\psi(f)^{-1}$ for all $f\in F(G)$. Noting that $\psi^{-1}$ is also $R$-stable, and recalling that $F_\tcal\subseteq R$, we see that $\psi^{-1}$ witnesses that the instance $(G,\xi,\Delta,F_\tcal)$ of {\sc{Homology Feasibility}} has a solution.

    Second, suppose the instance $(G,\xi,\Delta,F_\tcal)$ of {\sc{Homology Feasibility}} has a solution. By the compactness of $\xi$ and the construction of $\Delta$, the instance $(G,\lambda,\Delta,F_\tcal)$ satisfies the prerequisites of \cref{lem:solHom}. Hence, from this lemma we infer that the instance $(G,\calT,\dcal)$ of {\sc{Planar Disjoint Annotated Paths}} also has a solution.
\end{proof}

The next proposition pipelines \cref{prop:niceSteiner} with \cref{lem:reduceToHom}, giving a way to reduce solving an instance of \pdspfull to solving a number of instance of {\sc{Homology Feasibility}}.

\begin{proposition}\label{lem:enumHom}
    There exists an algorithm that, given a nice instance $(G,\tcal)$ of \pdspfull with $|\tcal|=k$, in time $2^{\Oh(k\log k)}\cdot n^{\Oh(1)}$
    computes a family $\Phi$ of instance of {\sc{Homology Feasibility}} with the following properties:
    \begin{itemize}[nosep]
        \item $|\Phi|\leq 2^{\Oh(k\log k)}$.
        \item If the instance $(G,\tcal)$ has a solution, then at least one of the instances $I\in \Phi$ has a solution.
        \item If any of the instances $I\in \Phi$ has a solution, then so does $(G,\tcal)$.
    \end{itemize}
\end{proposition}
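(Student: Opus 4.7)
The algorithm starts by invoking \Cref{prop:niceSteiner} on $(G,\tcal)$. If this reports that $(G,\tcal)$ has no solution, we output $\Phi=\emptyset$ and stop. Otherwise we obtain a nice \pdap instance $(G',\tcal',\dcal)$ with $|\tcal'|=k$ and a skeleton $K$ of it such that, whenever $(G,\tcal)$ has a solution, $(G',\tcal',\dcal)$ admits a solution $\pcal^{\star}$ with $\topload(\pcal^{\star},Q)\le c k^{3}$ for every spinal path $Q$ of $K$ (where $c$ is an absolute constant). By \Cref{lem:algtop} this yields $|\charWord(\pcal^{\star},\vec Q)|\le ck^{3}$ for every $\vec Q\in\vec\calQ(K)$.

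The algorithm then enumerates all regular mappings $w\colon\vec\calQ(K)\to\Gamma_\tcal$ in which each $w(\vec Q)$ is a reduced word of length at most $ck^{3}$. For each such $w$ we invoke \Cref{lem:reduceToHom}: if the lemma reports that no solution is compatible with $w$ we discard it; otherwise we append the produced Homology Feasibility instance to $\Phi$. Correctness follows immediately from the two bullets of \Cref{lem:reduceToHom}. If $(G,\tcal)$ has a solution, the mapping $\vec Q\mapsto\charWord(\pcal^{\star},\vec Q)$ lies among the enumerated ones, so the first bullet yields an $I\in\Phi$ with a positive answer. Conversely, whenever some $I\in\Phi$ has a solution, the second bullet produces a solution of $(G',\tcal',\dcal)$, which by \Cref{prop:niceSteiner} lifts to a solution of $(G,\tcal)$.

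The main obstacle is bounding $|\Phi|\le 2^{\Oh(k\log k)}$ together with the running time. A naive count — reduced words of length $\le\Oh(k^{3})$ over the alphabet $\oSigma_\tcal$ of size $2k$, over the $\Oh(k)$ spinal paths of $K$ — gives only $2^{\Oh(k^{4}\log k)}$, which is far too coarse. The enumeration therefore must be more structured: rather than list arbitrary reduced words, we list ``canonical forms'' of characteristic words compatible with some vertex-disjoint geodesic solution. The structure comes from two places. First, the monotonicity of geodesic crossings (\Cref{lem:prelim:monotone3}) restricts the order in which a single solution path can appear in the characteristic word of a spinal path $Q$ to a non-tangled pattern, so each of the $k$ solution paths contributes only $\Oh(k)$ ``effective handles'' on $Q$ (as used already in \Cref{prop:outside} and \Cref{prop:reroute:minload}). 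Second, two consecutive letters in a reduced word may not cancel, and an uncancelled letter must correspond to an actual topological handle enclosing a terminal or a portion of the skeleton; the number of distinguishable handle patterns per spinal path is thus $k^{\Oh(1)}$.

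Combining these structural constraints, we argue that the number of characteristic words per spinal path that may arise from \emph{some} solution with topological load $\le\Oh(k^{3})$ is at most $k^{\Oh(1)}$. Multiplying over the $\Oh(k)$ spinal paths of $K$ yields $|\Phi|\le (k^{\Oh(1)})^{\Oh(k)}=2^{\Oh(k\log k)}$, and the same bound governs the enumeration time since each $w$ is produced in polynomial time and \Cref{lem:reduceToHom} runs in polynomial time. Together with the $2^{\Oh(k)}\cdot n^{\Oh(1)}$ preprocessing from \Cref{prop:niceSteiner}, the overall running time is $2^{\Oh(k\log k)}\cdot n^{\Oh(1)}$, as required.
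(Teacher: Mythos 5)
Your high-level framework is correct and matches the paper: preprocess via \Cref{prop:niceSteiner} to obtain a skeleton $K$, enumerate candidate regular mappings $w\colon\vec\calQ(K)\to\Gamma_\tcal$, and hand each to \Cref{lem:reduceToHom}. You also correctly identify that the naive count of $\Oh(k^3)$-length words over $\oSigma_\tcal$ for each of the $\Oh(k)$ spinal paths gives only $2^{\Oh(k^4\log k)}$.

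However, the step that does all the work is missing. You assert that the number of characteristic words per spinal path arising from some low-topological-load solution is $k^{\Oh(1)}$, but you give no argument. The appeal to \Cref{lem:prelim:monotone3} does not apply here: that lemma is about geodesics in the primal graph $G$, while the spinal paths of $K$ live in the dual $G^\star$. The handle-counting bounds from \Cref{prop:outside} and \Cref{prop:reroute:minload} bound the \emph{load} (hence the length of $\charWord$), not the \emph{number} of distinct words. Even with the length bounded by $\Oh(k^3)$, the number of reduced words over an alphabet of size $2k$ is a priori $2^{\Omega(k^3\log k)}$, and no constraint you cite brings this down to $k^{\Oh(1)}$. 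Indeed, a per-spinal-path count of $k^{\Oh(1)}$ (required for your multiplication over $\Oh(k)$ spinal paths to yield $2^{\Oh(k\log k)}$) is a far stronger claim than the total bound; the paper never proves anything of this form.

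The paper's actual argument is structurally different: it does \emph{not} multiply per-spinal-path bounds. Instead, it counts candidates for the \emph{entire Euler-tour word} $u^2$ of the skeleton $K$ at once. The key steps are: untangle each configuration $(\calP,Q)$ to topological form by the ``splitting'' operation, remove the resulting non-separating cycles to get $\calP^2$, record crossings along the Euler tour, and then observe that the full word $u^2$ is determined by two non-crossing, $\calR$-jumping matchings $M$ and $N$ (encoding how crossings pair up across the tree, and how segments of solution paths connect consecutive crossings, respectively), together with a labelling of $\Oh(k)$ positions. The bound in Claim~\ref{cl:matching-bound} on non-crossing $\calR$-jumping matchings, $p^{\Oh(r)}$ with $p=\Oh(k^4)$ and $r=\Oh(k)$, delivers $2^{\Oh(k\log k)}$. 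This exploits the global non-crossing structure of a planar solution, which a per-spinal-path argument cannot see. Finally, a subtlety you skip: because the enumeration produces words $w'$ derived from the \emph{simplified} family $\calP^2$ rather than the true characteristic words $w=\charWord(\calP,\cdot)$, the paper must argue that the Homology Feasibility instances for $w'$ and $w$ are equivalent, which it does by noting that deleting a non-separating cycle corresponds to an $F_{\tcal'}$-stable shift. Your proposal assumes the true $\charWord(\calP^\star,\cdot)$ values themselves are enumerated, which would require the very per-spinal-path bound you leave unproven.
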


Let us note that \cref{lem:enumHom} is very easy if one is content with a worse bound of $2^{\Oh(k^4\log k)}$ on the size of $\Phi$ and the running time. Namely, do the following:
\begin{itemize}[nosep]
    \item Apply the algorithm of \cref{prop:niceSteiner}, obtaining an instance $(G',\tcal',\dcal)$ of \padpfull with $|\tcal'|=k$ and its skeleton $K$
    \item For every regular mapping $w\colon \vec{\calQ}\to \Gamma_{\tcal'}$  that has only words of length $\Oh(k^3)$ in its image, apply \cref{lem:reduceToHom} to $w$ and include the obtained instance of {\sc{Homology Feasibility}} in $\Phi$. As $|\vec{\calQ}(K)|\leq \Oh(k)$ and $|\overline{\Gamma}_\tcal|=2k$, there are only $2^{\Oh(k^4\log k)}$ such mappings.
\end{itemize}
Therefore, the main matter in the proof of \cref{lem:enumHom} will be how to reduce the number of relevant mappings to $2^{\Oh(k\log k)}$ by a more careful analysis. Importantly, we will use the fact that \cref{prop:niceSteiner} provides an upper bound on the topological load of the obtained skeleton, and not just on the algebraic~load.

\newcommand{\calR}{{\mathcal R}}

\begin{proof}[Proof of \cref{lem:enumHom}]
We first prepare some simple combinatorial tools.
For a positive integer $p$, we denote $[p]=\{1,\ldots,p\}$.
A {\em{matching}} in $[p]$ is a set $M$ consisting of pairwise disjoint subsets of $[p]$, each of size~$2$. We say that $a,b\in [p]$ are {\em{matched}} in $M$ if $\{a,b\}\in M$. A matching $M$ is {\em{non-crossing}} if there are no indices $a<b<c<d$ such that $a$ is matched with $c$ and $b$ is matched with $d$. Denoting $V(M)=\bigcup M$, we call $M$ {\em{perfect}} if $V(M)=[p]$.

An {\em{block}} in $[p]$ is a contiguous subset of $[p]$, that is, a subset $B$ such that for all $a\leq b\leq c$, that $a,c\in B$ implies also $b\in B$. A {\em{division}} of $[p]$ is a partition of $[p]$ into blocks. For a division $\calR$ of $[p]$, a matching $M$ in $[p]$ is called {\em{$\calR$-jumping}} if for all $a,b$ matched in $M$, $a$ and $b$ do not belong to the same block of $\calR$.
We will use the following combinatorial bound.

\begin{claim}\label{cl:matching-bound}
    Let $p$ be a positive integer and let $\calR$ be a division of $[p]$ with $|\calR|=r$. Then the number of different perfect non-crossing $\calR$-jumping matchings in $[p]$ is bounded by $p^{\Oh(r)}$, and they can be enumerated in time $p^{\Oh(r)}$.
\end{claim}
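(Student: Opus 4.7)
The plan is to encode each non-crossing perfect matching on $[p]$ via its standard Dyck-word representation: I label every position as an \emph{opener} if it is the smaller element of its pair and as a \emph{closer} otherwise. Since a non-crossing matching pairs each closer with the most recent unmatched opener (the usual stack discipline), the matching is fully determined by this opener/closer pattern, and conversely every opener/closer sequence whose induced stack-depth stays non-negative throughout and ends at zero arises from a unique non-crossing perfect matching.

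The crux will be the following structural claim: in every block $B_i\in\calR$, all closers precede all openers in the opener/closer sequence. I plan to prove it by contradiction — pick any opener $x\in B_i$ and let $y\in B_i$ be the smallest closer with $y>x$. Contiguity of $B_i$ places every position of $[x+1,y-1]$ inside $B_i$, and minimality of $y$ forces each such position to be an opener. Hence the whole segment $[x,y-1]$ contributes only pushes, so at the moment $y$ is processed the top of the stack is an opener lying in $B_i$; the corresponding pop then produces a matched pair entirely inside $B_i$, contradicting the $\calR$-jumping property.

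Granting this claim, each valid matching is uniquely encoded by a tuple $(a_1,\dots,a_r)$ with $a_i\in\{0,\dots,p_i\}$ recording the number of closers in $B_i$, where $p_i:=|B_i|$. The number of such tuples is bounded by $\prod_{i=1}^{r}(p_i+1)\le (p+1)^r = p^{\Oh(r)}$. For enumeration, I iterate over all such tuples, reconstruct the induced opener/closer sequence, and verify in $\Oh(p)$ time that the stack depth never goes negative and ends at zero — conditions that are clearly necessary and, by the first paragraph, sufficient for the tuple to arise from a genuine perfect non-crossing $\calR$-jumping matching, which can then be output in linear time. The total running time is $p^{\Oh(r)}$, and the only non-routine step is the structural claim; the counting and algorithmic recovery are straightforward bookkeeping once the canonical form is established.
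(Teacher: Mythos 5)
Your proof is correct and takes essentially the same approach as the paper: both encode the matching by the opener/closer (left/right) pattern and observe that, within each block, the $\calR$-jumping and non-crossing conditions force all closers to precede all openers, so that the matching is determined by one count per block. The only difference is that you spell out the stack-discipline argument for this structural claim, which the paper states without detailed justification.
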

\begin{innerproof}
    Consider a perfect non-crossing $\calR$-jumping matching $M$ in $[p]$. 
    Call $a\in [p]$ {\em{left}} if $a$ is matched in $M$ with some $b>a$, and {\em{right}} otherwise. As $M$ is $\calR$-jumping, no pair of $M$ is contained in a single block of $\calR$. As $M$ is additionally non-crossing, every block $B$ of $\calR$ can be partitioned into a prefix $B'$ and a suffix $B''$, both possibly empty, so that elements of $B'$ are right and the elements of $B''$ are left. Let $\ell(B)$ be the length of $B'$.

    The observation now is that the mapping $\ell$ allows to uniquely reconstruct the matching $M$. Indeed, construct a word $w$ of length $p$ over the alphabet $\{(,)\}$ by taking $[p]$, replacing every left element by $($ and every right element by $)$, and writing the obtained symbols in the natural order. This can be done in a unique way knowing the mapping $\ell$, as this mapping entirely describes which elements of $[p]$ are left and which are right. Then $w$ must be a valid bracket expression, and $M$ is given by pairs of matching brackets. 

    Since the number of blocks is $r$, the number of possible mappings $\ell$ is bounded by $p^{\Oh(r)}$, which by the above observation translates to the same bound on the number of considered matchings. The enumeration statement follows by considering all possible mappings $\ell$ and outputting all suitable matchings corresponding to them.
\end{innerproof}

With these preliminaries settled, let us proceed with the proof of the lemma.  The first step is to apply \cref{prop:niceSteiner}, and thus obtain an instance $(G',\calT',\dcal)$ of \padpfull{} with $|\calT'|=k$ and its skeleton $K$. We denote $\overline{\Sigma}=\overline{\Sigma}_{\calT'}$ and $\Gamma=\Gamma_{\calT'}$ for brevity. Recall that by $\calQ(K)$ we denote the set of spinal paths of $K$, and $\vec \calQ(K)$ are the orientations of paths from $\calQ(K)$.

Let us consider any solution $\calP$ to $(G',\calT',\dcal)$ such that the topological load of $K$ with respect to $\calP$ is $\Oh(k^3)$. Each path $P_i\in \calP$ is considered to be oriented from $s_i$ to $t_i$. We consider the plane drawing consisting of the tree $K$ and paths from $\calP$. Note that each path $P_i\in \calP$ crosses every spinal path $Q\in \calQ(K)$ in a finite number of points. We will treat paths $P_i$ and spinal paths $Q\in \calQ(K)$ as curves in the plane.

\begin{figure}[t]
\centering
\includegraphics[scale=0.5]{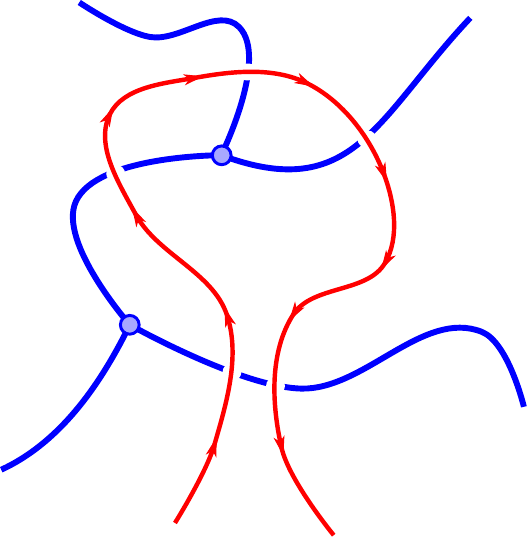}
\qquad\qquad\qquad
\includegraphics[scale=0.5]{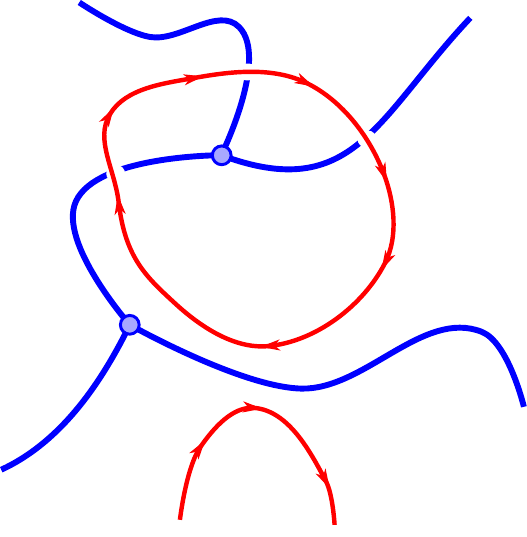}
\caption{Splitting: before (left) and after (right) the operation. The skeleton $K$ is in blue, the path $P_i$ is in~red.}\label{fig:splitting}
\end{figure}

We now apply the following modifications to the drawing; during these modifications, each path $P_i$ may be turned into an oriented path plus a number of oriented cycles. Consider a spinal path $Q\in \calQ(K)$ and the sequence of consecutive operations of simplification of the configuration $(\calP,Q)$ as in the definition of the topological load of $(\calP,Q)$, that is, leading from $H_{(\calP,Q)}$ to $\widetilde{H}_{(\calP,Q)}$. Now apply the same sequence of operations to the current drawing, but every operation of simplification is replaced by the operation of {\em{splitting}} depicted in \Cref{fig:splitting}: the two crossings of $Q$ with some $P_i$ are removed by removing short segments on $P_i$ around the crossing points, and the resulting four ``loose ends'' of $P_i$ are connected so that the connections are following closely $Q$ but not crossing it. Recall that the operation is applied only when the segment of $P_i$ between the two crossings forms an empty handle of $Q$, hence the operation ``splits off'' from $P_i$ a non-separating oriented cycle. 

We now apply this operation to all the spinal paths of $K$, in any order. Let $\calP^1=(\widehat{P}^1_1,\ldots,\widehat{P}^1_k)$ be the image of $\calP=(P_1,\ldots,P_k)$ after the operations. Thus, each $\widehat{P}^1_i$ consists of an oriented curve leading from $s_i$ to $t_i$, which we will call $P^2_i$, and a number of oriented cycles, each of them non-separating. Let also $\calP^2=(P_1^2,\ldots,P_k^2)$. Note that for each $Q\in \calQ(K)$, the number of crossings of $Q$ with the elements of $\calP^1$ is equal to the $\topload(\calP,Q)$, and hence it is bounded by~$\Oh(k^3)$.

\begin{figure}[t]
\centering
\includegraphics[scale=0.8]{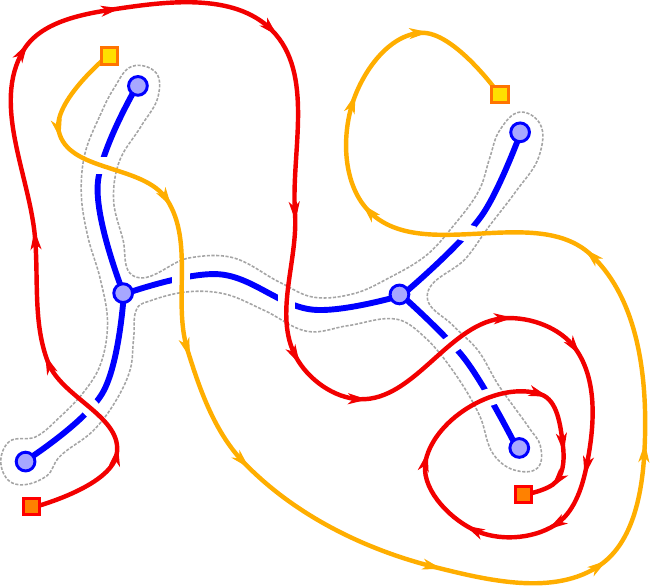}
\caption{Skeleton $K$, depicted in blue, and a solution $\calP^2=\calP=(P_1,P_2)$ with $P_1$ in orange and $P_2$ in red. The dotted closed curve is $\gamma$, the Euler tour of $K$. The word $u$ for this situation is $\underline{1}|1|1^{-1}2^{-1}|1|\underline{1^{-1}}|1^{-1}|22|\underline{2^{-1}}|2^{-1}2^{-1}|21|2^{-1}|\underline{2}|2|1^{-1}$, where the underlines denote the terminal symbols and the $|$ markers delimit the division $\calR$.}\label{fig:euler}
\end{figure}

Consider now the {\em{Euler tour}} of $K$: a closed curve $\gamma$ that follows the tree $K$ around in very close proximity, in the clockwise order; see \Cref{fig:euler}. Choosing a starting point of $\gamma$ next to $s_1$, we construct a word $u$ by following along $\gamma$ and recording symbols from $\overline{\Sigma}$ as follows:
\begin{itemize}[nosep]
    \item When $\gamma$ passes next to $s_i$, we record $i$. (This happens right at the beginning for $s_1$.)
    \item When $\gamma$ passes next to $t_i$, we record $i^{-1}$.
    \item When a path $P_i$ crosses $\gamma$ we record $i$ or $i^{-1}$, depending whether the crossing is from right to left or from left to right.
\end{itemize}
Naturally, $u$ can be partitioned into {\em{terminal symbols}} --- symbols $i,i^{-1}$ added for the sake of terminals $s_i,t_i$ --- and subwords $u(\vec Q)$ for $\vec Q\in \vec \calQ(K)$, recorded while $\gamma$ was following a spinal path $Q$ along its orientation~$\vec Q$. Letting $p$ be the length of $u$, by $\calR$ we denote the following division of $[p]$:
\begin{itemize}[nosep]
    \item for every terminal symbol, create a separate block of length $1$ consisting of the position of this symbol; and
    \item for each subword $u(\vec Q)$ for $\vec \calQ\in \vec \calQ(K)$, create one block consisting of positions on which $u(\vec Q)$ appears.
\end{itemize}
Note that since there are $2|\calT|=2k$ terminal symbols and $|\vec Q|\leq 8k-6$, we have $|\calR|\leq 10k-6$.

For $t\in \{1,2\}$, we analogously define the word $u^t$, subwords $u^t(\vec Q)$ for $\vec Q\in \vec \calQ(K)$, length $p^t$, and division $\calR^t$ of $[p^t]$, using the family $\calP^t$ instead of $\calP$. By the construction of $\calP$ and $\calP^1$, we have
$$\reduce(u(\vec Q))=\reduce(u^1(\vec Q))=\charWord(\calP,\vec Q)\qquad\textrm{for each }\vec Q\in \vec\calQ(K).$$
Define the mapping $w\colon \vec \calQ(K)\to \Gamma$ that assigns to each $\vec Q\in \vec \calQ(K)$ the common value above:
$$w(\vec Q)=\charWord(\calP,\vec Q)\qquad \textrm{for all }\vec Q\in \vec \calQ(K).$$
Observe that the solution $\calP$ witnesses that the instance $(G',\xi,\Delta,F_{\tcal'})$ obtained by applying \cref{lem:reduceToHom} to the mapping $w$ has a solution.

Now, define the mapping $w'\colon \vec \calQ(K)\to \Gamma$ as follows:
$$w'(\vec Q)=\reduce(u^2(\vec Q))\qquad \textrm{for all }\vec Q\in \vec \calQ(K).$$
Note that 
$$|w'(\vec Q)|\leq |u^2(\vec Q)|\leq |u^1(\vec Q)|\leq \Oh(k^3).$$
Consider applying \cref{lem:reduceToHom} to $w'$ instead of $w$, and let $(G',\xi',\Delta',F_{\tcal'})$ be the obtained instance. 
By the same argument as in the proof of \cref{lem:reduceToHom} we get that $\Delta'=\Delta$ and the labelling $\xi'$ differs from $\xi$ by applying an $F_{\tcal'}$-stable shift. Indeed, because $\calP^2$ differs from $\calP^1$ only by removal of some non-separating oriented cycles, each removal of such a cycle $C$, say belonging to $P^1_i$, can be modelled by applying a shift that assigns $i$ or $i^{-1}$ (depending on the orientation of $C$) to the faces of $G$ corresponding to the inside of $C$, and $1_{\Gamma}$ to the other faces; applying those shifts for all the cycles present in $\calP^1$ and not in $\calP^2$ transforms $\xi$ to $\xi'$. Hence, as $(G',\xi,\Delta,F_{\tcal'})$ has a solution, so does $(G',\xi',\Delta',F_{\tcal'})$ as well.

The observation now is that there are only $2^{\Oh(k\log k)}$ candidates for the word $u^2$. For this, consider construction of the following two matchings $M$ and $N$ in $[p^2]$:
\begin{itemize}[nosep]
    \item Every crossing of a path $P^2_i$ with a spinal path $Q\in \calQ(K)$ corresponds to two positions in $u^2$, for the two orientations of $Q$. Match those two positions in $M$.
    \item For every segment of $P^2_i$ between two crossings with $K$ (or between $s_i/t_i$ and the first/last crossing), the endpoints of this segment naturally correspond to two positions in $u^2$. Match those two positions in $N$.
\end{itemize}
Thus, $N$ is a perfect matching, $M$ matches all  the positions except for those corresponding to the terminal symbols, and $M\cup N$ consists of $k$ disjoint paths. The $i$-th path of $M\cup N$ connects the positions of $[p]$ corresponding to $s_i$ and $t_i$ (which hence contain symbols $i$ and $i^{-1}$, respectively) and alternately visits positions with symbols $i$ and $i^{-1}$.
Moreover, since curves $P^2_1,\ldots,P^2_k$ are pairwise disjoint, it follows that both $M$ and $N$ are non-crossing. Finally, the iterative removal of empty handles in the construction of $\calP^2$ implies that both $M$ and $N$ are $\calR^2$-jumping.

Concluding, the whole word $u^2$ and the subwords $u^2(\vec Q)$ for $\vec Q\in \mathcal{Q}$ can be uniquely reconstructed from the following information:
\begin{itemize}[nosep]
    \item The length $p^2$ of $u^2$. As $|u^2(\vec Q)|\leq \Oh(k^3)$ for each $\vec Q\in \vec \calQ(K)$, we have $p^2\leq \Oh(k^4)$, hence there are $\Oh(k^4)$ ways to choose $p^2$.
    \item The division $\calR^2$ of $[p^2]$. Since $|\calR|\leq 10k-6$, there are at most $(p^2+1)^{10k-6}=2^{\Oh(k\log k)}$ ways to choose $\calR^2$.
    \item The labelling of $S^2$ with pairwise different elements of $\overline{\Sigma}$; there are $(2k)!=2^{\Oh(k\log k)}$ ways to choose this labelling.
    \item The matchings $M$ and $N$ that are non-crossing and $\calR$-jumping, and such that $N$ is perfect and $M$ matches all positions except for those corresponding to the terminal symbols. By \cref{cl:matching-bound}, there are $2^{\Oh(k\log k)}$ candidates for $N$ and  $2^{\Oh(k\log k)}$ candidates for $M$, where in the second bound we consider the word with the blocks corresponding to the terminal symbols removed.
    \item Once $M$ and $N$ are fixed, we verify that $M\cup N$ consists of $k$ disjoint paths, and the remaining piece of information is a labelling of the endpoints of those paths with distinct elements of $\overline{\Sigma}$ so that the endpoints of every path are always labelled with a pair of opposite elements ($i$ and $i^{-1}$). There are $2^k\cdot k!=2^{\Oh(k\log k)}$ such labellings.
\end{itemize}
Thus, we can enumerate $2^{\Oh(k\log k)}$ candidates for the mapping $u^2\colon \vec \calQ(K)\to \overline{\Sigma}^\star$, which gives us also $2^{\Oh(k\log k)}$ candidates for the mapping $w'\colon \vec \calQ(K)\to \Gamma$ defined through $w'(\vec Q)=\reduce(u^2(\vec Q))$. 

Now, define $\Phi$ to be the set of all instances of {\sc{Homology Feasibility}} obtained by applying algorithm of \cref{lem:reduceToHom} to the mappings $w'$ constructed above. From the discussion above it follows that provided $(G,\tcal)$ has a solution, one of the instances of $\Phi$ will have a solution. On the other hand, from \cref{lem:solHom} it follows that if any of the instances of $\Phi$ has a solution, then so does $(G',\tcal',\dcal)$, and consequently also~$(G,\tcal)$.
\end{proof}

\subsubsection{Wrap-up: Proof of \texorpdfstring{\cref{thm:main}}{Theorem 1.1}}

With all ingredients in place, we can finish the proof of our main result.

\begin{proof}[Proof of \cref{thm:main}]
Due to \Cref{lem:nice-reduction} we can assume that the  input instance $(G,\tcal)$ of \pdspfull is nice.
Then we simply apply the algorithm from \cref{lem:enumHom} to $(G,\tcal)$ and solve each of the obtained instances of {\sc{Homology Feasibility}} using the algorithm of \cref{thm:Schrijver}.
\end{proof}

\section{Conclusion}

We have classified the \pdspfull problem as fixed-parameter tractable by presenting a $2^{\Oh(k\log k)}\cdot n^{\Oh(1)}$-time algorithm.
An intriguing question is whether  this problems is indeed easier than {\sc Planar Disjoint Paths}, or can one improve the running time $2^{\Oh(k^2)}\cdot n^{\Oh(1)}$ for the latter.
Our algorithm involves a careful preparation of a Steiner tree spanning the terminals which facilitates enumerating the relevant homology classes of a solution.
Is it possible to build such a Steiner tree for {\sc Planar Disjoint Paths} that would accommodate only $2^{o(k^2)}$ relevant homology classes?

\bibliographystyle{plain}

\end{document}